\newcommand*{\myBHat}{\hat{X}_b}
\colorlet{blau}{blue}
\colorlet{gruen}{green}
\colorlet{rot}{red}
\colorlet{orangsch}{orange}
\newcommand\myscale{0.75}
\newcolumntype{C}{>{\centering\arraybackslash}X}
\newcommand*{\mal}{\mathbin{\cdot}} 
\newcommand{\isdef}{\coloneqq} 
\newcommand*{\Def}[1]{\ensuremath{\operatorname{dom}(#1)}}        
\newcommand*{\Img}[1]{\ensuremath{\operatorname{img}(#1)}}        
\renewcommand*{\hom}[2]{\ensuremath{\operatorname{hom}(#1, #2)}}  
\newcommand*{\Hom}[2]{\ensuremath{\operatorname{Hom}(#1, #2)}}    
\newcommand*{\set}[1]{\{ #1 \}}
\newcommand{\mbracketleft}{\ensuremath{\{\!\!\{ }}
\newcommand{\mbracketright}{\ensuremath{\}\!\!\} }}
\newcommand*{\mset}[1]{\ensuremath{\mbracketleft #1 \mbracketright}}
\newcommand*{\nat}{\mathbb{N}}
\newcommand*{\natpos}{\nat_{\geq 1}}
\newcommand*{\union}{\cup}
\newcommand*{\bigunion}{\bigcup}
\newcommand*{\intersect}{\cap}
\newcommand*{\disunion}{\mathbin{\dot{\cup}}}
\newcommand*\V[1]{\ensuremath{V(#1)}}
\newcommand*\E[1]{\ensuremath{E(#1)}}
\newcommand*{\tupel}[1]{\ensuremath{\overline{#1}}}
\newcommand*{\Nachbarn}[2]{N_{#2}(#1)}
\newcommand*{\inzidenz}[1]{\ensuremath{\begingroup\def\x{#1}\ifx\x\@empty{I}\else{I}_{\x}\fi\endgroup}}
\newcommand*{\rV}[1]{R(#1)}
\newcommand*{\bV}[1]{B(#1)}
\newcommand*{\reseatR}[3]{#1[#2{\to}#3]}
\newcommand*{\reseatB}[3]{#1\langle#2{\to}#3\rangle}
\newcommand*{\reclaimR}[2]{\reseatR{#1}{#2}{\bullet}}
\newcommand*{\reclaimB}[2]{\reseatB{#1}{#2}{\bullet}}
\newcommand*{\switch}[2]{#1[{\leadsto}#2]}
\newcommand*{\free}[1]{\operatorname{free}(#1)}
\newcommand*{\glue}[2]{(#1 \mathbin{\cdot} #2)}
\newcommand*{\VAR}{\textsf{Var}}
\newcommand*{\bigland}{\bigwedge}
\newcommand*{\biglor}{\bigvee}
\newcommand*{\limplies}{\rightarrow}
\newcommand*{\existsi}[1][i]{\exists^{\scalebox{0.6}{$\geq$}#1}}
\newcommand*{\existsex}[1][i]{\exists^{\scalebox{0.6}{$=$}#1}}
\let\emptyset\varnothing
\let\phi\varphi
\theoremstyle{plain}
\newtheorem{lemma}{Lemma}[section]
\newtheorem*{lemma*}{Lemma}
\newtheorem*{proposition*}{Proposition}
\newtheorem{theorem}[lemma]{Theorem}
\newtheorem*{theorem*}{Theorem}
\theoremstyle{definition}
\newtheorem{definition}[lemma]{Definition}
\newtheorem*{definition*}{Definition}
\newtheorem*{notation*}{Notation}
\newtheorem{example}[lemma]{Example}
\newcounter{HDCounter}
\newcounter{KGuardedCounter} 
\newcounter{LogicCounter} 
\newcounter{InductionConuter} 
\definecolor{cblue}{HTML}{007582}
\definecolor{cred}{HTML}{FF0000}
\tikzstyle{regular}=[circle, inner sep = .5pt]
\tikzstyle{edge}=[dotted]
\tikzstyle{guard}=[edge, very thick]
\tikzstyle{blue}=[regular, text=cblue]
\tikzstyle{red}=[regular, text=cred]
\newenvironment{mea}{\begin{enumerate}[(a)]}{\end{enumerate}}
\newcommand{\NN}{\mathbb{N}}
\newcommand{\NNpos}{\NN_{\scriptscriptstyle \geq 1}}
\newcommand{\RR}{\mathbb{R}}
\renewcommand{\leq}{\leqslant}
\renewcommand{\geq}{\geqslant}
\renewcommand{\phi}{\varphi}
\renewcommand{\epsilon}{\varepsilon}
\newcommand{\nc}[1]{\newcommand{#1}}
\newcommand{\rnc}[1]{\renewcommand{#1}}
\newcommand{\deff}{\isdef}
\nc{\ov}[1]{\bar{#1}}
\nc{\openinterval}[2]{\ensuremath{\textup{rat}(#1,#2)}}
\renewcommand{\mid}{:}
\newcommand{\setc}[2]{\ensuremath{\set{#1 : #2}}}
\newcommand{\msetc}[2]{\ensuremath{\mset{#1:#2}}}
\nc{\bigset}[1]{\ensuremath{\big\{ #1 \big\}}}
\nc{\bigsetc}[2]{\bigset{#1 : #2}}
\nc{\setsize}[1]{\ensuremath{|#1|}}
\nc{\bigsetsize}[1]{\ensuremath{\big|#1\big|}}
\nc{\Setsize}[1]{\bigsetsize{#1}}
\nc{\und}{\ensuremath{\wedge}}
\nc{\Und}{\ensuremath{\bigwedge}}
\nc{\oder}{\ensuremath{\vee}}
\nc{\Oder}{\ensuremath{\bigvee}}
\nc{\nicht}{\ensuremath{\neg}}
\nc{\impl}{\ensuremath{\to}}
\nc{\lgdw}{\ensuremath{\leftrightarrow}}
\nc{\sem}[1]{\llbracket #1 \rrbracket}
\nc{\pto}{\ensuremath{\rightharpoonup}} 
\nc{\Dom}[1]{\Def{#1}}
\nc{\I}{\ensuremath{I}}
\nc{\J}{\ensuremath{J}}
\nc{\DBag}{\ensuremath{\textit{bag}}}
\nc{\DCover}{\ensuremath{\textit{cover}}}
\nc{\bVI}{\ensuremath{\bV{\I}}}
\nc{\rVI}{\ensuremath{\rV{\I}}}
\nc{\bVJ}{\ensuremath{\bV{\J}}}
\nc{\rVJ}{\ensuremath{\rV{\J}}}
\nc{\bVJn}{\ensuremath{\bV{\J_n}}}
\nc{\rVJn}{\ensuremath{\rV{\J_n}}}
\nc{\HOM}[2]{\ensuremath{\textsf{\textup{HOM}}_{#1}(#2)}}
\nc{\ClassC}{\mathcal{C}}
\nc{\Class}[1]{\ensuremath{\textsl{#1}}}
\newcommand*{\w}[1]{\ensuremath{\operatorname{w}(#1)}}
\newcommand*{\hw}[1]{\ensuremath{\operatorname{ghw}(#1)}}
\nc{\ghw}[1]{\hw{#1}}
\newcommand*{\ehw}[1]{\ensuremath{\operatorname{ehw}(#1)}}
\newcommand*{\hypertreewidth}[1]{\ensuremath{\operatorname{hw}(#1)}}
\newcommand*{\HD}[1]{\ensuremath{\textit{ghds}(#1)}}
\newcommand*{\Ehd}[1]{\ensuremath{\textit{ehds}(#1)}}
\nc{\TW}[1]{\ensuremath{\Class{TW}_{#1}}}
\nc{\GHW}[1]{\ensuremath{\Class{GHW}_{#1}}}
\nc{\SGHW}[1]{\ensuremath{\Class{sGHW}_{#1}}}
\nc{\GHWk}{\GHW{k}}
\nc{\BA}[1]{\ensuremath{\Class{BA}}}
\newcommand*{\IHW}[1][k]{\ensuremath{\Class{IGHW}_{#1}}}
\nc{\IGHW}[1]{\ensuremath{\Class{IGHW}_{#1}}}
\nc{\IEHW}[1]{\ensuremath{\Class{IEHW}_{#1}}}
\nc{\IEHWk}{\IEHW{k}}
\nc{\IBA}[1]{\ensuremath{\Class{IBA}}}
\nc{\AbbR}{\ensuremath{\mathcal{R}}}
\nc{\AbbB}{\ensuremath{\mathcal{B}}}
\nc{\AbbRB}{\ensuremath{\mathcal{RB}}}
\nc{\upH}{\ensuremath{\textup{H}}}
\nc{\VarB}{\ensuremath{\VAR_B}}
\nc{\VarR}{\ensuremath{\VAR_R}}
\nc{\Var}{\ensuremath{\VAR}}
\nc{\IntI}{\ensuremath{\mathcal{I}}}
\nc{\IntJ}{\ensuremath{\mathcal{J}}}
\nc{\LogicFont}[1]{\ensuremath{\textsf{\upshape{#1}}}}
\nc{\AGC}[1]{\ensuremath{\LogicFont{AGC}^{{#1}}}}
\nc{\GC}[1]{\ensuremath{\LogicFont{GC}^{{#1}}}}
\nc{\RGC}[1]{\ensuremath{\LogicFont{RGC}^{{#1}}}}
\nc{\NGC}[1]{\RGC{#1}}
\nc{\GCk}{\GC{k}}
\nc{\RGCk}{\RGC{k}}
\nc{\NGCk}{\NGC{k}}
\nc{\AGCk}{\AGC{k}}
\rnc{\L}{\GCk}
\nc{\AL}{\AGCk}
\nc{\LogGuard}[1]{\ensuremath{\Delta_{#1}}}
\nc{\VarFont}[1]{\ensuremath{\texttt{\textup{#1}}}}
\nc{\varv}{\VarFont{v}}
\nc{\vare}{\VarFont{e}}
\nc{\varx}{\VarFont{x}}
\nc{\equivGCk}{\ensuremath{\equiv_{\GCk}}}
\nc{\equivRGCk}{\ensuremath{\equiv_{\RGCk}}}
\nc{\GLI}[1]{\ensuremath{\Class{GLI}_{#1}}}
\nc{\GLIk}{\GLI{k}}
\nc{\QGLI}[1]{\ensuremath{\Class{QGLI}_{#1}}}
\nc{\QGLIk}{\QGLI{k}}
\nc{\kLI}{\ensuremath{L}}
\nc{\kLIfct}[1]{\ensuremath{M_{#1}}}
\nc{\kLIg}{\kLIfct{g}}
\nc{\kLIf}{\kLIfct{f}}
\nc{\myR}{\ensuremath{{X_r}}}
\nc{\myRStrich}{\ensuremath{{X'_r}}}
\nc{\myB}{\ensuremath{{X_b}}}
\nc{\myBStrich}{\ensuremath{{X'_b}}}
\nc{\myBTilde}{\ensuremath{{\tilde{X}_b}}}
\nc{\isom}{\ensuremath{\cong}}
\nc{\undefined}{\ensuremath{\bot}}
\nc{\assig}[1]{\ensuremath{{\hat{\I}_{#1}}}}
\nc{\domb}[1]{\ensuremath{\textup{db}_{#1}}}
\nc{\dombQ}{\domb{Q}}
\nc{\domr}[1]{\ensuremath{\textup{dr}_{#1}}}
\nc{\domrQ}{\domr{Q}}
\nc{\form}[2]{\ensuremath{\phi_{#1,#2}}}
\nc{\myInt}[1]{\ensuremath{\mathcal{I}_{#1}}}
\nc{\mySeg}[1]{\ensuremath{\textit{Seg}(#1)}}
\nc{\newtransition}{\text{transition}\xspace}
\nc{\indfreeB}[1]{\ensuremath{\textit{ifree}_{B}(#1)}}
\nc{\indfreeR}[1]{\ensuremath{\textit{ifree}_{R}(#1)}}
\nc{\quanth}[2]{\ensuremath{{Q}_{#1,#2}}}
\nc{\mysum}[1]{\ensuremath{\textit{num}(#1)}} 
\rnc{\enquote}[1]{``#1''}
\nc{\Deins}{\ensuremath{D^{1}}}
\nc{\Jeins}{\ensuremath{J^{1}}}
\nc{\Teins}{\ensuremath{T^{1}}}
\nc{\bageins}{\ensuremath{\DBag^{1}}}
\nc{\covereins}{\ensuremath{\DCover^{1}}}
\nc{\Dzwei}{\ensuremath{D^{2}}}
\nc{\Jzwei}{\ensuremath{J^{2}}}
\nc{\Tzwei}{\ensuremath{T^{2}}}
\nc{\bagzwei}{\ensuremath{\DBag^{2}}}
\nc{\coverzwei}{\ensuremath{\DCover^{2}}}
\tikzset{
	pics/hyperedge/.style ={
		/tikz/shift={(-35,-33.3)},
		/tikz/scale={.333},
		/tikz/x={.75pt},
		/tikz/y={-.75pt},
		/tikz/line width={.75pt},
		code = {
		\draw   
		(81.13,90.44) .. controls (110.13,61.69) and (140.63,-19.31) .. 
		(181.13,20.44) .. controls (221.63,60.19) and (150.13,91.19) .. 
		(121.13,120.44) .. controls (92.13,149.69) and (61.63,230.19) .. 
		(21.13,190.44) .. controls (-19.37,150.69) and (52.13,119.19) .. 
		(81.13,90.44) -- cycle ;
		}
	},
	pics/hypergraph/.style = {
		/tikz/x={.75pt},
		/tikz/y={.75pt},
		/tikz/scale={1.25},
		/tikz/line width={.75pt},
		code = {
		\draw  [dashed] 
		(30,10) .. controls (48.2,8.6) and (115.8,-2.2) .. 
		(140,20) .. controls (164.2,42.2) and (119,48.2) .. 
		(130,90) .. controls (141,131.8) and (31.8,116.2) .. 
		(30,90) .. controls (28.2,63.8) and (13.8,55.4) .. 
		(10,40) .. controls (6.2,24.6) and (11.8,11.4) .. 
		(30,10) -- cycle ;
		}
	}
}
\newcommand{\convexpath}[2]{
[
    create hullnodes/.code={
        \global\edef\namelist{#1}
        \foreach [count=\counter] \nodename in \namelist {
            \global\edef\numberofnodes{\counter}
            \node at (\nodename) [draw=none,name=hullnode\counter] {};
        }
        \node at (hullnode\numberofnodes) [name=hullnode0,draw=none] {};
        \pgfmathtruncatemacro\lastnumber{\numberofnodes+1}
        \node at (hullnode1) [name=hullnode\lastnumber,draw=none] {};
    },
    create hullnodes
]
($(hullnode1)!#2!-90:(hullnode0)$)
\foreach [
    evaluate=\currentnode as \previousnode using \currentnode-1,
    evaluate=\currentnode as \nextnode using \currentnode+1
    ] \currentnode in {1,...,\numberofnodes} {
-- ($(hullnode\currentnode)!#2!-90:(hullnode\previousnode)$)
  let \p1 = ($(hullnode\currentnode)!#2!-90:(hullnode\previousnode) - (hullnode\currentnode)$),
    \n1 = {atan2(\y1,\x1)},
    \p2 = ($(hullnode\currentnode)!#2!90:(hullnode\nextnode) - (hullnode\currentnode)$),
    \n2 = {atan2(\y2,\x2)},
    \n{delta} = {-Mod(\n1-\n2,360)}
  in
    {arc [start angle=\n1, delta angle=\n{delta}, radius=#2]}
}
-- cycle
}
\author{
	Benjamin Scheidt\,\orcidlink{0000-0003-2379-3675} \\
	benjamin.scheidt@hu-berlin.de \\
	Humboldt-Universität zu Berlin
	\and Nicole Schweikardt\,\orcidlink{0000-0001-5705-1675} \\
	schweikn@hu-berlin.de \\
	Humboldt-Universität zu Berlin
}
\title{Counting Homomorphisms from Hypergraphs of Bounded Generalised
  Hypertree Width:\newline A Logical Characterisation\footnote{This is
  the extended version of the conference contribution \cite{ScheidtSchweikardtMFCS23}.}}
\begin{document}
\maketitle

\begin{abstract}
We introduce the 2-sorted counting logic $\GCk$ that expresses properties of hypergraphs. This
logic has available $k$ variables to address hyperedges, an unbounded number of
variables to address vertices of a hypergraph, and atomic formulas $E(e,v)$ to express that
vertex $v$ is contained in hyperedge~$e$.

We show that two hypergraphs $H,H'$ satisfy the same sentences of the logic $\GCk$ if, and only if,
they are homomorphism indistinguishable over the class
of hypergraphs of generalised
hypertree width at most $k$. Here, $H,H'$ are called homomorphism indistinguishable over a class
$\ClassC$ if for every hypergraph $G\in\ClassC$ the number of homomorphisms from $G$ to $H$
equals the number of homomorphisms from $G$ to $H'$.

This result can be viewed as a lifting (from graphs to hypergraphs) of a result
by Dvořák  (2010)
stating that any two (undirected, simple, finite) graphs $H,H'$ are
indistinguishable by the $k{+}1$-variable counting logic $C^{k+1}$ if, and only
if, they are homomorphism indistinguishable over the class of 
graphs of tree-width
at most $k$.
\end{abstract}

\clearpage
\tableofcontents

\clearpage

\section{Introduction}
\label{sec:introduction}

Counting homomorphisms from a given class $\ClassC$ of graphs induces a similarity measure between graphs:
Consider an arbitrary graph $H$.
The results of the homomorphism counts for all $G\in\ClassC$ in $H$ can be represented by
a mapping (or, \enquote{vector}) $\HOM{\ClassC}{H}$ that associates
with every $G\in\ClassC$ the number $\hom{G}{H}$ of homomorphisms from
$G$ to $H$. 
A similarity measure for the mappings $\HOM{\ClassC}{H}$ and $\HOM{\ClassC}{H'}$ can then
be viewed as a similarity measure of two given graphs $H$ and $H'$.
An overview of this approach, its relations to \emph{graph neural
  networks}, and its usability as a similarity measure of graphs can
be found in Grohe's survey \cite{Grohe2020a}. 

Two graphs $H,H'$ are viewed as \enquote{equivalent} (or,
\emph{indistinguishable}) \emph{over $\ClassC$} if
$\HOM{\ClassC}{H}=\HOM{\ClassC}{H'}$, i.e., 
for every graph $G$ in $\ClassC$ the number of homomorphisms from $G$ to $H$ equals the number of homomorphisms from $G$ to $H'$. 

A classical result by Lovász  \cite{Lovasz1967} shows that two graphs
$H$ and $H'$ are indistinguishable over the class of \emph{all} graphs
if, and only if, they are isomorphic. 
This inspired a lot of research in recent years, examining the notion
of \emph{homomorphism indistinguishability over a class $\ClassC$} for
various classes $\ClassC$ 
\cite{Dvorak2010,DBLP:conf/icalp/DellGR18,Boeker-MasterThesis,DBLP:conf/mfcs/BokerCGR19,Boeker2019,Grohe2020,DBLP:conf/focs/MancinskaR20}.

In particular, Grohe \cite{Grohe2020} proved that two graphs are
homomorphism indistinguishable over the class of graphs of
\emph{tree-depth at most $k$} if, and only  if, they are
indistinguishable by sentences of first-order counting logic $C$ of
quantifier-rank at most $k$ ($C$ is the extension of first-order logic
with counting quantifiers of the form $\exists^{\geq n} x$  meaning
\enquote{there exist at least $n$ elements $x$}). 

A decade earlier,
Dvořák \cite{Dvorak2010} proved that two graphs are homomorphism
indistinguishable over the class of graphs of \emph{tree-width at most
  $k$} if, and only if, they are indistinguishable by sentences of the
$k{+}1$-variable fragment $C^{k+1}$ of $C$. 
From Cai, Fürer and Immerman \cite{Cai1992} we know that this precisely coincides with
indistinguishability by the $k$-dimensional Weisfeiler-Leman algorithm.

An obvious question is if and how these kinds of results can be lifted from graphs to
hypergraphs.

A first answer to this question was given by Böker in \cite{Boeker2019}: He introduces a
new version of a \emph{color refinement} algorithm on hypergraphs and proves that two
hypergraphs $H$ and $H'$ cannot be distinguished by this algorithm if, and only if, they are
homomorphism indistinguishable over the class
of \emph{Berge-acyclic} hypergraphs.
This can be viewed as a lifting ---  from graphs to hypergraphs --- of the result of
\cite{Dvorak2010,Cai1992} for the case $k=1$ (i.e., trees) to
\enquote{tree-like} hypergraphs. Note that there are different
concepts of ``tree-likeness'' for hypergraphs. Berge-acyclicity is a
rather restricted one; it is subsumed by the more general concept of
$\alpha$-acyclic hypergraphs, which coincides with the hypergraphs of
\emph{generalised hypertree width~$1$} (cf.,
\cite{Gottlob2002,Gottlob2003,Gottlob2016}). 

This paper gives a further answer to the above question: For arbitrary
$k\geq 1$ we consider the class $\GHWk$ of hypergraphs of generalised
hypertree width at most $k$. 
Our main result provides a logical characterisation of
homomorphism indistinguishability over the class $\GHWk$.
We introduce a new logic called $\GCk$ and show that two hypergraphs
are homomorphism indistinguishable over $\GHWk$ if, and only if, they
are indistinguishable by sentences of the logic $\GCk$. 

$\GCk$ is a 2-sorted counting logic for expressing properties of hypergraphs. It
has available $k$ ``blue'' variables to address edges, and an unbounded number of
``red'' variables to address vertices of a hypergraph, and atomic formulas
$E(e,v)$ to express that vertex $v$ is contained in edge $e$, as well
as atomic formulas $e=e'$ and $v=v'$ for expressing equality of edge
or vertex variables. 
Counting quantifiers are of the form $\exists^{\geq n}\tupel{z}$ where
$\tupel{z}=(z_1,\ldots,z_\ell)$ is either a tuple of edge variables or
a tuple of vertex variables; and their meaning is ``there exist at
least $n$ tuples $\tupel{z}$''. 
In the logic $\GCk$, each vertex variable $v$ has to be \emph{guarded}
by an edge variable $e$ and an atomic statement $E(e,v)$ (meaning that
vertex $v$ is included in edge $e$); 
the use of quantifiers is restricted in a way to ensure that guards are always present.
Our design of the logic $\GCk$ is heavily inspired by the \emph{guarded fragment of
first-order logic} (cf., \cite{DBLP:journals/jphil/AndrekaNB98,DBLP:journals/jsyml/Gradel99,Gottlob2003}).

Our main result can be viewed as a lifting --- from graphs to
hypergraphs --- of Dvořák's \cite{Dvorak2010} result: Dvořák proves
that two graphs are homomorphism indistinguishable over the class
$\TW{k}$ of 
graphs of tree-width $\leq k$ iff they are indistinguishable by the logic $C^{k+1}$.
We prove that two hypergraphs are homomorphism indistinguishable over the class $\GHWk$ of
hypergraphs of generalised hypertree width $\leq k$ iff they are indistinguishable by
the logic $\GCk$.

This is analogous (although not tightly related) to the following classical results:
Kolaitis and Vardi \cite{Kolaitis2000} proved that the conjunctive queries of tree-width $\leq k$ are precisely the queries expressible in
the $k{+}1$-variable fragment of a certain subclass $L$ of first-order
logic. Gottlob et al.\ \cite{Gottlob2003} proved that the conjunctive
queries of hypertree width $\leq k$ are precisely the ones expressible
in the $k$-guarded fragment of $L$. 
This is somehow parallel to our result generalising Dvořák's
characterisation; it is what initially gave us the confidence to work
on our hypothesis. 

The proof of our theorem is at its core very similar to Dvořák's proof --- but it is far from straightforward.
Before being able to follow along the lines of Dvořák's proof, we
first have to perform a number of reduction steps and build the
necessary machinery. 

The first step is to move over from 
homomorphisms on hypergraphs to homomorphisms on incidence graphs.
Fortunately, Böker \cite{Boeker2019} already implicitly achieved what
is needed in our setting. The result is: Two hypergraphs $H,H'$ are
homomorphism indistinguishable over the class $\GHWk$ iff their
incidence graphs $I,I'$ are homomorphism indistinguishable over the
class $\IGHW{k}$ of incidence graphs of generalised hypertree width
$\leq k$; 
see Section~\ref{sec:generalization_to_regular_homomorphisms}.

Next, for an inductive proof in the spirit of Dvořák, we would need an inductive characterisation of the class $\IGHW{k}$
in the spirit of \cite{Courcelle1993}. Unfortunately, generalised
hypertree decompositions seem to be unsuitable for such a
characterisation. That is why we work with severely restricted
decompositions that we call \emph{entangled hypertree decompositions}
(ehds, for short).  
In Section~\ref{sec:generalization_to_hw} we prove that homomorphism
indistinguishability over the class $\IGHW{k}$ coincides with
homomorphism indistinguishability over the class $\IEHWk$ of incidence
graphs of \emph{entangled hypertree width} $\leq k$. 
In our opinion this is interesting on its own, since the requirements
of ehds are quite harsh and $\IEHWk\varsubsetneq\IGHW{k}$ for arbitrarily large $k$.

In Section~\ref{sec:logic} we introduce the logic $\GCk$ and provide some
example formulas.
In Section~\ref{sec:normal_form} we establish a \emph{normal form}
for $\GCk$ called $\RGCk$ that
imposes certain restrictions on the way guards of red variables can
change between quantifications in a formula. These restrictions are
crucial for the proof of our main theorem.
The inductive characterisation of $\IEHWk$ follows
in Section~\ref{sec:recursive_def}, where we also provide the
machinery of \emph{quantum} incidence graphs 
as an analogue of the quantum
graphs used in Dvořák's proof, tailored towards our setting.
In Section~\ref{sec:main_theorem} we prove that
two incidence graphs $I,I'$ are indistinguishable by the logic $\RGCk$
(or, equivalently, $\GCk$)
if, and only if, they are homomorphism indistinguishable over the
class $\IEHWk$. This is achieved by two inductive proofs: We use our
inductive characterisation of $\IEHWk$ to show that for every
incidence graph $J$ in $\IEHWk$ and every $m \geq 0$
there exists an $\NGCk$-sentence that is satisfied by an incidence graph $I$ iff there are precisely $m$ homomorphisms from $J$ to $I$. For the opposite direction, we proceed by induction on the definition
of $\NGCk$ and  construct for every sentence $\chi$ in $\NGCk$ and
certain size parameters $m,d \geq 0$
a quantum incidence graph $Q$ in $\IEHWk$ satisfying the following:
for all  incidence graphs $I$ that match the size parameters $m,d$,
the number $\hom{Q}{I}$ of homomorphisms from $Q$ to $I$ is either 0
or 1, and it is 1 if and only if $I$ satisfies the sentence 
$\chi$.
Both proofs are quite intricate, and the details of the syntax
definition of $\NGCk$ had to be tweaked right in order to enable
proving \emph{both} directions. 

We wrap up in Section~\ref{sec:conclusion}: Plugging together the
results achieved in the previous sections yields our main theorem: Two
hypergraphs are homomorphism indistinguishable over the class $\GHWk$
of 
hypergraphs of generalised hypertree width $\leq k$ iff they are indistinguishable by
the logic $\GCk$.

While the main body of this paper contains concise proof sketches, the
formal proof details can be found in a comprehensive appendix.

\paragraph*{Acknowledgement:} We thank Isolde Adler for pointing us to
the results in \cite{AdlerDiss,Adler2004,Adler2007}, which led to Theorem~\ref{thm:IEHWsubsetIGHW}.

\section{Preliminaries}
\label{sec:preliminaries}

This section provides basic notions concerning hypergraphs, incidence graphs, hypertree decompositions, and homomorphisms.
We write $\RR$ for the set of
reals, $\NN$ for the set of
non-negative integers, and we let $\NNpos\deff\NN\setminus\set{0}$
and $[n]\deff\set{1,\ldots,n}$ for all $n\in\NNpos$.

\subsection{Hypergraphs}

The hypergraphs considered in this paper are generalisations of
ordinary undirected graphs, where each edge can consist of an
arbitrary number of vertices. For our proofs it will be necessary to
deal with hypergraphs in which the same edge can have multiple
occurrences. Furthermore, it will be convenient to assume that every
vertex belongs to at least one edge. This is provided by the following
definition that is basically taken from \cite{Boeker2019}. 

\begin{definition}
A \emph{hypergraph} $H \isdef (\V{H},\allowbreak \E{H},f_H)$ 
consists of disjoint finite sets $\V{H}$ of \emph{vertices} and
$\E{H}$ of edges, and an \emph{incidence function} $f_H$ associating
with every $e\in \E{H}$ the set $f_H(e) \subseteq \V{H}$ of vertices
incident with edge $e$, such that $\V{H} = \bigunion_{e \in \E{H}}
f_H(e)$. \\
A \emph{simple hypergraph} is a hypergraph $H$ where the function $f_H$ is injective.
\end{definition}

We can identify the edges of a hypergraph $H$ with the multiset $M_H\deff\msetc{f_H(e)}{e\in E(H)}$; the
number of occurrences
of a set $s\subseteq V(H)$ in this multiset then is the number of
occurrences of ``edge $s$'' in $H$. The \emph{simple} hypergraphs are
the hypergraphs in which every ``edge $s$'' has only one occurrence. 

Every hypergraph $H=(V(H),E(H),f_H)$ can be represented by an ordinary, bipartite graph $\inzidenz{H}$ in the following way:
The vertices $v\in V(H)$ occur as \emph{red} nodes of $\inzidenz{H}$, i.e., $\rV{\inzidenz{H}}\deff V(H)$.
The edges $e\in E(H)$ occur as \emph{blue} nodes of $\inzidenz{H}$, i.e.,
$\bV{\inzidenz{H}}\deff E(H)$.
And there is an edge from each blue node $e$ to all red nodes $v\in f_H(e)$. I.e.,
$\E{\inzidenz{H}}\isdef\{ (e, v) \in \bV{\inzidenz{H}} \times \rV{\inzidenz{H}} \mid v \in f_H(e) \}$.
The condition $\V{H} = \bigunion_{e \in \E{H}} f_H(e)$ implies that every red node is adjacent to at least one blue node.
It is straighforward to see that the mapping $H\mapsto \inzidenz{H}$
provides a bijection between the class of all hypergraphs and the
class of all \emph{incidence graphs}, where the notion of incidence
graphs is as follows.  

\begin{definition}\label{def:IncidenceGraph}
An \emph{incidence graph} $\I=(\rVI,\allowbreak \bVI,\allowbreak
\E{\I})$ consists of disjoint finite sets $\rVI$ and $\bVI$ of
\emph{red} nodes and \emph{blue} nodes, respectively, and a set of
edges $\E{\I} \subseteq \bVI \times\rVI$, such that each red node is
adjacent to at least one blue node. 
\end{definition}

As usual for graphs, the \emph{neighbourhood}
of a node $v$ is the set $N_\I(v)$ of all nodes adjacent to $v$.
Thus, if $\I$ is the incidence graph $\inzidenz{H}$ of a hypergraph $H$, then the neighbourhood
of every blue node $e$ is $N_\I(e)= f_H(e)$, i.e., the set of all vertices of $H$ that are incident with edge $e$. 
The neighbourhood
of every red node $v$ is $N_\I(v)=\setc{e\in E(H)}{v\in f_H(e)}$, i.e., the set of all edges of $H$ that are incident with
vertex $v$.

Two incidence graphs $\I,\I'$ are \emph{isomorphic} ($\I\isom\I'$, for short) if there exists
an \emph{isomorphism $\pi=(\pi_R,\pi_B)$ from $\I$ to $\I'$}, i.e, bijections $\pi_R:\rVI\to \rV{\I'}$ and $\pi_B:\bVI\to\bV{\I'}$ such that
for all $(e,v)\in\bVI\times \rVI$ we have: $(e,v)\in\E{\I}$ $\iff$ $(\pi_B(e),\pi_R(v))\in\E{\I'}$.
We sometimes drop the subscript and write $\pi(e)$ and $\pi(v)$ instead of $\pi_B(e)$ and $\pi_R(v)$. 

\subsection{Generalised Hypertree Decompositions}

We use the same notation as \cite{Gottlob2016} for decompositions of
hypergraphs, but we write $\DBag(t)$ and $\DCover(t)$ instead of
$\chi(t)$ and $\lambda(t)$, respectively, and we formalise them with
respect to incidence graphs rather than hypergraphs. 

\begin{definition}\label{def:hypertree_decomposition}
A \emph{complete generalised hypertree decomposition} (\emph{ghd}, for
short) of an incidence graph $\I$ is a tuple $D \isdef (T, \DBag,
\DCover)$, where $T \isdef (V(T), \E{T})$ is a 
finite undirected tree,       
and
$\DBag$ and $\DCover$ are mappings that associate with every tree-node
$t\in V(T)$ a set $\DBag(t)\subseteq \rVI$ of red nodes of $\I$ and a
set $\DCover(t)\subseteq\bVI$ of blue nodes of $\I$, having the
following properties: 
\begin{enumerate}
\item \label{def:hypertree_decomposition:completeness}
\emph{Completeness}: For every $e \in \bV{\inzidenz{}}$ there is a
tree-node $t \in V(T)$ with $\Nachbarn{e}{\I} \subseteq \DBag(t)$ and
$e \in \DCover(t)$. 

\item \label{def:hypertree_decomposition:connectedness_of_vertices}
\emph{Connectedness for red nodes}: For every $v \in \rV{\inzidenz{}}$
the subgraph $T_v$ of $T$ induced on $V_v \deff \setc{ t \in V(T)}{v
 \in \DBag(t) }$ is a tree. 

\item \label{def:hypertree_decomposition:covering}
\emph{Covering of Bags}: For every $t \in V(T)$ we have $\DBag(t)
\subseteq \bigunion_{e \in \DCover(t)} \Nachbarn{e}{\inzidenz{}}$. 
\setcounter{HDCounter}{\value{enumi}}

\end{enumerate}
The \emph{width $\w{D}$ of a ghd} $D$ is defined as the maximum number
of blue nodes in the cover of a tree-node, i.e., $\w{D} \isdef
\max\setc{ | \DCover(t) |}{t \in V(T)}$. 
We write $\HD{\inzidenz{}}$ to denote the class of
all ghds of an incidence graph $\I$.
\end{definition}

It is straightforward to see that this notion of a ghd of an incidence
graph $\I$  coincides the classical notion  (cf.,
\cite{Gottlob2002,Gottlob2016}) of a complete generalised hypertree
decomposition of a hypergraph $H$ where $\inzidenz{H}=\I$. 

\begin{definition}
The \emph{generalised hypertree width} of an incidence graph $\I$ is
$\hw{\I} \isdef \min\setc{\w{D}}{D \in \HD{\I}}$.
By $\IGHW{k}$ we denote the class of all incidence graphs of generalised hypertree width $\leq k$.
\end{definition}

It is straightforward to see that
$\hw{\inzidenz{H}}$ coincides with the classical notion (cf., \cite{Gottlob2016}) of generalised hypertree width of 
a hypergraph $H$, and $\IHW{}$ is the class of incidence graphs $\inzidenz{H}$ of all hypergraphs $H$ of generalised hypertree width $\leq k$.

For our proofs we need ghds with specific further properties, defined as follows;
we are not aware of any related work that studies this particular kind
of decompositions.\footnote{Hypertree decompositions satisfying
  condition~\ref{def:entangled_hypertree_decomposition:precision} (but
  not necessarily
  condition~\ref{def:entangled_hypertree_decomposition:connectedness_of_edges})
  of Definition~\ref{def:entangled_hypertree_decomposition} are known
  as \emph{strong}  decompositions \cite{Gottlob2003}.} 

\begin{definition}\label{def:entangled_hypertree_decomposition}
An \emph{entangled hypertree decomposition} (\emph{ehd}, for short) of
an incidence graph $\I$ is a ghd $D$ of $\I$ that additionally
satisfies the following requirements: 
\begin{enumerate}
\setcounter{enumi}{\value{HDCounter}}
\item \label{def:entangled_hypertree_decomposition:precision} 
\emph{Precise coverage of bags}: For all tree-nodes $t \in V(T)$ we
have \[\bigunion_{e \in \DCover(t)} \Nachbarn{e}{\I} = \DBag(t).\] 

\item \label{def:entangled_hypertree_decomposition:connectedness_of_edges}
\emph{Connectedness for blue nodes}: For every $e \in \bVI$ the
subgraph  $T_e$ of $T$ induced on  $V_e \deff \setc{ t \in V(T)}{e \in
  \DCover(t) }$ is a tree. 
\end{enumerate}
We write $\Ehd{\I}$ to denote the class of all ehds of an incidence graph $\I$.
\end{definition}

\begin{definition}\label{def:ehw}
The \emph{entangled hypertree width} of an incidence graph $\I$ is
$\ehw{\I} \isdef \min\setc {\w{D}}{D \in \Ehd{\I}}$.
For a hypergraph $H$ we let $\ehw{H}\deff \ehw{I_H}$.

By $\IEHW{k}$ we denote the class of all incidence graphs of entangled hypertree width $\leq k$.
\end{definition}

Clearly, $\hw{\I}\leq\ehw{\I}$ for every incidence graph $\I$. Hence, $\IEHW{k}\subseteq\IGHW{k}$ for all $k\in\NNpos$.
Applying results from \cite{AdlerDiss,Adler2004,Adler2007} shows that
there exist arbitrarily large $k$ such that
$\IEHW{k}$ is a strict subclass of $\IGHW{k}$.
More precisely:

\begin{theorem}\label{thm:IEHWsubsetIGHW}
$\IEHW{k}\subseteq \IGHW{k}$, for every $k\in\NNpos$.
Furthermore,   
$\IEHW{1}=\IGHW{1}$, but $\IEHW{k}\varsubsetneq\IGHW{k}$ for each
$k\in\set{2,3}$.
Moreover, for every $n\in\NN$ there exists a $k\in\NNpos$ such 
that $\IGHW{k}\not\subseteq\IEHW{k+n}$ (and hence, $\IEHW{k+n}\varsubsetneq\IGHW{k+n}$).
\end{theorem}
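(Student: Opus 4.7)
The first statement $\IEHW{k}\subseteq\IGHW{k}$ is immediate from Definition~\ref{def:entangled_hypertree_decomposition}, since every ehd is in particular a ghd.

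For $\IEHW{1}=\IGHW{1}$, I would begin from the well-known fact that a hypergraph $H$ satisfies $\ghw{H}\leq 1$ if and only if it is $\alpha$-acyclic, which in turn holds if and only if $H$ admits a join tree. Given such a join tree, I would set $\DCover(t_e)\deff\set{e}$ and $\DBag(t_e)\deff\Nachbarn{e}{\inzidenz{H}}$ for every blue node $e$ of $\inzidenz{H}$. The resulting tuple is a ghd of $\inzidenz{H}$ of width~$1$: completeness and red connectedness follow from the defining properties of a join tree, precise coverage holds by construction, and blue connectedness is trivial since each blue node belongs to exactly one cover. Hence the construction is an ehd, and $\ehw{\I}\leq 1$.

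For the strict inclusions at $k\in\set{2,3}$ and the asymptotic separation, the plan is first to establish that $\hypertreewidth{\I}\leq\ehw{\I}$ for every incidence graph $\I$, where $\hypertreewidth{\cdot}$ denotes classical hypertree width. The key observation is that the precise-coverage condition of an ehd $D$ forces $\Nachbarn{e}{\I}\subseteq\DBag(t)$ for every $t\in V(T)$ and every $e\in\DCover(t)$. Hence, after rooting the underlying tree arbitrarily, the \emph{descendant condition} of a classical hypertree decomposition (namely, that every vertex below $t$ which belongs to some edge of $\DCover(t)$ must already lie in $\DBag(t)$) becomes vacuous. Consequently, every ehd is simultaneously a classical hypertree decomposition of the same width.

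Combining $\hypertreewidth{\I}\leq\ehw{\I}$ with the separating constructions of Adler \cite{AdlerDiss,Adler2004,Adler2007}, which exhibit, for every $n\in\NN$, hypergraphs $H$ with $\ghw{H}$ bounded but $\hypertreewidth{H}\geq\ghw{H}+n$, as well as concrete witnesses already at widths $2$ and $3$, then yields hypergraphs in $\IGHW{k}\setminus\IEHW{k+n}$ for a suitably chosen $k$ and establishes the strict inclusions $\IEHW{k}\varsubsetneq\IGHW{k}$ for $k\in\set{2,3}$. I do not anticipate a major obstacle beyond this transfer: the essential new observation is precisely the reduction from entangled hypertree width to classical hypertree width, which is what makes Adler's lower bounds available in the entangled setting and is also what the pointer in the acknowledgement refers to.
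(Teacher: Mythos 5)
Your proposal matches the paper's proof in all essential respects: the trivial inclusion, the join-tree argument for $k=1$, the key observation that every ehd is already a classical hypertree decomposition (because precise coverage makes the descendant/special condition vacuous), and the transfer of Adler's separations between $\ghw{\cdot}$ and $\hypertreewidth{\cdot}$. Your write-up is somewhat more detailed on the $k=1$ case, but the route is the same.
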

\begin{proof}
$\IEHW{k}\subseteq \IGHW{k}$ holds because every ehd also is a ghd.
$\IEHW{1}=\IGHW{1}$ holds because ghds of width 1 are known to be
equivalent to so-called \emph{join trees}, and these can easily be
translated into ehds
of width 1.
For the remaining statements, we use elaborate results from 
\cite{AdlerDiss,Adler2004,Adler2007} that relate the
\emph{hypertree width} $\hypertreewidth{H}$ (cf.,
\cite{Gottlob2002,Gottlob2003}) of a hypergraph to its 
\emph{generalised hypertree width} $\ghw{H}$: 

From \cite[Proposition~3.3.2]{AdlerDiss} (cf.\ also
\cite[Example~3]{Adler2007}) and
\cite[Claim~6.1]{Adler2004} we obtain for each $k\in\set{2,3}$ a
simple hypergraph $H_k$ such that $\ghw{H_k}=k$ and $\hypertreewidth{H_k}=k{+}1$.
Furthermore,
\cite[Fact~3.3.1]{AdlerDiss}
and \cite[Theorem~4.1]{Adler2004} 
provide for every $n\in\NNpos$ a simple
hypergraph $H^n$ such that $\hypertreewidth{H^n}=\ghw{H^n}{+}n$.\footnote{Note that the notions $c_H{\text -}\hypertreewidth{H}$ and $c_H{\text -}\ghw{H}$ in \cite{AdlerDiss} correspond to $\hypertreewidth{H}$ and $\ghw{H}$ for all hypergraphs $H$ according to \cite[Example 2.1.10]{AdlerDiss}.}

It is straightforward to verify that every ehd also is a
complete \emph{hypertree decomposition} in the sense of
\cite{Gottlob2002,Gottlob2003}. Consequently, for every hypergraph $H$
we have $\hypertreewidth{H}\leq \ehw{H}$.
Therefore, for each $k\in\set{2,3}$, the incidence graph of $H_k$
witnesses that $\IEHW{k}\varsubsetneq \IGHW{k}$.

To address the theorem's next statement, consider an arbitrary
$n\in\NN$. Let $H\deff H^{n+1}$ and let $k\deff \ghw{H}$. Then,
$\ehw{H}\geq \hypertreewidth{H}=k{+}n{+}1$. Thus, the incidence graph
of $H$ belongs to $\IGHW{k}$ but not to $\IEHW{k+n}$.
\end{proof}

\subsection{Homomorphisms}
We use the classical notions for hypergraphs and incidence graphs:

\begin{definition}\label{def:HypergraphHomom}
Let $F$ and $H$ be hypergraphs. A \emph{homomorphism} from $F$ to $H$ is a
pair $(h_V,h_E)$ of mappings $h_V: \V{F} \to \V{H}$ and $h_E:\E{F}\to \E{H}$
such that for every $e\in \E{F}$ we have
$f_H(h_E(e)) = \setc{h_V(v)}{v\in f_F(e)}$.
\end{definition}

By $\Hom{F}{H}$ we denote the set of all homomorphisms from $F$ to
$H$, and we let $\hom{F}{H} \isdef | \Hom{F}{H} |$ be the number of
homomorphisms from $F$ to $H$.

\begin{definition}\label{def:IncidenceGraphHomom}
Let $\J$ and $\I$ be incidence graphs. A \emph{homomorphism} from $\J$ to $\I$ is a pair
$h=(h_R,h_B)$ of mappings  
$h_R: \rVJ \to \rVI$ and $h_B: \bVJ \to \bVI$ such that for all $(e,v)\in \E{\J}$ we have 
$(h_B(e), h_R(v)) \in \E{\I}$.
\\
We sometimes drop the subscript and write $h(e)$ and $h(v)$ instead of $h_B(e)$ and $h_R(v)$. 
\end{definition}

By $\Hom{\J}{\I}$ we denote the set of all homomorphisms from $\J$ to
$\I$, and we let $\hom{\J}{\I} \isdef | \Hom{\J}{\I} |$ be the number
of homomorphisms from $\J$ to $\I$. 

As pointed out in \cite{Boeker2019}, every homomorphism from a
hypergraph $F$ to a hypergraph $H$ also is a homomorphism from the
incidence graph $\inzidenz{F}$ to the incidence graph $\inzidenz{H}$;
but there exist homomorphisms from $\inzidenz{F}$ to $\inzidenz{H}$
that do not correspond to any homomorphism from $F$ to $H$. 
In fact, every homomorphism $(h_R,h_B)$ from $\inzidenz{F}$ to
$\inzidenz{H}$ is a pair of mappings $(h_V,h_E)\deff (h_R,h_B)$ with
$h_V:\V{F}\to\V{H}$ and $h_E:\E{F}\to\E{H}$ such that for every $e\in
\E{F}$ we have 
$f_H(h_E(e)) \supseteq \setc{h_V(v)}{v\in f_F(e)}$ --- i.e., the
condition ``$=$'' of Definition~\ref{def:HypergraphHomom} is relaxed
into the condition ``$\supseteq$''. 

\section{Homomorphism Indistinguishability}
\label{sec:generalization_to_regular_homomorphisms}

Let $(B,B',\ClassC)$ be either two incidence graphs and a class of incidence graphs or two hypergraphs and a class of hypergraphs.
By $\HOM{\ClassC}{B}$ we denote the function $\alpha:\ClassC\to\NN$ that associates with every $A\in\ClassC$ the number $\hom{A}{B}$ of homomorphisms from $A$ to $B$.
We say that $B$ and $B'$ are \emph{homomorphism indistinguishable over $\ClassC$} if $\HOM{\ClassC}{B}=\HOM{\ClassC}{B'}$.
Note that $\HOM{\ClassC}{B}\neq\HOM{\ClassC}{B'}$ means that there
exists an $A\in\ClassC$ that \emph{distinguishes} between $B$ and $B'$
in the sense that $\hom{A}{B}\neq\hom{A}{B'}$, i.e., the number of
homomorphisms from $A$ to $B$ is different from the number of
homomorphisms from $A$ to $B'$. 

Recall from Section~\ref{sec:preliminaries} that $\IGHW{k}$ is the class of all incidence graphs of generalised hypertree width $\leq k$.
We write $\GHW{k}$ for the class of all hypergraphs of generalised
hypertree width $\leq k$ (i.e., all hypergraphs $H$ for which
$\inzidenz{H}\in\IGHW{k}$), and $\SGHW{k}$ for the subclass consisting
of all \emph{simple} hypergraphs (i.e., hypergraphs where each edge
has multiplicity 1) in 
$\GHW{k}$.
Proof details for the following theorem can be found in Appendix~\ref{appendix:generalization_to_regular_homomorphisms}.

\begin{theorem}[implicit in \cite{Boeker2019}]
\label{thm:Boeker}
Let $H,H'$ be hypergraphs.
\begin{mea}
\item\label{item:a:thm:Boeker}
  If $H$ and $H'$ are simple hypergraphs, then\\
  $\HOM{\GHW{k}}{H}=\HOM{\GHW{k}}{H'}$ \ \ $\iff$ \ \
  $\HOM{\SGHW{k}}{H}=\HOM{\SGHW{k}}{H'}$.

\item\label{item:b:thm:Boeker}
  $\HOM{\GHW{k}}{H}=\HOM{\GHW{k}}{H'}$ \ \ $\iff$ \ \
  $\HOM{\IGHW{k}}{\inzidenz{H}}=\HOM{\IGHW{k}}{\inzidenz{H'}}$.
\end{mea}  
\end{theorem}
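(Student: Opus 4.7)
The plan is to treat both parts by exploiting the tight combinatorial relationship between strict hypergraph homomorphisms (which satisfy $f_H(h_E(e))=h_V(f_F(e))$) and loose incidence-graph homomorphisms (where this condition is relaxed to ``$\supseteq$''). In both parts, the engine is an elementary Möbius-type inversion on the Boolean lattice of subsets of $V(H)$; for part~(a) the inversion is degenerate because the target is simple.

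For part~(a), I would observe that if $H$ is simple then $f_H$ is injective, so for every vertex map $h_V\colon V(F)\to V(H)$ and every edge $e\in E(F)$ there is \emph{at most one} $h_E(e)\in E(H)$ with $f_H(h_E(e))=h_V(f_F(e))$. Consequently $\hom{F}{H}=\hom{F^{\flat}}{H}$, where $F^{\flat}$ is the simple hypergraph obtained from $F$ by identifying parallel edges. Since any ghd of $F$ is also a ghd of $F^{\flat}$, $F\in\GHW{k}$ implies $F^{\flat}\in\SGHW{k}$. Hence $\HOM{\SGHW{k}}{H}$ already determines $\HOM{\GHW{k}}{H}$ (and symmetrically for $H'$); the other direction is immediate from $\SGHW{k}\subseteq\GHW{k}$.

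For part~(b), I would set $a_S\deff|\{g\in E(H):f_H(g)=S\}|$ and $b_S\deff|\{g\in E(H):f_H(g)\supseteq S\}|$, so that
\[
\hom{F}{H}=\sum_{h_V}\prod_{e\in E(F)}a_{h_V(f_F(e))},\qquad \hom{\inzidenz{F}}{\inzidenz{H}}=\sum_{h_R}\prod_{e\in E(F)}b_{h_R(f_F(e))}.
\]
Since $b_S=\sum_{S'\supseteq S}a_{S'}$, Möbius inversion gives $a_S=\sum_{S'\supseteq S}(-1)^{|S'\setminus S|}b_{S'}$. Substituting these into the two displayed formulas and reorganising, a term in which each $b_{S_e}$ is written as $b_{h_V(f_F(e))\cup T_e}$ for some $T_e\subseteq V(H)$ disjoint from $h_V(f_F(e))$ should be interpreted as a contribution to $\hom{\inzidenz{F^{+}}}{\inzidenz{H}}$, where $F^{+}$ is obtained from $F$ by adding $|T_e|$ fresh vertices to each edge~$e$. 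This will produce two $\mathbb{Z}$-linear identities expressing $\hom{F}{H}$ in terms of $\{\hom{\inzidenz{F^{+}}}{\inzidenz{H}}\}_{F^{+}}$ and vice versa; plugging them into the respective hypotheses yields both implications of~(b).

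The main obstacle will be twofold. First, the Möbius bookkeeping is delicate because the fresh vertices of $F^{+}$ may be sent by the incidence hom to vertices in $V(H)$ already lying in $h_R(f_F(e))$ or in the image of other fresh vertices, so a secondary inclusion-exclusion (or a quotienting over the fresh vertices) is required to convert the naive expansion into a clean $\mathbb{Z}$-linear combination of hom counts from honest hypergraphs. Second and crucial for the statement of the theorem, one must verify the $\ghw$-preservation lemma: if $F\in\GHW{k}$, then every $F^{+}$ arising above still lies in $\GHW{k}$, because an arbitrary ghd of $F$ extends to $F^{+}$ by placing each fresh vertex into just one bag $t$ with its edge in $\DCover(t)$, so that all conditions of Definition~\ref{def:hypertree_decomposition} are preserved without increasing the width. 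Given these two ingredients, the proof reduces to algebraic substitution; since the same computation is essentially carried out in \cite{Boeker2019}, the remaining work is to extract and adapt the relevant pieces from there.
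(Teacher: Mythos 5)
Your proof of part~(a) is essentially the paper's: both arguments hinge on the observation that when the target $H$ is simple, homomorphisms from $F$ and from its simplification $F^\flat$ coincide in number (since $h_E$ is forced to be constant on parallel edge classes), together with the fact that $F^\flat$ inherits the ghd of $F$.

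For part~(b) your overall strategy also matches the paper's, which is to invoke B\"oker's lemma and then check that $\GHW{k}$ satisfies its hypotheses. But there is a gap in what you actually verify. B\"oker's reduction requires the class $\ClassC$ to be closed under \emph{two} operations: \emph{pumping} (adding a fresh vertex inside one edge) and \emph{local merging} (identifying two vertices that already share an edge). You correctly identify, and prove, the pumping closure: any ghd of $F$ extends to $F^{+}$ by stuffing each fresh vertex into a bag whose cover contains the relevant edge. And you correctly notice the second combinatorial wrinkle — that the M\"obius bookkeeping forces one to account for incidence homs that collapse fresh vertices onto each other or onto pre-existing vertices in the same edge, which you dub ``quotienting over the fresh vertices.'' That quotienting is exactly local merging. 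However, you never verify that the quotiented hypergraphs stay in $\GHW{k}$; you prove ghw-preservation only for the $F^{+}$ obtained by pumping. The paper handles this by explicitly checking that if $u_1,u_2$ lie in a common edge $e$, then some bag $t$ with $e\in\DCover(t)$ contains both $u_1$ and $u_2$, so the projection of every bag under the identification map yields a valid ghd of the merged hypergraph with unchanged width. Without that second closure check, your reduction to the ``clean $\mathbb{Z}$-linear combination of hom counts from honest hypergraphs'' does not stay inside $\GHW{k}$, and the argument does not close. Adding the local-merging closure verification would bring your proposal into full alignment with the paper.
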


\noindent
Böker \cite{Boeker2019} proved the analogous statement for
$\BA{k},\IBA{k}$ instead of $\GHW{k},\IGHW{k}$, where $\BA{k}$ is the
class of all Berge-acyclic hypergraphs and $\IBA{k}$ is the class of
all incidence graphs of hypergraphs in $\BA{k}$. 
Böker's proof, however, works for all classes $\Class{C}$ of
hypergraphs and the associated class $\Class{IC}$ of all incidence
graphs of hypergraphs in $\Class{C}$, provided that $\Class{C}$ is
closed under \emph{local merging} and \emph{pumping} (Böker calls this
\enquote{leaf adding}). 
A class $\Class{C}$ is closed under \emph{local merging}, if every
$F\in\Class{C}$ remains in $\Class{C}$ when merging two vertices
$u_1,u_2 \in \V{F}$ that are adjacent via a common edge $e \in \E{F}$
(i.e., $u_1,u_2 \in f_F(e)$), into a single new vertex $u$.  
A class $\Class{C}$ is closed under \emph{pumping}, if every
$F\in\Class{C}$ remains in $\Class{C}$ when inserting a newly created
vertex into an edge $e \in \E{F}$ (but note that we must not insert
vertices in the intersection of multiple edges).
It is easy to see that the class $\GHW{k}$ is closed under local merging and pumping.

\section{Relating $\IGHW{k}$ to $\IEHW{k}$}
\label{sec:generalization_to_hw}

Recall from Section~\ref{sec:preliminaries} that 
$\IEHW{k}\subseteq\IGHW{k}$, for the class $\IEHW{k}$ of
incidence graphs of entangled hypertree width $\leq k$ and the class $\IGHW{k}$
of incidence graphs of generalised hypertree width $\leq k$.
By Theorem~\ref{thm:IEHWsubsetIGHW} there exist arbitrarily large $k$ such that $\IEHW{k}$ is a strict
subclass of $\IGHW{k}$.
This section's main result is that, nevertheless:

\begin{theorem}\label{thm:EHWandIGHW}
For all incidence graphs $\I$ and $\I'$ we have\\
$\HOM{\IGHW{k}}{\I}=\HOM{\IGHW{k}}{\I'}$ \ \ $\iff$ \ \
$\HOM{\IEHW{k}}{\I}=\HOM{\IEHW{k}}{\I'}$.
\end{theorem}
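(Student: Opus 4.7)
The direction ``$\Rightarrow$'' is immediate from $\IEHW{k}\subseteq\IGHW{k}$ (Theorem~\ref{thm:IEHWsubsetIGHW}): indistinguishability over a larger test class always implies indistinguishability over any subclass. Hence the non-trivial content of the theorem lies in the converse direction.

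For ``$\Leftarrow$'', my plan is to reduce homomorphism counts from $\IGHW{k}$-templates to those from $\IEHW{k}$-templates. Concretely, the goal is to show that for every incidence graph $\J\in\IGHW{k}$ there exists a finite formal $\ZZ$-linear combination $Q_\J=\sum_{i}\alpha_i\J_i$, with each $\J_i\in\IEHW{k}$ and each $\alpha_i\in\ZZ$, such that
\[
\hom{\J}{\I} \ = \ \sum_i \alpha_i\hom{\J_i}{\I}
\]
holds for \emph{every} incidence graph $\I$. Given such a reduction, the hypothesis $\HOM{\IEHW{k}}{\I}=\HOM{\IEHW{k}}{\I'}$ instantly implies $\hom{\J}{\I}=\hom{\J}{\I'}$ for all $\J\in\IGHW{k}$, as required.

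The construction of $Q_\J$ will proceed by local surgery on a fixed ghd $D=(T,\DBag,\DCover)$ of $\J$ of width $\leq k$, aimed at eliminating the two features by which $D$ may fail the ehd conditions from Definition~\ref{def:entangled_hypertree_decomposition}: \textbf{(I)} non-precise coverage, where some $e\in\DCover(t)$ has a neighbour $v\in N_\J(e)\setminus\DBag(t)$; and \textbf{(II)} non-connectedness for blue nodes, where $V_e=\{t:e\in\DCover(t)\}$ does not induce a subtree of $T$. For defect~(II), I would replace each offending blue node $e$ by one copy $e_1,\ldots,e_c$ per connected component of $T[V_e]$ and, using a Möbius-style inversion over the partition lattice of these copies, express $\hom{\J}{\I}$ as a signed $\ZZ$-combination of homomorphism counts from the copy-graph and its quotients, each of which now admits a ghd in which the copies of $e$ appear in a single connected cover-subtree. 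For defect~(I), I would perform an inclusion--exclusion over the possible identifications that a homomorphism can impose on the extra neighbours $N_\J(e)\setminus\DBag(t)$, rewriting $\hom{\J}{\I}$ as a signed combination of $\hom{\J/\sigma}{\I}$ over quotients $\J/\sigma$ of $\J$; in each such quotient the vertices outside the bag have been identified with bag-vertices, forcing the cover at $t$ to become precise.

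Iterating these two local surgeries with a terminating well-order on $(\J,D)$---for instance, the lexicographic pair (total excess of cover-neighbours outside bags, number of surplus connected components summed over all $V_e$)---produces, after finitely many steps, a $\ZZ$-combination $Q_\J$ whose constituents all lie in $\IEHW{k}$. The principal obstacle will be a careful bookkeeping argument ensuring (a)~that each surgery preserves the width bound $k$ at every tree-node, which forces a delicate control on how blue nodes are split and how quotients are taken, and (b)~that the Möbius coefficients obtained at different stages of the iteration compose into a single $\ZZ$-linear combination whose validity holds uniformly in $\I$, with coefficients independent of the target incidence graph.
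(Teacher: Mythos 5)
The forward direction of your argument agrees with the paper: since $\IEHW{k}\subseteq\IGHW{k}$, indistinguishability over the larger class restricts to the smaller one.

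The converse direction has a genuine gap, and it is located in your very first step: you posit that every $\J\in\IGHW{k}$ admits a \emph{universal} $\ZZ$-linear combination $Q_\J=\sum_i\alpha_i\J_i$ with all $\J_i\in\IEHW{k}$ and coefficients independent of the target, such that $\hom{\J}{\I}=\sum_i\alpha_i\hom{\J_i}{\I}$ for \emph{every} $\I$. That is strictly stronger than the theorem requires, and in general it is false. To see why, consider the simplest version of defect~(II): $\J$ with a blue node $e$ spanning two components of its cover-subtree, and the candidate ``good'' graphs $\J_n\deff\J+n\cdot N_\J(e)$ for $n\geq 1$. Following the bookkeeping in the paper's Lemma~\ref{lemma:adding_edges_keeps_distinguishing}, one has $\hom{\J_n}{\I}=\sum_{(r,b)}(x_{r,b})^z(y_{r,b})^n$ while $\hom{\J}{\I}=\sum_{(r,b)}(x_{r,b})^z$, where the index set and the values $x_{r,b},y_{r,b}$ depend on $\I$. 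A finite combination $\sum_{n\geq 1}a_n\hom{\J_n}{\I}=\hom{\J}{\I}$ valid for all $\I$ would, restricted to targets having a single pair $(r,b)$ with $x_{r,b}=1$ and $y_{r,b}=t$, force $\sum_{n\geq 1}a_n t^n=1$ for every $t\in\NNpos$, which is impossible since a nonzero polynomial with zero constant term cannot be identically $1$. The same obstruction reappears, dressed differently, in your Möbius-inversion plan: the partition-lattice quotients $\hat{\J}/\pi$ of the $c$-copy graph are precisely the graphs $\J^{(\ell)}$ with $\ell$ parallel copies of $e$, and the coarsest quotient $\hat{\J}/\hat{1}$ is $\J$ itself. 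So the quotients do \emph{not} all admit good decompositions, and any inversion formula re-introduces $\hom{\J}{\I}$ on the right-hand side, making the reduction circular. Your treatment of defect~(I) has the analogous problem: identifying an extra neighbour $v\in N_\J(e)\setminus\DBag(t)$ with a bag vertex only accounts for homomorphisms that happen to make that identification, and the complementary homomorphisms (where $h(v)$ falls outside the images of the bag) are not a hom count from any single incidence graph, so inclusion--exclusion does not close.

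What the paper actually does is strictly weaker and target-dependent: fixing the pair $(\I,\I')$ for which some $\J\in\IGHW{k}$ distinguishes, it produces a distinguishing $\J'\in\IEHW{k}$ whose multiplicities are chosen \emph{for that pair}. The key engine is Lemma~\ref{lemma:adding_edges_keeps_distinguishing}: if adding $n$ parallel blue nodes failed to distinguish for every $n\geq m$, then a generalised Vandermonde system over the finitely many distinct $y$-values (these depend on $\I,\I'$) forces all the coefficients $a_i$ to vanish, contradicting $\hom{\J}{\I}\neq\hom{\J}{\I'}$. The choice of $n$ that works depends on $\I,\I'$, and no universal linear combination is produced. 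To rescue your proof, you would need to drop the universality requirement and instead argue, as the paper does, that distinguishability propagates through the two local surgeries---which is exactly what Lemma~\ref{lemma:adding_edges_keeps_distinguishing} provides.
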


The proof heavily relies on the following technical main lemma, which
uses the following notation:
For an arbitrary incidence graph $\J$, 
for $s\subseteq\rVJ$, and for $n\in\NN$ we write $\J+n{\cdot}s$ to
denote the incidence graph $\J'$ obtained from $\J$ by inserting $n$
new blue nodes $\hat{e}_1,\ldots,\hat{e}_n$ and edges $(\hat{e}_i,v)$ for all
$i\in[n]$ and all $v\in s$ --- i.e., $N_{\J'}(\hat{e}_i)=s$.
\begin{lemma}\label{lemma:adding_edges_keeps_distinguishing}
  Let $\J,\I,\I'$ be incidence graphs with
  $\hom{\J}{\I}\neq\hom{\J}{\I'}$, let $e\in\bVJ$, and let $s\subseteq
  N_{\J}(e)$.
  For every $m\in \NN$ there exists an $n\in\NN$ with
  $n\geq m$ such that $\J_n\deff \J+n{\cdot}s$ satisfies
  $\hom{\J_n}{\I}\neq\hom{\J_n}{\I'}$.
\end{lemma}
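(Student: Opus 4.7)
The plan is to express $\hom{\J_n}{\I}$ as a finite sum of $n$th powers with distinct positive integer bases and then invoke the linear independence of distinct exponentials. Since $\J_n$ arises from $\J$ by adjoining fresh blue nodes $\hat e_1,\ldots,\hat e_n$ each with neighbourhood exactly $s$, and since $\J_n$ and $\J$ share the same red nodes, every homomorphism $h'\colon\J_n\to\I$ restricts to a homomorphism $h\colon\J\to\I$; conversely, any such $h$ extends to an $h'$ by independently picking, for each $\hat e_i$, an image $e'\in\bVI$ with $h(v)\in N_\I(e')$ for all $v\in s$. Setting
\[
\mu_\I(h)\isdef\bigl|\{\, e'\in\bVI : \{h(v):v\in s\}\subseteq N_\I(e')\,\}\bigr|,
\]
this yields the key identity
\[
\hom{\J_n}{\I}\;=\;\sum_{h\in\Hom{\J}{\I}} \mu_\I(h)^n,
\]
and analogously for $\I'$. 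The hypothesis $s\subseteq N_\J(e)$ is used here precisely once but crucially: for every $h\in\Hom{\J}{\I}$, the blue node $h(e)$ already witnesses $\{h(v):v\in s\}\subseteq N_\I(h(e))$, so $\mu_\I(h)\geq 1$, and likewise $\mu_{\I'}(h)\geq 1$ for every $h\in\Hom{\J}{\I'}$.

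Next, I would group homomorphisms by the value of $\mu$. Let $a_1>\cdots>a_p\geq 1$ be the distinct values taken by $\mu_\I$ on $\Hom{\J}{\I}$, with multiplicities $c_j\in\NN$, and let $b_1>\cdots>b_q\geq 1$ with multiplicities $d_j\in\NN$ be the analogous data for $\I'$. Then for every $n\in\NN$,
\[
\hom{\J_n}{\I}\;=\;\sum_{j=1}^p c_j\,a_j^n,\qquad \hom{\J_n}{\I'}\;=\;\sum_{j=1}^q d_j\,b_j^n.
\]
At $n=0$ these evaluate to $\hom{\J}{\I}$ and $\hom{\J}{\I'}$ respectively, which differ by assumption.

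Suppose towards contradiction that the conclusion of the lemma fails: there is an $m\in\NN$ such that $\hom{\J_n}{\I}=\hom{\J_n}{\I'}$ for \emph{every} $n\geq m$. Subtracting the two expressions and collecting equal bases, I obtain a finite identity $\sum_i e_i\,r_i^n=0$, valid for all $n\geq m$, over pairwise distinct positive integers $r_i$ and integer coefficients $e_i$. By a standard argument (take the maximal $r_i$ with $e_i\neq 0$ and divide through by $r_i^n$, letting $n\to\infty$; or equivalently invoke invertibility of the Vandermonde matrix on the $t$ consecutive exponents $m,m{+}1,\ldots,m{+}t{-}1$), all $e_i$ must vanish. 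But then the identity holds at $n=0$ as well, giving $\hom{\J}{\I}=\hom{\J}{\I'}$, a contradiction. Hence some $n\geq m$ with $\hom{\J_n}{\I}\neq\hom{\J_n}{\I'}$ exists.

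The only substantive step is the extension identity for $\hom{\J_n}{\I}$ together with the lower bound $\mu_\I(h)\geq 1$; this bound is what lets me ignore any $0^0$ ambiguity at $n=0$ and ensures that every base appearing in the exponential sums is a genuine positive integer, which is exactly what the linear-independence argument needs. The counting of extensions is routine because the $\hat e_i$ have no edges among themselves and their neighbourhoods coincide, so their images can be chosen independently from the same set of cardinality $\mu_\I(h)$.
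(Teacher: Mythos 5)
Your proof is correct and rests on the same core strategy as the paper's: express $\hom{\J_n}{\I}$ and $\hom{\J_n}{\I'}$ as finite linear combinations of $n$-th powers with pairwise distinct positive integer bases, and then use invertibility of a generalized Vandermonde matrix on the consecutive exponents $m,\ldots,m+t-1$ to force all coefficients to vanish, contradicting $\hom{\J}{\I}\neq\hom{\J}{\I'}$ at $n=0$. What you do differently is the \emph{derivation} of the exponential-sum identity. The paper partitions $\Hom{\J}{\I}$ by the restriction to $\rVJ$ and to $\bVJ\setminus[e]$, where $[e]$ is the set of blue nodes sharing $e$'s neighbourhood in $\J$, proves $|\upH_{r,b}|=(x_{r,b})^z$ with $z=|[e]|$, and then assembles $\hom{\J_n}{\I}=\sum_{(r,b)}(x_{r,b})^z\,(y_{r,b})^n$. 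You instead index directly by the full homomorphism $h\colon\J\to\I$: since the fresh blue nodes $\hat e_1,\ldots,\hat e_n$ all have neighbourhood exactly $s$ and carry no mutual constraints, each $h$ has precisely $\mu_\I(h)^n$ extensions, giving $\hom{\J_n}{\I}=\sum_h\mu_\I(h)^n$ in one step. The two sums coincide --- your $\mu_\I(h)$ equals the paper's $y_{r,b}$ whenever $h\in\upH_{r,b}$, and summing $y_{r,b}^n$ over the $(x_{r,b})^z$ elements of $\upH_{r,b}$ recovers the paper's expression --- but your route avoids introducing $[e]$, the exponent $z$, and the counts $x_{r,b}$ entirely, which is a genuine streamlining without loss of rigour. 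Your use of $s\subseteq N_{\J}(e)$ to force $\mu_\I(h)\geq 1$ mirrors the paper's use of the same hypothesis to get $y_{r,b}\geq x_{r,b}\geq 1$; in both proofs this is what ensures the exponential bases are positive and the final Vandermonde step applies.
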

\begin{proof} 
Let $\J,\I,\I',e,s,m$ be given as in the lemma's assumption.
Let $[e]\deff \setc{e'\in\bVJ}{N_{\J}(e')=N_{\J}(e)}$.
Let $z\deff |[e]|$.

We partition the set $\Hom{\J}{\I}$ of all homomorphisms from
$\J$ to $\I$ as follows.
Let $\AbbR$ be the set of all mappings $r:\rVJ\to\rVI$, let
$\AbbB$ be the set of all mappings $b:\bVJ\setminus [e]\to\bVI$.
For all $(r,b)\in\AbbR\times\AbbB$ let
$\upH_{r,b}\deff \setc{(h_R,h_B)\in\Hom{\J}{\I}}{h_R=r\text{ and
}h_B(e')=b(e')\text{ for all }e'\in\bVJ\setminus [e]}$.
Let
$\AbbRB=\setc{(r,b)\in\AbbR\times\AbbB}{H_{r,b}\neq\emptyset}$.
Obviously,
the sets $\upH_{r,b}$ are pairwise disjoint, and
$\Hom{\J}{\I}=\bigcup_{(r,b)\in\AbbRB}\upH_{r,b}$.

Let $X_{r,b}$ be the set of all $\tilde{e}\in\bVI$ for which
there exist $(h_R,h_B)\in\upH_{r,b}$ and $e'\in[e]$ such that $h_B(e')=\tilde{e}$.
Let $x_{r,b}\deff |X_{r,b}|$. Clearly, $x_{r,b}\in\NNpos$.
\medskip

\noindent
\emph{Claim~1:} \ $|\upH_{r,b}|= (x_{r,b})^z$ \ for all
$(r,b)\in\AbbRB$.
\smallskip

 \noindent
 \emph{Proof of Claim~1:} \
Let $e'_1,\ldots,e'_z$ be a list of all elements in $[e]$.
For every tuple $t=(\tilde{e}_1,\ldots,\tilde{e}_z)\in (X_{r,b})^z$ let
$h_{b,t}:\bVJ\to\bVI$ be defined by $h_{b,t}(e'_i)=\tilde{e}_i$ for
all $i\in[z]$ and $h_{b,t}(e')=b(e')$ for all $e'\in\bVJ\setminus [e]$.
It is straightforward to verify that the mapping $\pi$
defined by $\pi(t)\deff (r,h_{b,t})$ for all $t\in(X_{r,b})^z$ is a
bijection from $(X_{r,b})^z$ to the set $\upH_{r,b}$.
\qed$_{\textit{Claim 1}}$

\medskip
	
\noindent
From Claim~1 we obtain: 
\begin{equation}\label{eq:hw:HomJI}
   \hom{\J}{\I} \ \ = \ \ \sum_{(r,b)\in\AbbRB} (x_{r,b})^z\;.
\end{equation}
Consider, for each $n\in\NN$, the incidence graph $\J_n\deff 
\J+n{\cdot}s$ obtained from $\J$ by inserting $n$ new blue nodes
$\hat{e}_1,\allowbreak\ldots,\allowbreak\hat{e}_n$ and according edges such that $N_{\J_n}(\hat{e}_i)=s$ for all $i\in[n]$.

Let $Y_{r,b}$ be the set of all $\tilde{e}\in\bVI$ for which
there exist $(h_R,h_B)\in\Hom{\J_1}{\I}$ such that $h_R=r$,
$h_B(\hat{e}_1)=\tilde{e}$ and $h_B(e')=b(e')$ for all
$e'\in\bVJ\setminus[e]$.
Note that $Y_{r,b}\supseteq X_{r,b}$. 
Let $y_{r,b}\deff |Y_{r,b}|$. Clearly, $y_{r,b}\geq x_{r,b}$.
\medskip

\noindent
\emph{Claim~2:} For all $n\in\NN$ we have:
\begin{equation}\label{eq:hw:HomJnI}
   \hom{\J_n}{\I} \ = \ \sum_{(r,b)\in\AbbRB}\ (x_{r,b})^z \cdot
   (y_{r,b})^n\;.
\end{equation}  
\smallskip

\noindent
\emph{Proof of Claim~2:} \
\noindent
Let $e'_1,\ldots,e'_z$ be a list of all elements in $[e]$.
For every tuple $t=(\tilde{e}_1,\ldots,\tilde{e}_z,e_1,\ldots,e_n)\in
(X_{r,b})^z\times (Y_{r,b})^n$ let
$h_{b,t}:\bVJn\to\bVI$ be defined by $h_{b,t}(e'_i)=\tilde{e}_i$ for
all $i\in[z]$, $h_{b,t}(\hat{e}_i)=e_i$ for all $i\in[n]$, and $h_{b,t}(e')=b(e')$ for all $e'\in\bVJ\setminus [e]$.
It is straightforward to verify that the mapping $\pi$
defined by $\pi(t)\deff (r,h_{b,t})$ for all $t\in (X_{r,b})^z\times
(Y_{r,b})^n$ is a bijection from $(X_{r,b})^z\times
(Y_{r,b})^n$ to the set $\upH_{r,b}^{(n)}$ of all
$(h_R,h_B)\in\Hom{\J_n}{\I}$ with $h_R=r$ and $h_B(e'')=b(e'')$ for
all $e''\in \bVJ\setminus[e]$. 

Thus, $|\upH_{r,b}^{(n)}|=(x_{r,b})^z\cdot (y_{r,b})^n$.
The claim follows by noting that the sets $\upH_{r,b}^{(n)}$ are
pairwise disjoint and
$\Hom{\J_n}{\I}=\bigcup_{(r,b)\in\AbbRB} \upH_{r,b}^{(n)}$.
 \qed$_{\textit{Claim 2}}$

\bigskip

\noindent	
By performing the analogous reasoning with $\I'$ instead of $\I$ we
obtain a set $\AbbRB'$ and positive integers $x'_{r',b'}$ and $y'_{r',b'}$ with
$y'_{r',b'}\geq x'_{r',b'}$ for all $(r',b')\in\AbbRB'$ such that
\begin{equation}\label{eq:hw:HomJIStrich}
   \hom{\J}{\I'} \ \ = \ \ \sum_{(r',b')\in\AbbRB'} (x'_{r',b'})^{z}
\end{equation}
and, for all $n\in\NN$, 
\begin{equation}\label{eq:hw:HomJnIStrich}
  \hom{\J_n}{\I'} \ = \ \sum_{(r',b')\in\AbbRB'}\ (x'_{r',b'})^z \cdot
  (y'_{r',b'})^n\;.
\end{equation}  
If there exists an $n\in\NN$ with $n\geq m$ such that
$\Hom{\J_n}{\I}\neq\Hom{\J_n}{\I'}$, we are done.
Assume for contradiction that no such $n$ exists. Then, for all
$n\in\NN$ with $n\geq m$ we have $\Hom{\J_n}{\I}=\Hom{\J_n}{\I'}$. By
\eqref{eq:hw:HomJnI} and \eqref{eq:hw:HomJnIStrich} this yields
\[
\sum_{(r,b)\in\AbbRB} (x_{r,b})^z \cdot (y_{r,b})^n
\ = \
\sum_{(r',b')\in\AbbRB'} (x'_{r',b'})^z \cdot (y'_{r',b'})^n
\]  
and hence
\[
\sum_{(r,b)\in\AbbRB}\!\!\! (x_{r,b})^z \cdot (y_{r,b})^n
\ -
\!\!\!\sum_{(r',b')\in\AbbRB'}\!\!\! (x'_{r',b'})^z \cdot (y'_{r',b'})^n
\ = \
0
\]  
for all $n\in\NN$ with $n\geq m$.

We view this as a system of equations, and we group the summands with
respect to the distinct values in
$Z\deff \setc{y_{r,b}}{(r,b)\in\AbbRB}\cup\setc{y'_{r',b'}}{(r',b')\in\AbbRB'}$.
To this end let $k\deff |Z|$ and let $z_1,\ldots,z_k$ be a list of all
elements in $Z$.
For every $i\in[k]$ let $M_i\deff \setc{(r,b)\in\AbbRB}{y_{r,b}=z_i}$ and
$M'_i\deff \setc{(r',b')\in\AbbRB'}{y'_{r',b'}=z_i}$, and let
\[
a_i \ \deff \
\sum_{(r,b)\in M_i}\!\!(x_{r,b})^z
\ -
\sum_{(r',b')\in M'_i}\!\!(x'_{r',b'})^z.
\]
The above system of equations yields that 
\begin{equation*}
  \sum_{i=1}^k \, a_i\cdot (z_i)^n \ \ = \ 0
  \qquad
  \text{for all $n\geq m$.}
\end{equation*}  
By assumption we know that $\hom{\J}{\I}\neq\hom{\J}{\I'}$, and hence
\eqref{eq:hw:HomJI} and \eqref{eq:hw:HomJIStrich} yield that
\[
  \sum_{(r,b)\in\AbbRB} (x_{r,b})^z
  \ \ \neq \ \
  \sum_{(r',b')\in\AbbRB'} (x'_{r',b'})^z\;.
\]  
Therefore,
\begin{equation}\label{eq:hw:aiNotNull}
  a_i\neq 0 \quad \text{for at least one $i\in[k]$}.
\end{equation}  
The system of equations
$\Big(\,{\displaystyle\sum_{i=1}^k} a_i{\cdot}\, (z_i)^n  = 0  \Big)_{n\in\set{m,\ldots,m+k-1}}$
can be expressed as
\[
\underbrace{
\left(    
\begin{array}{lll}
(z_1)^{m} & \dots & (z_k)^{m} \\
(z_1)^{m+1} & \dots & (z_k)^{m+1} \\
\ \ \vdots & \ddots & \ \ \vdots \\
(z_1)^{m+k-1} & \dots & (z_k)^{m+k-1}
\end{array}
\right)
}_{\textstyle =: V}
\cdot
\underbrace{
\begin{pmatrix}
a_1 \\ a_2 \\ \vdots \\ a_k
\end{pmatrix}
}_{\textstyle =:\ov{a}}
=
\underbrace{
\begin{pmatrix}
0 \\ 0 \\ \vdots \\ 0
\end{pmatrix}
}_{\textstyle =:\ov{0}}
\]
But $V$ is a \emph{generalised Vandermonde matrix} on pairwise
distinct non-zero values $z_1,\ldots,z_k$, and is known to be an
invertible matrix \cite{DeAlba2007}\footnote{to see this, combine
``application 2'' with ``facts 8'' and ``18'' of \cite{DeAlba2007}}.
Hence, the equation $V\cdot\ov{a}=\ov{0}$ implies that
$\ov{a}=\ov{0}$.
This contradicts \eqref{eq:hw:aiNotNull} and completes the proof of
Lemma~\ref{lemma:adding_edges_keeps_distinguishing}.
\end{proof}  

\medskip

\begin{proof}[\upshape\textbf{Proof sketch for Theorem~\ref{thm:EHWandIGHW}:}] \ \\
The direction ``$\Longrightarrow$'' of Theorem~\ref{thm:EHWandIGHW} is trivial.
For the direction ``$\Longleftarrow$'' it suffices to prove the following:
If there is a $\J\in\IGHW{k}$ with  $\hom{\J}{\I} \neq \hom{\J}{\I'}$,
then there also exists a $\J'\in\IEHW{k}$ with $\hom{\J'}{\I} \neq
\hom{\J'}{\I'}$. 
We construct such a $\J'$ in a 2-step process.
We start with a ghd $D=(T,\DBag,\DCover)$ of $J$ with $\w{D}\leq k$.
First, we transform $D$ into a ghd $\Deins$ of an incidence graph $\Jeins$ such that $\w{\Deins}\leq \w{D}$ and 
$\hom{\Jeins}{\I} \neq \hom{\Jeins}{\I'}$ and $\Deins$ satisfies
condition~\ref{def:entangled_hypertree_decomposition:precision} of
Definition~\ref{def:entangled_hypertree_decomposition} (but
condition~\ref{def:entangled_hypertree_decomposition:connectedness_of_edges}
might still be violated).
Afterwards, we transform $\Deins$ into a ghd $\Dzwei$ of an
incidence graph $\Jzwei$ such that $\w{\Dzwei}= \w{\Deins}$ and 
$\hom{\Jzwei}{\I} \neq \hom{\Jzwei}{\I'}$ and $\Dzwei$ satisfies
conditions~\ref{def:entangled_hypertree_decomposition:precision}
and \ref{def:entangled_hypertree_decomposition:connectedness_of_edges}
of Definition~\ref{def:entangled_hypertree_decomposition} and hence 
is an ehd. Letting $\J'\deff\Jzwei$ then completes the proof.

For the construction of $\Deins, \Jeins$ we consider all those
$t\in \V{T}$ and $e\in \DCover(t)$ where $\Nachbarn{e}{J}
\not\subseteq \DBag(t)$ and let $s\deff \Nachbarn{e}{J}
\cap\DBag(t)$.
We use Lemma~\ref{lemma:adding_edges_keeps_distinguishing} to choose
a suitable number $n_s\geq 1$ and replace $\J$ by
$\J+n_s{\cdot}s$ (let us write $e'_{1},\ldots e'_{n_s}$ for
the $n_s$ newly inserted blue nodes).
In  $D$ we replace $e$ with $e'_{1}$ in $\DCover(t)$, and
we add new leaves $t_j$ for $j\in\set{2,\ldots,n_s}$ adjacent to $t$
with $\DCover(t_j)=\set{e'_j}$ and $\DBag(t_j)=s$.
After having done this for all combinations of $t$ and $e$, we end up
with the desired incidence graph $\Jeins$ and ghd $\Deins=(\Teins,\bageins,\covereins)$.

For the construction of $\Dzwei,\Jzwei$, for
each $e\in \bV{\Jeins}$ we let $m_e$ be the
number of connected components of the subgraph
$\Teins_{e}$, i.e., the subgraph of $\Teins$ induced on
$V_e\deff\setc{t\in \V{\Teins}}{e\in\covereins(t)}$.
Let $V_{e,0},\ldots,V_{e,m_e-1}$
be the sets of tree-nodes (i.e., nodes in $\V{\Teins}$) of these
connected components.
We consider all those $e\in \bV{\Jeins}$ where $m_e\geq 2$
and let $s\deff N_{\Jeins}(e)$. We use Lemma~\ref{lemma:adding_edges_keeps_distinguishing} to choose
a suitable number $n_e\geq m_e{-}1$ and replace ${\J}$ with
${\J}+n_e{\cdot}s$ (let us write
$e'_1,\ldots,e'_{n_e}$ for the $n_e$ newly inserted blue nodes).
In $\Deins$ we consider for every $i\in\set{1,\ldots, m_e{-}1}$
all $t\in V_{e,i}$ and replace $e$ with $e'_{i}$ in
$\covereins(t)$.
Furthermore, we pick an arbitrary $t\in V_{e,0}$, and for each
$i\in [n_e]$ with $i\geq m_e$, we insert into $\Teins$ a new leaf $t_{e,i}$
adjacent to $t$ and let $\bageins(t_{e,i}) \deff s$ and
$\covereins(t_{e,i})\deff \set{e'_{e,i}}$.
After having done this for all $e\in \bV{\Jeins}$ with $m_e\geq 2$, we
end up with the desired incidence graph $\Jzwei$ and ehd $\Dzwei$.
This completes the proof sketch of Theorem~\ref{thm:EHWandIGHW} (see
Appendix~\ref{appendix:Proof_of_thm:EHWandIGHW} and
\ref{appendix:illustrations_for_thm:EHWandIGHW} for more details and
illustrations). 
\end{proof}

\section{Notation for Partial Functions}
\label{section:FurtherNotation}

Let us introduce some further notation that will be convenient for the remaining parts of the paper.

We write $f:A\pto B$ to indicate that $f$ is a partial function from
$A$ to $B$. By $\Def{f}$ we denote the domain of $f$, i.e., the set of
all $a\in A$ on which $f(a)$ is defined. 
By $\Img{f}$ we denote the image of $f$, i.e., 
$\Img{f} = \setc{f(a)}{a\in \Dom{f}}$.
Two partial functions $f: A \pto B$ and $g: A \pto B$ are called \emph{compatible} if $f(a)=g(a)$ holds for all
$a\in\Dom{f}\cap\Dom{g}$.

We identify a partial function $f$ with the set $\setc{(a,f(a))}{a\in
\Def{f}}$. This allows us to compare and combine partial functions
via standard notation from set theory. 
E.g., $f \subseteq g$ indicates that $\Def{f} \subseteq \Def{g}$ and $f(a) = g(a)$ for all $a \in \Def{f}$.
And $f \union g$ denotes the partial function $h$ with
$\Def{h}=\Def{f}\cup\Def{g}$ and $h(a) = f(a)$ for all $a \in \Def{f}$
and $h(a) = g(a)$ for all $a \in \Def{g} \setminus \Def{f}$ (note that
$f$ has precedence over $g$ in case that $f$ and $g$ are 
not~compatible).
For a set $S$ we write $f-S$ to denote the partial function $g$ with $g\subseteq f$ and $\Def{g}=\Def{f}\setminus S$.

\section{A 2-Sorted Counting Logic with Guards: $\GCk$}
\label{sec:logic}

This section provides the syntax and semantics of our 2-sorted logic.
Formulas of this logic are evaluated on incidence graphs
(cf.~Definition~\ref{def:IncidenceGraph}). We fix a $k\in\NNpos$.

To address \emph{blue} nodes (i.e., \emph{edges} of a hypergraph), we have available $k$ \emph{blue
variables} $\vare_1,\ldots,\vare_k$. To address \emph{red} nodes
(i.e., \emph{vertices} of a hypergraph), we have
available countably many \emph{red variables}
$\varv_1,\varv_2,\varv_3,\ldots{}$.
An atomic formula $E(\vare_j,\varv_i)$ states that a
hypergraph's vertex $\varv_i$ is included in the hypergraph's edge $\vare_j$.

Let $\VarB\deff \set{\vare_1,\ldots,\vare_k}$, $\VarR\deff
\setc{\varv_i}{i\in\NNpos}$, and $\Var\deff \VarB\cup\VarR$.
An \emph{interpretation} $\IntI=(\I,\beta)$ consists of an incidence
graph $\I=(\rVI,\bVI,\E{I})$ and an \emph{assignment $\beta$ in $\I$}, i.e., a mapping
$\beta:\Var\to\rVI\cup\bVI$ with $\beta(\vare_j)\in\bVI$ for all
$\vare_j\in\VarB$ and $\beta(\varv_i)\in\rVI$ for all $\varv_i\in\VarR$.

In the formulas of our logic, red variables
$\varv_i$ have to be \emph{guarded} by a blue variable $\vare_j$ in the
sense that $E(\vare_j,\varv_i)$ holds. This is formalised  by a
\emph{guard function}, i.e., a \emph{partial function} $g:\NNpos\pto[k]$ with
\emph{finite} domain $\Dom{g}$.
Every guard function $g$ corresponds to the formula
\[
  \LogGuard{g} \ \isdef \ \ \Und_{i\in\Dom{g}} \; E(\vare_{g(i)},\varv_i)\;,
\]
and for the special case where $\Dom{g}=\emptyset$ we let
$\LogGuard{g}\deff \top$ where $\top$ is a special atomic formula
satisfied by \emph{every} interpretation $\IntI$.
We let $\free{\LogGuard{g}}$ be the set of all (red or blue) variables that occur in
$\LogGuard{g}$.

An interpretation $\IntI=(\I,\beta)$ \emph{satisfies} a guard function
$g$ (in symbols: $\IntI\models \LogGuard{g}$) if
for all $i\in\Dom{g}$ we have: $(\beta(\vare_{g(i)}),\beta(\varv_i))\in\E{\I}$.
I.e., for every $i\in\Dom{g}$, the red variable $\varv_i$ is guarded by
the blue variable $\vare_{g(i)}$ in
the sense that it is connected to it by an edge of the incidence graph.

For any formula $\chi$ we write
$\indfreeB{\chi}$ for
the set of all indices $j\in[k]$ such that the blue variable $\vare_j$
belongs to $\free{\chi}$. Accordingly,
$\indfreeR{\chi}\deff\setc{i\in\NNpos}{\varv_i\in\free{\chi}}$. 
The definition of the syntax of 
$\GCk$ is inductively given as follows.
\medskip

\noindent
\textbf{Base cases:} The atomic formulas in $\GCk$ are of the form $\top$, $E(\vare_j, \varv_i)$, $\vare_{j} {=} \vare_{j'}$, and $\varv_i {=} \varv_{i'}$ for $j, j' \in [k]$ and $i,i' \in \natpos$.
\\
We let $\free{\top}=\emptyset$,
$\free{E(\vare_j, \varv_i)}=\set{\vare_j,\varv_i}$, $\free{\vare_{j}
  {=} \vare_{j'}}=\set{\vare_j,\vare_{j'}}$, and
$\free{\varv_i {=} \varv_{i'}}=\set{\varv_i,\varv_{i'}}$.
\medskip

\noindent \textbf{Inductive cases:}
\begin{enumerate}
\setcounter{enumi}{\value{LogicCounter}}
\item
  If $ \psi \in \GCk$, then
  $\lnot\psi \in \GCk$; \ let $\free{\lnot\psi}=\free{\psi}$;
  
\item
  If $\psi_1, \psi_2 \in \GCk$, then
  $(\psi_1 \land \psi_2) \in\GCk$; \ let $\free{(\psi_1\land\psi_2)}=
   \free{\psi_1}\cup\free{\psi_2}$;
\item
  If $\psi \in \GCk$ and $g$ is a guard function with $\Dom{g} =
  \indfreeR{\psi}$ and $n,\ell \in \natpos$ and, for $\chi \isdef
  (\LogGuard{g} \land \psi)$ and  $i_1 < \dots < i_\ell$ with
  \begin{enumerate}[(a)]
	\item $i_1, \dots, i_\ell \in \indfreeR{\chi}$, then $\phi \in
          \GCk$ for $\phi \isdef \existsi[n] (\varv_{i_1}, \dots,
          \varv_{i_\ell}).(\LogGuard{g} \land \psi)$;
          \ let $\free{\phi}=\free{(\LogGuard{g}\land\psi)}\setminus\set{\varv_{i_1},\ldots,\varv_{i_\ell}}$;
	\item $i_1, \dots, i_\ell \in \indfreeB{\chi}$, then $\phi \in
          \GCk$ for $\phi \isdef \existsi[n] (\vare_{i_1}, \dots,
          \vare_{i_\ell}).(\LogGuard{g} \land \psi)$;
          \ let $\free{\phi}=\free{(\LogGuard{g}\land\psi)}\setminus\set{\vare_{i_1},\ldots,\vare_{i_\ell}}$.
  \end{enumerate}
\end{enumerate}
\medskip

The semantics are defined as expected. 
In particular, an interpretation $\IntI=(\I,\beta)$ satisfies the
formula $\phi\deff
\existsi[n](\varv_{i_1},\ldots,\varv_{i_\ell}).(\LogGuard{g}\und\psi)$
iff there are
at least $n$ tuples $(v_{i_1},\ldots,v_{i_\ell})\in\rVI^\ell$ such that 
$\IntI'=(\I,\beta')$ satisfies $(\LogGuard{g}\und\psi)$, where
$\beta'(\varv_{i_j})=v_{i_j}$ for all $j\in[\ell]$ and
$\beta'(\varx)=\beta(\varx)$ for all
$\varx\in\VAR\setminus\set{\varv_{i_1},\ldots,\varv_{i_\ell}}$. 
Similarly, $\IntI=(\I,\beta)$ satisfies
$\phi\deff\existsi[n](\vare_{i_1},\ldots,\vare_{i_\ell}).(\LogGuard{g}\und\psi)$ iff
there are at least $n$ tuples
$(e_{i_1},\ldots,e_{i_\ell})\in \bVI^\ell$ such that
$\IntI'=(\I,\beta')$ satisfies $(\LogGuard{g}\und\psi)$, where
$\beta'(\vare_{i_j})=e_{i_j}$ for all $j\in[\ell]$ and
$\beta'(\varx)=\beta(\varx)$ for all
$\varx\in\VAR\setminus\set{\vare_{i_1},\ldots,\vare_{i_\ell}}$.
Obviously we can emulate the $\forall$-quantifier (and disjunction) using $\existsi[1]$ and $\lnot$ (and $\land$ and $\lnot$, respectively).

We write $\IntI\models \chi$  to indicate that $\IntI$ satisfies
the formula $\chi$; and
$\IntI\not\models\chi$ indicates that $\IntI$ does not satisfy $\chi$.

\emph{Sentences} of $\GCk$ are formulas $\chi \in \GCk$ with $\free{\chi} = \emptyset$.

For an incidence graph $\I$ and a sentence $\chi\in\GCk$ we write
$\I\models\chi$ to indicate that $\IntI\models\chi$
where $\IntI=(\I,\beta)$ for any assignment $\beta$ in $\I$ (since
$\chi$ has no free variable, the assignment does not matter).
For a hypergraph $H$ and a sentence $\chi\in\GCk$ we write $H\models\chi$ to
indicate that $\inzidenz{H}\models\chi$.

For two incidence graphs $\I$ and $\I'$ we write $\I\equivGCk\I'$ and
say that $\I$ and $\I'$ are \emph{indistinguishable by the logic $\GCk$} if
for all sentences $\chi\in\GCk$ we have: $\I\models\chi$ $\iff$ $\I'\models\chi$.

The subsequent sections of this paper are devoted to proving the following
theorem, stating that indistinguishability by the logic $\GCk$
coincides with homomorphism indistinguishability over the class
$\IEHW{k}$ of incidence graphs of entangled hypertree width $\leq k$.

\begin{theorem}\label{thm:GCkVsHomomIndist}
For all incidence graphs $\I,\I'$ and all $k\in\NNpos$ we have: \\
$\I\equivGCk\I'\iff
\HOM{\IEHW{k}}{\I}=\HOM{\IEHW{k}}{\I'}$.
\end{theorem}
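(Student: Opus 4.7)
The plan is to prove both directions by induction, using the two auxiliary devices announced in the introduction: the normal form $\RGCk$ for the logic (developed in Section~\ref{sec:normal_form}), and the inductive characterisation of $\IEHWk$ together with the machinery of \emph{quantum} incidence graphs (Section~\ref{sec:recursive_def}), by analogy with the quantum graphs in Dvořák's proof. By the normal form result, it suffices to prove the theorem with $\RGCk$ in place of $\GCk$.

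For the direction $\I\equivGCk\I'\Rightarrow\HOM{\IEHWk}{\I}=\HOM{\IEHWk}{\I'}$, I would show that for every $\J\in\IEHWk$ and every $m\in\NN$ there is an $\RGCk$-sentence $\chi_{\J,m}$ such that, for every incidence graph $\I$, $\I\models\chi_{\J,m}$ iff $\hom{\J}{\I}=m$. The construction proceeds by induction over an ehd of $\J$: a bottom-up traversal of the tree introduces, at each bag, counting quantifiers over the (at most $k$) blue variables of the current cover and over the red variables guarded by them. The ``precise coverage'' and ``connectedness for blue nodes'' axioms of ehds are what make it possible to keep the guards required by the syntax of $\RGCk$ available as one ascends in the tree. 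Given $\chi_{\J,m}$, logical indistinguishability of $\I$ and $\I'$ immediately forces $\hom{\J}{\I}=\hom{\J}{\I'}$.

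For the converse $\HOM{\IEHWk}{\I}=\HOM{\IEHWk}{\I'}\Rightarrow\I\equivGCk\I'$, I would show that for every $\RGCk$-sentence $\chi$ and every suitable pair of size parameters $(m,d)$ there is a quantum incidence graph $Q=\sum_i\alpha_i\,\J_i$ with each $\J_i\in\IEHWk$ and each $\alpha_i\in\ZZ$ such that, for every incidence graph $\I$ whose size matches $(m,d)$, the quantity $\hom{Q}{\I}\deff\sum_i\alpha_i\hom{\J_i}{\I}$ lies in $\set{0,1}$ and equals $1$ iff $\I\models\chi$. The atomic cases are immediate; for negation I subtract $Q$ from an ``everything'' quantum graph; for conjunction I glue quantum graphs along shared variables; and for each guarded counting quantifier $\existsi[n](\varv_{i_1},\ldots,\varv_{i_\ell}).(\LogGuard{g}\und\psi)$ (and its blue analogue) I use inclusion/exclusion over identifications of the quantified variables, turning ``at least $n$ distinct witnesses'' into an integer linear combination of homomorphism counts from graphs obtained by gluing copies of the subformula's quantum graph at the guarding variables. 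The ``entangled'' conditions on decompositions are precisely what keeps all graphs arising from these gluings inside $\IEHWk$.

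The main obstacle will be aligning the syntactic restrictions of $\RGCk$, the structural requirements of ehds, and the closure properties required of the quantum graphs, so that every inductive step preserves all three. As flagged in the introduction, the definition of $\RGCk$ has to be tailored exactly to make both inductions go through: the guards must be rigid enough for the forward direction, where identifying quantified red variables must yield incidence graphs with an ehd of width at most $k$, yet flexible enough for the backward direction, where an ehd of a given $\J\in\IEHWk$ must translate faithfully into a sentence built according to the rules of $\RGCk$.
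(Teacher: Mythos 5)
Your high-level plan is essentially the one the paper follows: move to the normal form $\RGCk$; for one direction build, for each $\J$ with small entangled hypertree width and each $m$, a sentence $\form{\J}{m}$ that is true of $\I$ iff $\hom{\J}{\I}=m$ (this is Lemma~\ref{lemma:HomToFormula}); for the other direction build, for each $\RGCk$-sentence $\chi$ and each size profile $(m,d)$, a quantum incidence graph whose components lie in $\IEHWk$ and which homomorphism-counts to $0$ or $1$ according to whether $\chi$ holds (this is Lemma~\ref{lemma:FormulaToHom}), then extract a distinguishing component; and finally glue both lemmas together via the normal form Theorem~\ref{thm:normalform}. That said, three points in your sketch should be flagged. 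First, ``induction over an ehd'' is not directly how either lemma is organised: the paper inserts an extra layer, the inductively defined class $\GLIk$ of guarded $k$-labeled incidence graphs, and proves separately (Theorem~\ref{thm:IEHWkIsGLIk}) that $\GLIk$ captures exactly $\IEHWk$; the two main lemmas are then inductions over $\GLIk$'s build-rules (base graph, reclaim labels, transitions, glue) and over $\RGCk$'s syntax rules, not over the tree of the ehd itself. Your proposal would in effect have to reinvent this algebra to make ``keeping the guards available as one ascends'' rigorous, and in particular to justify the delicate guard-changing step that the paper calls a \emph{transition}. Second, the thresholding step in the backward direction is not inclusion/exclusion over identifications but polynomial interpolation (Lemma~\ref{lemma:normalizing_quantum_incidence_graphs}): unlabel the quantified blue variables to obtain the number of witness tuples, then normalise this count to $\{0,1\}$ by gluing powers of $Q$ with interpolation coefficients; this is also why the coefficients are taken in $\RR$ rather than $\ZZ$, although for the final extraction argument this does not matter. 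Third, the final paragraph swaps the roles of the two directions: it is the formula-to-quantum-graph direction (your ``backward'') in which identifying quantified variables must keep width $\leq k$, and the ehd-to-formula direction (your ``forward'') in which the decomposition must translate into an $\RGCk$-conformant sentence. None of these invalidates the plan, but each would need to be corrected or fleshed out to arrive at a complete proof.
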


This result can be viewed as a lifting of Dvořák's theorem
\cite{Dvorak2010} stating that any two graphs $G,G'$ are
indistinguishable by the $k{+}1$-variable logic $C^{k+1}$ if, and only
if, they are homomorphism indistinguishable over the class $\TW{k}$ of
graphs of tree-width $\leq k$.
Our proof of Theorem~\ref{thm:GCkVsHomomIndist} is heavily inspired by
Dvořák's proof.
But in order to proceed along a similar construction, we first have to provide 
a \enquote{normal form} for $\GCk$, which we present in Section~\ref{sec:normal_form}, and 
a suitable inductive characterisation of the class $\IEHWk$. The latter is
presented in Section~\ref{sec:recursive_def}, where we also provide the machinery
of \emph{quantum} incidence graphs as an analogue of the quantum
graphs used in Dvořák's proof. 
Section~\ref{sec:main_theorem} is devoted to the proof of
Theorem~\ref{thm:GCkVsHomomIndist}. 

\medskip

Before closing this section, let us
provide examples of formulas in $\GCk$. 

\begin{example}\label{example:GCk:eins}
Let $k=2$.
Consider the following formula $\psi_1 \in \GCk$:
\begin{equation*}
	\psi_1 \;\;\isdef\;\; \existsi[1](\varv_1).\bigl(E(\vare_1, \varv_1) \land E(\vare_2, \varv_1) \bigr).
\end{equation*}
$\psi_1$ expresses that the hyperedges $\vare_1$ and $\vare_2$ share at least one vertex $\varv_1$, i.e.\ they intersect. Since we quantify over $\varv_1$, the definition of $\GCk$ requires us to insert a guard ranging over the set of free red variables, i.e. over $\set{ \varv_1 }$. We chose $E(\vare_1, \varv_1)$ as the guard but note that $E(\vare_2, \varv_1)$ would have been a valid choice as well.
Next, consider the formula $\psi_2 \in \GCk$:
\begin{equation*}
	\psi_2 \;\;\isdef\;\; \!\!\!\bigland_{j \in \set{1,2}}\!\!\! \existsi[3](\varv_1).\bigl(E(\vare_j, \varv_1) \land E(\vare_j, \varv_1)\bigr).
\end{equation*}
$\psi_2$ expresses that each of the hyperedges $\vare_1$, $\vare_2$ contains at least three vertices. Again, we have to insert a guard after the quantifier, which is why $E(\vare_j, \varv_1)$ appears twice in $\psi_2$ --- as a guard \emph{and} as our \enquote{actual} subformula.

Finally, we use the formulas $\psi_1$, $\psi_2$ to construct a sentence $\phi \in \GCk$:
\begin{equation*}
	\phi \;\;\isdef\;\; \lnot\,\existsi[1](\vare_1, \vare_2).\bigl( 
		\top \land (\, 
			(\psi_1 \land 
				\psi_2)\, \land \lnot\;
                                \vare_1{=}\vare_2 \,
			)\,
	\bigr).
\end{equation*}
$\phi$ expresses that there is no pair of non-equal hyperedges
$(\vare_1, \vare_2)$ that intersect and that both contain at least 3
vertices. I.e., $\phi$ expresses that all hyperedges that contain at
least 3 vertices are pairwise disjoint. Once again, the quantification
requires us to insert a guard; since there are no free red variables,
we insert $\top$ as the guard.
\end{example}

For a tuple $\ov{\varx}$ of either blue or red variables, we write
$\existsex[n]\,\ov{\varx}.\phi$\, as a shorthand for
\,$(\existsi[n]\,\ov{\varx}.\phi
\,\land\,\lnot\,\existsi[n+1]\,\ov{\varx}.\phi)$.

\begin{example}\label{example:formula}
Let $H$ be
the hypergraph with vertices $1,2,3,4,5,6$, edges $e_S$ for
each $S\in\set{\, \set{1,2,3}, \allowbreak \set{4,5,6}, \allowbreak
  \set{1,2}, \allowbreak \set{2,3}, \allowbreak \set{3,1}, \allowbreak
  \set{4,5}, \allowbreak \set{5,6}, \allowbreak \set{6,4} \,}$ and
incidence function $f_H$ with 
$f_H(e_S)=S$ for each edge $e_S$ of $H$.
We construct a sentence $\chi\in\GC{2}$ that describes $H$ up to isomorphism.
$\chi$ is the conjunction of
the following statements $\phi_1,\ldots,\phi_5$.
It can easily be checked that for
each $i\in[5]$ the formula $\phi_i$ is a sentence in
$\GC{2}$. Hence, $\chi$ is also a
sentence in $\GC{2}$.

\begin{enumerate}
\item
There exist exactly 8 blue nodes (i.e., edges of $H$):
\[
\phi_1 \deff  \ \existsex[8](\vare_1).(\top\und\vare_1{=}\vare_1)
\]
\item
There exist exactly 2 blue nodes each of which has exactly 3 red neighbours
(i.e., $H$ has exactly 2 edges that are each
incident with exactly 3 vertices of $H$):
\[
  \phi_2 \deff  \ \existsex[2](\vare_1).\big(\top\und \existsex[3] (\varv_1). \big(
    E(\vare_1,\varv_1) \und E(\vare_1,\varv_1)
  \big)\big)
\]
The first and 2nd occurrence of $E(\vare_1,\varv_1)$ are used as the
guard and the formula, respectively.

\item
A similar formula $\phi_3$ expresses that there are exactly 6 blue
nodes each of which has exactly 2 red neighbours

\item
All blue nodes that have at least 3 red neighbours have
disjoint neighbourhoods:
\[
  \phi_4\deff \ \nicht\,\existsi[1] (\vare_1,\vare_2).\big(\top \,\und\, (\alpha\,\und\,
     \existsi[1](\varv_1).(\Und_{j\in[2]}E(\vare_j,\varv_1)))\big)
\]
where
\[
  \alpha\deff  \ \big(\nicht\,\vare_1{=}\vare_2 \;\und
  \Und_{j\in[2]} \!\existsi[3](\varv_1).(E(\vare_j,\varv_1)\,\und\,
  E(\vare_j,\varv_1))\big)
\]
The formula $\alpha$ expresses that $\vare_1$ and $\vare_2$ are
distinct blue nodes that each have at least $3$ red neighbours.

\item
The final statement is expressed by the formula
\[
  \phi_5\deff  \
  \existsex[2](\vare_1).(\top\,\und\;\existsi[1](\varv_1,\varv_2,\varv_3).\big(
    \Delta \und (\chi \und \psi)
  \big)
\]    
where
\[
  \Delta\deff \Und_{i=1}^3 E(\vare_1,\varv_i)
  \quad\text{and}\quad
  \chi\deff  \!\!\!\!\Und_{1\leq i<j\leq 3}\!\!\!\!\nicht\;\varv_{i}{=}\varv_{j}
\]
and
$\psi\deff  \big( (\, \psi_{1,2}\und\psi_{2,3}\,)\und\psi_{3,1}\big)$
where
\[
  \psi_{i,j} \ \deff  \   \existsex[1](\vare_2).\big( \Delta_{i,j} \und
  \vartheta_{i,j}\big)
  \quad\text{with}
\]  
{\setlength{\arraycolsep}{0mm}
\[
 \begin{array}{ll}
 \Delta_{i,j}\deff  \big( & E(\vare_1,\varv_i)\,\und\, E(\vare_1,\varv_j))\qquad\text{and}
 \\[2ex]
 \vartheta_{i,j} \deff 
   \big(\hspace*{1mm}
   &
    E(\vare_2,\varv_i)\und E(\vare_2,\varv_j)\,\und \,
    \existsex[2](\varv_4). (E(\vare_2,\varv_4)\und E(\vare_2,\varv_4))
  \big)\,.
 \end{array}
\]  
}%
The formula $\phi_5$ states that there exist exactly 2 blue nodes
$e_1$ (i.e., edges of $H$) 
each of which is adjacent to at least 3 distinct red nodes
$v_1,v_2,v_3$ (i.e., vertices of $H$) such that $H$ contains an edge
incident with exactly $v_1$ and $v_2$, an edge incident with exactly
$v_2$ and $v_3$, and an edge incident with exactly $v_3$ and $v_1$.
\end{enumerate}
Finally, $\chi\deff\Und_{i=1}^5\varphi_i$ is a sentence in $\GC{2}$.
It is not difficult to verify that $\chi$ describes $H$ up to isomorphism. 
\end{example}

\section{A Normal Form for $\GCk$: $\NGCk$}
\label{sec:normal_form}

In this section we provide a normal form for sentences of $\GCk$ that
will be crucial for our proof of
Theorem~\ref{thm:GCkVsHomomIndist}. In the restriction $\RGCk$ of
$\GCk$,
every formula is of the form $(\LogGuard{g}\und \psi)$,
where $g$ is a guard function whose domain $\Dom{g}$
consists of
all indices $i\in\NNpos$ such that the red
variable $\varv_i$ is a free variable of $\psi$.
Note that 
$\free{(\LogGuard{g}\und\psi)}\deff\free{\LogGuard{g}}\cup\free{\psi}$
is the set of free variables of the formula $(\LogGuard{g}\und\psi)$.
The definition of the syntax of 
$\NGCk$ is inductively given as follows.
\medskip

\noindent\textbf{Base cases:} 
$(\LogGuard{g} \und \psi) \in \NGCk$ for all $\psi$ and all $g: \NNpos
\pto [k]$ matching one of the following:
\begin{enumerate}
\item\label{item:syntaxdef:1}
  $\psi$ is $E(\vare_j, \varv_i)$ and $\Dom{g}=\set{i}$ and $j\in[k]$

  (note that $g(i)$ can be an arbitrary element in $[k]$);

\item\label{item:syntaxdef:2}
  $\psi$ is $\vare_{j} {=} \vare_{j'}$ with $\Dom{g}=\emptyset$ and $j,j'\in[k]$;

\item\label{item:syntaxdef:3}
  $\psi$ is $\varv_i {=} \varv_{i'}$ with $\Dom{g}=\set{i,i'}$.

\setcounter{LogicCounter}{\value{enumi}}
\end{enumerate}

\noindent \textbf{Inductive cases}:
\begin{enumerate}
	\setcounter{enumi}{\value{LogicCounter}}
	\item\label{item:syntaxdef:4}
	  If $(\LogGuard{g}\und \psi)\in\NGCk$, then
	  $(\LogGuard{g}\und\nicht{\psi})\in \NGCk$;

	\item\label{item:syntaxdef:5}
	  If $(\LogGuard{g_i}\und\psi_i)\in\NGCk$ for $i\in[2]$ and
	  $g_1$ and $g_2$ are compatible (i.e., they agree on $\Dom{g_1}\cap\Dom{g_2}$),
	  then
	  $(\LogGuard{g}\und \phi) \in\NGCk$ for
	  $g \deff  g_1\union g_2$ and $\phi\deff (\psi_1 \land \psi_2)$;
	
	\item\label{item:syntaxdef:6}
	  If $(\LogGuard{g}\und\psi)\in \NGCk$ and $n,\ell\in\NNpos$, and
	  $i_1,\ldots,i_\ell\in\Dom{g}$ with $i_1<\cdots<i_\ell$,
	  then $(\LogGuard{\tilde{g}}\und\phi)\in \NGCk$ for
	  \begin{eqnarray*}
		\phi & \deff  &
		\existsi[n] (\varv_{i_1},\ldots,\varv_{i_\ell}).(\LogGuard{g}\und \psi)
					\quad\text{and}
		\\
		\tilde{g} & \deff  & g-\set{i_1,\ldots,i_\ell}
	  \end{eqnarray*}
	  (note that
	  $\free{\phi}=
	  \free{(\LogGuard{g}\und\psi)}\setminus\set{\varv_{i_1},\ldots,
		\varv_{i_\ell}}$);

	\item\label{item:syntaxdef:10}\label{item:syntaxdef:7}\label{item:syntaxdef:8}
	  If $(\LogGuard{g}\und\psi)\in \NGCk$ and 
	  $n,\ell\in\NNpos$, and 
	  $S\deff\set{i_1,\ldots,i_\ell}\subseteq
	  \indfreeB{\chi}$ for $\chi\deff(\LogGuard{g}\und\psi)$
	  with
	  $i_1<\cdots <i_\ell$, and 
	  if $\tilde{g}:\NNpos\pto [k]$ with
	  $\Dom{\tilde{g}}=\Dom{g}$
	  such that
	  all $i\in\Dom{g}$ satisfy
	\begin{equation}\label{eq:syntax:guard}
	  \tilde{g}(i)=g(i)
	  \quad\text{or}\quad
	  \tilde{g}(i)\in S
	  \quad \text{or} \quad
	  \tilde{g}(i)\not\in
	  \Img{g}
	\end{equation}
	  then $(\LogGuard{\tilde{g}}\und \phi)\in\NGCk$ for
	\begin{eqnarray*}
		\phi & \deff  &
		\existsi[n] (\vare_{i_1},\ldots,\vare_{i_\ell}).(\LogGuard{g}\und
						\psi)\,.
	  \end{eqnarray*}
	  (note that
	  $\free{\phi}=
	  \free{\chi}\setminus\set{\vare_{i_1},\ldots,\vare_{i_\ell}}$).
\end{enumerate}

\noindent
Let us have
a closer look at rule~\ref{item:syntaxdef:10}): The
formula $\phi$ has exactly the same free red variables as the formula
$\chi$. 
But the guard of red variable $\varv_i$ in
$\tilde{\chi}\deff (\LogGuard{\tilde{g}}\und\phi)$ 
is $j'\deff\tilde{g}(i)$, whereas in $\chi$ it is $j\deff
g(i)$. Condition \eqref{eq:syntax:guard} is equivalent to the
following: the guard remains unchanged
(i.e., $j'{=}j$), or the new guard $j'$ has become \enquote{available}
by the quantification (i.e., $j'\in S$) or it has not been used as a
guard by $g$ (i.e., $j'\not\in\Img{g}$).

\begin{example}
Let $\phi$ be the $\GCk$-sentence from Example~\ref{example:GCk:eins}.
It is straightforward to see that $(\top\und\phi)$ is an equivalent
sentence in $\RGCk$.

Let $\chi$ be the $\GCk$-sentence from Example~\ref{example:formula}.
It can be shown that $(\top\und\chi)$ is an equivalent sentence in
$\RGCk$ (cf.\ Appendix~\ref{appendix:logic}).
\end{example}  

Inductively one obtains for all $\chi\deff(\LogGuard{g}\und\psi)\in
\NGCk$ that
\begin{equation}\label{eq:FreeRedVarsAndDomg}
  \Dom{g}
  \ = \
  \setc{i\in\NNpos}{\varv_i\in\free{\chi}}
  \ = \
  \setc{i\in\NNpos}{\varv_i\in\free{\psi}}
  \,.
\end{equation}

Note that $\NGCk \subseteq \GCk$. We use the same notions and notation
as for $\GCk$.
\emph{Sentences} of $\NGCk$ are formulas $\chi\deff (\LogGuard{g}\und
\psi)$ in $\NGCk$ with $\free{\chi}=\emptyset$. 
From \eqref{eq:FreeRedVarsAndDomg} we know that this implies that
$\Dom{g}=\emptyset$, i.e., $g=g_\emptyset$ where $g_\emptyset$ is
the uniquely defined partial mapping with empty domain; recall that
$\LogGuard{g_\emptyset}=\top$. 
For two incidence graphs $\I$ and $\I'$ we write $\I\equivRGCk\I'$ and
say that $\I$ and $\I'$ are \emph{indistinguishable by the logic $\RGCk$} if
for all sentences $\chi\in\RGCk$ we have: $\I\models\chi$ $\iff$ $\I'\models\chi$.

Let us introduce some shorthands, that will be useful in subsequent sections.
If $(\LogGuard{g_i}\und\psi_i)\in\NGCk$ for
$i\in[2]$, $g_1,g_2$ are compatible, and $g=g_1\cup g_2$, we write
\begin{itemize}
\item $\big(\LogGuard{g} \land (\psi_1 \lor \psi_2)\big)$ for $\big(\LogGuard{g} \land \nicht(\nicht \psi_1
  \und \nicht \psi_2)\big)$, and
\item $\big( \LogGuard{g} \land (\psi_1 \limplies \psi_2) \big)$ for $\big(\LogGuard{g} \land (\nicht \psi_1 \oder \psi_2) \big)$.
\end{itemize}  
If $(\LogGuard{g} \land \psi) \in \NGCk$ and $i_1,\ldots,i_\ell \in \Dom{g}$ with
$i_1 < \cdots < i_\ell$ and $\tilde{g} = g - \set{i_1,\ldots,i_\ell}$ and
$\ov{\varv} = (\varv_{i_1},\ldots,\varv_{i_\ell})$, then we write
\begin{itemize}
\item
  $\big(\LogGuard{\tilde{g}} \land \forall\,
  \tupel{\varv}.(\LogGuard{g}\impl\psi)\big)$ for
  $\big( \LogGuard{\tilde{g}} \land
  \nicht\,\existsi[1]\,\tupel{\varv}.(\LogGuard{g}\und \nicht\psi)\big)$ and
\item
  $\big(\LogGuard{\tilde{g}} \land \existsex[n]\,\tupel{\varv}.(\LogGuard{g}\und
  \psi) \big)$ for\\
  $\big(\LogGuard{\tilde{g}} \land \big(\,\existsi[n]\,\tupel{\varv}.(\LogGuard{g}\und
  \psi) \;\und\; \nicht\, \existsi[n+1]\,\tupel{\varv}.(\LogGuard{g}\und
  \psi)\,\big) \big)$.
\end{itemize}
And we use the analogous shorthands for formulas quantifying over blue
variables $\vare_{i_1},\ldots,\vare_{i_\ell}$. 

Two $\GCk$-formulas $\phi$ and $\psi$ are said to be \emph{equivalent}
(for short: $\phi\equiv\psi$) if for every interpretation $\IntI$ we
have: $\IntI\models\phi$ $\iff$ $\IntI\models\psi$.
This section's main result shows that every $\GCk$-sentence is
equivalent to an $\RGCk$-sentence and, in general, $\RGCk$ can be be
viewed as a \enquote{normal form} for $\GCk$ in the following sense:
\begin{theorem}\label{thm:ngck_is_equivalent}\label{thm:normalform}
	For all formulas $\phi \in \GCk$ and all guard functions $f$
        with $\Dom{f} = \indfreeR{\phi}$, there exists a formula
        $\phi_f$ such that $(\LogGuard{f} \land \phi_f) \in \NGCk$ and
	\begin{equation*}
		(\LogGuard{f} \land \phi) \ \equiv \ (\LogGuard{f}
                \land \phi_f)
                \qquad\text{and}\qquad
\free{(\LogGuard{f} \land \phi)} \; = \; \free{(\LogGuard{f}
                \land \phi_f)}\,.
        \end{equation*}
In particular, if $\phi$ is a \emph{sentence} in $\GCk$, then for
the empty guard function $f$,
$(\top\und\phi_f)$ is a sentence in $\RGCk$ that is equivalent to $\phi$.              
\end{theorem}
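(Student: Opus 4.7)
The plan is to prove the theorem by structural induction on $\phi \in \GCk$, producing a suitable $\phi_f$ for each valid outer guard $f$. For the atomic base cases $E(\vare_j,\varv_i)$, $\vare_j{=}\vare_{j'}$, and $\varv_i{=}\varv_{i'}$ one sets $\phi_f \isdef \phi$ and observes that $\indfreeR{\phi}$ forces $\Dom{f}$ to match the corresponding $\NGCk$ base-case shape. The corner case $\phi = \top$, where $f$ is the empty guard function, needs a small workaround: take $\phi_f$ to be a tautology built by quantification, e.g.\ $\existsi[1](\vare_1).(\top\und\vare_1{=}\vare_1)\,\oder\,\nicht\,\existsi[1](\vare_1).(\top\und\vare_1{=}\vare_1)$, which lives in $\NGCk$ by the standard $\oder$-shorthand.

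The Boolean cases are routine. For $\phi = \nicht\psi$, $\indfreeR{\phi} = \indfreeR{\psi}$, so the same $f$ applies; IH yields $\psi_f$, and rule~(4) of $\NGCk$ gives $\phi_f \isdef \nicht\psi_f$, with equivalence following because under $\LogGuard{f}$ the formulas $\psi_f$ and $\psi$ agree. For $\phi = \psi_1 \und \psi_2$, split $f$ as $f_1 \cup f_2$ with $\Dom{f_j} = \indfreeR{\psi_j}$; these restrictions are automatically compatible, and IH on each side combined via rule~(5) gives $\phi_f \isdef \psi_{1,f_1} \und \psi_{2,f_2}$.

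The quantifier cases are where the actual work lies. For a red quantification $\phi = \existsi[n]\ov{\varv}.(\LogGuard{g}\und\psi)$, define an inner guard $g^*$ with $\Dom{g^*} = \Dom{g}$ by $g^*(i) = f(i)$ for $i \in \Dom{f}$ and $g^*(i_j) = g(i_j)$ for $j \in [\ell]$. IH applied to $\psi$ with $g^*$ yields $\psi_{g^*}$, and rule~(6) of $\NGCk$ produces $(\LogGuard{f}\und\existsi[n]\ov{\varv}.(\LogGuard{g^*}\und\psi_{g^*})) \in \NGCk$, since $f = g^* - \set{i_1,\ldots,i_\ell}$. Under $\LogGuard{f}$, the $\LogGuard{g^*}$-atoms on the free indices are subsumed by $\LogGuard{f}$ while those on the quantified indices coincide with the quantifier-dependent atoms of $\LogGuard{g}$; the only semantic residue relative to $\phi$ is the conjunction $\xi \isdef \Und_{i\in\Dom{f}:\,g(i)\neq f(i)} E(\vare_{g(i)},\varv_i)$, which one conjoins externally --- each atom packaged as an $\NGCk$ base case with singleton guard $i\mapsto f(i)$ and assembled via rule~(5).

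The blue quantification $\phi = \existsi[n]\ov{\vare}.(\LogGuard{g}\und\psi)$ is the main obstacle, because rule~(7) permits the outer guard $\tilde{g}$ to differ from the inner guard $g^*$ only in a restricted way: for every index $i$ one needs $\tilde{g}(i) = g^*(i)$, or $\tilde{g}(i) \in S \isdef \set{i_1,\ldots,i_\ell}$, or $\tilde{g}(i) \notin \Img{g^*}$. For an arbitrary prescribed $f$, the naive choice $g^* = g$ may fail this condition. The remedy is to apply IH not to $\psi$ alone but to the augmented formula $(\gamma_g\und\psi)$, where $\gamma_g \isdef \Und_{i:\,g(i)\in S} E(\vare_{g(i)},\varv_i)$ absorbs the atoms of $\LogGuard{g}$ that depend on the quantified blue variables. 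This absorption decouples the semantic need to reproduce $\gamma_g$ inside the existential from the syntactic constraint on $g^*$, letting us pick $g^*$ so that rule~(7) holds for $\tilde{g} = f$ --- by case analysis on whether $f(i) \in S$ and whether forbidden values lie in $\Img{g^*}$ --- while ensuring that the contribution of $\LogGuard{g^*}$ inside the existential reduces under $\LogGuard{f}$ to atoms that are either implied by $\LogGuard{f}$, already provided by the IH-absorbed $\gamma_g$, or pullable outside as outer-context conjuncts. Residual atoms of the original $\xi$ are added externally as in the red case. Designing this $g^*$ --- balancing the syntactic rule-(7) constraint against the semantic decomposition of $\LogGuard{g}$ under $\LogGuard{f}$ --- is the most intricate step; with it in hand, the ``In particular'' clause follows at once by taking $f$ empty, whereupon $(\top\und\phi_f) \in \RGCk$ is an equivalent sentence.
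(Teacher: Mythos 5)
Your structural-induction framing, base cases (including the $\top$ workaround via a tautology), Boolean cases, and red-quantification case are all sound, and the red case in particular is a correct alternative to the paper's treatment: the paper sidesteps your external conjunct $\xi$ by instead keeping $\LogGuard{g}$ inside the existential scope, choosing $\phi_f = \existsi[n]\ov{\varv}.(\LogGuard{g'}\und(\LogGuard{g}\und\chi_{g'}))$ with $g'=f\cup g$; both work.

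The genuine gap is in the blue-quantification case, which you correctly identify as the heart of the proof but then do not actually carry out. You never exhibit the inner guard $g^*$; you only list constraints it ought to satisfy ("rule (7) holds for $\tilde g = f$ by case analysis ... while ensuring that the contribution of $\LogGuard{g^*}$ reduces ..."), which is a wish list, not a construction. In the paper this is precisely where the work happens: $g'$ is defined explicitly by $g'(i)=g(i)$ if $f(i)\in S$ or $f(i)\notin\Img{g}$, and $g'(i)=f(i)$ otherwise, and Claim~2 verifies rule~\ref{item:syntaxdef:7} by a nontrivial contradiction argument that chases a chain $i\mapsto j$ with $f(i)=g'(j)=f(j)$ and derives $g'(j)=g(j)$ against $g'(j)\neq g(j)$. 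Note also that there is a real circularity to dodge: the rule-\ref{item:syntaxdef:7} condition mentions $\Img{g^*}$, which depends on the very $g^*$ you are defining, so one cannot simply pick $g^*(i)=f(i)$ pointwise (this would satisfy the syntactic constraint vacuously but would wrongly bind atoms $E(\vare_{f(i)},\varv_i)$ for $f(i)\in S$ to quantified blue variables, changing the semantics). Your $\gamma_g$-absorption idea is a reasonable way to separate the quantified-blue atoms from the rest, but it adds bookkeeping without eliminating the need for the concrete $g^*$ and its rule-\ref{item:syntaxdef:7} verification. As written, the proposal asserts the hardest step rather than proving it.
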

\begin{proof} We proceed by induction on the construction of $\GCk$-formulas.
  \medskip
  
  \noindent\textbf{Base cases}:
  For the atomic formula $\phi\deff\top \in\GCk$ note that the only
  applicable guard function is the function $f$ with empty domain and
  $\LogGuard{f}=\top$. 
  We can choose $\phi_f\deff \nicht\,\existsi[1](\vare_1).(\top\,\und\,\nicht\,\vare_1{=}\vare_1)$.
  It is straightforward to verify that $(\top\land\phi_f)\in\NGCk$ is a
  sentence equivalent to $(\top\land\top)$.
  \smallskip
  
  For all the remaining atomic formulas $\phi \in \GCk$ we can set $\phi_f \isdef \phi$ since, in the base case, $\NGCk$ does not impose any restrictions on $f$ other than $\Dom{f} = \indfreeR{\phi_f}$.

	\medskip
	\noindent\textbf{Induction hypothesis}: The theorem's statement holds for formulas $\chi, \xi \in \GCk$.

	\medskip
	\noindent\textbf{Inductive step}:\medskip
        
	\noindent\textbf{Case 1}: $\phi$ is of the form $\lnot
        \chi$. By induction hypothesis we have: $(\LogGuard{f} \land
        \chi) \equiv (\LogGuard{f} \land \chi_f)$ and
$\free{(\LogGuard{f} \land
        \chi)}= \free{ (\LogGuard{f} \land \chi_f)}$
        and $(\LogGuard{f} \land \chi_f) \in \NGCk$. Then, also
        $(\LogGuard{f} \land \lnot \chi_f) \in \NGCk$ and
        $\free{(\LogGuard{f} \land \lnot \chi_f)}=\free{(\LogGuard{f} \land \phi)}$.
        We let $\phi_f \isdef \lnot \chi_f$ and claim that
        $(\LogGuard{f} \land \phi) \equiv (\LogGuard{f} \land
        \phi_f)$: \
	Consider an arbitrary interpretation $\IntI$. If $\IntI \not\models \LogGuard{f}$, then $\IntI \not\models (\LogGuard{f} \land \phi)$ and $\IntI \not\models (\LogGuard{f} \land \phi_f)$. If $\IntI \models \LogGuard{f}$ then 
	\begin{align*}
		\IntI \models (\LogGuard{f} \land \phi) &\iff \IntI \models \phi \\
		&\iff \IntI \not\models \chi \\
		&\iff \IntI \not\models (\LogGuard{f} \land \chi) \\
		&\iff \IntI \not\models (\LogGuard{f} \land \chi_f) \\
		&\iff \IntI \not\models \chi_f \\
		&\iff \IntI \models \phi_f \\
		&\iff \IntI \models (\LogGuard{f} \land \phi_f)\,. \\
	\end{align*}

	\noindent\textbf{Case 2}: $\phi$ is of the form $(\chi \land
        \xi)$. Let $f_\chi$ and $f_\xi$ be the restrictions of $f$ to
        $\indfreeR{\chi}$ and $\indfreeR{\xi}$, respectively. It
        easily follows from $f = f_\chi \union f_\xi$ and the
        induction hypothesis, that $(\LogGuard{f} \land (\chi \land
        \xi)) \equiv (\LogGuard{f} \land (\chi_{f_\chi} \land
        \xi_{f_\xi}))$ and
        $\free{(\LogGuard{f} \land (\chi \land
        \xi))}
        =
        \free{(\LogGuard{f} \land (\chi_{f_\chi} \land
        \xi_{f_\xi}))}$.
        Thus, we use $\phi_f \isdef (\chi_{f_{\chi}} \land
        \xi_{f_{\xi}})$.
        It is straightforward to verify that $(\LogGuard{f}\und\phi_f)\in\NGCk$.
	\medskip

\noindent\textbf{Case 3}: $\phi$ is of the form
$\existsi[n](\varv_{i_1}, \dots, \varv_{i_\ell}).(\LogGuard{g} \land
\chi)$, where $\Dom{g}=\indfreeR{\chi}$ and $i_1<\cdots<i_\ell$ and
$i_1,\ldots,i_\ell\in\indfreeR{(\LogGuard{g}\und\chi)}$. 
Let $f$ be an arbitrary guard function with $\Dom{f}=\indfreeR{\phi}$.

Let $S \isdef \set{ i_1, \dots, i_\ell }$.
Note that $\Dom{f}=\Dom{g}\setminus S$.
For all $i \in \Dom{f}$ choose $g'(i) \isdef f(i)$, and for all $i \in
S$ choose $g'(i) \isdef g(i)$ (thus, $g'=f\cup g$ according to the
notation introduced in Section~\ref{section:FurtherNotation}). Let
	\begin{equation*}
		\phi_f \ \ \isdef \ \ \existsi[n](\varv_{i_1}, \dots, \varv_{i_\ell}).(\LogGuard{g'} \land (\LogGuard{g} \land \chi_{g'})).
	\end{equation*} 
	It is easy to see that $(\LogGuard{f} \land \phi_f) \in \NGCk$: $(\LogGuard{g'} \land \chi_{g'}) \in \NGCk$ holds by induction hypothesis, and $(\LogGuard{g'} \land \LogGuard{g}) \in \NGCk$ can easily be verified using $\Dom{g'} = \Dom{g}$. Therefore, $(\LogGuard{g'} \land (\LogGuard{g} \land \chi_{g'})) \in \NGCk$. Since $\Dom{f} = \Dom{g'} \setminus S$ it follows from rule \ref{item:syntaxdef:6} that $(\LogGuard{f} \land \phi_f) \in \NGCk$.

        It is straightforward to verify that $\free{(\LogGuard{f}
          \land \phi_f)}=\free{(\LogGuard{f} \land \phi)}$: \
The red variables in $\free{(\LogGuard{f} \land \phi_f)}$ are exactly the red variables in
        $\free{\LogGuard{f}}$, and these are exactly the red variables
        in $\free{(\LogGuard{f} \land \phi)}$.
The blue variables in $\free{(\LogGuard{f} \land \phi_f)}$ are exactly
the blue variables in $\free{\LogGuard{f}}$ or in
$\free{\LogGuard{g'}}$ or in $\free{\LogGuard{g}}$ or in
$\free{\chi_{g'}}$. These are exactly the blue variables in $\free{\LogGuard{f}}$ or in
$\free{\LogGuard{g}}$ or in $\free{(\LogGuard{g'}\und \chi_{g'})}$. By
induction hypothesis, these are exactly the blue variables in  $\free{\LogGuard{f}}$ or in
$\free{\LogGuard{g}}$ or in $\free{(\LogGuard{g'}\und \chi)}$. These,
in turn, are exactly the blue variables in $\free{\LogGuard{f}}$ or in
$\free{(\LogGuard{g}\und \chi)}$ or in
$\free{\LogGuard{g'}}$. By definition of $g'$ we have
$\free{\LogGuard{g'}}\subseteq\free{\LogGuard{f}}\cup\free{\LogGuard{g}}$. Thus,
the blue variables in  $\free{\LogGuard{f}}$ or in
$\free{(\LogGuard{g}\und \chi)}$ or in
$\free{\LogGuard{g'}}$ are exactly the blue variables in $\free{\LogGuard{f}}$ or in
$\free{(\LogGuard{g}\und \chi)}$; and these are exactly the blue
variables in $\free{(\LogGuard{f}\und \phi)}$.
        
All that remains to be done is to verify that $(\LogGuard{f} \land
\phi_f)\equiv(\LogGuard{f} \land \phi)$: \
Consider an arbitrary interpretation $\IntI \isdef (I,\beta)$.
If  $\IntI \not\models \LogGuard{f}$, then $\IntI
\not\models (\LogGuard{f} \land \phi)$ and $\IntI \not\models
(\LogGuard{f} \land \phi_f)$. If $\IntI \models \LogGuard{f}$,
then $\IntI \models (\LogGuard{f} \land \phi)$ $\iff$ $\IntI
\models \phi$, and $\IntI \models (\LogGuard{f} \land \phi_f)$
$\iff$ $\IntI \models \phi_f$.
We show that $\IntI \models \phi$ $\iff$ $\IntI \models \phi_f$.
For this it suffices to show that
for all $\tupel{a}=(a_1,\ldots,a_\ell) \in \rV{I}^\ell$ we have $\IntI^{\tupel{a}}
\models (\LogGuard{g} \land \chi)$ $\iff$ $\IntI^{\tupel{a}}
\models (\LogGuard{g'} \land (\LogGuard{g} \land \chi_{g'}))$,
where $\IntI^{\tupel{a}}\deff (I, \beta')$ with
$\beta'(\varv_{i_j}) = a_j$ for all $j \in [\ell]$ and
$\beta'(\varx) = \beta(\varx)$ for all variables $\varx\in\Var\setminus\set{\varv_{i_1},\ldots,\varv_{i_\ell}}$.
Recall that $\IntI\models\LogGuard{f}$ and $\Dom{f}=\Dom{g}\setminus
S=\Dom{g'}\setminus S$. Thus, also $\IntI^{\tupel{a}}\models\LogGuard{f}$.

For \enquote{$\Longrightarrow$} assume that $\IntI^{\tupel{a}} \models
(\LogGuard{g} \land \chi)$. Since for all $i \in S$ we have $g'(i) =
g(i)$ and, furthermore, $g'=f\cup g$ and
$\IntI^{\tupel{a}}\models\LogGuard{f}$, it also holds that
$\IntI^{\tupel{a}} \models (\LogGuard{g'} 
\land \chi)$. Thus, by induction hypothesis it holds that
$\IntI^{\tupel{a}} \models (\LogGuard{g'} \land \chi_{g'})$. Hence, we have
$\IntI^{\tupel{a}} \models (\LogGuard{g'} \land (\LogGuard{g}
\land \chi_{g'}))$. 

	For \enquote{$\Longleftarrow$} assume that $\IntI^{\tupel{a}}
        \models (\LogGuard{g'} \land (\LogGuard{g} \land
        \chi_{g'}))$. By induction hypothesis we get
        $\IntI^{\tupel{a}} \models (\LogGuard{g'} \land \chi)$. Hence,
        we have $\IntI^{\tupel{a}} \models (\LogGuard{g} \land \chi)$.

	In summary, we have shown that $(\LogGuard{f} \land \phi)
        \equiv (\LogGuard{f} \land \phi_f)$; this completes Case~3.\bigskip

\noindent\textbf{Case 4}: $\phi$ is of the form
$\existsi[n](\vare_{i_1}, \dots, \vare_{i_\ell}).(\LogGuard{g} \land
\chi)$, where $\Dom{g}=\indfreeR{\chi}$ and $i_1<\cdots<i_\ell$ and
$i_1,\ldots,i_\ell\in\indfreeB{(\LogGuard{g}\und\chi)}$. 
Let $f$ be an arbitrary guard function with $\Dom{f}=\indfreeR{\phi}$.
Note that $\Dom{f}=\Dom{g}$.

Let $S \isdef \set{ i_1, \dots, i_\ell }$.
Let $g'$ be the
guard function with $\Dom{g'} = \Dom{f} = \Dom{g}$ and 
	\begin{equation*}
		g'(i)\ \isdef \ \begin{cases}
			g(i), &\text{if } f(i) \in S \text{ or } f(i) \not\in \Img{g} \\
			f(i), &\text{otherwise, i.e. } f(i) \not\in S \text{ and } f(i) \in \Img{g}.
		\end{cases}
	\end{equation*}
Let
\[        
  \phi_f \quad \isdef \quad
  \existsi[n](\vare_{i_1}, \dots, \vare_{i_\ell}).(\LogGuard{g'} \land
  (\LogGuard{g} \land \chi_{g'}))\,.
\]  
\smallskip

\noindent\emph{Claim 1:} \
  $(\LogGuard{f} \land \phi) \equiv (\LogGuard{f} \land \phi_f)$.
\smallskip

\noindent\emph{Proof of Claim 1}: Consider an arbitrary interpretation $\IntI = (\I, \beta)$. If $\IntI \not\models \LogGuard{f}$, then $\IntI \not\models (\LogGuard{f} \land \phi)$ and $\IntI \not\models (\LogGuard{f} \land \phi_f)$.
If $\IntI \models \LogGuard{f}$, then $\IntI \models (\LogGuard{f}
\land \phi)$ $\iff$ $\IntI \models \phi$, and $\IntI \models (\LogGuard{f}
\land \phi_f)$ $\iff$ $\IntI \models \phi_f$. We show that $\IntI \models
\phi$ iff $\IntI \models \phi_f$. 

For this it suffices to show that for all $\tupel{u}=(u_1,\ldots,u_\ell) \in \bV{I}^\ell$
we have $\IntI^{\tupel{u}} \models (\LogGuard{g} \land \chi)$ $\iff$
$\IntI^{\tupel{u}} \models (\LogGuard{g'} \land (\LogGuard{g} \land
\chi_{g'}))$, where $\IntI^{\tupel{u}} = (I, \beta')$ with
$\beta'(\vare_{i_j}) = u_j$ for all $j \in [\ell]$ and $\beta'(\varx) =
\beta(\varx)$ for all variables $\varx\in\Var\setminus\set{\vare_{i_1},\ldots,\vare_{i_\ell}}$.

For \enquote{$\Longleftarrow$} assume that $\IntI^{\tupel{u}} \models
(\LogGuard{g'} \land (\LogGuard{g} \land \chi_{g'}))$. In particular,
$\IntI^{\tupel{u}} \models (\LogGuard{g'} \land \chi_{g'})$. By
induction hypothesis we obtain: $\IntI^{\tupel{u}} \models
(\LogGuard{g'} \land \chi)$. Hence, we have $\IntI^{\tupel{u}} \models
(\LogGuard{g} \land \chi)$. 
 
For \enquote{$\Longrightarrow$} assume that $\IntI^{\tupel{u}} \models
(\LogGuard{g} \land \chi)$. It suffices to show that
$\IntI^{\tupel{u}} \models \LogGuard{g'}$, because then we can proceed
as follows: We then have $\IntI^{\tupel{u}} \models
(\LogGuard{g'} \land \chi)$, and hence the 
induction hypothesis yields that $\IntI^{\tupel{u}} \models
(\LogGuard{g'} \land \chi_{g'})$, and thus $\IntI^{\tupel{u}} \models
(\LogGuard{g'} \land (\LogGuard{g} \land \chi_{g'}))$. 

All that remains to be done is to prove that $\IntI^{\tupel{u}} \models \LogGuard{g'}$.
We know that $\IntI \models \LogGuard{f}$ and $\IntI^{\tupel{u}}
\models \LogGuard{g}$. We have to show that $\IntI^{\tupel{u}} \models
E(\vare_{g'(i)}, \varv_i)$ for every $i \in \Dom{g'}$. Consider an
arbitrary $i\in\Dom{g'}$. If $g'(i) =
g(i)$ we are done since $\IntI^{\tupel{u}} \models \LogGuard{g}$. If
$g'(i) \neq g(i)$ then, by definition of $g'$ we have: $g'(i) =
f(i)\not\in S$.
We are done
since $\beta'(\vare_{g'(i)}) = \beta'(\vare_{f(i)}) =
\beta(\vare_{f(i)})$ and $\IntI \models \LogGuard{f}$. 
\qed\textsubscript{Claim 1}

\bigskip

\noindent\emph{Claim 2}: $(\LogGuard{f} \land \phi_f) \in \NGCk$.
\smallskip        

\noindent\emph{Proof of Claim 2}: By induction hypothesis we have
$(\LogGuard{g'} \land \chi_{g'}) \in \NGCk$. It is easy to see that
$(\LogGuard{g'} \land \LogGuard{g}) \in \NGCk$, and hence also
$(\LogGuard{g'} \land (\LogGuard{g} \land \chi_{g'})) \in \NGCk$. We
want to apply rule \ref{item:syntaxdef:7} of the inductive definition
of $\NGCk$ to show that $(\LogGuard{f} \land \phi_f) \in \NGCk$. For
this we have to show that all $i \in \Dom{g'}$ satisfy: 
	\begin{equation*}
		f(i) = g'(i) \quad\text{or}\quad f(i) \in S \quad\text{or}\quad f(i) \not\in \Img{g'}\,.
	\end{equation*}
	Assume for contradiction, that there exists an $i \in \Dom{g'}$ such that:
	\begin{align}
		\qquad \qquad f(i) \neq g'(i) \label{thm:ngck_is_equivalent:eq:1} \\
		 \text{and}\quad\qquad\qquad f(i) \not\in S \label{thm:ngck_is_equivalent:eq:2} \\
		 \text{and} \qquad\; f(i) \in \Img{g'}. \label{thm:ngck_is_equivalent:eq:3}
	\end{align}
	Then, by definition of $g'$ we must have $g'(i) = g(i)$ and
\begin{equation}\label{thm:ngck_is_equivalent:eq:4}
        f(i) \not\in \Img{g}.
\end{equation}
        By \eqref{thm:ngck_is_equivalent:eq:3}
        there exists a $j \in \Dom{g'}$ such that $f(i) = g'(j)$; and
        since $f(i) \not\in \Img{g}$ it must hold that $g'(j) \neq
        g(j)$ (because otherwise $f(i) = g'(j) = g(j) \in \Img{g}$,
        which is obviously contradictory). 

        From $g'(j)\neq g(j)$ and 
	the definition of $g'$ it follows that $g'(j) = f(j)$. Thus,
        $f(i) = g'(j) = f(j)$, i.e. $f(i) = f(j)$. Thus, from
        \eqref{thm:ngck_is_equivalent:eq:4} we obtain
        that $f(j) \not\in \Img{g}$. This means that by
        definition of $g'$ we have $g'(j) = g(j)$. This is a
        contradiction, since we already know that $g'(j)\neq g(j)$.

	In summary, we obtain that such an $i$ cannot exist, i.e. all $i \in \Dom{g'}$ satisfy the conditions imposed by rule \ref{item:syntaxdef:7}. Hence, $(\LogGuard{f} \land \phi_f) \in \NGCk$.
	\qed\textsubscript{Claim 2}

        \bigskip
        
\noindent\emph{Claim 3:} \
  $\free{(\LogGuard{f} \land \phi)} \ = \ \free{(\LogGuard{f} \land \phi_f)}$.
\smallskip

\noindent\emph{Proof of Claim 3}:
The red variables in $\free{(\LogGuard{f} \land \phi)}$ are exactly
the red variables in $\free{\LogGuard{f}}$, and these are exactly the red
variables in $\free{(\LogGuard{f} \land \phi_f)}$.

The blue variables in $\free{(\LogGuard{f} \land \phi_f)}$ are exactly
the blue variables in $\free{\LogGuard{f}}$ or in $\free{\phi_f}$.
Let $V\isdef\set{\vare_{i_1},\ldots,\vare_{i_\ell}}$.
Note that $\free{\phi_f} = \big( \free{\LogGuard{g}}\cup
\free{(\LogGuard{g'}\und \chi_{g'})}\big) \setminus V$.
By induction hypothesis, $\free{(\LogGuard{g'}\und
  \chi_{g'})}=\free{(\LogGuard{g'}\und \chi)}$. Hence,
\begin{eqnarray*}
  \free{\phi_f}
  &  =
  & \big( \free{\LogGuard{g}}\cup \free{(\LogGuard{g'}\und \chi)}\big)
    \setminus V
\\
  &  =
 & \big(
 \free{(\LogGuard{g}\und\chi)}\cup\free{\LogGuard{g'}}
   \big)\setminus V
\\
  & =
 & \big(\free{(\LogGuard{g}\und\chi)}\setminus V \big) \cup \big(
 \free{\LogGuard{g'}} \setminus V \big)
  \\
  & =
 & \free{\phi} \cup \big( \free{\LogGuard{g'}} \setminus V \big).
\end{eqnarray*}
To complete the proof it suffices to show that every blue variable in
$\free{\LogGuard{g'}} \setminus V$
belongs to $\free{\LogGuard{f}}\cup\free{\phi}$.
Blue variables in $\free{\LogGuard{g'}} \setminus V$ are of the form
$\vare_{g'(i)}$ with $i\in\Dom{g'}=\Dom{g}=\Dom{f}$ and $g'(i)\not\in S$. By the
definition of $g'$ we know that $g'(i)=f(i)$ or $g'(i)=g(i)$.
\\
If $g'(i)=f(i)$ then
$\vare_{g'(i)}=\vare_{f(i)}\in\free{\LogGuard{f}}$, and we are done.
\\
If $g'(i)=g(i)$ then $g(i)\not\in S$, and hence
$\vare_{g'(i)}=\vare_{g(i)}\in\free{\phi}$, and we are done. 
\qed\textsubscript{Claim 3}

\medskip
        
	This completes the proof of Theorem~\ref{thm:ngck_is_equivalent}.\qedhere
\end{proof}

\section{An Inductive Characterisation of $\IEHW{k}$}
\label{sec:recursive_def}

In this section we give an inductive definition of what we call
\emph{guarded $k$-labeled incidence graphs}, and  we prove that these are
equivalent to the incidence graphs of entangled hypertree width $\leq k$.  
Throughout this section, we fix an arbitrary number $k\in\NNpos$.

\subsection{$k$-Labeled Incidence Graphs and the Class $\GLIk$}\label{sec:defining_k_guarded}

We enrich an incidence graph $\I$ by labeling some of its blue nodes with labels
in $[k]$, by labeling some of its red nodes with labels in $\NNpos$, and by providing,
for every $i\in\NNpos$ that is used as a label for a red node,
a ``blue label'' $g(i)\in[k]$ that should be regarded as ``the guard'' of $i$.
Each label can only be
used once,
not all labels have to be used, not all vertices have to be labeled,
one vertex may have multiple labels, and ``guards'' can be chosen
arbitrarily. 
This is formalised as follows.

\begin{definition}\label{def:klabeledincidencegraph}
A \emph{$k$-labeled incidence graph} $\kLI = (\I, r, b, g)$ consists of an
incidence graph $\I$ and partial mappings
$r: \natpos \pto \rVI$, 
$b: [k] \pto \bVI$,  and
$g: \natpos \pto [k]$ such that
$\Def{g} = \Def{r}$ is finite.
\\
We write $\I_\kLI,r_\kLI, b_\kLI, g_\kLI$ to address $\kLI$'s components $\I,r,b,g$.
\end{definition}

Let $\kLI=(\I,r,b,g)$ be a $k$-labeled incidence graph.
If $j\in\Dom{b}$, then the blue node $b(j)$ of $\I$ is labeled with the number $j$.
If $i\in\Dom{r}$, then the red node $r(i)$ of $\I$ is labeled with the
number $i$, and $g(i)=j$ indicates that the blue node labeled with the
number $j$ (if it exists) should be regarded as ``the guard'' of the
red node labeled with the number $i$. 

We say that $\kLI$ \emph{has real guards} if for every $i\in\Dom{r}$
the red node $v$ labeled $i$ is ``guarded'' by the blue node $e$
labeled $j\deff g(i)$ in the sense that $\I$ contains an edge from $e$ to $v$.
This is formalised in the following definition.

\begin{definition}
A $k$-labeled incidence graph $\kLI=(\I,r,b,g)$ is said to
\emph{have real guards w.r.t.\ $f$} for a partial function $f:\NNpos\pto[k]$
if
$\Dom{f}\subseteq\Dom{r}$ and
for all $i \in \Def{f}$ we have $f(i) \in \Def{b}$ and $(b(f(i)), r(i)) \in \E{\I}$.
\\
We say that $\kLI$ \emph{has real guards} if it has real guards w.r.t.\ $g$.
\end{definition}

Particularly simple examples of $k$-labeled incidence graphs with real guards are provided by the following definition.

\begin{definition}\label{def:kLIf}
Let $f:\NNpos\pto [k]$ with finite, non-empty $\Dom{f}$. The $k$-labeled incidence graph $\kLIf$ \emph{defined by $f$} is
the $k$-labeled incidence graph
$\kLI=(\I,r,b,g)$ with $g\deff f$, 
where $\I$ consists of a red node $v_i$ for every $i\in\Dom{f}$, a
blue node $e_j$ for every $j\in\Img{f}$, and an edge $(e_{f(i)},v_i)$
for every $i\in\Dom{f}$, and where $\Dom{r}=\Dom{f}$ and $r(i)=v_i$
for all $i\in\Dom{r}$, and $\Dom{b}=\Img{f}$ and $b(j)=e_j$ for all
$j\in \Dom{b}$. 
Note that $\kLIf$ has real guards. 
\end{definition}

We introduce a number of operations on $k$-labeled incidence graphs. The
first kind of operations provides ways to modify the labels (the latter
two of these do not necessarily preserve real guards).

\begin{definition}[Changing labels]
\label{def:changinglabels}
\ \\
Let $\kLI = (\I, r, b, g)$ be a $k$-labeled incidence graph.\\
Let $\myR \subseteq \natpos$ be finite, and let $\myB \subseteq [k]$.

\begin{enumerate}[$-$]
\item
  Removing from the red nodes all the labels in $\myR$ is achieved by
  the operation\\
  $\reclaimR{\kLI}{\myR}\deff(\I,r',b,g')$ with $r'\deff r-\myR$ and $g'\deff g-\myR$.
  
\item
  Removing from the blue nodes all the labels in $\myB$ is
  achieved by the operation
  $\reclaimB{\kLI}{\myB}\deff(\I,r,b',g)$ with $b'\deff b-\myB$.

\item
  Let $\myR=\set{i_1,\ldots,i_\ell}$ for $\ell\deff |\myR|$ and $i_1<\cdots<i_\ell$.
  
  For every $\tupel{v}=(v_1,\ldots,v_\ell)\in \rVI^\ell$ we let
  $\reseatR{\kLI}{\myR}{\tupel{v}}\deff (\I,r',b,g)$ with
  $\Dom{r'}=\Dom{r} \union \myR$
  and $r'(i_j)=v_j$ for all $j\in[\ell]$ and $r'(i)=r(i)$ for all $i\in\Dom{r}\setminus \myR$
  (i.e., for each $j\in[\ell]$, the red label $i_j$ is moved onto the
  red node $v_j$, and all other labels remain unchanged).
  
\item
  Let $\myB=\set{i_1,\ldots,i_\ell}$ for $\ell\deff |\myB|$ and $i_1<\cdots<i_\ell$.
  
  For every $\tupel{e}=(e_1,\ldots,e_\ell)\in \bVI^\ell$ we let
  $\reseatB{\kLI}{\myB}{\tupel{e}}\deff (\I,r,b',g)$ with
  $\Dom{b'}=\Dom{b} \union \myB$
  and $b'(i_j)=e_j$ for all $j\in[\ell]$ and $b'(i)=b(i)$ for all $i\in\Dom{b}\setminus \myB$
  (i.e., for each $j\in[\ell]$, the blue label $i_j$ is moved onto the blue node $e_j$, and all other labels remain unchanged).  
\end{enumerate}
\end{definition}

The next operation enables us to \emph{glue} two $k$-labeled
incidence graphs $\kLI_1$ and $\kLI_2$. This is achieved by first taking the 
disjoint union of $\kLI_1$ and $\kLI_2$ and then merging all red
(blue) nodes that carry the same label into a single red (blue) node
that inherits all neighbours of the merged nodes; we write
$\glue{\kLI_1}{\kLI_2}$ to denote the resulting $k$-labeled incidence graph.
This is achieved by the following formal definition.

\begin{definition}[Glueing $k$-labeled incidence graphs]
\label{def:glueing}
\ \\
Let $\kLI_i \isdef (\inzidenz{i}, r_i, b_i, g_i)$
be a $k$-labeled incidence graph for $i\in[2]$. The \emph{glueing
  operation} produces the $k$-labeled incidence graph
$\glue{\kLI_1}{\kLI_2} \isdef (\inzidenz{}, r, b, g)$ in the following
way: 
\smallskip

\noindent
Let $\inzidenz{}' \isdef \inzidenz{1} \disunion \inzidenz{2}$ be
the disjoint union of $\I_1$ and $\I_2$. 
I.e., $\rV{\I'}=\bigcup_{i\in[2]}(\rV{\I_i}\times\set{i})$,
$\bV{\I'}=\bigcup_{i\in[2]}(\bV{\I_i}\times\set{i})$,
$\E{\I'}=\bigcup_{i\in[2]}\setc{\big((e,i),(v,i)\big)}{(e,v)\in\E{\I_i}}$.
\smallskip

\noindent
Let $\sim_R$ be the equivalence relation on $\rV{\I'}$ obtained as the reflexive, symmetric, and transitive closure of the relation
$\bigsetc{\big((r_1(j),1),(r_2(j),2)\big)}{j\in\Dom{r_1}\cap\Dom{r_2}}$.
\\
Let $[v]_{\sim_R}$ denote the equivalence class of each $v\in\rV{\I'}$.
\\
We let $\rVI\deff\setc{[v]_{\sim_R}}{v\in\rV{\I'}}$.
\smallskip
  
\noindent
Let $\sim_B$ be the equivalence relation on $\bV{\I'}$ obtained as the reflexive, symmetric, and transitive closure of the relation
$\bigsetc{\big((b_1(j),1),(b_2(j),2)\big)}{j\in\Dom{b_1}\cap\Dom{b_2}}$.
\\
Let $[e]_{\sim_B}$ denote the equivalence class of each $e\in\bV{\I'}$.
\\
We let $\bVI\deff\setc{[e]_{\sim_B}}{e\in\bV{\I'}}$.
\smallskip
  
\noindent
$\E{\I}\deff\bigsetc{\big([(e,i)]_{\sim_B}\,,\,[(v,i)]_{\sim_R}\big)}{i\in [2],\ (e,v)\in\E{\I_i}}$.
\smallskip

\noindent
Let $\I=(\rVI,\bVI,\E{\I})$.
\smallskip

\noindent
Let $r:\NNpos\pto \rVI$ with $\Dom{r}=\Dom{r_1}\cup\Dom{r_2}$ be defined by
$r(j)\deff [(r_1(j),1)]_{\sim_R}$ for all $j\in\Dom{r_1}$ and
$r(j)\deff [(r_2(j),2)]_{\sim_R}$ for all $j\in\Dom{r_2}\setminus\Dom{r_1}$.
\smallskip

\noindent
Let $b:\NNpos\pto \bVI$ with $\Dom{b}=\Dom{b_1}\cup\Dom{b_2}$ be defined by
$b(j)\deff [(b_1(j),1)]_{\sim_B}$ for all $j\in\Dom{b_1}$ and
$b(j)\deff [(b_2(j),2)]_{\sim_B}$ for all $j\in\Dom{b_2}\setminus\Dom{b_1}$.
\smallskip

\noindent
Let $g\deff g_1\cup g_2$ (recall from
Section~\ref{section:FurtherNotation} that
this means that $g(j)=g_1(j)$ for all $j\in\Dom{g_1}$ and $g(j)=g_2(j)$ for all
$j\in\Dom{g_2}\setminus\Dom{g_1}$).
\smallskip

\noindent
Finally, we let $\glue{\kLI_1}{\kLI_2} \isdef (\inzidenz{}, r, b, g)$.
\smallskip

\noindent
For $i\in[2]$ we define mappings
$\pi_{i,R}:\rV{\I_i}\to\rV{\I}$ and
$\pi_{i,B}:\bV{\I_i}\to\bV{\I}$ by
$\pi_{i,R}(v)\deff [(v,i)]_{\sim_R}$ for all $v\in\rV{\I_i}$ and
$\pi_{i,B}(e)\deff [(e,i)]_{\sim B}$ for all $e\in\bV{\I_i}$.
Note that $\pi_{i,R}(v)$ is the red node of
$\I_{\glue{\kLI_1}{\kLI_2}}$ that corresponds to $v\in\rV{\I_i}$, and
$\pi_{i,B}(e)$ is the blue node of
$\I_{\glue{\kLI_1}{\kLI_2}}$ that corresponds to $e\in\bV{\I_i}$.
\end{definition}

The following lemma's proof is straightforward (see Appendix~\ref{appendix:glueing_preserves_real_guards}).

\begin{lemma}\label{lemma:glueing_preserves_real_guards}
If $\kLI_1$ and $\kLI_2$ are $k$-labeled incidence graphs with real guards, then also
$\glue{\kLI_1}{\kLI_2}$ has real guards.  
\end{lemma}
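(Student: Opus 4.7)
The plan is to unfold the definitions and verify the two conditions for \enquote{having real guards} for $\glue{\kLI_1}{\kLI_2} = (\I, r, b, g)$ directly. By construction, $\Dom{g} = \Dom{g_1} \cup \Dom{g_2}$, which equals $\Dom{r_1} \cup \Dom{r_2} = \Dom{r}$, so the domains match. What remains is to show, for each $i \in \Dom{g}$, that $g(i) \in \Dom{b}$ and that $(b(g(i)),\, r(i)) \in \E{\I}$.

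I would perform a case analysis on whether $i \in \Dom{g_1}$ or $i \in \Dom{g_2} \setminus \Dom{g_1}$. In the first case, $g(i) = g_1(i)$, and the \enquote{real guards} hypothesis for $\kLI_1$ yields $g_1(i) \in \Dom{b_1} \subseteq \Dom{b}$ together with the edge $(b_1(g_1(i)), r_1(i)) \in \E{\I_1}$. Pushing this through the disjoint union gives the edge $\bigl((b_1(g_1(i)),1),\,(r_1(i),1)\bigr) \in \E{\I'}$, which in turn descends to the edge $\bigl([(b_1(g_1(i)),1)]_{\sim_B},\,[(r_1(i),1)]_{\sim_R}\bigr) \in \E{\I}$. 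By the defining formulas for $r$ and $b$ on $\Dom{r_1}$ and $\Dom{b_1}$, these two equivalence classes are precisely $r(i)$ and $b(g(i))$, so the required edge is present. The second case is symmetric: $g(i) = g_2(i) \in \Dom{b_2} \subseteq \Dom{b}$, and the edge $(b_2(g_2(i)), r_2(i)) \in \E{\I_2}$ descends analogously.

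The only mild subtlety worth spelling out is the case where, in the second case, $g_2(i)$ happens to also lie in $\Dom{b_1}$, so that $b(g_2(i))$ is defined via $b_1$ rather than $b_2$. Here one observes that $g_2(i) \in \Dom{b_1} \cap \Dom{b_2}$ forces $(b_1(g_2(i)),1) \sim_B (b_2(g_2(i)),2)$ by the very definition of $\sim_B$, so the two candidate representatives lie in the same equivalence class and $b(g(i))$ still equals the blue node of $\I$ into which $(b_2(g_2(i)),2)$ is merged. Hence the edge produced from $\E{\I_2}$ is indeed an edge $(b(g(i)), r(i))$ of $\I$. No part of the argument requires anything beyond unwinding the glueing construction, so I do not expect any real obstacle; the only thing to be careful about is not to lose track of which partial mapping takes precedence in the union and which equivalence class a labeled node lands in.
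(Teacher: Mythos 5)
Your proof is correct and follows essentially the same route as the paper's own proof in Appendix~\ref{appendix:glueing_preserves_real_guards}: a case split on $i \in \Dom{r_1}$ versus $i \in \Dom{r_2} \setminus \Dom{r_1}$, unwinding Definition~\ref{def:glueing}. You are in fact slightly more careful than the paper — the published proof dispatches Case 2 by ``replacing every subscript $1$ with subscript $2$'', which silently uses that $[(b_1(j),1)]_{\sim_B} = [(b_2(j),2)]_{\sim_B}$ whenever $j \in \Dom{b_1} \cap \Dom{b_2}$, and you make that observation explicit.
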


We need one further operation on $k$-labeled incidence graphs, namely,
one that admits us to change its guard function.
This is provided by the following definitions.

\begin{definition}[Transition]\label{def:newtransition}
Consider a partial function $g:\NNpos\pto [k]$.
A \emph{transition for $g$} is a partial function $f:\NNpos\pto [k]$
with $\emptyset \neq \Dom{f} \subseteq \Dom{g}$
satisfying the following:
for every $i\in \Dom{g}$ with $g(i)\in\Img{f}$ we have $i\in\Dom{f}$.
\end{definition}

\begin{definition}[Applying a transition]\label{def:transition}
\ \\
Let $\kLI = (\I, r, b, g)$ be a $k$-labeled incidence graph
and let $f$ be a transition for $g$. 
We can \emph{apply} the transition $f$
to $\kLI$ and obtain the
$k$-labeled incidence graph
$\switch{\kLI}{f} \isdef \glue{\,\kLIf\,}{\; \reclaimB{\kLI}{\myB}\,}$\,,
where  $\myB \isdef \Img{g} \intersect \Img{f}\intersect \Dom{b}$
and $\kLIf$ is provided by Definition~\ref{def:kLIf}.
\end{definition}

The idea of applying a transition $f$ to a $k$-labeled incidence graph
$\kLI=(\I,r,b,g)$ is to assign new guards to a set of 
labeled red vertices (i.e.\ the domain of $f$). These new guards
should by newly inserted nodes, and they should 
be \emph{real} guards. To this end, for every $j\in\Img{f}$ we add a new blue
node $e'_j$ labeled $j$; and in case that the label $j$ had already
been used by a blue node $e$ of $\kLI$ (i.e., $j\in\Dom{b}$)
\emph{and} served as a guard according to $g$ (i.e., $j\in\Img{g}$), we
remove this label from $e$. 
For each $i\in\Dom{f}$ with $f(i)=j$ we add an edge from the red
node of $\kLI$ labeled $i$ to the new blue node $e'_j$.

The formal definition $\switch{\kLI}{f} \isdef \glue{\kLIf}{\reclaimB{\kLI}{\myB}}$
achieves this as follows:
By $\reclaimB{\kLI}{\myB}$ we release from $\kLI$ all blue labels $j$
that are present in $\kLI$ and that we want to assign to newly created
nodes. This is achieved by letting
$\myB=\Img{g}\intersect\Img{f}\intersect\Dom{b}$.
Afterwards, adding the edges from the nodes of $\kLI$ that carry a
red label $i\in\Dom{f}$ to the new blue node $e'_{f(i)}$ is achieved
by glueing $\kLIf$ to $\reclaimB{\kLI}{\myB}$.

Note that releasing from $\kLI$ all blue labels in $\myB$ might be
problematic: 
Consider a red node $v$ labeled $i$ that 
was originally guarded by the blue node $e$ of $\kLI$ that carried the label
$j\deff g(i)$. Releasing the label $j$ from node $e$ means that $v$
loses its guard in case that $i\not\in\Dom{f}$. 
Therefore, for $f$ to be a transition for
$g$, we require in Definition~\ref{def:newtransition} that it assigns a new guard to all the affected
labeled red vertices, i.e.\ we require $i \in \Dom{f}$, if $g(i) \in
\Img{f}$. 

The next lemma's proof is straightforward.

\begin{lemma}\label{lemma:transition_preserves_real_guards}
If $\kLI$ is a $k$-labeled incidence graph with real guards and $f$ is a transition for
$g_\kLI$, then $\switch{\kLI}{f}$ has real guards.
\end{lemma}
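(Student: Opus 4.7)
The plan is to unpack the definitions, verify that $\Dom{r'} = \Dom{g'} = \Dom{g}$, and then split into two cases depending on whether a labeled red vertex was re-guarded by the transition or not. Write $\kLI = (\I, r, b, g)$, $\myB \deff \Img{g} \cap \Img{f} \cap \Dom{b}$, and $\switch{\kLI}{f} = \glue{\kLIf}{\reclaimB{\kLI}{\myB}} \eqqcolon (\I', r', b', g')$. From Definition~\ref{def:kLIf} we have $g_{\kLIf} = f$, $r_{\kLIf} = (i \mapsto v_i)$, $b_{\kLIf} = (j \mapsto e_j)$, with an $\I_{\kLIf}$-edge $(e_{f(i)}, v_i)$ for each $i \in \Dom{f}$. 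From Definition~\ref{def:glueing} applied to $\kLI_1 \deff \kLIf$ and $\kLI_2 \deff \reclaimB{\kLI}{\myB}$, using $\Dom{f} \subseteq \Dom{g} = \Dom{r}$, I would read off $g' = f \cup g$, i.e.\ $g'(i) = f(i)$ for $i \in \Dom{f}$ and $g'(i) = g(i)$ for $i \in \Dom{g}\setminus\Dom{f}$, and $\Dom{b'} = \Img{f} \cup (\Dom{b}\setminus\myB)$.

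In the first case, $i \in \Dom{f}$, so $g'(i) = f(i) \in \Img{f} = \Dom{b_{\kLIf}} \subseteq \Dom{b'}$. The glueing projections $\pi_{1,B}, \pi_{1,R}$ carry the $\I_{\kLIf}$-edge $(e_{f(i)}, v_i)$ to an $\I'$-edge $(b'(f(i)), r'(i))$, so the guard is real. In the second case, $i \in \Dom{g}\setminus\Dom{f}$, so $g'(i) = g(i)$. The crucial step is to check $g(i) \notin \myB$, which uses the transition property: if $g(i) \in \Img{f}$, then by Definition~\ref{def:newtransition} we would have $i \in \Dom{f}$, contradicting the case assumption. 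Hence $g(i) \notin \Img{f}$ and in particular $g(i) \notin \myB$, which means $g(i) \in \Dom{b}\setminus\myB \subseteq \Dom{b'}$ and $b'(g(i)) = [(b(g(i)), 2)]_{\sim_B}$ since $g(i) \notin \Dom{b_{\kLIf}}$. Real-guardedness of $\kLI$ provides an $\I$-edge $(b(g(i)), r(i))$, which $\pi_{2,B}, \pi_{2,R}$ transport to $\I'$ as $(b'(g'(i)), r'(i))$.

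The only step that requires more than bookkeeping is the exclusion $g(i) \notin \myB$ in Case~2; everything else is essentially a transparent translation of the glueing definition. I expect this to be the main (and only) real obstacle, and it is exactly what the condition \textquotedblleft for every $i\in\Dom{g}$ with $g(i)\in\Img{f}$ we have $i\in\Dom{f}$\textquotedblright{} in Definition~\ref{def:newtransition} was engineered to guarantee: without it, a red label could silently lose its guard when $\reclaimB{\kLI}{\myB}$ strips the relevant blue label. Combining the two cases yields $(b'(g'(i)), r'(i)) \in \E{\I'}$ for every $i \in \Dom{r'}$, which is exactly the statement that $\switch{\kLI}{f}$ has real guards.
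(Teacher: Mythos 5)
Your proof is correct and supplies exactly the "straightforward" verification the paper leaves implicit (the paper gives no explicit proof for this lemma). The case split on whether $i\in\Dom{f}$ is the natural one, the computation of $\Dom{b'}=\Img{f}\cup(\Dom{b}\setminus\myB)$ and $g'=f\cup g$ from the glueing/reclaim definitions is right, and the only non-trivial step — showing $g(i)\notin\myB$ in Case 2 via the transition condition of Definition~\ref{def:newtransition} — is identified and argued correctly. One small point you use implicitly is that $g(i)\in\Dom{b}$ in Case~2; this comes from the real-guardedness of $\kLI$, which you only explicitly invoke for the edge $(b(g(i)),r(i))$, but it would be cleaner to note it once at the start of Case~2 so the inclusion $g(i)\in\Dom{b}\setminus\myB$ is fully justified.
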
  

Now we have available all the tools needed to define the class $\GLIk$
of \emph{guarded $k$-labeled incidence graphs}.

\begin{definition}[$\GLIk$]\label{def:k_guarded_incidence_graphs}
The class $\GLIk$ of 
\emph{guarded $k$-labeled incidence graphs} is inductively defined as follows.
\smallskip

\noindent \textbf{Base case}:
\begin{enumerate}
\item
Any $k$-labeled incidence graph $\kLI = (\I, r, b, g)$
with  $\rV{\inzidenz{}} = \Img{r}$, $\bV{\inzidenz{}} = \Img{b}$, and
with real guards belongs to $\GLIk$.
\setcounter{KGuardedCounter}{\value{enumi}}
\end{enumerate}

\noindent \textbf{Inductive cases}:
Let $\kLI = (\I, r, b, g)\in \GLIk$.
\begin{enumerate}
\setcounter{enumi}{\value{KGuardedCounter}}
\item\label{item:GLI-reclaimR} $\reclaimR{\kLI}{\myR} \in \GLIk$ for every 
$\myR \subseteq \Def{r}$.
\item\label{item:GLI-reclaimB} $\reclaimB{\kLI}{\myB} \in \GLIk$ for every 
$\myB \subseteq \Def{b} \setminus \Img{g}$.
\item\label{item:GLI-switch} $\switch{\kLI}{f} \in \GLIk$ for every transition $f$ for $g$.
\item\label{item:GLI-glue} $\glue{\kLI}{\kLI'} \in \GLIk$ for every
  $\kLI'=(\I',r',b',g')\in\GLIk$ such that $g$ and $g'$ are
  compatible. 
\end{enumerate}
\end{definition} 
\smallskip

\noindent
By induction, and using the
Lemma~\ref{lemma:glueing_preserves_real_guards} and
\ref{lemma:transition_preserves_real_guards}, one easily obtains: 

\begin{lemma}\label{lemma:GLIk_has_real_guards}
Every $\kLI\in\GLIk$ has real guards.
\end{lemma}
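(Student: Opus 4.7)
The plan is to proceed by straightforward induction on the inductive definition of $\GLIk$ given in Definition~\ref{def:k_guarded_incidence_graphs}, treating each of the four construction rules as a separate case. The base case is immediate, since the rule explicitly stipulates that the generating $k$-labeled incidence graph has real guards. So the whole argument reduces to checking that each of the four inductive rules preserves the property of having real guards.

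For case \ref{item:GLI-reclaimR} (removing red labels via $\reclaimR{\kLI}{\myR}$), the resulting $k$-labeled incidence graph is $(\I, r', b, g')$ with $r' = r - \myR$ and $g' = g - \myR$. Since $\Dom{g'} = \Dom{r'} \subseteq \Dom{g} = \Dom{r}$, and since $b$ is unchanged, every $i \in \Dom{g'}$ still satisfies $g'(i) = g(i) \in \Dom{b}$ and $(b(g(i)), r(i)) \in \E{\I}$; so real guards are preserved. For case \ref{item:GLI-reclaimB} (removing blue labels via $\reclaimB{\kLI}{\myB}$), the key observation is that the restriction $\myB \subseteq \Dom{b} \setminus \Img{g}$ guarantees that no label currently serving as a guard is removed. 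Formally, for the resulting graph $(\I, r, b', g)$ with $b' = b - \myB$, we have $\Img{g} \subseteq \Dom{b} \setminus \myB = \Dom{b'}$, hence $b'(g(i)) = b(g(i))$ is still defined and has the required edge to $r(i)$ for every $i \in \Dom{g}$.

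Cases \ref{item:GLI-switch} and \ref{item:GLI-glue} are the ``non-trivial'' ones, but both have already been dispatched by earlier lemmas. For the switch operation, by induction hypothesis $\kLI$ has real guards; since $f$ is (by assumption) a transition for $g_\kLI$, Lemma~\ref{lemma:transition_preserves_real_guards} immediately yields that $\switch{\kLI}{f}$ has real guards. For the glue operation, the induction hypothesis gives that both $\kLI$ and $\kLI'$ have real guards, and Lemma~\ref{lemma:glueing_preserves_real_guards} directly yields that $\glue{\kLI}{\kLI'}$ has real guards. Note that compatibility of $g$ and $g'$ is used only to ensure that the glueing construction is well-behaved with respect to the definition of $\GLIk$; Lemma~\ref{lemma:glueing_preserves_real_guards} itself does not need it to conclude preservation of real guards.

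I do not expect any genuine obstacle here: the entire content of the lemma is packaged into the two supporting lemmas (Lemma~\ref{lemma:glueing_preserves_real_guards} and Lemma~\ref{lemma:transition_preserves_real_guards}) and into the side condition $\myB \subseteq \Dom{b} \setminus \Img{g}$ imposed in rule~\ref{item:GLI-reclaimB}. The only point that deserves attention is exactly that side condition: it is precisely tailored so that the $\reclaimB$ rule cannot destroy a guard, and this is the sole reason why $\GLIk$ is defined with that restriction rather than allowing arbitrary $\myB \subseteq \Dom{b}$. Once this observation is made, the induction writes itself.
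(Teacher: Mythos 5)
Your proof is correct and follows exactly the paper's approach: induction on the inductive definition of $\GLIk$, handling the switch and glue cases by invoking Lemmas~\ref{lemma:transition_preserves_real_guards} and \ref{lemma:glueing_preserves_real_guards}, and verifying the $\reclaimR{}{}$ and $\reclaimB{}{}$ cases directly (the latter hinging on the side condition $\myB \subseteq \Dom{b}\setminus\Img{g}$, as you correctly emphasise). The paper leaves the base case and the two reclaim cases implicit, so your write-up simply fills in those routine details.
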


\subsection{$\GLIk$ Corresponds to $\IEHWk$}
\label{sec:k_guarded_characterizes_ehw}

Recall from Section~\ref{sec:preliminaries} that
$\IEHWk$ is the class of all incidence graphs $\I$ with $\ehw{\I}\leq k$, where
$\ehw{\I}$ is $\I$'s entangled hypertree width (cf.\ Definition~\ref{def:ehw}).
This subsection is devoted to proving that $\IEHWk$ is characterised
by $\GLIk$ (Theorem~\ref{thm:IEHWkIsGLIk}). A $k$-labeled incidence
graph $\kLI$ is called \emph{label-free} if $\Dom{r_\kLI}=\Dom{b_\kLI}=\Dom{g_{\kLI}}=\emptyset$.

\begin{theorem}\label{thm:IEHWkIsGLIk}\label{lemma:k_guarded_subset_ehw}\label{lemma:ehw_subset_k_guarded} \mbox{}
\begin{mea}
\item\label{item:GLIkInIEHWk}
The incidence graph $\I_\kLI$ of every $\kLI\in\GLIk$ is in $\IEHWk$.
\item\label{item:IEHWkInGLIk}
For every incidence graph $\I\in\IEHWk$ there exists a label-free
$\kLI\in\GLIk$ such that $\I\isom\I_{\kLI}$.
\end{mea}    
\end{theorem}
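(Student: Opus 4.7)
I would prove parts (a) and (b) separately by induction---(a) on the construction of $\kLI$, and (b) on the tree of an ehd.

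For (a), I strengthen the inductive statement to: every $\kLI\in\GLIk$ admits an ehd $D=(T,\DBag,\DCover)$ of $\I_\kLI$ with $\w{D}\leq k$ together with a distinguished root $t_0\in V(T)$ satisfying $\Img{r_\kLI}\subseteq\DBag(t_0)$ and $\Img{b_\kLI}\subseteq\DCover(t_0)$. The base case produces the one-node tree with $\DBag(t_0)=\Img{r}$ and $\DCover(t_0)=\Img{b}$; the real-guards condition together with $\rV{\I_\kLI}=\Img{r}$ and $\bV{\I_\kLI}=\Img{b}$ gives precise coverage, and the remaining ehd properties are immediate. The $\reclaimR$ and $\reclaimB$ steps leave $\I_\kLI$ and the ehd unchanged. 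For $\glue{\kLI_1}{\kLI_2}$ I join the two trees by a fresh edge $\{t_0^{(1)},t_0^{(2)}\}$; any node identified by the glue sits in both root bags or both root covers, so red- and blue-node connectedness as well as precise coverage at the new edge follow from the inductive hypothesis. Finally, since $\switch{\kLI}{f}=\glue{\kLIf}{\reclaimB{\kLI}{\myB}}$, the switch case reduces to the rules just handled, once one checks that $\kLIf$ itself qualifies for the base case.

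For (b), given an ehd $D=(T,\DBag,\DCover)\in\Ehd{\I}$ with $\w{D}\leq k$, I would root $T$ at an arbitrary node and first pre-compute: (i) an injective $\lambda_R:\rV{\I}\to\NNpos$ (trivial, since red labels are unbounded); (ii) a map $\lambda_B:\bV{\I}\to[k]$ built top-down by greedy allocation so that $\lambda_B$ is injective on every $\DCover(t)$ and constant on every $T_e$; the allocation succeeds because $|\DCover(t)|\leq k$ and blue-node connectedness ensures that each $e$ lives in a contiguous subtree; (iii) for every tree node $t$ and every $v\in\DBag(t)$, a local guard $e_{v,t}\in\DCover(t)$ with $v\in N_\I(e_{v,t})$, which exists by precise coverage. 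I would then build $\kLI_t\in\GLIk$ bottom-up so that $\I_{\kLI_t}$ is isomorphic, via $\lambda_R\cup\lambda_B$, to the sub-incidence-graph of $\I$ on the nodes covered by the subtree at $t$; the red labels of $\kLI_t$ enumerate $\lambda_R(\DBag(t))$, the blue labels enumerate $\lambda_B(\DCover(t))$, and the guard function sends $\lambda_R(v)$ to $\lambda_B(e_{v,t})$. A leaf falls under the base case. At an internal node $t$, I would start from the base-case local piece $\kLI_t^0$ built from $(\DBag(t),\DCover(t))$ with guards determined by $e_{\cdot,t}$; for each child $s_i$, I would first apply switch operations to realign the guards of the shared red labels from $\lambda_B(e_{v,s_i})$ to $\lambda_B(e_{v,t})$, then glue the realigned $\kLI_{s_i}$ with the accumulated piece, and finally use $\reclaimR$ and $\reclaimB$ to drop the labels of $\DBag(s_i)\setminus\DBag(t)$ and $\DCover(s_i)\setminus\DCover(t)$. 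The desired label-free $\kLI\in\GLIk$ for the entire incidence graph $\I$ is obtained from $\kLI_{\text{root}}$ by one final reclaim pass.

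The main obstacle is the guard realignment inside a child $\kLI_{s_i}$: a naive single switch targeting a label $j=\lambda_B(e_{v,t})$ fails when $j$ already serves as a guard in $\kLI_{s_i}$, because then $\myB=\{j\}$ unlabels the current bearer and leaves it as a duplicate of the blue node $\kLI_t^0$ will later introduce. I plan to resolve this by processing the shared red labels in an order dictated by the guard-dependency structure: for each conflict cluster I first switch the \enquote{blocking} red labels to alternative guards that are either already in $\lambda_B(\DCover(t))$ (preferred) or drawn from $[k]\setminus\Img{g_{s_i}}$, which is non-empty by the width bound, and only then perform the intended switch of $\lambda_R(v)$ to $j$. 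The transition property of Definition~\ref{def:newtransition} is preserved at each single switch because in each step I include in $\Dom{f}$ every red label whose current $g_{s_i}$-guard falls in $\Img{f}$. Once all guards are aligned, the compatibility precondition of $\glue{\cdot}{\cdot}$ holds, the glue with $\kLI_t^0$ is admissible, and all subsequent $\reclaimR$ and $\reclaimB$ steps meet the side condition $\myB\subseteq\Dom{b}\setminus\Img{g}$ because the labels being dropped no longer serve as guards. Checking that $\I_{\kLI_t}$ is isomorphic to the intended subgraph of $\I$ then reduces to a routine edge-count using completeness and precise coverage of the ehd.
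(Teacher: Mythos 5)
Your plan follows the same architecture as the paper's for both directions, but there are genuine gaps in the execution.

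In~(a), your glue step fails to re-establish the inductive invariant for $\kLI'\deff\glue{\kLI_1}{\kLI_2}$: if you keep, say, $t_0^{(1)}$ as the distinguished node of the combined tree, a blue label carried only by $\kLI_2$ names a blue node lying in $\DCover(t_0^{(2)})$ but not in $\DCover(t_0^{(1)})$, so $\Img{b_{\kLI'}}\not\subseteq\DCover(t_0^{(1)})$ and the next glue in the induction has nothing to anchor to. The fix is to introduce a fresh tree-node $\omega'$ adjacent to both old roots with $\DCover(\omega')=\Img{b_{\kLI'}}$, which has size $\leq k$. You also need to recompute all bags as $\DBag'(t)=\bigcup_{e\in\DCover'(t)}N_{\I'}(e)$: glueing enlarges the neighbourhood of every merged blue node, which can break precise coverage at tree-nodes deep inside $T_1$ or $T_2$, not only near the new edge.

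The gaps in~(b) are more serious. First, $[k]\setminus\Img{g_{s_i}}$ may be empty: the width bound only gives $|\DCover(s_i)|\leq k$ and nothing stops $\Img{g_{s_i}}$ from being all of $[k]$, so your fallback temporary guards may not exist. Second, each application of $\switch{\kLI}{f}$ \emph{creates} new blue nodes labeled by $\Img{f}$; the temporary nodes produced by a cascade of pre-switches are not absorbed by the final glue with $\kLI_t^0$, so $\I_{\kLI_t}$ acquires spurious blue nodes and is no longer isomorphic to the intended sub-incidence-graph of $\I$. Third, you drop the blue labels of $\DCover(s_i)\setminus\DCover(t)$ \emph{after} glueing; but if some $e\in\DCover(s_i)\setminus\DCover(t)$ has the same $\lambda_B$-colour as a \emph{distinct} node in $\DCover(t)$, the glue will already have merged them, and removing a label afterwards cannot separate them. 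The paper avoids all three problems with one additional device: a \emph{schedule} $s:\V{T}\times\rVI\to\bVI$ (choosing, for each tree-node $t$ and $v\in\DBag(t)$, a covering guard in $\DCover(t)$) with the parent-child consistency property that $s(t_p,v)\in\DCover(t_c)$ implies $s(t_c,v)=s(t_p,v)$. Both the schedule and a colouring with $c(t_p)\supseteq c(t_c)$ on parent-child pairs are constructible top-down once the ehd is first normalised so that the rooted tree is binary and \emph{monotone}. Given the schedule, the realignment at each parent-child step uses exactly \emph{one} transition $f$ (the restriction of $g_{t_p}$ to the red labels whose schedule entry actually changes), flanked by blue-label removals, all performed \emph{before} the glue; that this $f$ really is a transition follows directly from the schedule's consistency. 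Your ad-hoc ordering by guard-dependency lacks this coordination and, as written, is forced either to spawn orphan blue nodes or to perform a glue that merges nodes which must remain distinct.
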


\begin{proof}[Proof idea]
\ref{item:GLIkInIEHWk}:\, 
By induction on the definition of $\GLIk$ we show for every
$\kLI\in\GLIk$ that there exists an 
ehd $D=(T,\DBag,\DCover)$ of $\I_\kLI$ of width $\leq k$.
To be able to carry out the induction step, we additionally ensure
that there is a tree-node $\omega\in\V{T}$ with
$\DCover(\omega)=\Img{b_\kLI}$. 
The idea is fairly simple: From Lemma~\ref{lemma:GLIk_has_real_guards}
we know that $\kLI$ has real guards. Thus, the (at most $k$) blue
labels given by $b$ guard all the red labels given by $r$, i.e.,
$(b({g(i)}),r(i))\in\E{\I}$ for every $i\in\Dom{r}$. Hence, we
can use the set of labeled blue vertices to cover the bag containing
all the labeled red vertices.
See Appendix~\ref{appendix:GLIk-in-IEHWk} for a detailed proof.
\smallskip

\ref{item:IEHWkInGLIk}:\,
Given an $\I\in\IEHWk$, the proof proceeds as follows.
We pick an ehd $D=(T,\DBag,\DCover)$ of $\I$ of width $\leq k$ that
has a particularly suitable shape, namely, there exists 
a tree-node $\omega$ such that the rooted tree $(T,\omega)$
is binary (i.e., every node has at most 2 children) and monotone
(i.e., for all parent-child pairs $(t_p,t_c)$ we have
$|\DCover(t_p)|\geq |\DCover(t_c)|$).

For every $t\in\V{T}$, the information provided by $\DCover(t)$ is viewed
as a description of a guarded $k$-labeled incidence graph $\kLI_t$
corresponding to the base case of
Definition~\ref{def:k_guarded_incidence_graphs}. 
We then perform a bottom-up traversal of the rooted tree $(T,\omega)$ and glue 
together all the $\kLI_t$'s.
But this has to be done with care: before glueing them, we have to
ensure that the guards are compatible; we achieve this by adequately
changing labels and applying transitions, so that finally we end up
with an $\kLI\in\GLIk$ whose incidence graph $\I_\kLI$ is isomorphic
to $\I$.
Appendix~\ref{appendix:IEHWk-in-GLIk} provides a detailed description
of the construction; 
a specific example of the construction can be found
in Appendix~\ref{appendix:subsec:example}.
\end{proof}  

\subsection{Homomorphisms on $k$-Labeled Incidence Graphs}\label{sec:hom_on_k_guarded}

We define the notion of \emph{homomorphisms} of $k$-labeled incidence
graphs in such a way that it respects labels, but ignores the guard
function. 

\begin{definition}\label{def:k_labeled_hom}
Let $\kLI=(\I,r,b,g)$ and $\kLI'=(\I',r',b',g')$ be $k$-labeled incidence graphs.
If $\Dom{r}\not\subseteq\Dom{r'}$ or $\Dom{b}\not\subseteq \Dom{b'}$,
then there exists no homomorphism from $\kLI$ to $\kLI'$.
Otherwise, a homomorphism from $\kLI$ to $\kLI'$ is a homomorphism $h=(h_R,h_B)$ from $\I$ to
$\I'$ satisfying the following condition:
$h(r(i)) = r'(i)$ for all $i\in\Dom{r}$ and
$h(b(j)) = b'(j)$ for all $j\in\Dom{b}$.
\end{definition}

By $\Hom{\kLI}{\kLI'}$ we denote the set of all homomorphisms from $\kLI$ to $\kLI'$,
and we let $\hom{\kLI}{\kLI'}:=|\Hom{\kLI}{\kLI'}|$ be the number of
homomorphisms from $\kLI$ to $\kLI'$. In particular, if $\kLI$ is
\emph{label-free}, then $\hom{\kLI}{\kLI'}=\hom{\I_{\kLI}}{\I_{\kLI'}}$.

In order to enable us to \enquote{aggregate} homomorphism counts, we
proceed in a similar way as Dvořák \cite{Dvorak2010}: we use a variant
of the 
\emph{quantum graphs} of Lovász and Szegedy \cite{Lovasz2008}, tailored towards our setting.
We say that $k$-labeled incidence graphs $\kLI_1,\ldots,\kLI_d$ are \emph{compatible} if
their labeling functions all have the same domain and they all have the same guard function, i.e.,
$\Dom{r_{\kLI_1}}=\Dom{r_{\kLI_i}}$, $\Dom{b_{\kLI_1}}=\Dom{b_{\kLI_i}}$, and $g_{\kLI_1}=g_{\kLI_i}$ for all $i\in[d]$. 

\begin{definition}
A \emph{$k$-labeled quantum incidence graph} $Q$ is a formal finite
non-empty linear combination
with real coefficients of compatible $k$-labeled incidence graphs. We represent a $k$-labeled quantum incidence graph $Q$ as
$ \sum_{i=1}^d \alpha_i\, \kLI_i$,
where $d\in\NNpos$, $\alpha_i\in\RR$, and $\kLI_i$ is a $k$-labeled incidence graph for $i\in[d]$.
We let 
$\domrQ\deff\Dom{r_{\kLI_1}}=\cdots=\Dom{r_{\kLI_d}}$,
$\dombQ\deff\Dom{b_{\kLI_1}}=\cdots=\Dom{b_{\kLI_d}}$, and
$g_Q\deff g_{\kLI_1}=\cdots=g_{\kLI_d}$.
The $\alpha_i$'s and $\kLI_i$'s are called the \emph{coefficients} and
\emph{components}, respectively, and $d$ is called the \emph{degree}
of $Q$. 
\end{definition}

Note that a $k$-labeled incidence graph is a $k$-labeled quantum incidence graph with
degree 1 and coefficient 1.

For a $k$-labeled quantum incidence graph $Q=\sum_{i=1}^d\alpha_i\kLI_i$ and an arbitrary $k$-labeled incidence graph $\kLI'$ we let
\[
  \hom{Q}{\kLI'} \ \ \deff \ \ \sum_{i=1}^d \alpha_i {\cdot} \hom{\kLI_i}{\kLI'} \ \ \ \in \ \RR.
\]  

We adapt the operations for $k$-labeled incidence graphs to their
quantum equivalent in the expected way:
\begin{eqnarray*}
\reclaimR{Q}{\myR} & \isdef & \displaystyle\sum_{i=1}^d
                              \reclaimR{\kLI_i}{\myR}\,,
\\                              
\reclaimB{Q}{\myB} & \isdef & \displaystyle\sum_{i=1}^d
                              \reclaimB{\kLI_i}{\myB}\,,
\\                              
\switch{Q}{f} &\isdef & \displaystyle \sum_{i=1}^d \alpha_i\switch{\kLI_i}{f}\,.
\end{eqnarray*}
                        
\noindent
Glueing two $k$-labeled quantum incidence graphs $Q = \sum_{i=1}^d \alpha_{i} \, \kLI_i$ and $Q'= \sum_{j =1}^{d'} \alpha'_{j} \, \kLI'_j$ is achieved by pairwise glueing of their components and multiplication of their respective coefficients, i.e.
\begin{displaymath}
\glue{Q}{Q'} \ \isdef \ \ \sum_{i\in[d] \atop j\in [d']} (\alpha_i {\cdot}\alpha'_j)\; \glue{\kLI_i}{\kLI'_j}\,.
\end{displaymath}

The following can easily be proved for the case where $Q,Q'$ have
degree 1 and coefficient 1 (i.e., $Q,Q'$ are $k$-labeled
incidence graphs), and 
then  be generalised to quantum incidence graphs by simple linear
arguments (for proof details see
Appendix~\ref{appendix:subsec:lemma:hom_of_glued_is_product}).

\begin{lemma}
\label{lem:HomomCountQ}\label{lem:hom_of_glued_is_product}\label{lem:hom_if_red_reclaimed}\label{lem:hom_if_blue_reclaimed}
For all $k$-labeled quantum incidence graphs $Q,Q'$ and all $k$-labeled incidence graphs  $\kLI$ we have: 
\begin{enumerate}
\item\label{item:homQuantumGlue}
  $\hom{\glue{Q}{Q'}}{\kLI} = \hom{Q}{\kLI} \mal \hom{Q'}{\kLI}$.
  \smallskip
\item
  $\hom{\reclaimR{Q}{\myR}}{\kLI} =
  \sum_{\tupel{v} \in \rV{\inzidenz{\kLI}}^{\ell}}
    \hom{Q}{\reseatR{\kLI}{\myR}{\tupel{v}}}$
  
    for all $\myR\subseteq \domrQ$ and $\ell\deff |\myR|$.
    \smallskip
\item
  $\hom{\reclaimB{Q}{\myB}}{\kLI} =
  \sum_{\tupel{e} \in \bV{\inzidenz{\kLI}}^{\ell}}
    \hom{Q}{\reseatB{\kLI}{\myB}{\tupel{e}}}$
  
    for all $\myB\subseteq \dombQ$ and $\ell\deff |\myB|$.
    \smallskip
\item
  $\hom{\switch{Q}{f}}{\kLI} =
   \hom{\kLIf}{\kLI}\cdot \sum_{\tupel{e} \in \bV{\I_{\kLI}}^\ell} \hom{Q}{\reseatB{\kLI}{\myB}{\tupel{e}}}$
  
  for all transitions $f$ for $g_Q$, for
$\myB\deff \dombQ\cap \Img{f}\cap\Img{g}$
and $\ell\deff |\myB|$.
Note that $\hom{\kLIf}{\kLI}\in\set{0,1}$.
\end{enumerate}
\end{lemma}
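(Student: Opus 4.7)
The plan is to first establish all four equations for ordinary $k$-labeled incidence graphs (i.e., degree-one quantum graphs with coefficient one), and then lift to the quantum case by pure linearity. For the linearity step, if $Q=\sum_{i=1}^d\alpha_i\,\kLI_i$ and $Q'=\sum_{j=1}^{d'}\alpha'_j\,\kLI'_j$, then by definition $\hom{Q}{\kLI}=\sum_i\alpha_i\hom{\kLI_i}{\kLI}$, and each of the quantum operations distributes summand-wise across the linear combination. So once the four identities are proven on components, bilinearity of $\glue{\cdot}{\cdot}$ and $\sum_{\tupel v}\!\hom{\,\cdot\,}{\reseatR{\kLI}{\myR}{\tupel v}}$ in their arguments yields the quantum versions immediately.

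For item~\ref{item:homQuantumGlue}, I would exhibit a bijection $\Phi:\Hom{\glue{\kLI_1}{\kLI_2}}{\kLI}\to\Hom{\kLI_1}{\kLI}\times\Hom{\kLI_2}{\kLI}$. Given $h:\glue{\kLI_1}{\kLI_2}\to\kLI$, its \emph{restrictions} $h_i\deff h\circ\pi_i$ (with $\pi_i$ as in Definition~\ref{def:glueing}) are homomorphisms of the underlying incidence graphs, and they respect $\kLI_i$'s labels because $h$ does. Conversely, given $(h_1,h_2)$, the labels force $h_1(r_{\kLI_1}(i))=r_{\kLI}(i)=h_2(r_{\kLI_2}(i))$ for every $i\in\Dom{r_{\kLI_1}}\cap\Dom{r_{\kLI_2}}$ (and analogously for blue labels), so $h_1$ and $h_2$ agree on every pair identified by $\sim_R$ or $\sim_B$; hence they factor through the quotient and induce a homomorphism on the glueing.

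For items (2) and (3), I would partition $\Hom{\reclaimR{\kLI_Q}{\myR}}{\kLI}$ (resp.\ $\Hom{\reclaimB{\kLI_Q}{\myB}}{\kLI}$) by the tuple of images of the dropped labels. Concretely, each $h\in\Hom{\reclaimR{\kLI_Q}{\myR}}{\kLI}$ determines $\tupel v\deff\big(h(r_{\kLI_Q}(i_1)),\ldots,h(r_{\kLI_Q}(i_\ell))\big)$, and $h$ is then precisely a homomorphism $\kLI_Q\to\reseatR{\kLI}{\myR}{\tupel v}$ (the extra label conditions on the source are exactly matched by placing $i_j$ on $v_j$ in the target). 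This gives a bijection between the fibre over $\tupel v$ and $\Hom{\kLI_Q}{\reseatR{\kLI}{\myR}{\tupel v}}$, and summing over $\tupel v\in\rV{\I_\kLI}^\ell$ yields the claim. Item (3) is completely symmetric with $b$, $\myB$ and $\bV{\I_\kLI}$.

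Item (4) then follows by assembling (1) and (3). Using Definition~\ref{def:transition}, $\switch{\kLI_Q}{f}=\glue{\,\kLIf\,}{\,\reclaimB{\kLI_Q}{\myB}\,}$ with $\myB=\Img{g_Q}\cap\Img{f}\cap\Dom{b_{\kLI_Q}}$, so item (1) gives $\hom{\switch{\kLI_Q}{f}}{\kLI}=\hom{\kLIf}{\kLI}\cdot\hom{\reclaimB{\kLI_Q}{\myB}}{\kLI}$, and item (3) rewrites the second factor as $\sum_{\tupel e\in\bV{\I_\kLI}^\ell}\hom{\kLI_Q}{\reseatB{\kLI}{\myB}{\tupel e}}$. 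The remark $\hom{\kLIf}{\kLI}\in\set{0,1}$ holds because $\kLIf$ is fully labelled (every red and blue node carries a label by Definition~\ref{def:kLIf}), so any homomorphism $\kLIf\to\kLI$ is forced on every vertex by the label conditions of Definition~\ref{def:k_labeled_hom}: either $\Dom{r_f}\not\subseteq\Dom{r_\kLI}$ or $\Dom{b_f}\not\subseteq\Dom{b_\kLI}$ (count zero), or the unique candidate mapping exists and we just need to check it preserves the $|\Dom{f}|$ edges of $\I_f$ (count zero or one). I do not expect any serious obstacle: the heart of the argument is the bijection in~(1), which is standard for labelled graphs; the only point that requires slight care is to keep straight that the conditions imposed by Definition~\ref{def:k_labeled_hom} on both source and target sides of a labelled homomorphism are exactly what forces the factorisations through glueing and through the labelled fibres.
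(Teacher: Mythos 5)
Your proposal is correct and follows essentially the same route as the paper: prove items (1)--(3) for ordinary $k$-labeled incidence graphs via the bijection $h\mapsto(h\circ\pi_1,h\circ\pi_2)$ for glueing and via fibre decomposition over the images of the dropped labels for reclaiming, derive item (4) by combining (1) and (3) through Definition~\ref{def:transition}, and then lift to quantum incidence graphs by linearity. Your explicit verification that $\hom{\kLIf}{\kLI}\in\set{0,1}$ (since every node of $\kLIf$ carries a label, pinning the map) is a small completeness gain over the paper's appendix, which just asserts that item (4) follows from (1), (3), and the definition.
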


The class $\QGLIk$ of \emph{guarded $k$-labeled quantum incidence graphs} consists of those
$k$-labeled quantum incidence graphs where all components belong to $\GLIk$. 

The following lemma was provided for series-parallel quantum graphs by
Lovász and Szegedy \cite{Lovasz2008} and for labeled quantum graphs of
tree-width
$\leq k$ by Dvořák \cite{Dvorak2010}; their proof also works for
labeled quantum incidence graphs in $\QGLIk$.

\begin{lemma}\label{lemma:normalizing_quantum_incidence_graphs}
Let $X, Y \subseteq \nat$ be disjoint and finite, and let $Q\in\QGLIk$. There exists a
$Q[X,Y]\in\QGLIk$
with the
same parameters $\domrQ,\dombQ,g_Q$ as $Q$,
such that for all $k$-labeled incidence graphs $\kLI$
with
real guards w.r.t.\ $g_Q$
we have:
\begin{enumerate}
\item \ $\hom{Q}{\kLI} \in X \implies \hom{Q[X,Y]}{\kLI} = 0$\,,
\item \ $\hom{Q}{\kLI} \in Y \implies \hom{Q[X,Y]}{\kLI} = 1$\,.
\end{enumerate}
\end{lemma}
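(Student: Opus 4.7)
The plan is to realise $Q[X,Y]$ as a polynomial in $Q$ with respect to the glueing operation, following the route of Lovász--Szegedy \cite{Lovasz2008} and Dvořák \cite{Dvorak2010} adapted to our $k$-labeled setting. Since $X$ and $Y$ are disjoint and finite, Lagrange interpolation on the points of $X\cup Y$ will produce a real polynomial $P(t)=\sum_{n=0}^{N}c_n t^n\in\RR[t]$ with $P(x)=0$ for all $x\in X$ and $P(y)=1$ for all $y\in Y$. The aim is then to construct a $Q[X,Y]\in\QGLIk$ whose homomorphism counts satisfy $\hom{Q[X,Y]}{\kLI}=P(\hom{Q}{\kLI})$ for every admissible $\kLI$.

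The first step is to build an identity element for glueing that is tailored to inputs with real guards w.r.t.\ $g_Q$. I would take $U\deff(\I,r,b,g)$ with $g\deff g_Q$, one red node $v_i$ for every $i\in\domrQ$, one blue node $e_j$ for every $j\in\dombQ$, the edges $\setc{(e_{g_Q(i)},v_i)}{i\in\domrQ}$, and $r(i)\deff v_i$, $b(j)\deff e_j$. Since the components of $Q$ have real guards by Lemma~\ref{lemma:GLIk_has_real_guards}, we have $\Img{g_Q}\subseteq\dombQ$, so $U$ is well defined, has real guards, and satisfies $\rV{\I}=\Img{r}$ and $\bV{\I}=\Img{b}$; hence $U\in\GLIk$ by the base case of Definition~\ref{def:k_guarded_incidence_graphs}. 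For any $\kLI$ with real guards w.r.t.\ $g_Q$, the only candidate homomorphism from $U$ to $\kLI$ is the one forced by the labels, $v_i\mapsto r_\kLI(i)$ and $e_j\mapsto b_\kLI(j)$, and it \emph{is} a valid homomorphism precisely because of the real-guard assumption on $\kLI$, giving $\hom{U}{\kLI}=1$.

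Next I define the powers $Q^{\cdot 0}\deff U$ and $Q^{\cdot(n+1)}\deff\glue{Q^{\cdot n}}{Q}$. The guard functions of all factors agree with $g_Q$, so each glueing is admissible and every $Q^{\cdot n}$ belongs to $\QGLIk$ with the same parameters $\domrQ,\dombQ,g_Q$ as $Q$. A straightforward induction on $n$ using Lemma~\ref{lem:hom_of_glued_is_product}\ref{item:homQuantumGlue} then yields $\hom{Q^{\cdot n}}{\kLI}=\hom{Q}{\kLI}^n$ for every $\kLI$ with real guards w.r.t.\ $g_Q$. Setting
\[
Q[X,Y]\ \deff \ \sum_{n=0}^{N}c_n\cdot Q^{\cdot n},
\]
linearity of homomorphism counting gives $\hom{Q[X,Y]}{\kLI}=P(\hom{Q}{\kLI})$ for every such $\kLI$, which evaluates to $0$ on $X$ and $1$ on $Y$ by the choice of $P$; and $Q[X,Y]\in\QGLIk$ with the required parameters follows because its components are all in $\GLIk$ and share $\domrQ,\dombQ,g_Q$.

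The main obstacle is the identity element $U$: it must live in $\GLIk$ and share $Q$'s label parameters and guard function, and it must collapse the homomorphism count to $1$ not in a strict algebraic sense but exactly on those $\kLI$ that the lemma singles out---those with real guards w.r.t.\ $g_Q$. The real-guards hypothesis is precisely what ensures that every edge of $U$ maps to an edge of $\kLI$ under the label-forced map; without it, $\hom{U}{\kLI}$ could drop to $0$ and the polynomial-evaluation identity would break down, which explains why the lemma restricts the class of admissible $\kLI$ in the first place.
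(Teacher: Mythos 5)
Your construction is essentially the paper's: Lagrange-interpolate a polynomial $P$ with $P\equiv 0$ on $X$ and $P\equiv 1$ on $Y$, and realise each monomial $t^n$ by glueing $n$ copies of $Q$ (Lemma~\ref{lem:hom_of_glued_is_product} turns glueing into multiplication of homomorphism counts), with a degree-zero term supplying the constant coefficient. The one genuine difference is the choice of that degree-zero object. The paper's sketch sets $Q^0\deff\kLIfct{g_Q}$, whose blue-label domain is $\Img{g_Q}$; your $U$ instead has blue-label domain $\dombQ$. Your version is actually the cleaner one: the components of $\sum_n c_n Q^{\cdot n}$ must be compatible (same $\Dom{b}$) for the sum to be a $k$-labeled quantum incidence graph, and $Q[X,Y]$ must carry the parameter $\dombQ$ as the lemma asserts; moreover $\kLIfct{g_Q}$ is only defined when $\Dom{g_Q}\neq\emptyset$, while your $U$ degenerates gracefully. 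So the adjustment is not cosmetic.

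One step is stated more strongly than the hypothesis supports. You assert $\hom{U}{\kLI}=1$ for every $\kLI$ with real guards w.r.t.\ $g_Q$. Real guards w.r.t.\ $g_Q$ yields $\Dom{g_Q}\subseteq\Dom{r_\kLI}$ and $\Img{g_Q}\subseteq\Dom{b_\kLI}$, but not $\dombQ\subseteq\Dom{b_\kLI}$; since $U$ carries blue labels on all of $\dombQ$, Definition~\ref{def:k_labeled_hom} gives $\hom{U}{\kLI}=0$ whenever $\dombQ\not\subseteq\Dom{b_\kLI}$, and then $\hom{Q[X,Y]}{\kLI}=0$ no matter what $P$ you pick, which contradicts conclusion 2 exactly when $0\in Y$. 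This is a looseness inherited from the lemma statement itself (every invocation in Lemma~\ref{lemma:FormulaToHom} additionally has $\Dom{b_\kLI}\supseteq\dombQ$, so the edge case never arises), but you should either add the hypothesis $\dombQ\subseteq\Dom{b_\kLI}$ before claiming $\hom{U}{\kLI}=1$, or observe explicitly that $\hom{U}{\kLI}=1$ iff both the real-guards condition and $\dombQ\subseteq\Dom{b_\kLI}$ hold, and restrict the conclusion accordingly.
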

\begin{proof}[Proof idea]
Pick a polynomial
$p(x)$ such that $p(z)=0$ for all $z\in X$ and $p(z)=1$ for all $z\in Y$. If
$p(x)=\sum_{i=0}^d a_i x^i$, one chooses
$Q[X,Y]\deff \sum_{i=0}^d a_i Q^i$,
where $Q^i$ is the result of glueing $i$ copies of $Q$ together, and
$Q^0\deff \kLIfct{g_Q}$.
\end{proof}

\section{Proof of Theorem~\ref{thm:GCkVsHomomIndist}}
\label{sec:main_theorem}

Finally, we have available all the machinery so that, from a
high-level point of view, our 
proof of Theorem~\ref{thm:GCkVsHomomIndist} can follow a similar
approach as Dvořák's proof in \cite{Dvorak2010}. The technical details,
however, are quite intricate because the concept of generalised
hypertree width (as well as the classes $\IEHWk$ and $\GLIk$) is much
more complicated than the concept of tree-width.
Analogously to two main lemmas in \cite{Dvorak2010}, we provide a key
lemma for each of the directions ``$\Longleftarrow$'' and
``$\Longrightarrow$'' of Theorem~\ref{thm:GCkVsHomomIndist} formulated
for $\RGCk$ instead of $\GCk$.
These lemmas use the following notion:
The \emph{interpretation} $\myInt{\kLI'}$ associated with a
$k$-labeled incidence graph $\kLI'$ is an interpretation $(\I,\beta)$
with $\I\deff\I_{\kLI'}$ and 
$\beta(\varv_i)\deff r_{\kLI'}(i)$ for all $i\in\Dom{r_{\kLI'}}$ and
$\beta(\vare_j)\deff b_{\kLI'}(j)$ for all
$j\in\Dom{b_{\kLI'}}$.

\begin{lemma}\label{lemma:HomToFormula}
Let $\kLI = (\I, b, r, g)\in\GLIk$. For every $m \in \NN$ there is a formula
$\form{\kLI}{m}$ with $(\LogGuard{g}\und\form{\kLI}{m})\in\NGCk$ and
$\free{(\LogGuard{g}\und\form{\kLI}{m})}=
\setc{\varv_i}{i\in\Dom{r}}
\cup
\setc{\vare_j}{j\in\Dom{b}}$
such that
for every $k$-labeled incidence graph $\kLI'$
with
$\Def{b_{\kLI'}} \supseteq \Def{b}$,
$\Def{r_{\kLI'}} \supseteq \Def{r}$,
and
with real guards w.r.t.\ $g$ we have: \ $\myInt{\kLI'}\models\LogGuard{g}$ \ and
\begin{displaymath}
  \hom{\kLI}{\kLI'} = m \ \ \iff \ \
  \myInt{\kLI'} \models \form{\kLI}{m}\,.
\end{displaymath}
\end{lemma}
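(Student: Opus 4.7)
The plan is to proceed by induction on the structural definition of $\GLIk$ (Definition~\ref{def:k_guarded_incidence_graphs}), constructing the formulas $\form{\kLI}{m}$ for all $m \in \NN$ and simultaneously verifying that $(\LogGuard{g_\kLI} \und \form{\kLI}{m}) \in \NGCk$, has the correct free variables, and the stated semantics. A useful preliminary observation is that if $\kLI'$ does not have real guards w.r.t.\ $g_\kLI$, then $\hom{\kLI}{\kLI'}=0$ (since the guard edges in $\kLI$ must be preserved), and $\myInt{\kLI'}\not\models\LogGuard{g_\kLI}$, so the biconditional is vacuous for such $\kLI'$.

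\textbf{Base case.} When $\rV{\I_\kLI} = \Img{r_\kLI}$ and $\bV{\I_\kLI} = \Img{b_\kLI}$, every node is labeled, so $\hom{\kLI}{\kLI'}\in\set{0,1}$. I would set
\begin{align*}
\form{\kLI}{1} \;\deff\;
& \Und_{(b_\kLI(j), r_\kLI(i)) \in \E{\I_\kLI}} E(\vare_j, \varv_i) \\
& \und\; \Und_{i<i',\, r_\kLI(i)=r_\kLI(i')} \varv_i{=}\varv_{i'} \;\und\; \Und_{j<j',\, b_\kLI(j)=b_\kLI(j')} \vare_j{=}\vare_{j'} \\
& \und\; \Und_{j \in \Dom{b_\kLI},\ \vare_j \text{ not yet above}} \vare_j{=}\vare_j,
\end{align*}
$\form{\kLI}{0} \deff \nicht \form{\kLI}{1}$, and $\form{\kLI}{m} \deff (\form{\kLI}{0} \und \form{\kLI}{1})$ (unsatisfiable) for $m\geq 2$. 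The last row of trivial conjuncts ensures every $\vare_j$ with $j\in\Dom{b_\kLI}$ lies in $\free{\form{\kLI}{1}}$. Each conjunct of $\form{\kLI}{1}$ is a base atom of $\NGCk$ (rules~\ref{item:syntaxdef:1}--\ref{item:syntaxdef:3}) whose local guard function is compatible with $g_\kLI$, and combining them with $\LogGuard{g_\kLI}$ via rule~\ref{item:syntaxdef:5} yields $(\LogGuard{g_\kLI}\und\form{\kLI}{m})\in\NGCk$.

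\textbf{Inductive cases.} In each case, the key combinatorial observation is that $P(m)\deff \setc{(n_1,\ldots,n_m)\in\NN^m}{\sum_{j=1}^m j n_j = m}$ is finite. For $\glue{\kLI_1}{\kLI_2}$, Lemma~\ref{lem:hom_of_glued_is_product} yields
\[
\form{\glue{\kLI_1}{\kLI_2}}{m} \;\deff\; \Oder_{(m_1,m_2):\, m_1 m_2 = m} \bigl(\form{\kLI_1}{m_1} \und \form{\kLI_2}{m_2}\bigr),
\]
a finite disjunction (for $m\geq 1$ over divisor pairs; for $m=0$ it reduces to $\form{\kLI_1}{0}\oder\form{\kLI_2}{0}$); compatibility of $g_{\kLI_1}$ and $g_{\kLI_2}$ lets rule~\ref{item:syntaxdef:5} apply. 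For $\reclaimR{\kLI}{\myR}$ with $\myR=\set{i_1<\cdots<i_\ell}$ and $\tupel{\varv}=(\varv_{i_1},\ldots,\varv_{i_\ell})$, Lemma~\ref{lem:hom_if_red_reclaimed} gives $\hom{\reclaimR{\kLI}{\myR}}{\kLI'} = \sum_{\tupel{v}}\hom{\kLI}{\reseatR{\kLI'}{\myR}{\tupel{v}}}$, and I would set
\begin{align*}
\form{\reclaimR{\kLI}{\myR}}{m} \;\deff\;
& \nicht\,\existsi[1]\,\tupel{\varv}.\bigl(\LogGuard{g_\kLI}\und\nicht\,\Oder_{j=0}^{m}\form{\kLI}{j}\bigr) \\
& \und\; \Oder_{(n_1,\ldots,n_m)\in P(m)}\,\Und_{j=1}^{m}\,\existsex[n_j]\,\tupel{\varv}.\bigl(\LogGuard{g_\kLI}\und\form{\kLI}{j}\bigr).
\end{align*}
The first conjunct enforces that every tuple $\tupel{v}$ contributes at most $m$ homomorphisms; the second fixes the multiplicities of tuples contributing $j\in\set{1,\ldots,m}$ each. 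Each inner existential lies in $\NGCk$ by rule~\ref{item:syntaxdef:6} with outer guard function $g_\kLI - \myR = g_{\reclaimR{\kLI}{\myR}}$. The case $\reclaimB{\kLI}{\myB}$ is analogous via rule~\ref{item:syntaxdef:10}, where the outer guard function equals the inner, so condition~\eqref{eq:syntax:guard} is trivially satisfied.

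For $\switch{\kLI}{f}$, the premise that $\kLI'$ has real guards w.r.t.\ $g_{\switch{\kLI}{f}} = f\cup g_\kLI$ yields $\hom{\kLIf}{\kLI'}=1$, and Lemma~\ref{lem:HomomCountQ} reduces the count to $\sum_{\tupel{e}}\hom{\kLI}{\reseatB{\kLI'}{\myB}{\tupel{e}}}$ for $\myB=\Dom{b_\kLI}\cap\Img{f}\cap\Img{g_\kLI}$. The formula is the blue analogue of the $\reclaimR$ case, but the outer guard is now upgraded to $g_{\switch{\kLI}{f}}$. To justify rule~\ref{item:syntaxdef:10}, I verify condition~\eqref{eq:syntax:guard} for each $i\in\Dom{g_\kLI}$: if $i\notin\Dom{f}$ then $(f\cup g_\kLI)(i)=g_\kLI(i)$ (first clause); if $i\in\Dom{f}$ then $f(i)\in\Img{f}$ and, using $\Img{g_\kLI}\subseteq\Dom{b_\kLI}$ from Lemma~\ref{lemma:GLIk_has_real_guards}, either $f(i)\notin\Img{g_\kLI}$ (third clause) or $f(i)\in\Dom{b_\kLI}\cap\Img{f}\cap\Img{g_\kLI}=\myB=S$ (second clause). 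The \textbf{main obstacle} is precisely this $\switch$ case, where the outer guard function genuinely differs from the inner one: condition~\eqref{eq:syntax:guard} of rule~\ref{item:syntaxdef:10} was crafted exactly to accommodate this style of guard transition, and verifying it requires chaining the definitions of $\myB$, of a transition $f$, and of real guards. Bookkeeping of free variables and of the real-guard premise through each inductive step also requires vigilance but is routine.
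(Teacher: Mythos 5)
Your proposal follows the paper's proof essentially step for step: induction over Definition~\ref{def:k_guarded_incidence_graphs}, atomic conjunctions in the base case, segmentation-style counting formulas paired with Lemma~\ref{lem:HomomCountQ} in the inductive cases, and the same verification of rule~\ref{item:syntaxdef:10}'s guard-change condition~\eqref{eq:syntax:guard} for the $\switch{\kLI}{f}$ case. Two minor points: your $P(m)$ reparametrisation allows $n_j=0$, so $\existsex[n_j]$ should be read as $\nicht\,\existsi[1]$ in that case (the paper's segmentations avoid this by requiring $c_j\in\NNpos$); and your padding conjuncts $\vare_j{=}\vare_j$ in the base case are a genuine improvement, since without them the paper's $\form{\kLI}{1}$ can miss $\vare_j$ among its free variables, or even degenerate to an empty conjunction, when a labeled blue node has neither a labeled red neighbour nor a duplicate label.
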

\begin{proof}[Proof sketch]
Throughout this proof we use the following notation:
A \emph{segmentation} for a number $m \in \natpos$ is a pair of $d$-tuples
\[
  (\tupel{c}, \tupel{m}) \ \ = \ \  \big((c_1, \dots, c_d), (m_1,
  \dots, m_d)\big)
\]
of numbers in $\NNpos$,
such that $d\in\NNpos$ and $m_1<\cdots <m_d$ and
$\sum_{i \in [d]} c_i m_i = m$.
We call $d$ the \emph{degree} of $(\tupel{c}, \tupel{m})$
and $c \deff \sum_{i \in [d]} c_i$ its \emph{size}.

By $\mySeg{m}$ we denote the set of all segmentations of $m$.
Note that $\mySeg{m}$ is finite and non-empty for every $m\in\NNpos$.

The proof of Lemma~\ref{lemma:HomToFormula} proceeds by induction based on Definition~\ref{def:k_guarded_incidence_graphs}.
The base case can be handled straightforwardly (see Appendix~\ref{appendix:subsec:HomToFormula}).
For the inductive step
we assume that the lemma's statement is already shown for $\kLI=(\I,r,b,g)\in\GLIk$ and
for $\kLI_i=(\I_i,r_i,b_i,g_i)\in\GLIk$ for $i\in[2]$ where $g_1,g_2$ are compatible.
The goal is to prove the lemma's statement for the particular
$\tilde{\kLI}=(\tilde{\I},\tilde{r},\tilde{b},\tilde{g})\in\GLIk$ 
considered in the following case distinction.
The case where $\tilde{\kLI}=\glue{\kLI_1}{\kLI_2}$ can be handled
rather easily by utilising part \ref{item:homQuantumGlue}) of Lemma~\ref{lem:HomomCountQ}.
The case where $\tilde{\kLI}=\reclaimR{\kLI}{\myR}$ for some
$\myR\subseteq\Dom{r}$ can be handled similarly as the following case.\smallskip

The case where $\tilde{\kLI}=\reclaimB{\kLI}{\myB}$ for some
$\myB\subseteq\Dom{b}\setminus\Img{g}$, is handled as follows.
If $\myB=\emptyset$ we are done since $\tilde{\kLI}=\kLI$.
If $\myB\neq \emptyset$ let $\ell\deff |\myB|$ and let
$\myB=\set{i_1,\ldots,i_\ell}$ with $i_1<\cdots<i_\ell$. Note that
$\tilde{r}=r$, $\tilde{b}=b-\myB$, and $\tilde{g}=g$. 
We let \,$
  \form{\tilde{\kLI}}{0} \deff 
  \forall (\vare_{i_1}, \dots, \vare_{i_\ell}).(\LogGuard{g} \impl
  \form{\kLI}{0})\,.$
For every $m\geq 1$ we let
\,$
  \form{\tilde{\kLI}}{m} \isdef
  \biglor_{(\tupel{c}, \tupel{m}) \in \mySeg{m}} \phi'_{\tupel{c}, \tupel{m}}
$\,
where for every
$(\tupel{c},\tupel{m})=((c_1,\ldots,c_d),(m_1,\ldots,m_d))\in\mySeg{m}$
and $c\deff\sum_{i\in [d]}c_i$,
\[
 \begin{array}{ll}
  \phi'_{\tupel{c},\tupel{m}} \ \ \isdef 
&
  \existsex[c] (\vare_{i_1}, \dots, \vare_{i_\ell}).
  (\LogGuard{g} \land \lnot \form{\kLI}{0}) \ \ \land
\\[1ex]
  & \bigland_{j\in[d]} \left(
    \existsex[c_j] (\vare_{i_1}, \dots, \vare_{i_\ell}).
    (\LogGuard{g} \land \form{\kLI}{m_j} \right)\,.
 \end{array}
\] 
Let $\kLI'=(\I',r',b',g')$ satisfy the lemma's assumptions for $\tilde{\kLI}$ instead of $\kLI$. I.e.,
we have
$\Dom{b'}\supseteq\Dom{\tilde{b}}=\Dom{b}\setminus\myB$,
$\Dom{r'}\supseteq\Dom{\tilde{r}}=\Dom{r}$, and
$\kLI'$ has real guards w.r.t.\ $\tilde{g}$. Thus,
$\myInt{\kLI'}\models\LogGuard{\tilde{g}}$.
From Lemma~\ref{lem:HomomCountQ} we know that
\begin{equation}\label{eq:homCount:formula:reseatR:MainPart}
  n
  \ \deff \
  \hom{\tilde{\kLI}}{\kLI'}
  \ \ =
  \sum_{\ov{e}\in\bV{\I'}^\ell} \hom{\kLI}{\reseatB{\kLI'}{\myB}{\tupel{e}}}.
\end{equation}
Note that $n=0$ $\iff$
$\hom{\kLI}{\reseatB{\kLI'}{\myB}{\tupel{e}}}=0$ for all
$\ov{e}\in\bV{\I'}^\ell$ $\iff$ for all $\ov{e}\in\bV{\I'}^\ell$ we
have 
either $\myInt{\reseatB{\kLI'}{\myB}{\tupel{e}}}\not\models\LogGuard{g}$ or
$\myInt{\reseatB{\kLI'}{\myB}{\tupel{e}}}\models(\LogGuard{g}\und \form{\kLI}{0})$
$\iff$
$\myInt{\kLI'}\models \form{\tilde{\kLI}}{0}$.
Thus, the formula $\form{\tilde{\kLI}}{0}$ has the desired meaning.

Let us now consider the case where $n\geq 1$.
For every number $m'\in[n]$ let
$S'_{m'}$ be the set of all $\ov{e}\in\bV{\I'}^\ell$ such that
$m'=\hom{\kLI}{\reseatB{\kLI'}{\myB}{\tupel{e}}}$.
Let $d'$ be the number of distinct $m'\in[n]$ for which $S'_{m'}\neq \emptyset$, and let
$m'_1<\cdots<m'_{d'}$ be an ordered list of all $m'\in[n]$ with $S'_{m'}\neq \emptyset$.
For each $j\in[d']$ let $c'_j\deff |S'_{m'_j}|$. Let $\tupel{c}'\deff (c'_1,\ldots,c'_{d'})$ and
$\tupel{m}'\deff (m'_1,\ldots,m'_{d'})$.
Note that
$\sum_{j\in[d']}c'_j m'_j = n$. Hence
$(\tupel{c}',\tupel{m}')\in\mySeg{n}$ is a segmentation of $n$
of size $c'\deff\sum_{j\in [d']} c'_j$ and of degree $d'$.

Note that $\myInt{\kLI'}\models \phi'_{\tupel{c}',\tupel{m}'}$. Furthermore,
for every $m\in\NNpos$ we have:
$\myInt{\kLI'}\models \form{\tilde{\kLI}}{m}$ $\iff$ $(\tupel{m}',\tupel{c}')\in\mySeg{m}$
$\iff$ $m=n=\hom{\tilde{\kLI}}{\kLI'}$.
This verifies that for every $m\geq 1$ the formula $\form{\tilde{\kLI}}{m}$ has the
desired meaning.

It remains to verify that
$(\LogGuard{\tilde{g}}\und \form{\tilde{\kLI}}{m})\in\NGCk$ for every $m\in\NN$.
Recall that $\tilde{g}=g$.
Thus, $\tilde{g}$ satisfies the condition~\eqref{eq:syntax:guard} of
rule~\ref{item:syntaxdef:10}) of the syntax definition of $\NGCk$.
Since, by the induction hypothesis,
$(\LogGuard{g}\und \form{\kLI}{m_j})\in\NGCk$, we obtain by rule
\ref{item:syntaxdef:10}) that the conjunction of $\LogGuard{\tilde{g}}$ with
the formula in the second line of $\varphi'_{\tupel{c},\tupel{m}}$
belongs to $\NGCk$.
The same reasoning (combined with rule~\ref{item:syntaxdef:4})) yields that the conjunction of $\LogGuard{\tilde{g}}$ with
the formula in the first line of $\varphi'_{\tupel{c},\tupel{m}}$
belongs to $\NGCk$.
Applying rule \ref{item:syntaxdef:5}) 
yields that each formula 
$(\LogGuard{\tilde{g}}\und \phi'_{\tupel{c},\tupel{m}})$
as well as the formula
$(\LogGuard{\tilde{g}}\und\form{\tilde{\kLI}}{m})$
belongs to $\NGCk$, for every $m\geq 1$.
A similar reasoning shows that also $(\LogGuard{\tilde{g}}\und\form{\tilde{\kLI}}{0})$
belongs to $\NGCk$. This completes the inductive step for the case
where $\tilde{\kLI}=\reclaimB{\kLI}{\myB}$.
\smallskip

The final case to consider is $\tilde{\kLI}\deff \switch{\kLI}{f}$,
where $f$ is a transition for $g$.
A closer inspection shows that for every $\kLI'$ that satisfies the
lemma's assumptions for $\tilde{\kLI}$ instead of $\kLI$, the
following is true:
$\hom{\tilde{\kLI}}{\kLI'}=\sum_{\ov{e}\in\bV{\I'}^\ell}
\hom{\kLI}{\reseatB{\kLI'}{\myB}{\tupel{e}}}$,
for
$\myB\deff \Dom{b}\cap\Img{f}\cap\Img{g}$.
I.e., we have the same situation as in \eqref{eq:homCount:formula:reseatR:MainPart}.
We therefore can choose the exact same formulas
$\form{\tilde{\kLI}}{m}$ as in the previous case
(for all $m\in\NN$) and obtain that these formulas have the desired meaning.

But it remains to verify that
$(\LogGuard{\tilde{g}}\und \form{\tilde{\kLI}}{m})\in\NGCk$ for every $m\in\NN$.
Note that now we have $\tilde{g}=f\cup g$ (whereas previously we had
to deal with the case where $\tilde{g}=g$).
All we have to do is argue that $\tilde{g}$ satisfies the
condition~\eqref{eq:syntax:guard} of 
rule~\ref{item:syntaxdef:10}) of the syntax definition of $\NGCk$.
Once having achieved this, the same reasoning as in the previous case
proves that the formula $(\LogGuard{\tilde{g}}\und
\form{\tilde{\kLI}}{m})$ belongs to 
$\NGCk$ for every $m\in\NN$.

Note that according to the induction hypothesis, for all $m'\in\NN$ we have:
$\Dom{b}=\indfreeB{(\LogGuard{g}\und\form{\kLI}{m'})}\supseteq\Img{g}$.
Therefore, $\myB=\Img{f}\cap\Img{g}$. 

Consider an arbitrary $i\in\Dom{g}$. We have to show that
$\tilde{g}(i)=g(i)$ or $\tilde{g}(i)\in \myB$ or $\tilde{g}(i)\not\in
\Img{g}$. 
If $\tilde{g}(i)=g(i)$
or $\tilde{g}(i)\not\in\Img{g}$,
we are done.
Consider the case that
$\tilde{g}(i)\neq g(i)$
and
$\tilde{g}(i)\in\Img{g}$.
In this case, $\tilde{g}(i)=f(i)$ and there exists an $i'\in\Dom{g}$
with $g(i')=\tilde{g}(i)=f(i)$.
Hence, $\tilde{g}(i)\in\Img{f}\cap\Img{g}=\myB$, and we are done.
This completes the proof sketch of Lemma~\ref{lemma:HomToFormula}.
\end{proof}
  
We proceed with the second lemma necessary for our proof of
Theorem~\ref{thm:GCkVsHomomIndist}.

\begin{lemma}\label{lemma:c_in_hom_count_new}\label{lemma:FormulaToHom}
Let $\chi \deff (\LogGuard{g} \land \psi) \in \NGCk$ and let
$m,d \in \nat$ with $m \geq 1$.
There exists a $Q\deff\quanth{\chi}{m,d}\in\QGLIk$
with $g_Q = g$,
$\dombQ=\indfreeB{\chi}$
and
$\domrQ = \Dom{g}=
\indfreeR{\chi}$
such that for all $k$-labeled incidence graphs
$\kLI'= (\I', b', r', g')$ with
$|\bV{\I'}| = m$ and
$\max \setc{|N_{\I'}(e)|}{e \in \bV{\I'}} \leq d$
and
$\Dom{b'}\supseteq \dombQ$,
$\Dom{r'}\supseteq \domrQ$,
and
$g'\supseteq g$ and with real guards w.r.t.\ $g$ 
we have $\myInt{\kLI'}\models\LogGuard{g}$ and
\begin{equation*}
\hom{Q}{\kLI'} = \begin{cases}
	1 & \text{if } \ \myInt{\kLI'} \models \chi \quad \text{and} \\
	0 & \text{if } \ \myInt{\kLI'} \not\models \chi\;.
\end{cases}
\end{equation*}
\end{lemma}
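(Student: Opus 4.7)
My plan is to prove the lemma by induction on the construction of $\chi = (\LogGuard{g} \und \psi)$ in $\NGCk$, following the seven syntax rules of Section~\ref{sec:normal_form}. The quantum incidence graph $Q = \quanth{\chi}{m,d}$ is assembled using the four inductive constructors of $\GLIk$ from Definition~\ref{def:k_guarded_incidence_graphs} (glueing, red/blue reclaim, transition), with Lemma~\ref{lemma:normalizing_quantum_incidence_graphs} applied repeatedly to enforce $0/1$-valued outputs. The principle I exploit throughout is that, by Lemma~\ref{lemma:GLIk_has_real_guards}, every component of a graph in $\QGLIk$ has real guards, so such a component contributes $0$ homomorphisms to any target $\kLI''$ that violates the real-guard condition on the labels it uses; hence in sums obtained by reclaim operations, only reseated configurations satisfying the inner guard formula contribute.

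For the atomic base cases (rules~\ref{item:syntaxdef:1}--\ref{item:syntaxdef:3}) I build $Q$ directly as a single small labelled incidence graph in $\GLIk$; e.g.\ for $\psi = E(\vare_j,\varv_i)$ a graph with one red node labelled $i$, blue nodes labelled $j$ and $g(i)$, and the required incidences, so that $\hom{Q}{\kLI'}\in\set{0,1}$ and equals $1$ iff the atomic statement holds. For negation (rule~\ref{item:syntaxdef:4}) I apply Lemma~\ref{lemma:normalizing_quantum_incidence_graphs} to the inductive $Q$ with $X=\set{1}$ and $Y=\set{0}$. For conjunction (rule~\ref{item:syntaxdef:5}) I take $Q\isdef \glue{Q_1}{Q_2}$; compatibility of $g_1,g_2$ keeps $Q$ in $\QGLIk$, and part~\ref{item:homQuantumGlue} of Lemma~\ref{lem:HomomCountQ} yields $\hom{Q}{\kLI'}=\hom{Q_1}{\kLI'}\cdot\hom{Q_2}{\kLI'}$, which is $0/1$-valued and equals $1$ iff both conjuncts hold. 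For red quantification (rule~\ref{item:syntaxdef:6}) I set $Q' \isdef \reclaimR{Q_\psi}{\set{i_1,\ldots,i_\ell}}$; part~2 of Lemma~\ref{lem:HomomCountQ} together with the IH shows that $\hom{Q'}{\kLI'}$ equals the number of tuples $\tupel v$ under which the reseated interpretation satisfies $(\LogGuard{g}\und\psi)$. The bound $|\rV{\I'}|\le m\cdot d$, forced by the size hypotheses together with the requirement that every red node has a blue neighbour, makes this count at most $(md)^\ell$, and a final application of Lemma~\ref{lemma:normalizing_quantum_incidence_graphs} with $X=\set{0,\ldots,n-1}$ and $Y=\set{n,\ldots,(md)^\ell}$ produces the desired $Q$.

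The main obstacle is the blue quantification case (rule~\ref{item:syntaxdef:10}), for two reasons: the outer formula carries a possibly modified guard function $\tilde g$ rather than $g$, and labels in $S=\set{i_1,\ldots,i_\ell}$ that lie in $\Img{g}$ cannot be reclaimed directly by $\reclaimB{}{}$ (rule~\ref{item:GLI-reclaimB} of Definition~\ref{def:k_guarded_incidence_graphs} forbids removing labels in $\Img{g}$). My plan is to first migrate the guards by applying a transition $f$ for $g$ to $Q_\psi$, so the labels in $S\cap\Img{g}$ are released and the guard function comes to agree with $\tilde g$, then reclaim any remaining blue labels in $S$, and finally normalise to $0/1$. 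I choose $f$ with $\Dom{f} \isdef \setc{i\in\Dom{g}}{\tilde g(i)\neq g(i) \text{ or } g(i)\in S}$ and $f(i)\isdef \tilde g(i)$ (skipping the step entirely when $\Dom f = \emptyset$). Verifying that $f$ is a transition for $g$ (Definition~\ref{def:newtransition}) is precisely where the syntax condition~\eqref{eq:syntax:guard} pays off: if $g(i')\in\Img f$, then there is $j\in\Dom f$ with $\tilde g(j) = g(i')$ and $\tilde g(j)\neq g(j)$, so \eqref{eq:syntax:guard} forces $\tilde g(j)\in S$, hence $g(i')\in S$, which in turn puts $i'\in\Dom f$ by the second clause of my definition.

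With $f$ in hand, I set $Q_1 \isdef \switch{Q_\psi}{f}$; rule~\ref{item:GLI-switch} of Definition~\ref{def:k_guarded_incidence_graphs} and Lemma~\ref{lemma:transition_preserves_real_guards} (applied componentwise) keep $Q_1$ in $\QGLIk$, the guard function becomes $\tilde g$, and the blue labels in $\Img{g}\cap\Img{f}\cap\Dom{b_{Q_\psi}}$ are removed by the $\myB$ term of Definition~\ref{def:transition}, so any remaining blue labels of $S$ in $\Dom{b_{Q_1}}$ lie outside $\Img{g_{Q_1}}=\Img{\tilde g}$ and can be reclaimed safely via $\reclaimB{Q_1}{\cdot}$. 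Parts~3 and~4 of Lemma~\ref{lem:HomomCountQ}, combined with the IH and the real-guard principle, then let me rewrite the resulting homomorphism count as a sum over $\tupel u\in\bV{\I'}^\ell$ whose summands equal $1$ precisely when $(\I',\beta^{\tupel u})\models(\LogGuard{g}\und\psi)$; bounded by $m^\ell$, this count is turned into a $0/1$-indicator of \enquote{$\geq n$} by one last invocation of Lemma~\ref{lemma:normalizing_quantum_incidence_graphs} with $X=\set{0,\ldots,n-1}$ and $Y=\set{n,\ldots,m^\ell}$. The principal technical burden, besides choosing the right transition, is aligning the IH's size parameters with those of the reseated $\kLI''$ and carefully tracking which labels are freed by $\switch{}{}$ versus reclaimed by hand; the exact form of the syntax condition~\eqref{eq:syntax:guard} is precisely what makes this bookkeeping go through.
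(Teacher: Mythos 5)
Your overall architecture matches the paper's: induction on the $\NGCk$ syntax rules, small hand-built $\GLIk$-graphs for atoms, normalisation via $Q[X,Y]$ for negation, glueing for conjunction, red reclaim for rule~\ref{item:syntaxdef:6}, and (for rule~\ref{item:syntaxdef:10}) the same transition $f$ with $\Dom f=\setc{i\in\Dom{g}}{\tilde g(i)\neq g(i)\text{ or }g(i)\in S}$ and $f(i)=\tilde g(i)$. The verification that $f$ is a transition is also essentially the paper's Claim~2 (your case split only covers $\tilde g(j)\neq g(j)$ explicitly, but $\tilde g(j)=g(j)$ with $g(j)\in S$ works by the same token, so this is merely loose exposition).

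There is, however, a genuine gap in the blue-quantification case. You set $Q_1\deff\switch{Q_\psi}{f}$ directly and then claim that ``any remaining blue labels of $S$ in $\Dom{b_{Q_1}}$ lie outside $\Img{g_{Q_1}}=\Img{\tilde g}$ and can be reclaimed safely.'' This is false: letting $D_1\deff(S\cap\Img f)\setminus\Img g$, every $j\in D_1$ lies in $\Img f\subseteq\Img{\tilde g}$ and also in $\Dom{b_{Q_1}}$ (it sits on a guard node supplied by $\kLIf$). Such $D_1$ can be nonempty --- condition~\eqref{eq:syntax:guard} explicitly permits $\tilde g(i)\in S$ together with $\tilde g(i)\notin\Img g$. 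The consequence is twofold. First, because $D_1\not\subseteq\Img g$, the $\myB$-term of Definition~\ref{def:transition} does \emph{not} remove the $D_1$-labels from $Q_\psi$ before glueing, so $\kLIf$'s fresh guard nodes get merged with $Q_\psi$'s existing $D_1$-labelled nodes. Second, and more fundamentally, tracing the count via Lemma~\ref{lem:HomomCountQ} shows that your $Q'=\reclaimB{\switch{Q_\psi}{f}}{D_2}$ (with $D_2=S\setminus\Img{\tilde g}$) satisfies $\hom{Q'}{\kLI'}=\sum_{\tupel e\in\bV{\I'}^{|D_2\cup D_3|}}\hom{Q_\psi}{\reseatB{\kLI'}{D_2\cup D_3}{\tupel e}}$ with $D_3=\Img g\cap\Img f$ --- the $D_1$-coordinates stay pinned to $\kLI'$'s labelled nodes and are never summed over. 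But the target is $\mysum{\kLI'}=\sum_{\tupel e\in\bV{\I'}^{|S|}}\hom{Q_\psi}{\reseatB{\kLI'}{S}{\tupel e}}$, which ranges over \emph{all} of $S=D_1\cup D_2\cup D_3$. The fix is the paper's first step that you omit: reclaim $D_1$ from $Q_\psi$ \emph{before} the switch (allowed since $D_1\subseteq\dombQ\setminus\Img{g_Q}$), i.e., use $Q_1=\reclaimB{Q_\psi}{D_1}$, then $Q_2=\switch{Q_1}{f}$, then $Q'=\reclaimB{Q_2}{D_2}$; this inserts the missing $\sum_{\tupel e_1}$ and makes the three partial sums compose to the full sum over $S$.
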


\begin{proof}[Proof sketch]
We proceed by induction on the construction of $\chi$.
The base cases can be handled straightforwardly (see
Appendix~\ref{appendix:subsec:FormulaToHom}).
To the induction base we also add the \enquote{special case} where
$\psi$ is unfulfillable for all $k$-labeled incidence graphs $\kLI'$ matching
the requirements imposed by the lemma. This case is handled by
choosing $Q\deff 0{\cdot}\kLI$, for a suitably chosen $\kLI$ that has
the desired parameters, i.e., $g_\kLI=g$, $\Dom{b_{\kLI}}=\indfreeB{\chi}$,
and $\Dom{r_{\kLI}}=\Dom{g}=\indfreeR{\chi}$.

Most cases of the induction step can be handled rather
straightforwardly. A notable exception is the case where the formula
is built using rule~\ref{item:syntaxdef:10}) of the syntax definition
of $\NGCk$.
We assume that the lemma's statement is already shown for
$\chi\deff (\LogGuard{g}\und\psi)\in\NGCk$.
Fix arbitrary $m,d\in\NN$ with $m\geq 1$.
Let $Q\in\QGLIk$ be provided by
the lemma's statement for
$\chi$ and the parameters $m,d$. 
The goal is to prove the lemma's statement for parameters $m,d$ and the particular
$\tilde{\chi} \deff
(\LogGuard{\tilde{g}} \land \tilde{\psi})\in\NGCk$ given as follows:
$\tilde{\psi} = \existsi[n]
(\vare_{i_1},\ldots,\vare_{i_\ell}).(\LogGuard{g} \land \psi)$,
where $n,\ell\in\NNpos$ and
$S\deff\set{i_1,\ldots,i_\ell}\subseteq \indfreeB{\chi}$
with $i_1<\cdots <i_\ell$ and $\Dom{\tilde{g}}=\Dom{g}$ and
all $i\in\Dom{g}$ satisfy
$\tilde{g}(i)=g(i)$ or
$\tilde{g}(i)\in S$ or
$\tilde{g}(i)\not\in \Img{g}$.

The case where $n>m^\ell$ is handled by the \enquote{special case}
mentioned above. The case $n\leq m^\ell$ is quite intricate and relies
on a series of claims (see Appendix~\ref{appendix:subsec:FormulaToHom} for proofs):
\smallskip

\textit{Claim 1:} \
Consider an arbitrary $k$-labeled incidence graph $\kLI'$ satisfying
the lemma's assumptions for $\tilde{\chi}$.
\\
For
\,$
  \mysum{\kLI'} \deff 
  \hom{\reclaimB{Q}{S}}{\kLI'}
$\,
we have:\\
\,$
  \mysum{\kLI'}  = 
  \sum_{\tupel{e} \in \bV{\I_{\kLI'}}^{\ell}}
  \hom{Q}{\reseatB{\kLI'}{S}{\tupel{e}}}
$\,
and \\
$0\leq\mysum{\kLI'}\leq |\bV{\I_{\kLI'}}|^\ell$
and $\mysum{\kLI'}$ is exactly the number of tuples
$\tupel{e}\in \bV{\I_{\kLI'}}^\ell$ such that 
$\myInt{\reseatB{\kLI'}{\myB}{\tupel{e}}}\models \chi$.
\smallskip

For the remainder of the proof, our aim is to
construct a $Q'\in\QGLIk$
that has the parameters
$g_{Q'},\domb{Q'}, \domr{Q'}$
required by the
lemma for formula $\tilde{\chi}$,
such
that $\hom{Q'}{\kLI'}=\mysum{\kLI'}$ for all  $k$-labeled incidence
graphs $\kLI'$
satisfying the lemma's assumptions for $\tilde{\chi}$.

\smallskip

Once having achieved this, we can use
Lemma~\ref{lemma:normalizing_quantum_incidence_graphs} to obtain 
$\tilde{Q}\deff Q'[X,Y] \in\QGLIk$ for
$X\deff\set{0,\ldots,n{-}1}$ and $Y\deff\set{n,\ldots,m^\ell}$.
Note that $\tilde{Q}$ has the properties desired for the formula
$\tilde{\chi}$.
But how to construct a suitable $Q'$?
--- We let
\begin{equation*}
  Z\ \deff\ \setc{\,i\in\Dom{g}\ }{\ \tilde{g}(i)\neq g(i) \text{ \ or \ } g(i)\in S\,}\,.
\end{equation*}
If $Z=\emptyset$, we choose
$Q'\deff\reclaimB{Q}{S}$ and are done: $Z=\emptyset$ means that
$\tilde{g}=g$ and $\Img{g}\cap
S=\emptyset$, and hence $\reclaimB{Q}{S}\in\QGLIk$.
If $Z\neq\emptyset$, we cannot simply choose $Q'$ as above, because
there is no guarantee that $\reclaimB{Q}{S}$ belongs to $\QGLIk$.
Instead, we proceed as follows.
Let $f$ be the partial function with $\Dom{f}=Z$ and
$f(i) \deff \tilde{g}(i)$ for all $i\in \Dom{f}$.
\smallskip

\textit{Claim 2:} \ 
$\tilde{g}= f\cup g$, and
$f$ is a transition for $g$, and
all $i\in\Dom{f}$ satisfy: \ $f(i)\in S$ \ or \
$f(i)\not\in \Img{g}$.
\smallskip

Let $D_1\deff (S\cap\Img{f})\setminus \Img{g}$,
let $D_2\deff S\setminus\Img{\tilde{g}}$, and
let $D_3\deff \Img{g}\cap\Img{f}$.
\smallskip

\emph{Claim~3:} $D_1,D_2,D_3$ are pairwise disjoint, $S=\bigcup_{i\in[3]}D_i$.\smallskip

We let $Q_1\deff \reclaimB{Q}{D_1}$ and
$Q_2\deff \switch{Q_1}{f}$ and
$Q'\deff \reclaimB{Q_2}{D_2}$.
\smallskip

\emph{Claim~4:} \ $Q_1,Q_2,Q'\in\QGLIk$ and 
$g_{Q_1}=g$, $\domb{Q_1}=(\dombQ\setminus D_1)$ and 
$g_{Q'}=\tilde{g}$, 
$\domr{Q'}=\indfreeR{\tilde{\chi}}$, $\domb{Q'}=\indfreeB{\tilde{\chi}}$.\smallskip

According to Claim~4, $Q'\in\QGLIk$ and $Q'$ has the desired
parameters $g_{Q'},\domb{Q'},\domr{Q'}$. 
The next claim ensures that $Q'$ indeed satisfies what we aimed for.
\smallskip

\textit{Claim 5:} $\hom{Q'}{\kLI'} = \mysum{\kLI'}$,
for all $k$-labeled incidence graphs $\kLI'$ that satisfy
the lemma's assumptions for $\tilde{\chi}$.\smallskip

This completes the proof sketch for Lemma~\ref{lemma:FormulaToHom}.
\end{proof}

Finally, the proof of Theorem~\ref{thm:GCkVsHomomIndist} can easily be achieved 
by using the Theorems~\ref{thm:normalform} and \ref{thm:IEHWkIsGLIk}
and the Lemmas~\ref{lemma:HomToFormula} (for direction ``$\Longleftarrow$'') and
\ref{lemma:FormulaToHom} (for direction ``$\Longrightarrow$''):

\begin{proof}[\textbf{\upshape{Proof of Theorem~\ref{thm:GCkVsHomomIndist}}}] \
Let $\I$ and $\I'$ be arbitrary incidence graphs, and let
$k\in\NNpos$. Our aim is to prove that
$\I\not\equivGCk\I'$
$\iff$
$\HOM{\IEHW{k}}{\I} \neq \HOM{\IEHW{k}}{\I'}$.

The case where
$|\bVI|\neq |\bV{\I'}|$ can be handled easily:
Choose a sentence in $\GCk$ stating that the number of blue nodes is
exactly $|\bVI|$, and
choose an incidence
graph $J$ that has one blue node and neither red nodes nor edges.
This yields that $\I\not\equivGCk\I'$
and $\HOM{\IEHW{k}}{\I} \neq \HOM{\IEHW{k}}{\I'}$.

Consider the case $|\bVI| = |\bV{\I'}|$.
Let $\kLI,\kLI'$ be the label-free $k$-labeled incidence graphs
corresponding to $\I,\I'$.

For ``$\Longleftarrow$'' let $\J\in\IEHWk$
such that $\hom{\J}{\I}\neq\hom{\J}{\I'}$.
Let $m\deff\hom{\J}{\I}$.
From Theorem\ref{thm:IEHWkIsGLIk}\ref{item:IEHWkInGLIk}
we obtain a label-free $\tilde{\kLI}\in\GLIk$ such that
$\I_{\tilde{\kLI}}\isom\J$.
From Lemma~\ref{lemma:HomToFormula} we obtain a sentence
$\chi\deff (\top\und\form{\tilde{\kLI}}{m})$ in $\NGCk$ (and therefore also in $\GCk$) such that
$\myInt{\kLI}\models \chi$ (because $\hom{\tilde{\kLI}}{\kLI}=m$) and
$\myInt{\kLI'}\not\models \chi$ (because
$\hom{\tilde{\kLI}}{\kLI'}\neq m$).
Thus, $\I\models\chi$ and $\I'\not\models\chi$. I.e., $\chi$ witnesses
that $\I\not\equivGCk\I'$.

For ``$\Longrightarrow$'' let $\phi$
be a sentence
in $\GCk$ such that $\I\models\phi$ and $\I'\not\models\phi$.
By Theorem~\ref{thm:ngck_is_equivalent} there exists a $\chi \isdef (\top \land \psi) \in \NGCk$ such that $\phi \equiv (\top \land \phi) \equiv \chi$, i.e. $\I \models \chi$ and $\I' \not\models \chi$.
Let $m\deff |\bVI|=|\bV{\I'}|$.
Choose a $d\in\NN$ such
that $d\geq |N_{\I}(e)|$ for all $e\in\bVI$ and $d\geq |N_{\I'}(e)|$ for
all $e\in\bV{\I'}$.
Let $Q\deff \quanth{\chi}{m,d}\in\QGLIk$ be provided by
Lemma~\ref{lemma:FormulaToHom}.
From Lemma~\ref{lemma:FormulaToHom} we know that
$\hom{Q}{\kLI}=1$ (because $\I\models\chi$) and
$\hom{Q}{\kLI'}=0$ (because $\I'\not\models\chi$).
Since $Q\in\QGLIk$, there are an $\ell\in\NNpos$ and
$\alpha_1,\ldots,\alpha_\ell\in\RR$ and
$\kLI_1,\ldots,\kLI_\ell\in\GLIk$ such that $Q=\sum_{i=1}^\ell
\alpha_i \kLI_i$. We have:
$\sum_{i=1}^\ell \alpha_i \hom{\kLI_i}{\kLI}
 =
 \hom{Q}{\kLI} \neq \hom{Q}{\kLI'}
 = \sum_{i=1}^\ell \alpha_i \hom{\kLI_i}{\kLI'}$.
Hence there exists an $i\in[\ell]$ with $\alpha_i\neq 0$ and
$\hom{\kLI_i}{\kLI}\neq\hom{\kLI_i}{\kLI'}$.
We know that $\kLI_i\in\GLIk$,
and $\kLI_i=(\I_i,r_i,b_i,g_i)$ has
parameters $\domrQ=\dombQ=\emptyset$ and guard function
$g_i=g_\emptyset$.
I.e., $\Dom{r_i}=\Dom{b_i}=\emptyset$.
From Theorem~\ref{thm:IEHWkIsGLIk}\ref{item:GLIkInIEHWk} we know that
$\I_i\in\IEHWk$. Thus, $\I_i$ witnesses that 
$ \HOM{\IEHW{k}}{\I} \neq \HOM{\IEHW{k}}{\I'}$.
This completes the proof of Theorem~\ref{thm:GCkVsHomomIndist}.
\end{proof}

\section{Conclusion}
\label{sec:conclusion}

Combining the Theorems~\ref{thm:Boeker}, \ref{thm:EHWandIGHW},
\ref{thm:GCkVsHomomIndist} yields:

\begin{theorem}[Main Theorem]\label{thm:Hauptresultat}
Let $H,H'$ be hypergraphs.
\begin{mea}  
\item
$\inzidenz{H}\equiv_{\GCk}\inzidenz{H'}$ \ $\iff$ \
$\HOM{\GHW{k}}{H}=\HOM{\GHW{k}}{H'}$.
\item
If $H$ and $H'$ are simple hypergraphs, then\\
$\inzidenz{H}\equiv_{\GCk}\inzidenz{H'}$ \ $\iff$ \
$\HOM{\SGHW{k}}{H}=\HOM{\SGHW{k}}{H'}$.
\end{mea}  
\end{theorem}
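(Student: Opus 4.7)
The plan is to derive both parts by chaining the three main equivalences established in the preceding sections, obtaining a direct bridge from the logical side to the combinatorial side. Concretely, the chain
\[
\inzidenz{H}\equiv_{\GCk}\inzidenz{H'}
\;\overset{\text{Thm.~\ref{thm:GCkVsHomomIndist}}}{\iff}\;
\HOM{\IEHW{k}}{\inzidenz{H}}=\HOM{\IEHW{k}}{\inzidenz{H'}}
\;\overset{\text{Thm.~\ref{thm:EHWandIGHW}}}{\iff}\;
\HOM{\IGHW{k}}{\inzidenz{H}}=\HOM{\IGHW{k}}{\inzidenz{H'}}
\;\overset{\text{Thm.~\ref{thm:Boeker}\ref{item:b:thm:Boeker}}}{\iff}\;
\HOM{\GHW{k}}{H}=\HOM{\GHW{k}}{H'}
\]
already yields part~(a). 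The only thing to check is that the hypotheses of the three theorems are actually met: Theorem~\ref{thm:GCkVsHomomIndist} applies to the two incidence graphs $\inzidenz{H},\inzidenz{H'}$ (which are arbitrary incidence graphs, so there is no restriction to worry about); Theorem~\ref{thm:EHWandIGHW} applies for the same reason; and Theorem~\ref{thm:Boeker}\ref{item:b:thm:Boeker} applies to the arbitrary pair of hypergraphs $H,H'$.

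For part~(b), the plan is to compose part~(a) with Theorem~\ref{thm:Boeker}\ref{item:a:thm:Boeker}, which is exactly the missing link between $\GHW{k}$ and $\SGHW{k}$ for simple hypergraphs. That is, assuming $H,H'$ are simple, I would argue
\[
\inzidenz{H}\equiv_{\GCk}\inzidenz{H'}
\;\overset{\text{(a)}}{\iff}\;
\HOM{\GHW{k}}{H}=\HOM{\GHW{k}}{H'}
\;\overset{\text{Thm.~\ref{thm:Boeker}\ref{item:a:thm:Boeker}}}{\iff}\;
\HOM{\SGHW{k}}{H}=\HOM{\SGHW{k}}{H'},
\]
which yields the claim.

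Since each step is a direct application of a previously proved biconditional and involves no new combinatorial or logical work, there is no real obstacle. The only subtlety worth flagging in the write-up is a one-line sanity check that $k$ is the same fixed parameter throughout, so that all three sources deal with the same width bound, and that the equivalence class notation matches on both sides of the chain (in particular, that $\inzidenz{H}\equiv_{\GCk}\inzidenz{H'}$ on the left of part~(a) is, by definition, precisely the hypothesis of Theorem~\ref{thm:GCkVsHomomIndist} applied to $\I\deff\inzidenz{H}$ and $\I'\deff\inzidenz{H'}$).
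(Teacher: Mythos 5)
Your proposal is correct and is exactly the paper's own argument: the paper proves Theorem~\ref{thm:Hauptresultat} by the same three-link chain (Theorem~\ref{thm:GCkVsHomomIndist}, then Theorem~\ref{thm:EHWandIGHW}, then Theorem~\ref{thm:Boeker}\ref{item:b:thm:Boeker} for part~(a), composed with Theorem~\ref{thm:Boeker}\ref{item:a:thm:Boeker} for part~(b)). Nothing is missing; your sanity checks about the fixed $k$ and the identification $\I\deff\inzidenz{H}$, $\I'\deff\inzidenz{H'}$ are the right things to verify and are indeed all that is needed.
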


For our proofs it was crucial to consider
\emph{entangled hypertree decompositions} (ehds, for short)
instead of
generalised hypertree decompositions.
To the best of our knowledge,
ehds have not been studied before.
From Theorem~\ref{thm:IEHWsubsetIGHW} we know that there exist
arbitrarily large $k$ such that 
$\IEHW{k}$ is a strict
subclass of $\IGHW{k}$; but nevertheless, according to Theorem~\ref{thm:EHWandIGHW}
homomorphism indistinguishability coincides for both classes.
Many questions remain open, in particular:

How hard is it, given a hypergraph $H$ and a number $k$, to
determine whether $\ehw{H} \leq k$?

For $\ClassC\deff\IEHWk$: \,how hard is it to compute the function
(or, \enquote{vector}) $\HOM{\ClassC}{H}$ for a given hypergraph $H$?
Which properties does it have?

What is the expressive power of the logic $\GCk$?
What properties of hypergraphs can be described by this logic?
How does a suitable pebble game for $\GCk$ look like?
  
Our result lifts Dvořák's result for tree-width $\leq k$
\cite{Dvorak2010} from graphs to hypergraphs.  
Does there also exist a lifting of Grohe's result for
\emph{tree-depth} $\leq k$ \cite{Grohe2020} from graphs to hypergraphs?

Seeing that Dvořák's result lifted nicely to hypergraphs, we
believe that there should also be a lifting of
Cai, Fürer and Immerman's result
\cite{Cai1992}, i.e., 
a hypergraph-variant of the Weisfeiler-Leman
algorithm, whose distinguishing power matches
precisely the logic $\GCk$. We plan to study this in future work.

\addcontentsline{toc}{section}{References}\bibliography{IEEEabrv,references}

\clearpage

\appendix

\addcontentsline{toc}{section}{APPENDIX}\section*{APPENDIX}

\bigskip

\section{Proof of Theorem~\ref{thm:Boeker}}
\label{appendix:generalization_to_regular_homomorphisms}

{
\newcommand*{\LoMeHom}{\textsf{LoMeHom}}
\newcommand*{\PumpHom}{\textsf{PumpHom}}

Notice that this paper's Theorem~\ref{thm:Boeker} looks similar to
Bökers Lemma 6 in \cite{Boeker2019}.\footnote{Consult the full version
  for a proof of Lemma 6; you can find it at
  \href{https://arxiv.org/abs/1903.12432}{arXiv:1903.12432 [cs.DM]}. 
  Note that the numbering of the lemmas is increased by one, i.e.\ Lemma 6 is Lemma 7 in the full version etc.}
Also observe that this is the only Lemma that he proves just for
Berge-acyclic hypergraphs. Lemmas~3, 4, and 5 are proven for arbitrary
hypergraphs (as he notes himself). So all we have to do is to verify,
that his proof for Lemma~6 works if we consider the class of
hypergraphs of generalised hypertree width at most $k$ instead of the
class of Berge-acyclic hypergraphs. 

Upon examining his proof of Lemma~6 (see the full version on arXiv) we
see that closure under pumping and local merging are the only
properties of Berge-acylic hypergraphs, that Böker uses. Hence, since
these closures also hold for hypergraphs of generalised hypertree
width at most $k$, the proof remains valid. 
} 
\begin{proof}[\textbf{\upshape{Proof of Theorem~\ref{thm:Boeker}:}}] \ \smallskip

\ref{item:a:thm:Boeker}: ``$\Longrightarrow$'' is trivial since $\GHW{k}\supseteq\SGHW{k}$.
``$\Longleftarrow$''
holds because for every hypergraph $F$ and every homomorphism $h$ from
$F$ to a simple hypergraph, all edges $e,e'\in \E{F}$ with
$f_F(e)=f_F(e')$ have to be mapped onto the same edge of the simple
hypergraph. Thus, if $F\in \GHW{k}$ distinguishes between $H$ and
$H'$, then also the ``simplified version'' of $F$ (i.e., the simple
hypergraph obtained by letting each edge of $F$ occur with
multiplicity 1) also distinguishes between $H$ and $H'$. 
\smallskip

\ref{item:b:thm:Boeker}:
Böker \cite{Boeker2019} proved the analogous statement for
$\BA{k}$, $\IBA{k}$ instead of $\GHW{k}$, $\IGHW{k}$, where $\BA{k}$ is the
class of all Berge-acyclic hypergraphs and $\IBA{k}$ is the class of
all incidence graphs of hypergraphs in $\BA{k}$. 
Böker's proof, however, works for all classes $\Class{C}$ of
hypergraphs and the associated class $\Class{IC}$ of all incidence
graphs of hypergraphs in $\Class{C}$, provided that $\Class{C}$ is
closed under \emph{local merging} and \emph{pumping}\footnote{Böker
  calls this \enquote{leaf adding}, since he is working with incidence
  graphs that are also always trees. We call it pumping, because this
  is the more intuitive name in our environment. Note that our
  definition of pumping is the same as Böker's definition of
  leaf-adding.}: 
We say that $\Class{C}$ is closed under \emph{local merging}, if every
$F\in\Class{C}$ remains in $\Class{C}$ when merging two vertices
$u_1,u_2 \in \V{F}$ that are adjacent via a common edge $e \in \E{F}$
(i.e., $u_1,u_2 \in f_F(e)$), into a single new vertex $u$.  
We say that $\Class{C}$ is closed under \emph{pumping}, if every
$F\in\Class{C}$ remains in $\Class{C}$ when inserting a newly created
vertex into an edge $e \in \E{F}$ (but note that we must not insert
vertices in the intersection of multiple edges). 
Now, all that remains to be shown is that the class $\GHW{k}$ is closed under local merging and pumping.

Closure under pumping is verified as follows: Assume we are given a
ghd $D=(T,\DBag,\DCover)$ of a hypergraph $F$ (resp., of its incidence
graph $I_F$). Let $F'$ be the hypergraph obtained from $F$ by pumping
edge $e\in\E{F}$ with a new vertex $u$, i.e.,
$\V{F'}=\V{F}\union\set{u}$, $\E{F'}=\E{F}$, $f_{F'}(e)=
f_{F}(e)\union\set{u}$, and $f_{F'}(e')=f_F(e')$ for all $e'\in
\E{F}\setminus\set{e}$. 
We can turn $D$ into a ghd $D'$ of $F'$ as follows:
We choose one particular tree-node $t\in \V{T}$ such that
$e\in\DCover(t)$ and $f_F(e)\subseteq \DBag(t)$ (such a $t$ exists
because $D$ satisfies the \emph{completeness} condition of
Definition~\ref{def:hypertree_decomposition}), and we insert the new
vertex $u$ into the bag $\DBag(t)$. I.e., $D'=(T',\DBag',\DCover')$
with $T'=T$, $\DCover'=\DCover$, and $\DBag'(t)=\DBag(t)\cup\set{u}$
and $\DBag'(t')=\DBag(t')$ for all $t'\in\V{T}\setminus\set{t}$. 
It is straightforward to verify that $D'$ is a ghd of $F'$ and $\w{D'}=\w{D}$.
Thus, $F\in\GHW{k}$ implies that $F'\in\GHW{k}$.

Closure under local merging is verified as follows:
Assume we are given a ghd $D=(T,\DBag,\DCover)$ of a hypergraph $F$ (resp., of its
incidence graph $\inzidenz{F}$). Let $F'$ be the hypergraph obtained from $F$ by merging
two vertices $u_1,u_2$ for which there exists an $e\in\E{F}$ with
$u_1,u_2\in f_F(e)$. By Definition~\ref{def:hypertree_decomposition},
$D$ satisfies 
the completeness and connectedness conditions. I.e.,
there is a tree-node $t \in V(T)$ with $e\in\DCover(t)$ and $f_F(e)\subseteq\DBag(t)$.
In particular, $u_1,u_2 \in \DBag(t)$.
Hence, the subtrees $T_{u_1}$ and $T_{u_2}$ touch each other in $t$.
The hypergraph $F'$ is obtained from $F$ by 
identifying the vertices $u_1$ and $u_2$ with a single new vertex
$u$. More precisely, let $\pi$ be the mapping with
$\pi(u_1)=\pi(u_2)=u$ and $\pi(v)=v$ for all
$v\in\V{F}\setminus\set{u_1,u_2}$. Then, 
$F'=(\V{F'},\E{F'},f_{F'})$ with $\V{F'}=\setc{\pi(v)}{v\in \V{F}}$, $\E{F'}=\E{F}$ and
$f_{F'}(e')=\setc{\pi(v)}{v\in e'}$ for all $e'\in \E{F'}$.
For every tree-node $t\in \V{T}$ let
$\DBag'(t)\deff\setc{\pi(v)}{v\in\DBag(t)}$. Let $D'\deff
(T,\DBag',\DCover)$. It is straightforward to verify that $D'$ is a
ghd of $F'$. Obviously, $\w{D'}=\w{D}$. Thus, $F\in\GHW{k}$ implies
that $F'\in\GHW{k}$. 
\end{proof}

\medskip

\section{Details omitted in Section~\ref{sec:generalization_to_hw}}
\label{appendix:generalization_to_hw}

\subsection{Detailed Proof of Theorem~\ref{thm:EHWandIGHW}}
\label{appendix:Proof_of_thm:EHWandIGHW}

\begin{proof}[\upshape\textbf{Proof of Theorem~\ref{thm:EHWandIGHW}}] \ \\
  The direction ``$\Longrightarrow$'' is trivial since $\IGHW{k}\supseteq\IEHW{k}$.
  For the direction ``$\Longleftarrow$'' it suffices to prove the following:
  If there is a $\J\in\IGHW{k}$ with  $\hom{\J}{\I} \neq
  \hom{\J}{\I'}$, then there also exists a $\J'\in\IEHW{k}$ with
  $\hom{\J'}{\I} \neq \hom{\J'}{\I'}$. 
  The remainder of this proof is devoted to showing that such a $\J'$
  indeed exists (for illustrations depicting the following
  constructions see
  Appendix~\ref{appendix:illustrations_for_thm:EHWandIGHW}). 

  We start with a $\J\in\IGHW{k}$ with  $\hom{\J}{\I} \neq
  \hom{\J}{\I'}$. Let
  $D=(T,\DBag,\DCover)$ be a ghd of $J$ with $\w{D}\leq k$.
  We use a 2-step process to construct the desired $J'$:
  First, we transform $D$ into a ghd $\Deins$ of an incidence graph $\Jeins$ such that $\w{\Deins}\leq \w{D}$ and 
  $\hom{\Jeins}{\I} \neq \hom{\Jeins}{\I'}$ and $\Deins$ satisfies
  condition~\ref{def:entangled_hypertree_decomposition:precision} of
  Definition~\ref{def:entangled_hypertree_decomposition} (but
  condition~\ref{def:entangled_hypertree_decomposition:connectedness_of_edges}
  might still be violated).
  Afterwards, we transform $\Deins$ into a ghd $\Dzwei$ of an
  incidence graph $\Jzwei$ such that $\w{\Dzwei}\leq \w{\Deins}$ and 
  $\hom{\Jzwei}{\I} \neq \hom{\Jzwei}{\I'}$ and $\Dzwei$ satisfies
  conditions~\ref{def:entangled_hypertree_decomposition:precision}
  and \ref{def:entangled_hypertree_decomposition:connectedness_of_edges}
  of
  Definition~\ref{def:entangled_hypertree_decomposition} and hence 
  is an ehd. Letting $\J'\deff\Jzwei$ then completes the proof.

\emph{Construction of $\Deins$ and $\Jeins$:} \  
  For each tree-node $t\in \V{T}$ and each blue node $e\in \bVJ$
  consider the set of red nodes $s(t,e)\deff N_{\J}(e)\cap\DBag(t)$.
  Let 
  $S\deff \setc{s(t,e)}{t\in\V{T},\ e\in \DCover(t),\ \Nachbarn{e}{J}
    \not\subseteq \DBag(t)}$. Note that $S=\emptyset$ implies that
  condition~\ref{def:entangled_hypertree_decomposition:precision} is
  satisfied and we can choose $\Deins=D$ and $\Jeins=J$. If $S\neq
  \emptyset$, we let $\tilde{\J}\deff \J$, and we loop through all $s\in S$ and proceed as follows:
  We use Lemma~\ref{lemma:adding_edges_keeps_distinguishing} to choose
  a suitable number $n_s\geq 1$ and insert into $\tilde{\J}$ exactly $n_s$ new blue nodes
  $e'_{s,i}$ (for $i\in [n_s]$) along with edges from $e'_{s,i}$ to every
  red node $v\in s$
  --- i.e., we replace $\tilde{\J}$ with $\tilde{\J}+n_s{\cdot}s$.
  Let $\Jeins$ be the resulting $\tilde{\J}$ after having looped
  through all $s\in S$.
  Since we have chosen the $n_s$ (for $s\in S$) according to
  Lemma~\ref{lemma:adding_edges_keeps_distinguishing}, we know that 
  $\hom{\Jeins}{\I} \neq \hom{\Jeins}{\I'}$.
  The ghd $D$ of $\J$ is transformed into a ghd $\Deins=(\Teins,\bageins,\covereins)$ of $\Jeins$
  satisfying
  condition~\ref{def:entangled_hypertree_decomposition:precision} as follows:
  For every $t\in \V{T}$ let $\bageins(t)\deff \DBag(t)$ and
  $\covereins(t)\deff \setc{e\in\DCover(t)}{\Nachbarn{e}{J} \subseteq
    \DBag(t)}
  \cup
  \setc{e'_{s(t,e),1}}{e\in\DCover(t),\ \Nachbarn{e}{J} \not\subseteq
    \DBag(t)}$.
  This ensures that
  condition~\ref{def:entangled_hypertree_decomposition:precision} is
  satisfied when choosing $\Teins\deff T$. Clearly,
  conditions~\ref{def:hypertree_decomposition:connectedness_of_vertices}
  and \ref{def:hypertree_decomposition:covering} are satisfied; and
  condition~\ref{def:hypertree_decomposition:completeness} is 
  satisfied for all those blue nodes that have already been present in
  $\J$ and for $e'_{s,1}$ for all $s\in S$. To meet
  condition~\ref{def:hypertree_decomposition:completeness} also for
  the newly inserted blue nodes $e'_{s,i}$ with $s\in S$ and
  $i\in\set{2,\ldots,n_s}$, we loop trough all $s\in S$ with $n_s\geq
  2$ and proceed as follows: Pick an arbitrary tree-node $t$ in
  $\Teins$ such that $e'_{s,1}\in\covereins(t)$. For each
  $i\in\set{2,\ldots,n_s}$ insert into $\Teins$ a new leaf $t_{s,i}$
  adjacent to $t$ and let $\bageins(t_{s,i})\deff s$ and
  $\covereins(t_{s,i})\deff \set{e'_{s,i}}$.
  This completes the construction of $\Deins$ and $\Jeins$.
  Note that $\Deins$ might violate condition~\ref{def:entangled_hypertree_decomposition:connectedness_of_edges}.

  \emph{Construction of $\Dzwei$ and $\Jzwei$:}
  For each $e\in \bV{\Jeins}$ let $m_e$ denote the
  number of connected components of the subgraph
  $\Teins_{e}$, i.e., the subgraph of $\Teins$ induced on
  $V_e\deff\setc{t\in \V{\Teins}}{e\in\covereins(t)}$;
  and let $V_{e,0},\ldots,V_{e,m_e-1}$
  be the sets of tree-nodes (i.e., nodes in $\V{\Teins}$) of these
  connected components.
  In order to meet
  condition~\ref{def:entangled_hypertree_decomposition:connectedness_of_edges},
  we let $\tilde{J}\deff\Jeins$, and
  we loop through all those $e\in \bV{\Jeins}$ where $m_e\geq 2$ and
  proceed as follows:
  We let $s\deff N_{\Jeins}(e)$ and use Lemma~\ref{lemma:adding_edges_keeps_distinguishing} to choose
  a suitable number $n_e\geq m_e{-}1$ and insert into $\tilde{\J}$ 
  exactly $n_e$ new blue nodes
  $e'_{e,i}$ (for $i\in [n_e]$) and connect each of them to all red
  nodes in $s$
  --- i.e., we replace $\tilde{\J}$ with $\tilde{\J}+n_e{\cdot}s$.
  Let $\Jzwei$ be the resulting $\tilde{\J}$ after having looped
  through all the relevant $e$.
  Since the numbers $n_e$ were chosen according to
  Lemma~\ref{lemma:adding_edges_keeps_distinguishing}, we have
  $\hom{\Jzwei}{\I} \neq \hom{\Jzwei}{\I'}$.
  The ghd $\Deins$ of $\Jeins$ is transformed into an ehd
  $\Dzwei =(\Tzwei,\bagzwei,\coverzwei)$
  of $\Jzwei$
  as follows:
  We start by letting $\Dzwei\deff \Deins$. Then, we loop through all
  $e\in \bV{\Jeins}$ where $m_e\geq 2$ and
  proceed as follows:
  For each $i\in\set{1,\ldots,m_e{-}1}$ we loop through all tree-nodes
  $t\in V_{e,i}$ and replace $e$ with $e'_{e,i}$ in
  $\coverzwei(t)$. This will ensure that
  condition~\ref{def:entangled_hypertree_decomposition:connectedness_of_edges}
  is met. To ensure that also
  condition~\ref{def:hypertree_decomposition:completeness} is
  satisfied, we
  pick an arbitrary tree-node $t\in V_{e,0}$. For each
  $i\in [n_e]$ with $i\geq m_e$, we insert into $\Tzwei$ a new leaf $t_{e,i}$
  adjacent to $t$ and let $\bagzwei(t_{e,i}) \deff N_{\Jzwei}(e'_{e,i})$ and
  $\coverzwei(t_{e,i})\deff \set{e'_{e,i}}$.
  
  This completes the construction of $\Dzwei$ and $\Jzwei$.
  It is straightforward to verify that $\Dzwei$ is an ehd of $\Jzwei$
  and $\w{\Dzwei}=\w{\Deins}\leq\w{D}$. This completes the proof of Theorem~\ref{thm:EHWandIGHW}.
\end{proof}

\medskip

\subsection{Illustrations concerning the proof of
  Theorem~\ref{thm:EHWandIGHW}}
\label{appendix:illustrations_for_thm:EHWandIGHW}

In the first step we construct $J^1$ and $D^1$ that satisfy condition~\ref{def:entangled_hypertree_decomposition:precision} of Definition~\ref{def:entangled_hypertree_decomposition} as shown in Figure~\ref{fig:step1_ehd_to_ghd}.
The left part of this figure depicts the hypergraphs associated with the incidence graphs $\J,\Jeins$.
\begin{figure}
  \centering
\begin{tikzpicture}[
	decoration = {
		snake,
		pre length=0pt,
		post length=4pt,
		amplitude=1.75pt,
		segment length=5pt
	}
]
	\begin{scope}[name=pre, scale=\myscale]
	\begin{scope}[local bounding box=prehg]
		\pic[scale=\myscale] {hypergraph};
	\end{scope}
	\node (prehg-label) at (0,3.5) {$J$};
	\begin{scope}[shift={(2,1)}]
		\node (e) at (.25,2) {$e$};
		\node (f) at (1.25,2) {$f$};
		\node (i) at (.75,1.25) {$i$};
		\node (g) at (.75,.5) {$g$};
		\draw[draw=blau] \convexpath{e,f,i}{.3cm};
		\draw[draw=gruen] \convexpath{i,g}{.3cm};
	\end{scope}
	\begin{scope}[shift={(6,2)}, local bounding box=prenode]
		\node[draw, rectangle, anchor=west] (t) at (0,0) {
			$t$:
			\begin{array}[t]{rcl}
				\DBag(t)& \!\!\!=\!\!\! &\set{ a,b,c,d,e,f,g } \\ 
				\DCover(t)& \!\!\!=\!\!\! &\set{ \set{a,b,c}, \set{c,d}, \textcolor{blau}{\set{ e,f,i }}, \textcolor{gruen}{\set{g,i}}}
			\end{array}
		};
	\end{scope}
	\end{scope}

	\begin{scope}[shift={(0,-5)}, scale=\myscale]
		\begin{scope}[local bounding box=posthg]
			\pic[scale=\myscale] {hypergraph};
		\end{scope}
		\node (posthg-label) at (0,3.5) {$J^{1}$};
		\begin{scope}[shift={(2,1)}]
			\node (e) at (.25,2) {$e$};
			\node (f) at (1.25,2) {$f$};
			\node (i) at (.75,1.25) {$i$};
			\node[draw=orangsch, circle, inner sep=.2cm] (g) at (.75,.5) {$g$};
			\draw[draw=blau] \convexpath{e,f,i}{.3cm};
			\draw[draw=rot] \convexpath{e,f}{.4cm};
			\draw[draw=gruen] \convexpath{i,g}{.3cm};
		\end{scope}
		\begin{scope}[shift={(6,2)}, local bounding box=postnode]
			\node[draw, rectangle, anchor=west] (t) at (0,0) {
				$t$:
				\begin{array}[t]{rcl}
					\DBag^{1}(t)& \!\!\!=\!\!\! &\set{ a,b,c,d,e,f,g } \\ 
					\DCover^{1}(t)& \!\!\!=\!\!\! &\set{ \set{a,b,c}, \set{c,d}, \textcolor{rot}{\set{ e,f }}, \textcolor{orangsch}{\set{g}}}
				\end{array}
			};
		\end{scope}
	\end{scope}

	\draw[-latex, decorate] ($(prehg.south) - (0,.2)$) -- ($(posthg.north) + (0,.2)$) node [midway, label={[]right:{insert $\set{e,f}$, $\set{g}$.}}] {};
	\draw[-latex, decorate] ($(prenode.south |- 0, 0 |- prehg.south) - (0,.2)$) -- ($(prenode.south |- 0, 0 |- posthg.north) + (0,.2)$) node [midway, label={[text width=8em]right:{replace $\set{e,f,i}$, $\set{g,i}$ in cover.}}] {};
\end{tikzpicture}
\caption{Example for a modification occurring in step 1 that ensures that condition \ref{def:entangled_hypertree_decomposition:precision} of Definition~\ref{def:entangled_hypertree_decomposition} is satisfied.}
  \label{fig:step1_ehd_to_ghd}
\end{figure}

In the depicted tree-node $t$ of $D$ (on the top right) condition \ref{def:entangled_hypertree_decomposition:precision} is violated since $i \not\in \DBag(t)$. Our construction inserts new
blue nodes representing the
hyperedges $\set{e,f}$ and $\set{g}$ into $J$, which we can then use to replace the corresponding blue nodes (or, hyperedges) in $\DCover(t)$. After this modification, $t$ no longer violates condition \ref{def:entangled_hypertree_decomposition:precision}. We repeatedly apply this operation until condition \ref{def:entangled_hypertree_decomposition:precision} is no longer violated.

We must be careful, since this \enquote{naive} modification also changes the number of homomorphisms. Therefore, the procedure described above might align the numbers $\hom{J^1}{\I}$ and $\hom{J^1}{\I'}$. To avoid this, we use Lemma \ref{lemma:adding_edges_keeps_distinguishing} and insert in every step of the modification a suitable number of copies (i.e.\ maybe more than just one), so that after every modification the number of homomorphisms remains distinguishing. We deal with the additional copies by inserting additional tree-nodes as leaves, each having one copy in its cover (and nothing else) and the neighbourhood of the copy as its bag.
Altogether, this gives us $\hom{J^1}{\I} \neq \hom{J^1}{\I'}$.

In the second step, we modify $J^1$ and $D^1$ such that they also satisfy condition \ref{def:entangled_hypertree_decomposition:connectedness_of_edges}, i.e.\ such that the resulting $D^2$ is an ehd of $J^2$. Consider
Figure~\ref{fig:step2_ehd_to_ghd}.
The left part of the figure depicts the hypergraphs associated with the incidence graphs $\Jeins$ and $\Jzwei$.
In this example, the hyperedge $\textcolor{blau}{e}$ in $J^1$ induces 3 connected components in $D^1$. Therefore, we insert $2$ copies of $\textcolor{blau}{e}$ --- namely, one for each additional connected component. We apply this procedure until there is no hyperedge left that induces multiple connected components, which means that condition~\ref{def:entangled_hypertree_decomposition:connectedness_of_edges} is met.

Again, this naive approach might equalise the number of
homomorphisms. Therefore, we use Lemma
\ref{lemma:adding_edges_keeps_distinguishing} to insert a suitable,
large enough number of copies, so that after every modification the
number of homomorphisms remains distinguishing. We deal with the
excess copies in the same way as before.
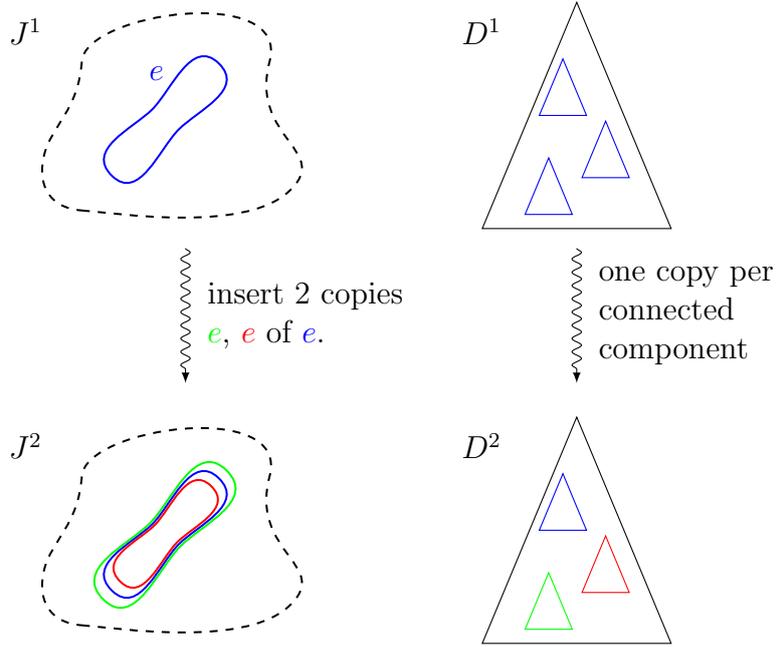
\begin{figure}
  \centering
\begin{tikzpicture}[
	decoration = {
		snake,
		pre length=0pt,
		post length=4pt,
		amplitude=1.75pt,
		segment length=5pt,
	},
	subtree/.style={
		isosceles triangle,
		isosceles triangle apex angle=45,
		draw,
		rotate=90,
		anchor=left corner,
		minimum height =.75cm
	}
]
	\begin{scope}[name=pre, scale=\myscale]
	\begin{scope}[local bounding box=prehg]
		\pic[scale=\myscale] {hypergraph};
	\end{scope}
	\node (prehg-label) at (0,3.5) {$J^{1}$};
	\begin{scope}[shift={(2.5,2)}]
		\pic[draw=blau] {hyperedge};
	\end{scope}
	\node[text=blau] (e) at (2.3, 2.75) {$e$};
	\begin{scope}[shift={(8,0)}, local bounding box=prenode]
		\node (preD-label) at (0,3.5) {$D^{1}$};
		\node[draw=blau, subtree] (sub1) at (1,2){};
		\node[draw=blau, subtree] (sub2) at (1.75,.9){};
		\node[draw=blau, subtree] (sub3) at (.75,.25){};
		\node[isosceles triangle,
			isosceles triangle apex angle=45,
			draw,
			rotate=90,
			anchor=left corner,
			minimum height =3cm
		] (preD)at (0,0){};
		
	\end{scope}
	\end{scope}

	\begin{scope}[shift={(0,-5.5)}, scale=\myscale]
		\begin{scope}[local bounding box=posthg]
			\pic[scale=\myscale] {hypergraph};
		\end{scope}
		\node (posthg-label) at (0,3.5) {$J^{2}$};
		\begin{scope}[shift={(2.5,2)}]
			\pic[draw=blau] {hyperedge};
			\pic[draw=gruen, scale=1.15] {hyperedge};
			\pic[draw=rot, scale=0.85] {hyperedge};
		\end{scope}
		\begin{scope}[shift={(8,0)}, local bounding box=postnode]
			\node (postD-label) at (0,3.5) {$D^{2}$};
			\node[draw=blau, subtree] (sub1) at (1,2){};
			\node[draw=rot, subtree] (sub2) at (1.75,.9){};
			\node[draw=gruen, subtree] (sub3) at (.75,.25){};
			\node[isosceles triangle,
				isosceles triangle apex angle=45,
				draw,
				rotate=90,
				anchor=left corner,
				minimum height =3cm
			] (postD)at (0,0){};
		\end{scope}
	\end{scope}

	\draw[-latex, decorate] ($(prehg.south) - (0,.2)$) -- ($(posthg.north) + (0,.2)$) node [midway, label={[text width=7em]right:{insert $2$ copies \textcolor{gruen}{$e$}, \textcolor{rot}{$e$} of \textcolor{blau}{$e$}.}}] {};
	\draw[-latex, decorate] ($(preD.lower side |- 0, 0 |- prehg.south) - (0,.2)$) -- ($(preD.lower side |- 0, 0 |- posthg.north) + (0,.2)$) node [midway, label={[text width=6em]right:{one copy per connected component}}] {};
\end{tikzpicture}
  \caption{Example for a modification occurring in step 2 that ensures that condition \ref{def:entangled_hypertree_decomposition:connectedness_of_edges} of Definition~\ref{def:entangled_hypertree_decomposition} is satisfied.}
  \label{fig:step2_ehd_to_ghd}
\end{figure}

\medskip

\section{Details omitted in Section~\ref{sec:normal_form}}
\label{appendix:logic}

\noindent
\textbf{The formulas constructed in Example~\ref{example:formula} are in $\NGC{2}$:}

\begin{proof} \hspace{10mm}
\begin{description}
\item[$\phi_1$:]    
Note that $(\top\und\phi_1)$ is a sentence in $\NGC{2}$ (use rules \ref{item:syntaxdef:2})
and \ref{item:syntaxdef:7}) of the syntax definition).

\item[$\phi_2$:]
Note that $(\top\und \phi_2)$ is a sentence in $\NGC{2}$
(use rules \ref{item:syntaxdef:1}), \ref{item:syntaxdef:6}),
\ref{item:syntaxdef:7})).

\item[$\phi_4$:]
Note that $(\top\und\alpha)$ is a sentence in $\NGC{2}$ (by rules 
\ref{item:syntaxdef:2}), \ref{item:syntaxdef:4}),
\ref{item:syntaxdef:1}), \ref{item:syntaxdef:6}), \ref{item:syntaxdef:5})).
Applying rules \ref{item:syntaxdef:1}), \ref{item:syntaxdef:6}),
\ref{item:syntaxdef:5}), \ref{item:syntaxdef:7}), \ref{item:syntaxdef:4}))
yields that
$(\top\und\phi_4)$ is a sentence in $\NGC{2}$.

\item[$\phi_5$:]
Note that $(\LogGuard{g_{i,j}}\und\vartheta_{i,j})\in\NGC{2}$ for
the guard function $g_{i,j}$ with
$\Dom{g_{i,j}}=\set{i,j}$ and $g_{i,j}(i)=g_{i,j}(j)=1$
(use rules \ref{item:syntaxdef:1}), \ref{item:syntaxdef:5}),
\ref{item:syntaxdef:6})).
Furthermore, $\Delta_{i,j}=\LogGuard{g_{i,j}}$ and hence
$(\LogGuard{g_{i,j}}\und\psi_{i,j})\in\NGC{2}$
(use rule \ref{item:syntaxdef:7})).
Rule \ref{item:syntaxdef:5}) yields that
$(\LogGuard{g}\und\psi)\in\NGC{2}$ for the guard function $g$ with
$\Dom{g}=[3]$ and $g(1)=g(2)=g(3)=1$.
Rules \ref{item:syntaxdef:3}) and \ref{item:syntaxdef:5}) yield that
$(\LogGuard{g}\und\chi)\in\NGC{2}$.
By rule \ref{item:syntaxdef:5}) we obtain that
$(\LogGuard{g}\und (\chi\und\psi))\in\NGC{2}$.
Note that $\Delta=\LogGuard{g}$.
Applying rules \ref{item:syntaxdef:6}) and \ref{item:syntaxdef:7})
yields that
$(\top\und\phi_5)\in\NGC{2})$.

\item[$\chi:$]
Rule \ref{item:syntaxdef:5}) yields that $(\top \und
\Und_{i=1}^5\phi_i)$ is a sentence in $\NGC{2}$.
\qedhere
\end{description}

\end{proof}  
 
\medskip

\section{Details omitted in Section~\ref{sec:recursive_def}}
\label{appendix:recursive_def}

\medskip

\subsection{Proof of Lemma~\ref{lemma:glueing_preserves_real_guards}}
\label{appendix:glueing_preserves_real_guards}
\newcounter{glueing_preserves_real_guards}
\begin{proof}[\textbf{\textup{Proof of Lemma~\ref{lemma:glueing_preserves_real_guards}:}}] \ \\
The proof consists of a lengthy (but otherwise straightforward) case
distinction that only checks requirements along Definition~\ref{def:glueing}.

Let $\kLI_1=(I_1,r_1,b_1,g_1)$ and $\kLI_2=(I_2,r_2,b_2,g_2)$.
Let $\kLI \isdef \glue{\kLI_1}{\kLI_2}$ and 
let $(I,r,b,g)=\kLI$.
By assumption, $\kLI_1$ and $\kLI_2$ have real guards. We have to show
that also $\kLI$ has real guards.
To this end,
consider an arbitrary $i \in \Dom{r}$. Let $j = g(i)$. We have to show the following: 
\begin{enumerate}[(A)]
	\item $j \in \Dom{b}$, \ and \label{appendix:glueing_preserves_real_guards:a}
	\item $(b(j),r(i)) \in \E{\inzidenz{}}$. \label{appendix:glueing_preserves_real_guards:b}
\end{enumerate}

\noindent
We distinguish between two cases:
\medskip

\textbf{Case 1:} $i \in \Dom{r_1}$.
	By Definition~\ref{def:glueing} we then have: $r(i)=[(r_1(i),1)]_{\sim_R}$ and $j=g(i)=g_1(i)$.
	Since $\kLI_1$ has real guards, we have 
	\begin{enumerate}[(1)]
		\item $j \in \Dom{b_1}$, \ and \label{appendix:glueing_preserves_real_guards:1}
		\item $(b_1(j),r_1(i)) \in \E{\inzidenz{1}}$. \label{appendix:glueing_preserves_real_guards:2}
		\setcounter{glueing_preserves_real_guards}{\value{enumi}}
	\end{enumerate}
	By Definition~\ref{def:glueing} and \eqref{appendix:glueing_preserves_real_guards:1} we obtain that $j \in \Dom{b}$. Thus, \eqref{appendix:glueing_preserves_real_guards:a} is proved. To prove \eqref{appendix:glueing_preserves_real_guards:b}, we proceed as follows:
	From \eqref{appendix:glueing_preserves_real_guards:2} we know for $e \isdef b_1(j)$ and $v \isdef r_1(i)$ that $(e,v) \in \E{\inzidenz{1}}$. By Definition~\ref{def:glueing} this yields:
	\begin{enumerate}[(1)]
		\setcounter{enumi}{\value{glueing_preserves_real_guards}}
		\item $( [(e,1)]_{\sim B} , [(v,1)]_{\sim_R} ) \ \in \
                  \E{\inzidenz{}}$ \ and \label{appendix:glueing_preserves_real_guards:3}
		\item $[(e,1)]_{\sim_B} \ \in \ \bV{\inzidenz{}}$ \ and \label{appendix:glueing_preserves_real_guards:4}
		\item $[(v,1)]_{\sim R} \ \in \ \rV{\inzidenz{}}$.\label{appendix:glueing_preserves_real_guards:5}
                \end{enumerate}
                \smallskip

                \noindent
By Definition~\ref{def:glueing} we have: $b(j) = [(b_1(j),1)]_{\sim_B} = [(e,1)]_{\sim_B}$.
        Furthermore, recall that $r(i) = [(r_1(i),1)]_{\sim_R} = [(v,1)]_{\sim_R}$.        
	Thus, \eqref{appendix:glueing_preserves_real_guards:3} means that $(b(j),r(i)) \in \E{\inzidenz{}}$, i.e., \eqref{appendix:glueing_preserves_real_guards:a} is proved.\medskip
        
      \textbf{Case 2:} $i \not\in \Dom{r_1}$. Then, $i \in
      \Dom{r_2}$. \eqref{appendix:glueing_preserves_real_guards:a} and
      \eqref{appendix:glueing_preserves_real_guards:b} can be proven
      in the same way as in Case~1 by replacing every subscript 1 with
      subscript 2.
      This completes the proof of Lemma~\ref{lemma:glueing_preserves_real_guards}.
\end{proof}

\medskip

\subsection{Detailed Proof of Part~\ref{item:GLIkInIEHWk} of Theorem~\ref{thm:IEHWkIsGLIk}}
\label{appendix:GLIk-in-IEHWk}

\begin{proof}[\textbf{\textup{Proof of part \ref{item:GLIkInIEHWk} of Theorem~\ref{thm:IEHWkIsGLIk}:}}] \ \\
By induction on the definition of $\GLIk$ we show for every
$\kLI\in\GLIk$ that there exists an 
entangled hypertree decomposition $D=(T,\DBag,\DCover)$ of $\I_\kLI$ of width $\leq k$.
To be able to carry out the induction step, we additionally ensure that there is a tree-node $\omega\in\V{T}$ with $\DCover(\omega)=\Img{b_\kLI}$.

The idea is fairly simple: From Lemma~\ref{lemma:GLIk_has_real_guards}
we know that $\kLI$ has real guards. Thus, the (at most $k$) blue
labels given by $b$ guard all the red labels given by $r$ in the sense
that $(b({g(i)}),r(i))\in\E{\I}$ for every $i\in\Dom{r}$. Hence, we
can use the set of labeled blue vertices to cover the bag containing
all the labeled red vertices. 
This is utilised as follows.
\smallskip

\noindent \textbf{Base case}: \begin{enumerate}[wide]
\item
Let $\kLI=(\I,r,b,g)$ be a $k$-labeled incidence graph with $\rVI=\Img{r}$, $\bVI=\Img{b}$, and with real guards.
We choose $D \isdef (T, \DBag, \DCover)$ as follows: Let $t$ be a single tree-node,
$\V{T}\deff\set{t}$, $\E{T}\deff\emptyset$, $\DBag(t) \isdef \rV{\inzidenz{}} = \Img{r}$,
$\DCover(t) \isdef \bV{\inzidenz{}} = \Img{b}$.
Clearly, $D$ is an ehd of $\I$ of width $|\Img{b}|\leq k$, and $\omega:=t$ satisfies $\DCover(\omega)=\Img{b}$.
\setcounter{InductionConuter}{\value{enumi}}
\end{enumerate}
\smallskip

\noindent \textbf{Inductive step}: \
Let $\kLI = (\I, b, r, g) \in\GLIk$, let 
$D = (T, \DBag, \DCover)$ be an ehd of $\I$ of width $\leq k$,
and let 
$\omega\in \V{T}$ be such that
$\DCover(\omega) = \Img{b}$.
\smallskip

\begin{enumerate}[wide]
\setcounter{enumi}{\value{InductionConuter}}
\item Let $\kLI' \isdef \reclaimR{\kLI}{\myR}$ for some
$\myR \subseteq \Def{r}$. Obviously, $D$ also is an ehd of $\I_{\kLI'}$,
since $\I_{\kLI'} = \I$. And since $b_{\kLI'}=b$, we also have
$\DCover(\omega)=\Img{b_{\kLI'}}$.
I.e., $D$ is an ehd of  $\I_{\kLI'}$ with the desired properties. 
\smallskip
  
\item Let $\kLI'\isdef \reclaimB{\kLI}{\myB}$ for some $\myB \subseteq \Def{b} \setminus \Img{g}$.
Then, $D$ is also on ehd of $I_{\kLI'}$, since $\I_{\kLI'}=\I$.
But note that we might have $\Img{b_{\kLI'}}\varsubsetneq\Img{b}$ and hence
$\DCover(\omega)\varsupsetneq\Img{b_{\kLI'}}$.
To fix this, we create a new tree-node $\omega'$ that we insert into
$T$ as a new leaf adjacent to $\omega$, and we let
$\DCover(\omega')\deff \Img{b_{\kLI'}}$ and $\DBag(\omega')\deff
\bigcup_{e\in\DCover(\omega')}N_{\I}(e)$. 
This results in an ehd $D'$ of $\I_{\kLI'}$ with the desired
properties.
\smallskip

\item
Let $\kLI' \deff \switch{\kLI}{f}$ for a transition $f$ for $g_\kLI$. Let $\I'\deff\I_{\kLI'}
$. As in the previous step, we create a new tree-node $\omega'$ that we insert into $T$ as
a new leaf adjacent to $\omega$, and we let $\DCover'(\omega')\deff\Img{b_{\kLI'}}$ and
$\DBag'(\omega')=\bigcup_{e\in\DCover'(\omega')}N_{\I'}(e)$.

To define $\DCover'(t)$ for all $t\in\V{T}\setminus\set{\omega'}$,
need the following notation. By definition,
$\kLI'=\glue{\kLI_1}{\kLI_2}$ where $\kLI_1\deff \kLIf$ and 
$\kLI_2\deff \reclaimB{\kLI}{\myB}$ for $\myB \isdef \Def{g}
\intersect \Img{f}$.
For $i\in[2]$ let $\I_i\deff \I_{\kLI_i}$, and let
$\pi_{i,R}$ and $\pi_{i,B}$ be the mappings provided at the end of
Definition~\ref{def:glueing}.
For all $t\in\V{T}\setminus\set{\omega'}$ we let $\DCover'(t)\deff \setc{\pi_{2,B}(e)}{e\in\DCover(t)}$ and
$\DBag'(t)\deff\bigcup_{e'\in\DCover'(t)}N_{\I'}(e)$.

It is straightforward to verify that this results in an ehd $D'$ of $\I'$ with the desired properties.
\smallskip

\item
Let $\kLI'\deff \glue{\kLI_1}{\kLI_2}$ where, for each $i\in[2]$,
$\kLI_i=(\I_i,r_i,b_i,g_i)\in\GLIk$, $D_i=(T_i,\DBag_i,\DCover_i)$
is an ehd of $\I_i$ of width $\leq k$, and $\omega_i\in\V{T_i}$ is a
tree-node satisfying $\DCover(\omega_i)=\Img{b_i}$. 
W.l.o.g.\ we assume that $\V{T_1}\cap\V{T_2}=\emptyset$.
Let $\I'\deff \I_{\kLI'}$.

We create a new tree-node $\omega'$ and we let
$\DCover'(\omega')\deff\Img{b_{\kLI'}}$ and
$\DBag'(\omega')=\bigcup_{e\in\DCover'(\omega')}N_{\I'}(e)$.
We choose
$T'= (\V{T'},\E{T'})$ with
$\V{T'}\deff \V{T_1}\cup\V{T_2}\cup\set{\omega'}$
and
$\E{T'}\deff \E{T_1}\cup\E{T_2}\cup\bigset{\set{\omega',\omega_1}\;,\;\set{\omega',\omega_2}}$.

To define $\DCover'(t)$ for all $t\in\V{T_1}\cup\V{T_2}$, we
need the following notation.
For $i\in[2]$ let
$\pi_{i,R}$ and $\pi_{i,B}$ be the mappings provided at the end of
Definition~\ref{def:glueing}.
For all $i\in[2]$ and all $t\in\V{T_i}$ we let $\DCover'(t)\deff \setc{\pi_{i,B}(e)}{e\in\DCover_i(t)}$ and
$\DBag'(t)\deff\bigcup_{e'\in\DCover'(t)}N_{\I'}(e)$.

It is straightforward to verify that $D'\deff (T',\DBag',\DCover')$ is an ehd of $\I'$  with the desired properties.
\end{enumerate}
This completes the proof of part \ref{item:GLIkInIEHWk} of Theorem~\ref{thm:IEHWkIsGLIk}.
\end{proof}
 
\medskip

\subsection{Detailed Proof of Part~\ref{item:IEHWkInGLIk} of Theorem~\ref{thm:IEHWkIsGLIk}}
\label{appendix:IEHWk-in-GLIk}

\begin{proof}[\textbf{\textup{Proof of part \ref{item:IEHWkInGLIk} of Theorem~\ref{thm:IEHWkIsGLIk}:}}] \ \\
Given an $\I\in\IEHWk$, the proof proceeds as follows.
We pick an ehd $D=(T,\DBag,\DCover)$ of $\I$ of width $\leq k$ that
has a particularly suitable shape (details follow below), and we
single out a suitable tree-node $\omega$ as the ``root'' of $T$. For
every $t\in\V{T}$, the information provided by $\DCover(t)$ is viewed
as a description of a guarded $k$-labeled incidence graph $\kLI_t$
corresponding to the base case of
Definition~\ref{def:k_guarded_incidence_graphs}. 
We then perform a bottom-up traversal of the rooted tree $(T,\omega)$ and glue 
together
all the $\kLI_t$'s.
But this has to be done with care: before glueing them, we have to
ensure that the guards are compatible; we achieve this by adequately
changing labels and applying transitions, so that finally we end up
with an $\kLI\in\GLIk$ whose incidence graph $\I_\kLI$ is isomorphic
to $\I$.
Below, we give a detailed description of the construction;
a specific example of the construction can be found
in Appendix~\ref{appendix:subsec:example}.

We start by picking an ehd $D=(T,\DBag,\DCover)$ of $\I$ and an
$\omega\in\V{T}$ as provided by the following lemma.

\begin{lemma}\label{lemma:monotone_ehd}\label{lem:BinaryMonotone}
For every $\I\in\IEHWk$ there is an ehd $D=(T,\DBag,\DCover)$ of $\I$
and an $\omega\in\V{T}$ with $|\DCover(\omega)|\leq k$ such that the
rooted tree $(T,\omega)$ is binary and monotone, i.e.,
every tree-node has at most 2 children, and
for all parent-child pairs $(t_p,t_c)$ we have $|\DCover(t_p)| \geq |\DCover(t_c)|$.
\end{lemma}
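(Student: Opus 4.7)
The plan is to start from any ehd $D_0=(T_0,\DBag_0,\DCover_0)$ of $\I$ of width at most $k$ (which exists because $\I\in\IEHWk$) and, in two stages, massage it into the required binary monotone ehd of width at most~$k$. First I would pick $\omega\in \V{T_0}$ with $|\DCover_0(\omega)|$ maximal (which immediately yields $|\DCover(\omega)|\leq k$), and then binarize the rooted tree: as long as some tree-node $t$ has $m\geq 3$ children $c_1,\ldots,c_m$, I would introduce a fresh duplicate $t'$ of~$t$ (setting $\DBag(t')\deff\DBag(t)$ and $\DCover(t')\deff\DCover(t)$), attach $t'$ as a new child of~$t$, and reattach $c_2,\ldots,c_m$ as children of~$t'$. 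Because $t'$ duplicates the bag and cover of~$t$, completeness, covering and precise coverage at~$t'$ are inherited from~$t$; and for the connectedness conditions, every set $V_v$ or $V_e$ that contained~$t$ simply gains the adjacent node~$t'$, so the induced subgraphs $T_v$ and $T_e$ remain subtrees. Recursing on~$t'$ until every tree-node has at most two children yields a binary ehd of unchanged width.

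Next, I would monotonize this binary tree bottom-up. Traversing in post-order, for each non-root tree-node~$t$ with parent~$p$ I iterate the following as long as $|\DCover(p)|<|\DCover(t)|$: pick any $e\in\DCover(t)\setminus\DCover(p)$ (this set is non-empty because $|\DCover(t)\setminus\DCover(p)|\geq |\DCover(t)|-|\DCover(p)|>0$), add $e$ to $\DCover(p)$, and add $N_\I(e)$ to $\DBag(p)$. This local step preserves every ehd condition: the set $V_e$ gains only the vertex~$p$, which is adjacent to~$t\in V_e$, so $T_e$ stays a subtree; for every $v\in N_\I(e)$ we have $v\in \DBag(t)$ by precise coverage at~$t$, hence $t\in V_v$, and adjoining the adjacent~$p$ keeps $T_v$ a subtree; precise coverage at~$p$ holds by construction, since $\DBag(p)$ and $\DCover(p)$ grow in lockstep by $N_\I(e)$ and~$e$; completeness and covering of bags are immediate. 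A straightforward induction on post-order shows $|\DCover(t)|\leq k$ is maintained for every~$t$, so $|\DCover(p)|$ never exceeds~$k$. After the sweep every parent-child pair $(p,t)$ satisfies $|\DCover(p)|\geq|\DCover(t)|$, and since this stage introduces no new tree-nodes, the tree remains binary.

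The hard part is verifying that the monotonization step, which enlarges a parent's cover and bag with data pulled up from a child, genuinely preserves the rather stringent connectedness conditions of an ehd while respecting the width bound. The key observation that makes it all go through is that everything added to $\DBag(p)$ or $\DCover(p)$ already sits at the adjacent child~$t$, so each relevant set $V_v$ or $V_e$ merely gains the single adjacent node~$p$ and the corresponding induced subgraph stays a subtree. The same observation keeps every $|\DCover(\cdot)|$ bounded by the maximum original cover size, so the width never exceeds~$k$.
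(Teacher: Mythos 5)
Your proof is correct and, modulo a swap in the order of the two transformations, follows essentially the same route as the paper's: start from a width-$\leq k$ ehd, root at a node of maximal cover size, and apply a local binarization (duplicating a node's bag and cover onto a fresh node hanging off it) together with a local monotonization (pulling cover elements and their neighbourhoods from a child up into its parent, which preserves all four ehd conditions and the width bound). The paper monotonizes first with a bottom-up sweep and then binarizes top-down (relying on the new duplicate nodes carrying the same cover to preserve monotonicity); you binarize first and then monotonize bottom-up (relying on the fact that monotonization leaves the tree shape, hence the degrees, untouched). Both orders work, and your verification of the connectedness and precise-coverage conditions in the monotonization step is, if anything, more explicit than the paper's "it is straightforward to verify."
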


\begin{proof}
We start with an arbitrary ehd $D=(T,\DBag,\DCover)$ of $\I$ of width $\leq k$.
Fix an $\omega\in \V{T}$ where $|\DCover(\omega)|$ is as large as
possible. Perform a bottom-up traversal of the rooted tree
$(T,\omega)$.
Whenever encountering a parent-child pair $(t_p,t_c)$ where
$|\DCover(t_p)| < |\DCover(t_c)|$, insert elements from $\DCover(t_c)$
into $\DCover(t_p)$ to ensure that $|\DCover(t_p)|=|\DCover(t_c)|$,
and modify $\DBag(t_p)$ such that condition
\ref{def:entangled_hypertree_decomposition:precision} of 
Definition~\ref{def:entangled_hypertree_decomposition} is satisfied.

To ensure that the tree is binary, we perform a top-down traversal of $(T,\omega)$, and for every node
$t$ that has $n\geq 3$ children $t_1,\ldots,t_n$ we introduce $n{-}2$
new nodes $t'_2,\ldots,t'_{n-1}$ with $\DBag(t'_i)=\DBag(t)$ and
$\DCover(t'_i)=\DCover(t)$,
we insert edges $\set{t,t'_2}$, $\set{t'_{n-1},t_n}$, and $\set{t'_i,t'_{i+1}}$ for
all $i\in\set{2,\ldots,n{-}2}$; and
for every $i\in\set{2,\ldots,n{-}1}$ we replace the edge $\set{t,t_i}$
by the new edge $\set{t'_i,t_i}$.
It is straightforward to verify that the resulting tree is an ehd of
$\I$ with the desired properties.
This completes the proof of Lemma~\ref{lem:BinaryMonotone}.
\end{proof}

We perform a top-down traversal of the rooted tree $(T,\omega)$ to construct a
\emph{$k$-colouring of $\I$ according to $D$}, i.e.\ a mapping
$c:\bVI\to [k]$ with the following properties, where $c(t)\deff
\setc{c(e)}{e\in\DCover(t)}$ for all $t\in\V{T}$:  
For all $t \in \V{T}$ and all $e,e' \in \DCover(t)$ with $e\neq e'$ we
have $c(e) \neq c(e')$, and for all parent-child pairs $(t_p,t_c)$ we
have $c(t_p) \supseteq c(t_c)$. 

For all $t\in\V{T}$ and $j\in [k]$ we let $c(t,j)\deff e$ for the
particular  $e\in\DCover(t)$ with $c(e)=j$; and in case that no such
$e$ exists we let 
$c(t,j)\deff\undefined$.
\medskip

We perform a further top-down traversal of $(T,\omega)$ to construct a \emph{schedule for $D$}, i.e., a mapping         
$s: \V{T} \times \rVI \to \bVI$ with the following properties:
\begin{enumerate}
\item
For all $t \in \V{T}$ and all $v \in \DBag(t)$,
$s(t,v)$ is a blue node of $\I$ that belongs to $\DCover(t)$ and is
adjacent in $\I$ to the red node $v$.
\item
For all parent-child pairs $(t_p, t_c)$ and for all
$v \in \DBag(t_p) \intersect \DBag(t_c)$, the following is satisfied:
If $s(t_p, v) \in \DCover(t_c)$, then $s(t_c, v) = s(t_p, v)$. 
\end{enumerate}
\smallskip

Finally, we let $n\deff |\rVI|$, and we fix an arbitrary list $v_1,\ldots,v_n$ of all
red nodes of $\I$.
We use this list and the mappings $s$ and $c$ to define for every
$t\in\V{T}$
the guarded $k$-labeled incidence graph $\kLI_t$ \emph{associated with the
  tree-node $t$} as follows:
$\kLI_t\deff (\I_t,r_t,b_t,g_t)$, where
$\rV{\I_t}\deff \DBag(t)$,
$\bV{\I_t}\deff\DCover(t)$,
$\E{\I_t}\deff \E{\I}\cap (\bV{\I_t}\times \rV{\I_t})$,
and we let $\Dom{r_t}\deff\Dom{g_t}\deff \setc{i\in [n]}{v_i\in\DBag(t)}$ and $r_t(i)\deff v_i$ and $g_t(i)\deff c(s(t,v_i))$
for all $i\in\Dom{g_t}$,
and we let $\Dom{b_t}\deff \setc{c(e)}{e\in\DCover(t)}$ and $b_t(j)\deff c(t,j)$ for all
$j\in\Dom{b_t}$.
It is easy to see that $\kLI_t\in\GLIk$ (according to the base case of
Definition~\ref{def:k_guarded_incidence_graphs}).
Below, we will make extensive use of the particular
functions $r_t,b_t,g_t$ of $\kLI_t$. 

All that remains to be done is to glue together the $\kLI_t$'s for all
$t\in\V{T}$.
This has to be done with care. To describe the
construction, we use the following notation.
For every $t\in\V{T}$ let $T_t$ denote the subtree of the rooted tree
$(T,\omega)$ rooted at $t$. By $\assig{t}$ we denote the \emph{incidence graph
associated with $T_t$}, i.e.,
$\rV{\assig{t}}=\bigcup_{t'\in\V{T_t}}\DBag(t')$,
$\bV{\assig{t}}=\bigcup_{t'\in\V{T_t}}\DCover(t')$, and
$\E{\assig{t}}=\E{\I}\cap (\bV{\assig{t}} \times \rV{\assig{t}})$.

We perform a bottom-up traversal of $(T,\omega)$ to construct for each
$t\in\V{T}$ a guarded $k$-labeled incidence graph $\kLI'_t=(\I'_t,r'_t,b'_t,g'_t)\in\GLIk$
and an isomorphism $\pi_t=(\pi_{t,R},\pi_{t,B})$ from $\assig{t}$ to
$\I'_t$ and ensure that $g'_t=g_t$, $\Dom{r'_t}=\Dom{r_t}$ and
$r'_t(i)=\pi_{t,R}\big(r_t(i)\big)$ for all $i\in\Dom{r_t}$, and
$\Dom{b'_t}=\Dom{b_t}$ and
$b'_t(j)=\pi_{t,B}\big(b_t(j)\big)$ for all $j\in\Dom{b_t}$.

Note that once having achieved this, the proof of
part~\ref{item:IEHWkInGLIk} of Theorem~\ref{thm:IEHWkIsGLIk} is
complete because $ \I=\assig{\omega} \isom \I'_{\omega}$; we are
therefore done by choosing
$\kLI\deff \reclaimB{\big(\reclaimR{\kLI'_\omega}{\Dom{r'_\omega}}\big)}{\Dom{b'_\omega}}$.

For each \emph{leaf} $t$ of the rooted tree $(T,\omega)$, we can
simply choose $\kLI'_t\deff \kLI_t$.

Let us now consider a tree-node $t_p$ that has exactly one child $t$.
Our goal is to combine the $k$-labeled incidence graph $\kLI_{t_p}$
associated with the tree-node $t_p$ with the $k$-labeled incidence
graph $\kLI'_{t}$ already constructed for $T_t$ into a single
$k$-labeled incidence graph $\kLI'_{t_p}$ for $T_{t_p}$. We do this by
glueing them at the red nodes in $\DBag(t)\cap\DBag(t_p)$ and the blue
nodes in $\DCover(t)\cap\DCover(t_p)$. But we have to take care to use the
correct labels and end up with the correct guard function.
To this end, we first release from $\kLI'_t$ all the red labels of
nodes in $\DBag(t)\setminus\DBag(t_p)$, i.e., we let
\[
   A_1 \ \ \deff\ \ \reclaimR{\kLI'_t\,}{\,(\Dom{r_t}\setminus\Dom{r_{t_p}})\,}\;.
\]
Note that $g_{A_1}=g_t-(\Dom{r_t}\setminus\Dom{r_{t_p}})$. 
Let $\myB\deff\setc{j\in c(t)}{c(t,j)\neq c(t_p,j)}$; this is the set
of blue labels that are used in $t$ for a different node than in $t_p$.
Consider the set
$\myR\deff\setc{i\in\Dom{r_{t}}\cap\Dom{r_{t_p}}}{s(t,v_i)\neq
  s(t_p,v_i)}$;
this is the set of red labels used in both $t$ and $t_p$ but whose
guard changes when moving from $t$ to its parent $t_p$.

In case that $\myR=\emptyset$, we let
\[
  A_2 \ \deff \ \reclaimB{A_1\,}{\myB}
  \quad\text{and}\quad
  \kLI'_{t_p}\deff\ \glue{\kLI_{t_p}}{A_2}\;.
\]
It can be
verified\footnote{see Appendix~\ref{appendix:IEHW--in-GLIk-MissingPieces}}
that $\myB\subseteq\Dom{b_{A_1}}\setminus\Img{g_{A_1}}$;
hence $A_2\in\GLIk$. It can also be
verified\footnote{see Appendix~\ref{appendix:IEHW--in-GLIk-MissingPieces}}
that $g_{t_p}$ and
$g_{A_2}$ are compatible, and thus also $\kLI'_{t_p}\in\GLIk$.

In case that $\myR\neq\emptyset$ we let $f$ be the restriction of
$g_{t_p}$ to $\myR$, i.e., $\Dom{f}=\myR$ and $f(i)=g_{t_p}(i)$ for
all $i\in\Dom{f}$.
It can be
verified\footnote{see Appendix~\ref{appendix:IEHW--in-GLIk-MissingPieces}}
that $\Img{f}\subseteq \myB$ and that $f$ is a transition for $g_{A_1}$.
We let $\myBHat \isdef (\Img{b_{A_1}} \intersect \Img{f}) \setminus
\Img{g_{A_1}}$ and $\myBTilde\deff \myB\setminus\Img{f}$. Note that
$\myBHat \intersect \myBTilde = \emptyset$. We choose $\hat{A}_1
\isdef \reclaimB{A_1}{\myBHat}$. Obviously $\hat{A}_1 \in \GLIk$ and
$f$ is also a transition for $\hat{A}_1$. Let $\tilde{A}_1 \isdef
\switch{\hat{A}_1}{f}$, which is also obviously in $\GLIk$ and then
$A_2 \isdef \reclaimB{\tilde{A}_1}{\myBTilde}$. I.e.\ we have 
\[
  A_2 \ \deff \ 
  \reclaimB{
    \; \big( \,
      \switch{
        \; \big( \,
          \reclaimB{A_1}{\myBHat} 
        \, \big) \; }
      {f}
    \,\big) \;}
  {\myBTilde} \;.
\]
Finally, we let $\kLI'_{t_p} \isdef \glue{\kLI_{t_p}}{A_2}$.
It can be
verified\footnote{see Appendix~\ref{appendix:IEHW--in-GLIk-MissingPieces}} that
$\myBTilde\subseteq\Dom{b_{\tilde{A}_1}}\setminus\Img{g_{\tilde{A}_1}}$; hence
$A_2\in\GLIk$.
It can also be
verified\footnote{see Appendix~\ref{appendix:IEHW--in-GLIk-MissingPieces}}
that $g_{t_p}$ and
$g_{A_2}$ are compatible, and thus also $\kLI'_{t_p}\in\GLIk$.

In both cases it can be verified that $\kLI'_{t_p}$ has the desired
properties.
\smallskip

The case of a tree-node $t_p$ that has two children $t$ and $t'$ can
be handled in the same way: Let $A_2$ be constructed for $t$ as
described above, and let $A'_2$ be constructed for $t'$ in the
analogous way; the construction is completed by letting
\[
  \kLI'_{t_p} \ \deff \ \glue{\;\glue{\kLI_{t_p}}{A_2}\,}{A'_2\,}\,.
\]
This completes the proof of part \ref{item:IEHWkInGLIk} of
Theorem~\ref{thm:IEHWkIsGLIk}. 
\end{proof}

See Appendix~\ref{appendix:subsec:example} for an example illustrating
the above construction. It helps to get some insight into $k$-labeled
incidence graphs and especially into transitions.
 
\medskip

\subsection{Missing pieces in the proof of part \ref{item:IEHWkInGLIk}
  of Theorem~\ref{thm:IEHWkIsGLIk}}
\label{appendix:IEHW--in-GLIk-MissingPieces}

\smallskip

\noindent
\textbf{In case that $\myR = \emptyset$:}

\begin{proof}[{Proof that $\myB \subseteq \Dom{b_{A_1}} \setminus \Img{g_{A_1}}$}]
It is clear that $\myB \subseteq \Dom{b_{A_1}}$ holds since $c(t) =
\Dom{b_{\kLI_{t}}} = \Dom{b_{\kLI_t}}$. Assume for contradiction, that
there exists an $i \in \Dom{g_{A_1}}$ such that $g_{A_1}(i) = \ell \in
\myB$. We have $b_{A_1}(\ell) = b_{\kLI_{t}'}(\ell)$ and $i \in
\Dom{r_{t}} \intersect \Dom{r_{t_p}}$, thus $c(s(t, v_i)) =
\ell$. Since $\ell \in \myB$ we have $c(t, \ell) \neq c(t_p, \ell)$,
which means $s(t, v_i) \neq s(t_p, v_i)$. But this contradicts our
assumption that $\myR = \emptyset$. Therefore, such an $i$ cannot
exist. 
\end{proof}

\begin{proof}[{Proof that $g_{t_p}$ and $g_{A_2}$ are compatible}]
Assume for contradiction that there exists an $i \in \Dom{g_{t_p}} \intersect \Dom{g_{A_2}}$ such that $g_{t_p}(i) \neq g_{A_2}(i)$. 
We have that $g_{A_2}(i) = g_{t}(i)$. Therefore, $g_{t_p}(i) =
c(s(t_p, v_i)) \neq c(s(t, v_i)) = g_{t}(i)$. Then $s(t_p, v_i) \neq
s(t, v_i)$ must hold, which contradicts $\myR = \emptyset$. 
\end{proof}

\smallskip

\noindent
\textbf{In case that $\myR \neq \emptyset$:}

\begin{proof}[{Proof that $\Img{f} \subseteq \myB$ and $f$ is a transition for $g_{A_1}$}]
Let $\ell$ be an element of $\Img{f}$, and $i \in (\Dom{r_t}
\intersect \Dom{r_{t_p}})$ such that $s(t, v_i) \neq s(t_p, v_i)$ and
$f(i) = g_{t_p}(i) = \ell$. Now, assume for contradiction that $\ell
\not\in \myB$, i.e.\ $c(t, \ell) = c(t_p, \ell)$. This would mean that
$s(t, v_i) = s(t_p, v_i)$, because $s$ is a schedule. This clearly is
a contradiction. 
\end{proof}

\begin{proof}[{Proof that $\myBTilde \subseteq \Dom{b_{\tilde{A}_1}} \setminus \Img{g_{\tilde{A}_1}}$}]
It is clear that
$\myBTilde \subseteq (\Dom{b_{\kLI_t'}} \setminus \myBHat) \subseteq
\Dom{b_{\tilde{A}_1}}$. Assume for contradiction, that there exists an
$i \in \Dom{g_{\tilde{A}_1}}$ such that $g_{\tilde{A}_1}(i) = \ell \in
\myBTilde$. Since $\ell \not\in \Img{f}$ we have $g_{\tilde{A}_1}(i) =
g_{\kLI_t'}(i) = \ell$. Since $\ell \in \myBTilde \subseteq \myB$ we
have $c(t, \ell) \neq c(t_p, \ell)$, which means $s(t, v_i) \neq
s(t_p, v_i)$. Therefore, $i \in \Dom{f}$ and $f(i) = \ell$, i.e.\
$\ell \in \Img{f}$. This contradicts our assumption that $\ell \in
\myBTilde = \myB \setminus \Img{f}$. 
\end{proof}

\begin{proof}[{Proof that $g_{t_p}$ and $g_{A_2}$ are compatible}]
Assume for contradiction that there exists an $i \in \Dom{g_{t_p}} \intersect \Dom{g_{A_2}}$ such that $g_{t_p}(i) \neq g_{A_2}(i)$. 
If $i \in \Def{f}$, then $f(i) = g_{t_p}(i)$ and since $f \subseteq
g_{A_2}$ we have a contradiction. Therefore, let $i \not\in
\Def{f}$. Then we have that $g_{A_2}(i) = g_{t}(i)$. Therefore,
$g_{t_p}(i) = c(s(t_p, v_i)) \neq c(s(t, v_i)) = g_{t}(i)$. Then
$s(t_p, v_i) \neq s(t, v_i)$ must hold, which contradicts $i \not\in
\Def{f}$. 
\end{proof}
 
\medskip

\subsection{An example illustrating the construction in the proof of
  part \ref{item:IEHWkInGLIk} of Theorem~\ref{thm:IEHWkIsGLIk}}
\label{appendix:subsec:example} 

In this section we will define mappings \enquote{inline} like this: $f
\isdef \set{ a \to b, u \to v, x \to y }$ means $\Dom{f} = \set{ a, u,
  x}$ and $f(a) = b$, $f(u) = v$ and $f(x) = y$. 

We will draw a lot of guarded $2$-labeled incidence graphs in this
example. In the same way as in the proof, the red vertex $v_i$ is
either labeled with $i$ or it is not labeled at all. Therefore, we
colour $v_i$ \textcolor{cred}{red} to indicate, that $r(i) = v_i$ and
black to indicate that $i \not\in \Dom{r}$, i.e.\ the label $i$ is not
used. Since we only have two blue labels, we label a blue vertex with
$\textcolor{cblue}{1}$ or $\textcolor{cblue}{2}$ respectively to
indicate, that the corresponding label is mapped to that vertex. If a
blue vertex has no label, we give it an alphabetical name -- but we
colour it blue anyway to better distinguish it from red vertices. 

Edges between blue and red vertices are indicated by a dotted line
\tikz[inner sep=0pt, baseline=(base)]{ \node (base) at (0,-.5ex) {};
  \draw[edge] (0,0) -- (0.5,0); }. Since all guarded $2$-labeled
incidence graphs have real guards, a blue vertex with label $j$ is
only a guard of a red vertex $i$, if they are neighbours, therefore we
draw the edge dotted and thick \tikz[inner sep=0pt, baseline=(base)]{
  \node (base) at (0,-.5ex) {}; \draw[edge, guard] (0,0) -- (0.5,0); }
to indicate that the blue vertex is the red vertex' guard.  

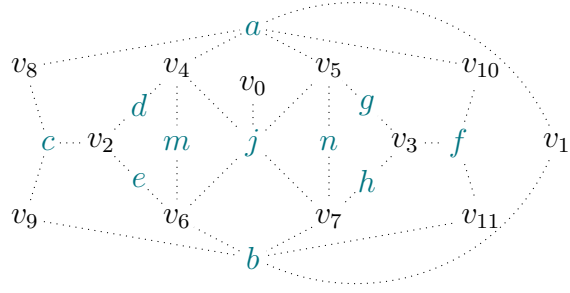
\begin{figure}
\centering

\begin{tikzpicture}[
	every node/.style={regular},
	]
	\node (S) at (4,0) {$v_1$};
	\node (Z) at (-2,0) {$v_2$};
	\node (ZP) at (2,0) {$v_3$};

	\node (X) at (-1,1) {$v_4$};
	\node (XP) at (1,1) {$v_5$};
	\node (Y) at (-1,-1) {$v_6$};
	\node (YP) at (1,-1) {$v_7$};

	\node (C) at (-3,1) {$v_8$};
	\node (CP) at (-3,-1) {$v_9$};
	\node (F) at (3,1) {$v_{10}$};
	\node (FP) at (3,-1) {$v_{11}$};

	\node[blue] (a) at ($(X)!0.5!(XP) + (0pt, 15pt)$) {$a$};
	\draw[dotted] (C) -- (a) -- (X);
	\draw[dotted] (XP) -- (a) -- (F);
	\draw[dotted] (S) edge[bend right=45] (a);

	\node[circle, blue] (b) at ($(Y)!0.5!(YP) + (0pt, -15pt)$) {$b$};
	\draw[dotted] (CP) -- (b) -- (Y);
	\draw[dotted] (YP) -- (b) -- (FP);
	\draw[dotted] (S) edge[bend left=45] (b);

	\node[blue] (c) at ($(C)!0.5!(CP)!0.3!(Z)$) {$c$};
	\draw[dotted] (C) -- (c) -- (Z);
	\draw[dotted] (CP) -- (c);

	\node[blue] (d) at ($(Z)!0.5!(X)$) {$d$};
	\draw[dotted] (Z) -- (d) -- (X);

	\node[blue] (e) at ($(Z)!0.5!(Y)$) {$e$};
	\draw[dotted] (Z) -- (e) -- (Y);

	\node[blue] (f) at ($(F)!0.5!(FP)!0.3!(ZP)$) {$f$};
	\draw[dotted] (F) -- (f) -- (ZP);
	\draw[dotted] (FP) -- (f);

	\node[blue] (g) at ($(ZP)!0.5!(XP)$) {$g$};
	\draw[dotted] (ZP) -- (g) -- (XP);

	\node[blue] (h) at ($(ZP)!0.5!(YP)$) {$h$};
	\draw[dotted] (ZP) -- (h) -- (YP);

	\node[blue] (m) at ($(X)!0.5!(Y)$) {$m$};
	\draw[dotted] (X) -- (m) -- (Y);

	\node[blue] (n) at ($(XP)!0.5!(YP)$) {$n$};
	\draw[dotted] (XP) -- (n) -- (YP);

	\node[blue] (j) at (0,0) {$j$};
	\node (J) at ($(a)!0.5!(j)$) {$v_0$};
	\draw[dotted] (X) -- (j) -- (XP);
	\draw[dotted] (Y) -- (j) -- (YP);
	\draw[dotted] (J) -- (j);
\end{tikzpicture}
 	\caption[]{An incidence graph $\inzidenz{}$ ($\inzidenz{}$ is a modified version of $\mathcal{Q}_2$ in \cite{Gottlob2003}).
	Blue nodes are depicted in blue and named $a,b,\ldots{}$, red nodes are depicted in black and named $v_1,v_2,\ldots{}$, and edges are indicated by dotted lines.}\label{fig:example_incidence_graph}
\end{figure}

\begin{figure}
	\centering
	\includegraphics[max width = \columnwidth]{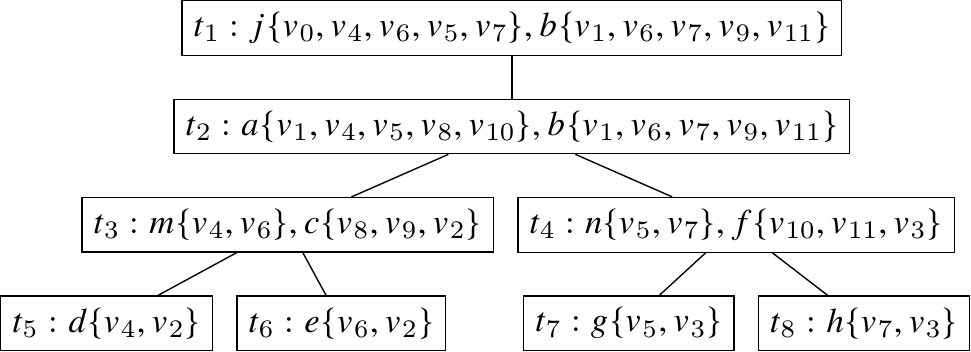}
	\caption[]{An ehd $D$ of the incidence graph $\I$ of
          Figure~\ref{fig:example_incidence_graph}, witnessing that
          $\ehw{\I} \leq 2$. The names of the tree-nodes are
          $t_1,t_2,\ldots{}$, the tree-node's bag is the set of all the
          depicted nodes $v_i$, and the tree-node's cover consists of the
          nodes depicted directly left to each symbol \enquote{$\{$}.}\label{fig:example_decomposition}
\end{figure}

We consider the particular incidence graph $\I$ and ehd $D=(T,\DBag,\DCover)$ depicted in
the Figures~\ref{fig:example_incidence_graph} and
\ref{fig:example_decomposition}, and we let $k\deff 2$.
Note that $D$ has the properties provided by Lemma~\ref{lemma:monotone_ehd}.
Defining a $2$-colouring $c$ is easy: we colour each blue node of $\I$
that is depicted first in its tree-node with the number 1, and all
other blue nodes are coloured with the number 2
(e.g., $c(j) = c(m) = 1$ and $c(b) = c(f) = 2$). In total, we get the following $2$-colouring $c$.
\begin{center}
\begin{tabular}{c|c c c c c c c c c c c}
$e \in \bV{\inzidenz{}}$ & $a$ & $b$ & $c$ & $d$ & $e$ & $f$ & $g$ & $h$ & $j$ & $m$ & $n$ \\\hline
$c(e)$ & $1$ & $2$ & $2$ & $1$ & $1$ & $2$ & $1$ & $1$ & $1$ & $1$ & $1$
\end{tabular}	
\end{center}

Our schedule is easy to define as well, since in many cases, the
neighbourhoods of the blue vertices covering the bags do not
intersect. For example, we have no other choice but to set $s(t_5,
v_2) = s(t_5, v_4) = d$ and $s(t_3, v_4) = m$, $s(t_3, v_9) = c$ and
so on. The complete schedule $s$ looks like this: 

\begin{center}
\begin{tabular}{c|c c c c c c c c}
$s$ 		& $t_1$	& $t_2$	& $t_3$	& $t_4$	& $t_5$	& $t_6$	& $t_7$	& $t_8$ 
\\\hline
$v_0$		& $j$	& --	& --	& --	& --	& --	& --	& --	\\
$v_1$		& $b$	& $b$	& --	& --	& --	& --	& --	& --	\\
$v_2$		& --	& --	& $c$	& --	& $d$	& $e$	& --	& --	\\
$v_3$		& --	& --	& --	& $f$	& --	& --	& $g$	& $h$	\\
$v_4$		& $j$	& $a$	& $m$	& $d$	& --	& --	& --	& --	\\
$v_5$		& $j$	& $a$	& --	& $n$	& --	& --	& $g$	& --	\\
$v_6$		& $j$	& $b$	& $m$	& --	& --	& $e$	& --	& --	\\
$v_7$		& $j$	& $b$	& --	& $n$	& --	& --	& --	& $h$	\\
$v_8$		& --	& $a$	& $c$	& --	& --	& --	& --	& --	\\
$v_9$		& $b$	& $b$	& $c$	& --	& --	& --	& --	& --	\\
$v_{10}$	& --	& $a$	& --	& $f$	& --	& --	& --	& --	\\
$v_{11}$	& $b$	& $b$	& --	& $f$	& --	& --	& --	& --	\\
\end{tabular}	
\end{center}

Now, to construct $\inzidenz{}$ we start at the bottom of $D$, so let
us take a look at the nodes $t_5$, $t_6$ and their parent $t_3$. Then,
$\kLI_{t_3}$, $\kLI_{t_5}$ and $\kLI_{t_6}$ look like this: 
\begin{center}
	\begin{tabularx}{\columnwidth}{C C C}
		$\kLI_{t_3}$:
	 	&
		$\kLI_{t_5}$:
		&
		$\kLI_{t_6}$:
		\\
		\adjustbox{valign=c}{

\begin{tikzpicture}[
	every node/.style={regular}
	]
	\node[red] (Z) at (-2,0) {$v_2$};

	\node[red] (X) at (-1,1) {$v_4$};
	\node[red] (Y) at (-1,-1) {$v_6$};

	\node[red] (C) at (-3,1) {$v_8$};
	\node[red] (CP) at (-3,-1) {$v_9$};

	\node[blue] (c) at ($(C)!0.5!(CP)!0.3!(Z)$) {$2$};
	\draw[guard, edge] (C) -- (c) -- (Z);
	\draw[guard, edge] (CP) -- (c);

	\node[blue] (m) at ($(X)!0.5!(Y)$) {$1$};
	\draw[guard, edge] (X) -- (m) -- (Y);
\end{tikzpicture}
 		}
		&
		\adjustbox{valign=c}{

\begin{tikzpicture}[
	every node/.style={regular},
	]
	\node[red] (Z) at (-2,0) {$v_2$};

	\node[red] (X) at (-1,1) {$v_4$};

	\node[blue] (d) at ($(Z)!0.5!(X)$) {$1$};
	\draw[edge, guard] (Z) -- (d) -- (X);
\end{tikzpicture}
 		}
		&
		\adjustbox{valign=c}{

\begin{tikzpicture}[
	every node/.style={regular},
	]
	\node[red] (Z) at (-2,0) {$v_2$};
	\node[red] (Y) at (-1,-1) {$v_6$};

	\node[blue] (e) at ($(Z)!0.5!(Y)$) {$1$};
	\draw[edge, guard] (Z) -- (e) -- (Y);
\end{tikzpicture}
 		}
	\end{tabularx}
\end{center}

We want to glue $\kLI_{t_5}$ and $\kLI_{t_6}$ with $\kLI_{t_3}$, since
$t_3$ is the parent of $t_5$ and $t_6$. If we approach this in a naive
way, we would immediately fail, since we would merge the blue vertices
with label $1$, but the $1$-labeled vertex in e.g.\ $\kLI_{t_5}$
should represent $d$, whereas the one in $t_3$ should correspond to
$m$. Hence, simply glueing these would result in the wrong incidence
graph. 

Here come the transitions into play. We want to apply a transition
$f_5$ to $\kLI_{t_5}$ and $f_6$ to $\kLI_{t_6}$ respectively, so that
we mitigate the aforementioned issue. If we want to remove the label
$1$ and insert a new one with this label, we have to find new guards
for $v_4$ and $v_2$. The schedule $s$ helps us with this. It tells us,
that $2$ is a suitable guard for $v_2$ and $1$ is suitable for $v_4$
since $c(s(t_3, v_2)) = c(c) = 2$ and $c(s(t_3, v_4)) = c(m) = 1$. We
use the guards according to $t_3$, because then we can be sure that we
do not make the same mistake as outlined above when glueing. Hence, we
set $f_5 \isdef \set{  2 \to 2, 4 \to 1 }$. With the same argument we
get $f_6 \isdef \set{ 2 \to 2, 6 \to 1 }$. Applying these transitions
gives us the following: 
\begin{center}
	\begin{tabularx}{\columnwidth}{C C}
		$A_{t_5} \isdef \switch{\kLI_{t_5}}{f_5}$:\smallskip
		&
		$A_{t_6} \isdef \switch{\kLI_{t_6}}{f_6}$:
		\\
		\adjustbox{valign=c}{

\begin{tikzpicture}[
	every node/.style={regular}
	]
	\node[red] (Z) at (-2,0) {$v_2$};

	\node[red] (X) at (-1,1) {$v_4$};

	\node[blue] (c) at (-2.7,0) {$2$};
	\draw[edge, guard] (c) -- (Z);

	\node[blue] (d) at ($(Z)!0.5!(X)$) {$d$};
	\draw[edge] (Z) -- (d) -- (X);

	\node[blue] (m) at (-1,0) {$1$};
	\draw[edge, guard] (X) -- (m);
\end{tikzpicture}
 		}
		&
		\adjustbox{valign=c}{

\begin{tikzpicture}[
	every node/.style={regular},
	]
	\node[red] (Z) at (-2,0) {$v_2$};

	\node[red] (Y) at (-1,-1) {$v_6$};
	\node[blue] (c) at (-2.7,0) {$2$};
	\draw[edge, guard] (c) -- (Z);

	\node[blue] (e) at ($(Z)!0.5!(Y)$) {$e$};
	\draw[edge] (Z) -- (e) -- (Y);

	\node[blue] (m) at (-1,0) {$1$};
	\draw[edge, guard] (m) -- (Y);
\end{tikzpicture}
 		}
	\end{tabularx}
\end{center}
If we glue these with $\kLI_{t_3}$, we see that nothing bad happens any more:
\begin{center}
	$\kLI_{t_3}' \isdef \glue{\glue{\kLI_{n_3}}{A_{t_5}}}{A_{t_6}}$:\\

\begin{tikzpicture}[
	every node/.style={circle},
	]
	\node[red] (Z) at (-2,0) {$v_2$};

	\node[red] (X) at (-1,1) {$v_4$};
	\node[red] (Y) at (-1,-1) {$v_6$};

	\node[red] (C) at (-3,1) {$v_8$};
	\node[red] (CP) at (-3,-1) {$v_9$};
	\node[blue] (c) at ($(C)!0.5!(CP)!0.3!(Z)$) {$2$};
	\draw[edge, guard] (C) -- (c) -- (Z);
	\draw[edge, guard] (CP) -- (c);

	\node[blue] (d) at ($(Z)!0.5!(X)$) {$d$};
	\draw[edge] (Z) -- (d) -- (X);

	\node[blue] (e) at ($(Z)!0.5!(Y)$) {$e$};
	\draw[edge] (Z) -- (e) -- (Y);

	\node[blue] (m) at ($(X)!0.5!(Y)$) {$1$};
	\draw[edge, guard] (X) -- (m) -- (Y);
\end{tikzpicture}
 \end{center}
We now have constructed a guarded $2$-labeled incidence graph
$\kLI_{t_3}'$, that is isomorphic to $\assig{t_3}$ -- the subgraph
induced by $T_{t_3}$. 

Next we apply the same procedure to the other branch of $D$, i.e.\ to $t_7$, $t_8$ and their parent $t_4$:
\begin{center}
	\begin{tabularx}{\columnwidth}{C C C}
		$\kLI_{t_4}$:
		&
		$\kLI_{t_7}$:
		&
		$\kLI_{t_8}$:
		\\
		\adjustbox{valign=c}{

\begin{tikzpicture}[
	every node/.style={regular},
	]
	\node[red] (ZP) at (2,0) {$v_3$};

	\node[red] (XP) at (1,1) {$v_5$};
	\node[red] (YP) at (1,-1) {$v_7$};

	\node[red] (F) at (3,1) {$v_{10}$};
	\node[red] (FP) at (3,-1) {$v_{11}$};

	\node[blue] (f) at ($(F)!0.5!(FP)!0.3!(ZP)$) {$2$};
	\draw[edge, guard] (F) -- (f) -- (ZP);
	\draw[edge, guard] (FP) -- (f);

	\node[blue] (n) at ($(XP)!0.5!(YP)$) {$1$};
	\draw[edge, guard] (XP) -- (n) -- (YP);
\end{tikzpicture}
 		}
		&
		\adjustbox{valign=c}{

\begin{tikzpicture}[
	every node/.style={regular},
	]
	\node[red] (ZP) at (2,0) {$v_3$};
	\node[red] (XP) at (1,1) {$v_5$};
	
	\node[blue] (g) at ($(ZP)!0.5!(XP)$) {$1$};
	\draw[edge, guard] (ZP) -- (g) -- (XP);
\end{tikzpicture}
 		}
		&
		\adjustbox{valign=c}{

\begin{tikzpicture}[
	every node/.style={circle},
	]
	\node[red] (ZP) at (2,0) {$v_3$};
	\node[red] (YP) at (1,-1) {$v_7$};

	\node[blue] (h) at ($(ZP)!0.5!(YP)$) {$1$};
	\draw[edge, guard] (ZP) -- (h) -- (YP);
\end{tikzpicture}
 		}
	\end{tabularx}
\end{center}
We set $f_7 \isdef \set{ 3 \to 2, 5 \to 1 }$ and $f_8 \isdef \set{ 3 \to 2, 7 \to 1 }$, applying these yields:
\begin{center}
	\begin{tabularx}{\columnwidth}{C C}
		$A_{t_7} \isdef \switch{\kLI_{t_7}}{f_7}$:\smallskip
		&
		$A_{t_8} \isdef \switch{\kLI_{t_8}}{f_8}$:
		\\
		\adjustbox{valign=c}{

\begin{tikzpicture}[
	every node/.style={regular},
	]
	\node[red] (ZP) at (2,0) {$v_3$};
	\node[red] (XP) at (1,1) {$v_5$};

	\node[blue] (f) at (2.7,0) {$2$};
	\draw[edge, guard] (f) -- (ZP);

	\node[blue] (g) at ($(ZP)!0.5!(XP)$) {$g$};
	\draw[edge] (ZP) -- (g) -- (XP);

	\node[blue] (n) at (1,0) {$1$};
	\draw[edge, guard] (XP) -- (n);
\end{tikzpicture}
 		}
		&
		\adjustbox{valign=c}{

\begin{tikzpicture}[
	every node/.style={circle},
	]
	\node[red] (ZP) at (2,0) {$v_3$};

	\node[red] (YP) at (1,-1) {$v_7$};
	\node[blue] (f) at (2.7,0) {$2$};
	\draw[edge, guard] (f) -- (ZP);

	\node[blue] (h) at ($(ZP)!0.5!(YP)$) {$h$};
	\draw[edge] (ZP) -- (h) -- (YP);

	\node[blue] (n) at (1,0) {$1$};
	\draw[edge, guard] (n) -- (YP);
\end{tikzpicture}
 		}
	\end{tabularx}
\end{center}
And then glueing gives us again a guarded $2$-labeled incidence graph $\kLI_{t_4}'$ isomorphic to the subgraph $\assig{t_4}$ induced by $T_{t_4}$.
\begin{center}
	$\kLI_{t_4}' \isdef \glue{\glue{\kLI_{t_4}}{A_{t_7}}}{A_{t_8}}$:\\

\begin{tikzpicture}[
	every node/.style={regular},
	]
	\node[red] (ZP) at (2,0) {$v_3$};

	\node[red] (XP) at (1,1) {$v_5$};
	\node[red] (YP) at (1,-1) {$v_7$};

	\node[red] (F) at (3,1) {$v_{10}$};
	\node[red] (FP) at (3,-1) {$v_{11}$};

	\node[blue] (f) at ($(F)!0.5!(FP)!0.3!(ZP)$) {$2$};
	\draw[edge, guard] (F) -- (f) -- (ZP);
	\draw[edge, guard] (FP) -- (f);

	\node[blue] (g) at ($(ZP)!0.5!(XP)$) {$g$};
	\draw[edge] (ZP) -- (g) -- (XP);

	\node[blue] (h) at ($(ZP)!0.5!(YP)$) {$h$};
	\draw[edge] (ZP) -- (h) -- (YP);

	\node[blue] (n) at ($(XP)!0.5!(YP)$) {$1$};
	\draw[edge, guard] (XP) -- (n) -- (YP);
\end{tikzpicture}
 \end{center}

In the next step we want to handle $t_3$, $t_4$ and their parent
$t_2$. Obviously, this time we do not glue $\kLI_{t_3}$, $\kLI_{t_4}$
with $\kLI_{t_2}$. Instead, we glue the incidence graphs we just
constructed, that is $\kLI_{t_3}'$ and $\kLI_{t_4}'$. 

Again, we first have to apply transitions to make them compatible to
$\kLI_{t_2}$. For $t_3$, we want $v_4$, $v_8$ to be guarded by a new
vertex labeled $1$ and $v_6, v_9$ by one labeled $2$. So we set $f_3
\isdef \set{ 4 \to 1, 6 \to 2, 8 \to 1, 9 \to 2 }$ and analogously
$f_4 \isdef \set{ 5 \to 1, 7 \to 2, 10 \to 1, 11 \to 2 }$. But, what
do we do with $v_2$ and $v_3$? Since they no longer appear in the bag
of $t_2$, we know that they will no longer appear in the construction
\emph{at all} because of the connectedness of vertices, so there is no
need to transition them to new guards. But, because we do not include
them, $f_3$ is not a transition for $\kLI_{t_3}'$ (and $f_4$ not for
$\kLI_{t_4}'$), since $g_{\kLI_{t_3}}(2) = 2 \in \Img{f_3}$ but $2
\not\in \Def{f_3}$, thus violating the requirements of a
transition. Analogously $g_{t_4}(3) = 2 \in \Img{f_4}$, but $3 \not\in
\Def{f_4}$. But, as we have seen a moment ago, we can easily remedy
this situation for $\kLI_{t_3}'$ by removing the label $2$ from the
red vertex $v_2$, by setting $X_{r,t_3} \isdef \set{ 2 }$. 
\begin{center}
$A_{t_3} \isdef \switch{\reclaimR{\kLI_{t_3}'}{X_{r,t_3}}}{f_3}$:
\\

\begin{tikzpicture}[
	every node/.style={regular},
	]
	\node (Z) at (-2,0) {$v_2$};

	\node[red] (X) at (-1,1) {$v_4$};
	\node (XP) at (1,1) {$\phantom{v_5}$};
	\node[red] (Y) at (-1,-1) {$v_6$};
	\node (YP) at (1,-1) {$\phantom{v_7}$};

	\node[red] (C) at (-3,1) {$v_8$};
	\node[red] (CP) at (-3,-1) {$v_9$};

	\node[blue] (a) at ($(X)!0.5!(XP) + (0pt, 15pt)$) {$1$};
	\draw[edge, guard] (C) -- (a) -- (X);

	\node[blue] (b) at ($(Y)!0.5!(YP) + (0pt, -15pt)$) {$2$};
	\draw[edge, guard] (CP) -- (b) -- (Y);

	\node[blue] (c) at ($(C)!0.5!(CP)!0.3!(Z)$) {$c$};
	\draw[edge] (C) -- (c) -- (Z);
	\draw[edge] (CP) -- (c);

	\node[blue] (d) at ($(Z)!0.5!(X)$) {$d$};
	\draw[edge] (Z) -- (d) -- (X);

	\node[blue] (e) at ($(Z)!0.5!(Y)$) {$e$};
	\draw[edge] (Z) -- (e) -- (Y);

	\node[blue] (m) at ($(X)!0.5!(Y)$) {$m$};
	\draw[edge] (X) -- (m) -- (Y);
\end{tikzpicture}
 \end{center}
In the same way, removing the label $3$ from $v_3$ in $\kLI_{t_4}'$
fixes our violation of the requirements, turning $f_4$ into a proper
transition. I.e.\ we set $X_{r,t_4} \isdef \set{ 3 }$. 
\begin{center}
$A_{t_4} \isdef \switch{\reclaimR{\kLI_{t_4}'}{X_{r,t_4}}}{f_4}$:
\\

\begin{tikzpicture}[
	every node/.style={regular},
	]
	\node (ZP) at (2,0) {$v_3$};

	\node (X) at (-1,1) {$\phantom{v_4}$};
	\node[red] (XP) at (1,1) {$v_5$};
	\node (Y) at (-1,-1) {$\phantom{v_6}$};
	\node[red] (YP) at (1,-1) {$v_7$};

	\node[red] (F) at (3,1) {$v_{10}$};
	\node[red] (FP) at (3,-1) {$v_{11}$};

	\node[blue] (a) at ($(X)!0.5!(XP) + (0pt, 15pt)$) {$1$};
	\draw[guard, edge] (XP) -- (a) -- (F);

	\node[blue] (b) at ($(Y)!0.5!(YP) + (0pt, -15pt)$) {$2$};
	\draw[guard, edge] (YP) -- (b) -- (FP);

	\node[blue] (f) at ($(F)!0.5!(FP)!0.3!(ZP)$) {$f$};
	\draw[edge] (F) -- (f) -- (ZP);
	\draw[edge] (FP) -- (f);

	\node[blue] (g) at ($(ZP)!0.5!(XP)$) {$g$};
	\draw[edge] (ZP) -- (g) -- (XP);

	\node[blue] (h) at ($(ZP)!0.5!(YP)$) {$h$};
	\draw[edge] (ZP) -- (h) -- (YP);

	\node[blue] (n) at ($(XP)!0.5!(YP)$) {$n$};
	\draw[edge] (XP) -- (n) -- (YP);
\end{tikzpicture}
 \end{center}

Glueing these with $\kLI_{t_2}$ gives us the following:
\begin{center}
$\kLI_{t_2}' \isdef \glue{\glue{\kLI_{t_2}}{A_{t_3}}}{A_{t_4}}$:
\\

\begin{tikzpicture}[
	every node/.style={regular},
	]
	\node[red] (S) at (4,0) {$v_1$};
	\node (Z) at (-2,0) {$v_2$};
	\node (ZP) at (2,0) {$v_3$};

	\node[red] (X) at (-1,1) {$v_4$};
	\node[red] (XP) at (1,1) {$v_5$};
	\node[red] (Y) at (-1,-1) {$v_6$};
	\node[red] (YP) at (1,-1) {$v_7$};

	\node[red] (C) at (-3,1) {$v_8$};
	\node[red] (CP) at (-3,-1) {$v_9$};
	\node[red] (F) at (3,1) {$v_{10}$};
	\node[red] (FP) at (3,-1) {$v_{11}$};

	\node[blue] (a) at ($(X)!0.5!(XP) + (0pt, 15pt)$) {$1$};
	\draw[guard, edge] (C) -- (a) -- (X);
	\draw[guard, edge] (XP) -- (a) -- (F);
	\draw[edge] (S) edge[bend right=45] (a);

	\node[blue] (b) at ($(Y)!0.5!(YP) + (0pt, -15pt)$) {$2$};
	\draw[guard, edge] (CP) -- (b) -- (Y);
	\draw[guard, edge] (YP) -- (b) -- (FP);
	\draw[edge, guard] (S) edge[bend left=45] (b);

	\node[blue] (c) at ($(C)!0.5!(CP)!0.3!(Z)$) {$c$};
	\draw[edge] (C) -- (c) -- (Z);
	\draw[edge] (CP) -- (c);

	\node[blue] (d) at ($(Z)!0.5!(X)$) {$d$};
	\draw[edge] (Z) -- (d) -- (X);

	\node[blue] (e) at ($(Z)!0.5!(Y)$) {$e$};
	\draw[edge] (Z) -- (e) -- (Y);

	\node[blue] (f) at ($(F)!0.5!(FP)!0.3!(ZP)$) {$f$};
	\draw[edge] (F) -- (f) -- (ZP);
	\draw[edge] (FP) -- (f);

	\node[blue] (g) at ($(ZP)!0.5!(XP)$) {$g$};
	\draw[edge] (ZP) -- (g) -- (XP);

	\node[blue] (h) at ($(ZP)!0.5!(YP)$) {$h$};
	\draw[edge] (ZP) -- (h) -- (YP);

	\node[blue] (m) at ($(X)!0.5!(Y)$) {$m$};
	\draw[edge] (X) -- (m) -- (Y);

	\node[blue] (n) at ($(XP)!0.5!(YP)$) {$n$};
	\draw[edge] (XP) -- (n) -- (YP);
\end{tikzpicture}
 \end{center}

Now we have to repeat this procedure one last time. We want $v_4, v_5,
v_6$ and $v_7$ to be guarded by a new vertex labeled $1$. For that, we
remove the labels from $v_8$ and $v_{10}$. Thus, we define $f_2 \isdef
\set{ 4 \to 1, 5 \to 1, 6 \to 1, 7 \to 1 }$ and $X_{r,t_2} \isdef
\set{ 8, 10 }$. Glueing the resulting $2$-guarded incidence graph
$A_{t_2} \isdef \switch{(\reclaimR{\kLI_{t_2}'}{X_{r,t_2}})}{f_2}$
with $\kLI_{t_1}$ we get: 
\begin{center}
$\kLI_{t_1}' \isdef \glue{\kLI_{t_1}}{A_{t_2}}$:
\\

\begin{tikzpicture}[
	every node/.style={regular},
	]
	\node[red] (S) at (4,0) {$v_1$};
	\node (Z) at (-2,0) {$v_2$};
	\node (ZP) at (2,0) {$v_3$};

	\node[red] (X) at (-1,1) {$v_4$};
	\node[red] (XP) at (1,1) {$v_5$};
	\node[red] (Y) at (-1,-1) {$v_6$};
	\node[red] (YP) at (1,-1) {$v_7$};

	\node (C) at (-3,1) {$v_8$};
	\node[red] (CP) at (-3,-1) {$v_9$};
	\node (F) at (3,1) {$v_{10}$};
	\node[red] (FP) at (3,-1) {$v_{11}$};

	\node[blue] (a) at ($(X)!0.5!(XP) + (0pt, 15pt)$) {$a$};
	\draw[edge] (C) -- (a) -- (X);
	\draw[edge] (XP) -- (a) -- (F);
	\draw[edge] (S) edge[bend right=45] (a);

	\node[blue] (b) at ($(Y)!0.5!(YP) + (0pt, -15pt)$) {$2$};
	\draw[edge, guard] (CP) -- (b) -- (FP);
	\draw[edge, guard] (S) edge[bend left=45] (b);
	\draw[edge] (YP) -- (b) -- (Y);

	\node[blue] (c) at ($(C)!0.5!(CP)!0.3!(Z)$) {$c$};
	\draw[edge] (C) -- (c) -- (Z);
	\draw[edge] (CP) -- (c);

	\node[blue] (d) at ($(Z)!0.5!(X)$) {$d$};
	\draw[edge] (Z) -- (d) -- (X);

	\node[blue] (e) at ($(Z)!0.5!(Y)$) {$e$};
	\draw[edge] (Z) -- (e) -- (Y);

	\node[blue] (f) at ($(F)!0.5!(FP)!0.3!(ZP)$) {$f$};
	\draw[edge] (F) -- (f) -- (ZP);
	\draw[edge] (FP) -- (f);

	\node[blue] (g) at ($(ZP)!0.5!(XP)$) {$g$};
	\draw[edge] (ZP) -- (g) -- (XP);

	\node[blue] (h) at ($(ZP)!0.5!(YP)$) {$h$};
	\draw[edge] (ZP) -- (h) -- (YP);

	\node[blue] (m) at ($(X)!0.5!(Y)$) {$m$};
	\draw[edge] (X) -- (m) -- (Y);

	\node[blue] (n) at ($(XP)!0.5!(YP)$) {$n$};
	\draw[edge] (XP) -- (n) -- (YP);

	\node[blue] (j) at (0,0) {$1$};
	\node[red] (J) at ($(a)!0.5!(j)$) {$v_0$};
	\draw[guard, edge] (X) -- (j) -- (XP);
	\draw[guard, edge] (Y) -- (j) -- (YP);
	\draw[guard, edge] (J) -- (j);
\end{tikzpicture}
 \end{center}
Which is obviously isomorphic to the incidence graph $\inzidenz{E}$ depicted in Figure~\ref{fig:example_incidence_graph}.

Note that we did not remove blue labels in this construction,
which is possible (and necessary) by the definition of $k$-labeled
incidence graphs. We need to do this in cases where we do not
have labeled red vertices in the bags connecting the blue ones, so
that we can apply a transition. Imagine for example the case that
neighbouring bags are disjoint. We can handle this by removing all
red labels and then all blue labels from one of these incidence
graphs. On the other hand, we do need transitions, as we can easily
verify that we can not construct the incidence graph $\inzidenz{E}$
by removing blue labels instead of applying transitions, without
\enquote{temporarily leaving labeled red vertices unguarded}, which
must be prohibited. Because if we were to allow this, then it would be
easy to construct $k$-guarded incidence graphs with entangled
hypertree width larger than $k$. 
 
\medskip

\subsection{Proof of Lemma \ref{lem:hom_of_glued_is_product}}
\label{appendix:subsec:lemma:hom_of_glued_is_product}

We first prove the lemma on
\enquote{ordinary}
$k$-labeled incidence graphs.

\begin{enumerate}[wide]
	\item[\textbf{Claim~1:}] 
	Let $\kLI$, $\kLI_1$, $\kLI_2$ be $k$-labeled incidence
        graphs. It holds that $\hom{\glue{\kLI_1}{\kLI_2}}{\kLI} =
        \hom{\kLI_1}{\kLI} \cdot \hom{\kLI_2}{\kLI}$. 
	\begin{proof}
          We
          provide
          a bijection $f$ from $\Hom{\glue{\kLI_1}{\kLI_2}}{\kLI}$ to
          $\Hom{\kLI_1}{\kLI} \times \Hom{\kLI_2}{\kLI}$. \\ For $(h_R, h_B) \in \Hom{\glue{\kLI_1}{\kLI_2}}{\kLI}$ we define 
		\[
			f((h_R, h_B)) \isdef ((h_{1,R}, h_{1,B}), (h_{2,R}, h_{2,B}))
		\]
		where
		\begin{equation*}
			h_{i,R}(v) = h_R(\pi_{i,R}(v)) \; \text{ and } \; h_{i,B}(e) = h_B(\pi_{i,B}(e))
		\end{equation*}
		for every $i \in [2]$ and all $v \in \rV{\I_{\kLI_i}}$ and all $e \in \bV{\I_{\kLI_i}}$.
		We have to prove that
		\begin{itemize}
			\item $f$ is well-defined, i.e.\ $(h_{1,R}, h_{1,B})$ and $(h_{2,R}, h_{2,B})$ are homomorphisms;
			\item $f$ is injective;
			\item and $f$ is surjective.
		\end{itemize}
		
		We prove that $(h_{1,R}, h_{1,B})$ is a
                homomorphism. In an analogous
                way, one can prove that $(h_{2,R}, h_{2,B})$ is homomorphism as well.
		
		Let $(e,v) \in \E{\I_{\kLI_1}}$. We must show that
                this edge exists in the image as well, i.e.\
                $(h_{1,B}(e),\allowbreak h_{1,R}(v)) \in
                \E{\I_{\kLI}}$. We have $h_{1,R}(v) =
                h_R(\pi_{1,R}(v)) = h_R([(v,1)]_{\sim_R})$ and
                $h_{1,B}(e) = h_B(\pi_{1,B}(e)) =
                h_B([(e,1)]_{\sim_B})$. By definition, we have
                $([(e,1)]_{\sim_B}, [(v,1)]_{\sim_R}) \in
                \E{\I_{\glue{\kLI_1}{\kLI_2}}}$,
                therefore $(h_B([(e,1)]_{\sim_B}), h_R([(v,1)]_{\sim_R})) \in \E{\I_{\kLI}}$, since $(h_R, h_B)$ is a homomorphism.

		Let $i \in \Dom{r_{\kLI_1}}$ and $j \in
                \Dom{b_{\kLI_1}}$. We must show, that
                $h_{1,R}(r_{\kLI_1}(i)) = r_{\kLI}(i)$ and
                $h_{1,B}(b_{\kLI_1}(j)) = b_{\kLI}(j)$. This is easy
                as well, since  
		\begin{align*}
			h_{1,R}(r_{\kLI_1}(i)) &= h_R(\pi_{1,R}(r_{\kLI_1}(i))) \\
			&= h_R([(1, r_{\kLI_1(i)})]_{\sim_R}) \\
			&= h_R(r_{\glue{\kLI_1}{\kLI_2}}(i)) 
			= r_{\kLI}(i) \;,
		\end{align*} 
		and
		\begin{align*}
			h_{1,B}(b_{\kLI_1}(j)) &= h_B(\pi_{1,B}(b_{\kLI_1}(j))) \\
			&= h_B([(1, b_{\kLI_1(j)})]_{\sim_B}) \\
			&= h_B(b_{\glue{\kLI_1}{\kLI_2}}(j)) 
			= b_{\kLI}(j) \;.
		\end{align*}
		Thus, $(h_{1,R}, h_{1,B})$ and therefore, $f$ is well-defined.
		
		It is easy to see, that $f$ is injective: Assume that
                there exists a homomorphism $(h_R', h_B') \in
                \Hom{\glue{\kLI_1}{\kLI_2}}{\kLI}$ such that $f(h_R',
                h_B') = f(h_R, h_B)$. Then we have for every $i \in
                [2]$ that $h_R([(v,i)]_{\sim_R}) =
                h_R'([(v,i)]_{\sim_R})$ holds for all $v \in
                \rV{\I_{\kLI_i}}$ and $h_B([(e,i)]_{\sim_B}) =
                h_B'([(e,i)]_{\sim_B})$ holds for all $e \in
                \bV{\I_{\kLI_i}}$. Since these
                make up
                all vertices, $(h_R', h_B')$ and $(h_R, h_B)$ agree on all values, therefore they are equal.
		
		Consider $(h_{1,R}, h_{1,B}) \in \Hom{\kLI_1}{\kLI}$
                and $(h_{2,R}, h_{2,B}) \in \Hom{\kLI_2}{\kLI}$. For
                every $i \in [2]$ let $h_R([(v,i)]_{\sim_R}) =
                h_{i,R}(v)$ for all $v \in \rV{\I_{\kLI_i}}$ and
                $h_B([(e,i)]_{\sim_B}) = h_{i,B}(e)$ for all $e \in
                \bV{\I_{\kLI_i}}$.  
		Assume that $h_R$ is ill-defined. Then there must be
                $v_1 \in \rV{\I_{\kLI_1}}$ and $v_2 \in
                \rV{\I_{\kLI_2}}$ such that $v_1 \in [(v_2,
                2)]_{\sim_R}$ and $h_{1,R}(v_1) \neq
                h_{2,R}(v_2)$. $v_1 \in [(v_2, 2)]_{\sim_R}$ means
                that there exists a $j \in \Dom{r_{\kLI_1}} \intersect
                \Dom{r_{\kLI_2}}$ such that $r_{\kLI_1}(j) = v_1$ and
                $r_{\kLI_2}(j) = v_2$. But this means $h_{1,R}(v_1) =
                r_{\kLI}(j) = h_{2,R}(v_2)$, which is a
                contradiction. Thus, $h_R$ is well-defined. The same
                argument works for $h_B$. With similar arguments as we
                have used so far, it is easy to verify that $(h_R,
                h_B) \in \Hom{\glue{\kLI_1}{\kLI_2}}{\kLI}$. 

		We showed that $f$ is well-defined, injective and
                surjective, thus bijective. Therefore,
                $|\Hom{\glue{\kLI_1}{\kLI_2}}{\kLI}| =
                |\Hom{\kLI_1}{\kLI}| \cdot |\Hom{\kLI_2}{\kLI}|$. 
	\end{proof}
	
	\item[\textbf{Claim~2:}] Let $\kLI$, $\kLI'$ be $k$-labeled incidence graphs and
          let $\myR \isdef \set{i_1, \dots, i_\ell} \subseteq
          \Dom{r_{\kLI}}$ with $i_1 < \cdots < i_\ell$.
          It holds that 
	\[
		\hom{\reclaimR{\kLI}{\myR}}{\kLI'} = \sum_{\tupel{v} \in \rV{\I_{\kLI'}}^\ell} \hom{\kLI}{\reseatR{\kLI'}{\myR}{\tupel{v}}} \;.
	\]
	\begin{proof}
		Let $(h_R, h_B) \in
                \Hom{\reclaimR{\kLI}{\myR}}{\kLI'}$ be a
                homomorphism.
                Recall
                that $\I_{\kLI} = \I_{\reclaimR{\kLI}{\myR}}$,
                therefore we can define the following tuple of red
                vertices in $\kLI'$:
		\[
			 \tupel{v} = (v_{i_1}, \dots, v_{i_\ell}) \isdef (h_R(r_{\kLI}(i_1)), \dots, h_R(r_{\kLI}(i_\ell)))\;.
		\]
		Clearly, $(h_R, h_B) \in
                \Hom{\kLI}{\reseatR{\kLI'}{\myR}{\tupel{v}}}$, since
                it is a homomorphism from $\I_{\kLI} =
                \I_{\reclaimR{\kLI}{\myR}}$ to
                $\I_{\reseatR{\kLI'}{\myR}{\tupel{v}}} = \I_{\kLI'}$,
                and we moved the red labels onto $(v_{i_1}, \dots,
                v_{i_\ell})$. 

		For any tuple $\tupel{v} \in \rV{\I_{\kLI'}}^\ell$ it is easy to see that 
		\[
			\Hom{\kLI}{\reseatR{\kLI'}{\myR}{\tupel{v}}} \
                        \subseteq \ \Hom{\reclaimR{\kLI}{\myR}}{\kLI'} \;.
		\]
		It is also easy to see, that for all $\tupel{v} \neq \tupel{v}' \in \rV{\I_{\kLI'}}^\ell$
		\[
			\Hom{\kLI}{\reseatR{\kLI'}{\myR}{\tupel{v}}} \
                        \intersect\
                        \Hom{\kLI}{\reseatR{\kLI'}{\myR}{\tupel{v}'}}
                        \ = \ \emptyset \; .
		\]

		This finishes our proof.
	\end{proof}
	
	\item[\textbf{Claim~3:}] Let $\kLI$, $\kLI'$ be $k$-labeled incidence graphs and let $\myB \subseteq \Dom{b_{\kLI}}$ and $\ell \isdef |\myB|$. It holds that 
	\[
		\hom{\reclaimB{\kLI}{\myB}}{\kLI'} = \sum_{\tupel{e} \in \bV{\I_{\kLI'}}^\ell} \hom{\kLI}{\reseatB{\kLI'}{\myB}{\tupel{e}}} \;.
	\]
	\begin{proof}
		The same argument as in the previous case works for this claim as well.
	\end{proof}
	
	\item[\textbf{Claim~4:}] Let $\kLI$, $\kLI'$ be $k$-labeled incidence graphs and
          let $f$ be a transition for $\kLI$. Let $\myB \isdef
          \Dom{b_{\kLI}} \intersect \Img{f} \intersect \Img{g}$. It
          holds that $\hom{\switch{\kLI}{f}}{\kLI'} =$  
	\[
		\hom{\kLIf}{\kLI'} \ \cdot \!\!\! \sum_{\tupel{e} \in
                  \bV{\I_{\kLI'}} ^\ell} \!\!\!
                \hom{\kLI}{\reseatB{\kLI'}{\myB}{\tupel{e}}} \;. 
	\]
	\begin{proof}
          This follows directly from 1) and 3) and
          Definition~\ref{def:transition}.
	\end{proof}
\end{enumerate}

Generalising these results to quantum $k$-labeled incidence graphs is straightforward using simple linear arguments:
\begin{enumerate}[wide]
	\item[\textbf{Part~1:}] \ $\hom{\glue{Q_1}{Q_2}}{\kLI} =$
	\begin{eqnarray*}
		& \phantom{=} & \sum_{i \in [d] \atop j \in [d']} (\alpha_i \cdot \alpha_j')\hom{\glue{\kLI_i}{\kLI_j'}}{\kLI} \\
		& = & \sum_{i \in [d] \atop j \in [d']} (\alpha_i \cdot \alpha_j') \cdot \hom{\kLI_i}{\kLI} \cdot \hom{\kLI'_j}{\kLI} \\
		& = & \sum_{i \in [d]} \Big( \alpha_i \hom{\kLI_i}{\kLI} \cdot \sum_{j \in [d']} \alpha_j' \hom{\kLI_j'}{\kLI}  \Big) \\
		& = & \sum_{i \in [d]} \Big( \alpha_i \hom{\kLI_i}{\kLI} \cdot \hom{Q_2}{\kLI} \Big) \\
		& = & \hom{Q_2}{\kLI} \cdot \sum_{i \in [d]} \alpha_i \hom{\kLI_i}{\kLI} \\
		& = & \hom{Q_2}{\kLI} \cdot \hom{Q_1}{\kLI}
	\end{eqnarray*}

	\item[\textbf{Part~2:}] \ $\hom{\reclaimR{Q}{\myR}}{\kLI} =$
	\begin{eqnarray*}
		& \phantom{=} & \sum_{i \in [d]} \alpha_i \hom{\reclaimR{\kLI_i}{\kLI}}{\kLI} \\
		& = & \sum_{i \in [d]} \alpha_i \; \Big( \!\!\!\!\!
                      \sum_{\tupel{v} \in \rV{\I_{\kLI}}^\ell}
                      \!\!\!\!
                      \hom{\kLI_i}{\reseatR{\kLI}{\myR}{\tupel{v}}}
                      \Big) \\ 
		& = & \!\!\!\!\! \sum_{\tupel{v} \in
                      \rV{\I_{\kLI}}^\ell} \ \Big( \sum_{i \in [d]}
                      \alpha_i
                      \hom{\kLI_i}{\reseatR{\kLI}{\myR}{\tupel{v}}}
                      \Big) \\ 
		& = & \!\!\!\!\! \sum_{\tupel{v} \in \rV{\I_{\kLI}}^\ell} \ \hom{Q}{\reseatR{\kLI}{\myR}{\tupel{v}}}
	\end{eqnarray*}

	\item[\textbf{Part~3:}] \ The same argument as in the previous case works for this part as well.
	\item[\textbf{Part~4:}] \  This follows directly from 1) and 3).\qed\textsubscript{Lemma \ref{lem:hom_of_glued_is_product}}
\end{enumerate} 
\medskip

\section{Details ommitted in Section~\ref{sec:main_theorem}}\label{appendix:main_theorem}

\medskip

\subsection{Complete proof of Lemma \ref{lemma:HomToFormula}}
\label{appendix:subsec:HomToFormula}

\begin{proof}[\textbf{\upshape{Proof of
      Lemma~\ref{lemma:HomToFormula}}}] \ \\
Throughout this proof we use the following notation:
A \emph{segmentation} for a number $m \in \natpos$ is a pair of $d$-tuples
\[
  (\tupel{c}, \tupel{m}) \ \ = \ \  \big((c_1, \dots, c_d), (m_1,
  \dots, m_d)\big)
\]
of numbers in $\NNpos$,
such that $d\in\NNpos$ and $m_1<\cdots <m_d$ and
$\sum_{i \in [d]} c_i m_i = m$.
We call $d$ the \emph{degree} of $(\tupel{c}, \tupel{m})$
and $c \deff \sum_{i \in [d]} c_i$ its \emph{size}.

By $\mySeg{m}$ we denote the set of all segmentations of $m$.
Note that $\mySeg{m}$ is finite and non-empty for every $m\in\NNpos$.
  
The proof of Lemma~\ref{lemma:HomToFormula} proceeds by induction based on Definition~\ref{def:k_guarded_incidence_graphs}.
\smallskip

\noindent \textbf{Base case}:

1) \ $\rVI=\Img{r}$, $\bVI=\Img{b}$, and $\kLI$ has real guards.
Obviously, $\hom{\kLI}{\kLI'}\in\set{0,1}$.
Let
\begin{eqnarray*}
  S_E & \deff & \setc{(j,i)\in\Dom{b}\times\Dom{r}}{(b(j),r(i))\in\E{\I}},
\\                
  S_R & \deff & \setc{(i,i')}{i,i'\in\Dom{r},\ i<i',\ r(i)=r(i')}\,,
\\                
  S_B & \deff & \setc{(j,j')}{j,j'\in\Dom{b},\ j<j',\ b(j)=b(j')}\,.
\end{eqnarray*}                
We choose \;$\form{\kLI}{1} \deff$
\[
\Und_{(j,i)\in S_E}\!\!\! E(\vare_j,\varv_i)
\ \ \und
\Und_{(i,i')\in S_R}\!\!\! \varv_{i}{=}\varv_{i'}
\ \ \und
\Und_{(j,j')\in S_B}\!\!\! \vare_{j}{=}\vare_{j'}\,.
\]  
It is straightforward to verify
that $(\LogGuard{g}\und\form{\kLI}{1})\in\NGCk$. 

Let $\kLI'$ satisfy the lemma's assumptions. Since $\kLI'$ has real guards w.r.t.\ $g$ we
have $\myInt{\kLI'}\models\LogGuard{g}$. By construction of the
formula $\form{\kLI}{1}$ there is a homomorphism from $\kLI$ to
$\kLI'$ if, and only if, 
$\myInt{\kLI'}\models\form{\kLI}{1}$.
Thus, we are done for the case $m=1$.
\\
For $m=0$ we are done by choosing $\form{\kLI}{0}\deff\nicht\,\form{\kLI}{1}$.
\\
For $m\geq 2$ we know that $\hom{\kLI}{\kLI'}\neq m$. Thus, we are done by choosing an arbitrary unsatisfiable formula; to be specific we choose
$\form{\kLI'}{m}\deff (\form{\kLI'}{0}\und\form{\kLI'}{1})$.
\medskip

\noindent
\textbf{Inductive step:}
We assume that the lemma's statement is already shown for $\kLI=(\I,r,b,g)\in\GLIk$ and
for $\kLI_i=(\I_i,r_i,b_i,g_i)\in\GLIk$ for $i\in[2]$ where $g_1,g_2$ are compatible.
Our goal is to prove the lemma's statement for the particular $\tilde{\kLI}=(\tilde{\I},\tilde{r},\tilde{b},\tilde{g})\in\GLIk$
considered in the following case distinction.
\medskip

2) \
Let $\tilde{\kLI}\deff\reclaimR{\kLI}{\myR}$ for some $\myR\subseteq\Dom{r}$.
If $\myR=\emptyset$ we are done since $\tilde{\kLI}=\kLI$.
If $\myR\neq \emptyset$ let $\ell\deff |\myR|$ and let $\myR=\set{i_1,\ldots,i_\ell}$ with $i_1<\cdots<i_\ell$.
Note that $\tilde{r}=r-\myR$, $\tilde{b}=b$, and $\tilde{g}=g-\myR$.
We let 
\[
  \form{\tilde{\kLI}}{0} \ \ \deff \ \ 
  \forall (\varv_{i_1}, \dots, \varv_{i_\ell}).(\LogGuard{g} \impl
  \form{\kLI}{0})\,.
\]
It is straightforward to verify that
$(\LogGuard{\tilde{g}}\und \form{\tilde{\kLI}}{0})\in\NGCk$.
\\
For every $m\geq 1$ we let
\[
  \form{\tilde{\kLI}}{m} \ \ \isdef \ \
  \biglor_{(\tupel{c}, \tupel{m}) \in \mySeg{m}} \phi_{\tupel{c}, \tupel{m}}
\]
where for every
$(\tupel{c},\tupel{m})=((c_1,\ldots,c_d),(m_1,\ldots,m_d))\in\mySeg{m}$
and $c\deff\sum_{i\in [d]} c_i$
we let 
\[
 \begin{array}{ll}
  \phi_{\tupel{c},\tupel{m}} \ \ \isdef 
&
  \existsex[c] (\varv_{i_1}, \dots, \varv_{i_\ell}).
  (\LogGuard{g} \land \lnot \form{\kLI}{0}) \ \ \land
\\[1ex]
  & \bigland_{j\in[d]} \left(
    \existsex[c_j] (\varv_{i_1}, \dots, \varv_{i_\ell}).
    (\LogGuard{g} \land \form{\kLI}{m_j} \right)\,.
 \end{array}
\] 
It is straightforward to verify
that $(\LogGuard{\tilde{g}}\und \phi_{\tupel{c},\tupel{m}})
\in \NGCk$; and hence also $(\LogGuard{\tilde{g}}\und\form{\tilde{\kLI}}{m})\in\NGCk$.

Let $\kLI'=(\I',r',b',g')$ satisfy the lemma's assumptions for $\tilde{\kLI}$ instead of $\kLI$. I.e.,
we have
$\Dom{b'}\supseteq\Dom{\tilde{b}}=\Dom{b}$,
$\Dom{r'}\supseteq\Dom{\tilde{r}}=\Dom{r}\setminus\myR$, and
$\kLI'$ has real guards w.r.t.\ $\tilde{g}$. Thus,
$\myInt{\kLI'}\models\LogGuard{\tilde{g}}$.
\\
From Lemma~\ref{lem:HomomCountQ} we know that
\begin{equation}\label{eq:homCount:formula:reseatR}
  n
  \ \deff \
  \hom{\tilde{\kLI}}{\kLI'}
  \ \ =
  \sum_{\ov{v}\in\rV{\I'}^\ell} \hom{\kLI}{\reseatR{\kLI'}{\myR}{\tupel{v}}}.
\end{equation}
Note that $n=0$ $\iff$
$\hom{\kLI}{\reseatR{\kLI'}{\myR}{\tupel{v}}}=0$ for all
$\ov{v}\in\rV{\I'}^\ell$ $\iff$ for all $\ov{v}\in\rV{\I'}^\ell$ we
have either
$\myInt{\reseatR{\kLI'}{\myR}{\tupel{v}}}\not\models\LogGuard{g}$
(this is in case that $\myInt{\reseatR{\kLI'}{\myR}{\tupel{v}}}$ does
not have real guards w.r.t.\ $g$) or
$\myInt{\reseatR{\kLI'}{\myR}{\tupel{v}}}\models(\LogGuard{g}\und \form{\kLI}{0})$
$\iff$ $\myInt{\kLI'}\models \form{\tilde{\kLI}}{0}$.
Thus, the formula $\form{\tilde{\kLI}}{0}$ has the desired meaning.

Let us now consider the case where $n\geq 1$.
For every number $m'\in[n]$ let
$S_{m'}$ be the set of all $\ov{v}\in\rV{\I'}^\ell$ such that
$m'=\hom{\kLI}{\reseatR{\kLI'}{\myR}{\tupel{v}}}$.
Let $d'$ be the number of distinct $m'\in[n]$ for which $S_{m'}\neq \emptyset$, and let
$m'_1<\cdots<m'_{d'}$ be an ordered list of all $m'\in[n]$ with $S_{m'}\neq \emptyset$.
For each $j\in[d']$ let $c'_j\deff |S_{m'_j}|$. Let $\tupel{c}'\deff (c'_1,\ldots,c'_{d'})$ and
$\tupel{m}'\deff (m'_1,\ldots,m'_{d'})$.
Note that
$\sum_{j\in[d']}c'_j m'_j = n$. Hence
$(\tupel{c}',\tupel{m}')\in\mySeg{n}$ is a segmentation of $n$
of size $c'\deff\sum_{j=1}^{d'} c'_j$ and of degree $d'$.

Note that $\myInt{\kLI'}\models \phi_{\tupel{c}',\tupel{m}'}$. Furthermore,
for every $m\in\NNpos$ we have:
$\myInt{\kLI'}\models \form{\tilde{\kLI}}{m}$ $\iff$ $(\tupel{m}',\tupel{c}')\in\mySeg{m}$
$\iff$ $m=n=\hom{\tilde{\kLI}}{\kLI'}$.
This verifies that for every $m\geq 1$ the formula $\form{\tilde{\kLI}}{m}$ has the
desired properties.
\medskip

3) \
Let $\tilde{\kLI}\deff\reclaimB{\kLI}{\myB}$ for some $\myB\subseteq\Dom{b}\setminus\Img{g}$.
If $\myB=\emptyset$ we are done since $\tilde{\kLI}=\kLI$.
If $\myB\neq \emptyset$ let $\ell\deff |\myB|$ and let
$\myB=\set{i_1,\ldots,i_\ell}$ with $i_1<\cdots<i_\ell$. Note that
$\tilde{r}=r$, $\tilde{b}=b-\myB$, and $\tilde{g}=g$. 
We let 
\[
  \form{\tilde{\kLI}}{0} \ \ \deff \ \ 
  \forall (\vare_{i_1}, \dots, \vare_{i_\ell}).(\LogGuard{g} \impl
  \form{\kLI}{0})\,.
\]
For every $m\geq 1$ we let
\[
  \form{\tilde{\kLI}}{m} \ \ \isdef \ \
  \biglor_{(\tupel{c}, \tupel{m}) \in \mySeg{m}} \phi'_{\tupel{c}, \tupel{m}}
\]
where for every
$(\tupel{c},\tupel{m})=((c_1,\ldots,c_d),(m_1,\ldots,m_d))\in\mySeg{m}$
and $c\deff\sum_{i\in [d]}c_i$
we let
\[
 \begin{array}{ll}
  \phi'_{\tupel{c},\tupel{m}} \ \ \isdef 
&
  \existsex[c] (\vare_{i_1}, \dots, \vare_{i_\ell}).
  (\LogGuard{g} \land \lnot \form{\kLI}{0}) \ \ \land
\\[1ex]
  & \bigland_{j\in[d]} \left(
    \existsex[c_j] (\vare_{i_1}, \dots, \vare_{i_\ell}).
    (\LogGuard{g} \land \form{\kLI}{m_j} \right)\,.
 \end{array}
\] 
Let $\kLI'=(\I',r',b',g')$ satisfy the lemma's assumptions for $\tilde{\kLI}$ instead of $\kLI$. I.e.,
we have
$\Dom{b'}\supseteq\Dom{\tilde{b}}=\Dom{b}\setminus\myB$,
$\Dom{r'}\supseteq\Dom{\tilde{r}}=\Dom{r}$, and
$\kLI'$ has real guards w.r.t.\ $\tilde{g}$. Thus,
$\myInt{\kLI'}\models\LogGuard{\tilde{g}}$.
From Lemma~\ref{lem:HomomCountQ} we know that
\begin{equation}\label{eq:homCount:formula:reseatR}
  n
  \ \deff \
  \hom{\tilde{\kLI}}{\kLI'}
  \ \ =
  \sum_{\ov{e}\in\bV{\I'}^\ell} \hom{\kLI}{\reseatB{\kLI'}{\myB}{\tupel{e}}}.
\end{equation}
Note that $n=0$ $\iff$
$\hom{\kLI}{\reseatB{\kLI'}{\myB}{\tupel{e}}}=0$ for all
$\ov{e}\in\bV{\I'}^\ell$ $\iff$ for all $\ov{e}\in\bV{\I'}^\ell$ we
have 
either $\myInt{\reseatB{\kLI'}{\myB}{\tupel{e}}}\not\models\LogGuard{g}$ or
$\myInt{\reseatB{\kLI'}{\myB}{\tupel{e}}}\models(\LogGuard{g}\und \form{\kLI}{0})$
$\iff$
$\myInt{\kLI'}\models \form{\tilde{\kLI}}{0}$.
Thus, the formula $\form{\tilde{\kLI}}{0}$ has the desired meaning.

Let us now consider the case where $n\geq 1$.
For every number $m'\in[n]$ let
$S'_{m'}$ be the set of all $\ov{e}\in\bV{\I'}^\ell$ such that
$m'=\hom{\kLI}{\reseatB{\kLI'}{\myB}{\tupel{e}}}$.
Let $d'$ be the number of distinct $m'\in[n]$ for which $S'_{m'}\neq \emptyset$, and let
$m'_1<\cdots<m'_{d'}$ be an ordered list of all $m'\in[n]$ with $S'_{m'}\neq \emptyset$.
For each $j\in[d']$ let $c'_j\deff |S'_{m'_j}|$. Let $\tupel{c}'\deff (c'_1,\ldots,c'_{d'})$ and
$\tupel{m}'\deff (m'_1,\ldots,m'_{d'})$.
Note that
$\sum_{j\in[d']}c'_j m'_j = n$. Hence
$(\tupel{c}',\tupel{m}')\in\mySeg{n}$ is a segmentation of $n$
of size $c'\deff\sum_{j\in [d']} c'_j$ and of degree $d'$.

Note that $\myInt{\kLI'}\models \phi'_{\tupel{c}',\tupel{m}'}$. Furthermore,
for every $m\in\NNpos$ we have:
$\myInt{\kLI'}\models \form{\tilde{\kLI}}{m}$ $\iff$ $(\tupel{m}',\tupel{c}')\in\mySeg{m}$
$\iff$ $m=n=\hom{\tilde{\kLI}}{\kLI'}$.
This verifies that for every $m\geq 1$ the formula $\form{\tilde{\kLI}}{m}$ has the
desired meaning.

It remains to verify that
$(\LogGuard{\tilde{g}}\und \form{\tilde{\kLI}}{m})\in\NGCk$ for every $m\in\NN$.
Recall that $\tilde{g}=g$.
Thus, $\tilde{g}$ satisfies the condition~\eqref{eq:syntax:guard} of
rule~\ref{item:syntaxdef:10}) of the syntax definition of $\NGCk$.
Since, by the induction hypothesis,
$(\LogGuard{g}\und \form{\kLI}{m_j})\in\NGCk$, we obtain by rule
\ref{item:syntaxdef:10}) that the conjunction of $\LogGuard{\tilde{g}}$ with
the formula in the second line of $\varphi'_{\tupel{c},\tupel{m}}$
belongs to $\NGCk$.
The same reasoning (combined with rule~\ref{item:syntaxdef:4})) yields that the conjunction of $\LogGuard{\tilde{g}}$ with
the formula in the first line of $\varphi'_{\tupel{c},\tupel{m}}$
belongs to $\NGCk$.
Applying rule \ref{item:syntaxdef:5}) 
yields that each formula 
$(\LogGuard{\tilde{g}}\und \phi'_{\tupel{c},\tupel{m}})$
as well as the formula
$(\LogGuard{\tilde{g}}\und\form{\tilde{\kLI}}{m})$
belongs to $\NGCk$, for every $m\geq 1$.
A similar reasoning shows that also $(\LogGuard{\tilde{g}}\und\form{\tilde{\kLI}}{0})$
belongs to $\NGCk$.

\medskip

4) \ Let $\tilde{\kLI}\deff \switch{\kLI}{f}$, where $f$ is a transition for $g$.
I.e.,
$\emptyset\neq\Dom{f}\subseteq\Dom{g}$, such that for every $i\in\Dom{g}$ with
$g(i)\in\Img{f}$ we have $i\in\Dom{f}$. By definition,
$\tilde{\kLI}=\glue{\kLIf}{\reclaimB{\kLI}{\myB}}$ for $\myB\deff\Img{g}\cap\Img{f}\cap\Dom{b}$.
In particular, $\tilde{g}=f\cup g$, i.e., $\tilde{g}(i)=f(i)$ for all $i\in\Dom{f}$, and
$\tilde{g}(i)=g(i)$ for all $i\in\Dom{g}\setminus\Dom{f}$.

Consider an arbitrary $\kLI'=(\I',r',b',g')$ that satisfies the lemma's assumptions for $\tilde{\kLI}$ instead of $\kLI$. I.e.,
we have
$\Dom{b'}\supseteq\Dom{\tilde{b}}=\Img{f}\cup\Dom{b}$,
$\Dom{r'}\supseteq\Dom{\tilde{r}}=\Dom{f}\cup\Dom{r}=\Dom{f}\cup\Dom{g}=\Dom{\tilde{g}}$,
and 
$\kLI'$ has real guards w.r.t.\ $\tilde{g}$.
Since $\kLI'$ has real guards w.r.t.\ $\tilde{g}$ we have
$\myInt{\kLI'}\models\LogGuard{\tilde{g}}$.
From Lemma~\ref{lem:HomomCountQ} we know that $n\deff\hom{\tilde{\kLI}}{\kLI'}=$
\begin{equation*}
  \hom{\kLIf}{\kLI'}\ \ {\cdot} \!\!\sum_{\ov{e}\in\bV{\I'}^\ell}
  \hom{\kLI}{\reseatB{\kLI'}{\myB}{\tupel{e}}} 
\end{equation*}
for $\myB\deff \Dom{b}\cap\Img{f}\cap\Img{g}$.
Furthermore, we know that $\hom{\kLIf}{\kLI'}\in\set{0,1}$.
Note that $\hom{\kLIf}{\kLI'}=1$ $\iff$
$\myInt{\kLI'}\models\LogGuard{f}$.
Since $\myInt{\kLI'}\models\LogGuard{\tilde{g}}$ and $\tilde{g}\supseteq f$,
we obviously have $\myInt{\kLI'}\models\LogGuard{f}$.
Thus, $n=\sum_{\ov{e}\in\bV{\I'}^\ell} \hom{\kLI}{\reseatB{\kLI'}{\myB}{\tupel{e}}}$, as
in case~3) in equation~\eqref{eq:homCount:formula:reseatR}.
We therefore can choose the exact same formulas $\form{\tilde{\kLI}}{m}$ as in case~3)
for all $m\in\NN$ and obtain that these formulas have the desired meaning.

It remains to verify that
$(\LogGuard{\tilde{g}}\und \form{\tilde{\kLI}}{m})\in\NGCk$ for every $m\in\NN$.
Note that now we have $\tilde{g}=f\cup g$ (whereas in case~3) we had
to deal with the case where $\tilde{g}=g$).
All we have to do is argue that $\tilde{g}$ satisfies the
condition~\eqref{eq:syntax:guard} of 
rule~\ref{item:syntaxdef:10}) of the syntax definition of $\NGCk$.
Once having achieved this, the same reasoning as in case~3)
proves that the formula $(\LogGuard{\tilde{g}}\und
\form{\tilde{\kLI}}{m})$ belongs to 
$\NGCk$ for every $m\in\NN$.

Note that according to the induction hypothesis, for all $m'\in\NN$ we have:
$\Dom{b}=\indfreeB{(\LogGuard{g}\und\form{\kLI}{m'})}\supseteq\Img{g}$.
Therefore, $\myB=\Img{f}\cap\Img{g}$. 

Consider an arbitrary $i\in\Dom{g}$. We have to show that
$\tilde{g}(i)=g(i)$ or $\tilde{g}(i)\in \myB$ or $\tilde{g}(i)\not\in
\Img{g}$. 

If $\tilde{g}(i)=g(i)$
or $\tilde{g}(i)\not\in\Img{g}$, we are done.
Consider the case that
$\tilde{g}(i)\neq g(i)$
and
$\tilde{g}(i)\in\Img{g}$.
In this case, $\tilde{g}(i)=f(i)$ and there exists an $i'\in\Dom{g}$
with $g(i')=\tilde{g}(i)=f(i)$.
Hence, $\tilde{g}(i)\in\Img{f}\cap\Img{g}=\myB$, and we are done.

\medskip

5) \ Let $\tilde{\kLI}\deff \glue{\kLI_1}{\kLI_2}$. Recall from the
inductive step's assumption that the lemma's statement is already
shown for $\kLI_i=(\I_i,r_i,b_i,g_i)$ for each $i\in[2]$ and that
$g_1,g_2$ are compatible.
Recall from Definition~\ref{def:glueing} that
$\tilde{\kLI}=(\tilde{I},\tilde{r},\tilde{b},\tilde{g})$ with
$\Dom{\tilde{r}}=\Dom{r_1}\cup\Dom{r_2}$, 
$\Dom{\tilde{b}}=\Dom{b_1}\cup\Dom{b_2}$, and
$\tilde{g}=g_1\cup g_2$.
We let
\[
  \form{\tilde{\kLI}}{0} \ \ \deff \ \
  \big(\,
    \form{\kLI_1}{0}\ \oder \ \form{\kLI_2}{0}
  \,\big)\,.
\]
For each $m\geq 1$ we let $S_m$ be the set of all tuples $(m_1,m_2)$
of positive integers such that $m_1\cdot m_2=m$ (note that $S_m$ is
finite and non-empty), and let
\[
  \form{\tilde{\kLI}}{m} \ \ \deff \ \
  \Oder_{(m_1,m_2)\in S_m}\big(\,
    \form{\kLI_1}{m_1}\ \und \ \form{\kLI_2}{m_2}
  \,\big)\,.
\]
By the induction hypothesis we know that $(\LogGuard{g_i}\und
\form{\kLI_i}{m_i})\in\NGCk$
for each $i\in[2]$ and every $m_i\in\NN$. Since $\tilde{g}=g_1\cup
g_2$ and $g_1,g_2$ are compatible, we obtain that the formula
$(\LogGuard{\tilde{g}}\und \form{\tilde{\kLI}}{m})$ belongs to $\NGCk$
for every $m\in \NN$.

All that remains to be done is to show that these formulas have the
desired meaning.
Consider an arbitrary $\kLI'=(\I',r',b',g')$ that satisfies the
lemma's assumptions
for $\tilde{\kLI}$ instead of $\kLI$. I.e., we have
$\Dom{b'}\supseteq\Dom{\tilde{b}}=\Dom{b_1}\cup\Dom{b_2}$,
$\Dom{r'}\supseteq\Dom{\tilde{r}}=\Dom{b_1}\cup\Dom{b_2}$, and
$g'\supseteq \tilde{g}= g_1 \cup g_2$.

Since $\kLI'$ has real guards w.r.t.\ $\tilde{g}$ we have
$\myInt{\kLI'}\models\LogGuard{\tilde{g}}$.
From Lemma~\ref{lem:HomomCountQ} we know that
\begin{equation}\label{eq:homCount:formula:glue}
  n
  \ \deff \
  \hom{\tilde{\kLI}}{\kLI'}
  \ = \ 
  \hom{\kLI_1}{\kLI'} \cdot 
  \hom{\kLI_2}{\kLI'}\,.
\end{equation}
Note that $n=0$ $\iff$ $\hom{\kLI_1}{\kLI'}=0$ or
$\hom{\kLI_2}{\kLI'}=0$.
Thus, $\myInt{\kLI'}\models\form{\tilde{\kLI}}{0}$ $\iff$ $n=0$.
Hence, the formula $\form{\tilde{\kLI}}{0}$ has the desired meaning.

Let us now consider the case where $n\geq 1$.
For $i\in[2]$ let $n_i\deff\hom{\kLI_i}{\kLI'}$. 
By \eqref{eq:homCount:formula:glue} we have $n=n_1\cdot n_2$. Thus,
$(n_1,n_2)\in S_n$.
Note for every $m\geq 1$ that
$\myInt{\kLI'}\models\form{\tilde{\kLI}}{m}$ $\iff$ $(n_1,n_2)\in S_m$
$\iff$ $n=n_1\cdot n_2 = m$ $\iff$ $\hom{\tilde{\kLI}}{\kLI'}=m$.
Hence, for every $m\geq 1$ the formula $\form{\tilde{\kLI}}{m}$ has the desired meaning.
This completes the proof of Lemma~\ref{lemma:HomToFormula}.
\end{proof}
 
\medskip

\subsection{Complete proof of Lemma \ref{lemma:FormulaToHom}}
\label{appendix:subsec:FormulaToHom}

\begin{proof}[\textbf{\textup{Proof of Lemma~\ref{lemma:FormulaToHom}}}] \ \\
We proceed by induction on the construction of $\chi$.
\smallskip

\noindent \textbf{Base cases}:
$\chi$ is $(\LogGuard{g} \land \psi)$ and matches one of the
following:
\smallskip

\begin{enumerate}[wide, label={\textbf{Case \arabic*}:}]

\item
$\psi$ is $E(\vare_j, \varv_i)$ and $\Dom{g} = \set{i}$ and $ j\in
[k]$. Note that $\chi= (E(\vare_{g(i)},\varv_i)\und E(\vare_j,\varv_i))$.
Let $\I$ be the incidence graph defined as follows.
$\I$ has exactly one red node, i.e., $\rVI=\set{v_i}$.
If $g(i)=j$ then $\I$ has exactly one blue node, i.e.,
$\bVI=\set{e_j}$; otherwise $\I$ has exactly two blue nodes, i.e.,
$\bVI=\set{e_j,e_{g(i)}}$.
$\I$ has edges from every blue node to the red node, i.e.,
$\E{\I}=\bVI\times\rVI$.
Consider the $k$-labeled incidence graph $\kLI=(\I,r,b,g)$ where
$\Dom{r}=\set{i}$ and $r(i)=v_i$, and $\Dom{b}=\set{j,g(i)}$ and
$b(j)=e_j$ and $b(g(i))=e_{g(i)}$ (and $g$ is the formula $\chi$'s
guard function).
Note that $\kLI\in\GLIk$ (use the base case of
Definition~\ref{def:k_guarded_incidence_graphs}).
For every $k$-labeled incidence graph $\kLI'$ we have:
$\hom{\kLI}{\kLI'}\in\set{0,1}$, and
$\hom{\kLI}{\kLI'}=1$ $\iff$ $\myInt{\kLI'}\models \chi$.
Thus we are done by choosing $Q\deff \kLI$.
\smallskip

\item
$\psi$ is $\vare_{j} {=} \vare_{j'}$ and $\Dom{g}=\emptyset$ and
$j,j'\in[k]$.
Note that $\chi= (\top\und \vare_j{=}\vare_{j'})$.
Let $\I$ be the incidence graph with no red node and
exactly one blue node, i.e., $\rVI=\emptyset$, $\bVI=\set{e}$, and
$\E{\I}=\emptyset$.
Consider the $k$-labeled incidence graph $\kLI=(\I,r,b,g)$ where
$\Dom{r}=\emptyset$, $\Dom{b}=\set{j,j'}$ and $b(j)=b(j')=e$ (and $g$
is the formula $\chi$'s (empty) guard function).
Note that $\kLI\in\GLIk$ (use the base case of
Definition~\ref{def:k_guarded_incidence_graphs}).
For every $k$-labeled incidence graph $\kLI'$ we have:
$\hom{\kLI}{\kLI'}\in\set{0,1}$, and
$\hom{\kLI}{\kLI'}=1$ $\iff$ $\myInt{\kLI'}\models \chi$.
Thus we are done by choosing $Q\deff \kLI$.
\smallskip

\item
$\psi$ is $\varv_i {=} \varv_{i'}$ with $\Dom{g}=\set{i,i'}$.
Note that $\chi= (E(\vare_{g(i)},\varv_i)\und
E(\vare_{g(i')},\varv_{i'})\und \varv_{i}{=}\varv_{i'})$.
Let $\I$ be the incidence graph defined as follows.
$\I$ has exactly one red node, i.e., $\rVI=\set{v}$.
If $g(i)=g(i')$ then $\I$ has exactly one blue node, i.e.,
$\bVI=\set{e}$; otherwise $\I$ has exactly two blue nodes, i.e.,
$\bVI=\set{e_{g(i)},e_{g(i')}}$.
$\I$ has edges from every blue node to the red node, i.e.,
$\E{\I}=\bVI\times\rVI$.
Consider the $k$-labeled incidence graph $\kLI=(\I,r,b,g)$ where
$g$ is the formula $\chi$'s
guard function, and
$\Dom{r}=\set{i,i'}$ and $r(i)=r(i')=v$, and
$\Dom{b}=\set{g(i),g(i')}$.
In case that $g(i)=g(i')$ we have $b(g(i))=e$, and otherwise we have
$b(g(i))=e_{g(i)}$ and $b(g(i'))=e_{g(i')}$.
Note that $\kLI\in\GLIk$ (use the base case of
Definition~\ref{def:k_guarded_incidence_graphs}).
For every $k$-labeled incidence graph $\kLI'$ we have:
$\hom{\kLI}{\kLI'}\in\set{0,1}$, and
$\hom{\kLI}{\kLI'}=1$ $\iff$ $\myInt{\kLI'}\models \chi$.
Thus we are done by choosing $Q\deff \kLI$.
\smallskip

\item[{\textbf{Special Case}:}]
$\psi$ is unfulfillable for all $k$-labeled incidence graphs $\kLI'$ matching
the requirements imposed by the lemma.

If $\Dom{g}\neq\emptyset$ we choose $\kLI\deff\kLIg$ and $Q\deff
0{\cdot}\kLI$.
Note that $Q\in\QGLIk$ and $g_Q,\dombQ,\domrQ$ are as required by the
lemma. Furthermore, $\hom{Q}{\kLI'}=0$ for every $k$-labeled incidence
graph $\kLI'$.

If $\Dom{g}=\emptyset$ and $\free{\chi}\neq \emptyset$ we choose $\I$
to consist of a single blue node $e$ and no red nodes, and we let
$\kLI=(\I,r,b,g)$ with $\Dom{r}=\Dom{g}=\emptyset$ and 
$\Dom{b}=\indfreeB{\chi}$ and
$b(j)=e$ for
every $j\in\indfreeB{\chi}$. Let $Q\deff 0{\cdot}\kLI$.
Note that $Q\in\QGLIk$ and $g_Q,\dombQ,\domrQ$ are as required by the
lemma. Furthermore, $\hom{Q}{\kLI'}=0$ for every $k$-labeled incidence
graph $\kLI'$.

If $\Dom{g}=\emptyset$ and $\free{\chi}= \emptyset$ we choose $\I$ to
be empty, i.e., $\rVI=\bVI=\E{\I}=\emptyset$,
and we let
$\kLI=(\I,r,b,g)$ with $\Dom{r}=\Dom{b}=\Dom{g}=\emptyset$.
Let $Q\deff 0{\cdot}\kLI$.
Note that $Q\in\QGLIk$ and $g_Q,\dombQ,\domrQ$ are as required by the
lemma. Furthermore, $\hom{Q}{\kLI'}=0$ for every $k$-labeled incidence
graph $\kLI'$.
\setcounter{LogicCounter}{\value{enumi}}
\end{enumerate}

\medskip

\noindent \textbf{Inductive step}: We assume that the lemma's
statement is already shown for
$\chi\deff (\LogGuard{g}\und\psi)\in\NGCk$ and for
$\chi_i\deff(\LogGuard{g_i}\und\psi_i)\in\NGCk$ for $i\in[2]$, where
$g_1,g_2$ are compatible.

Fix arbitrary $m,d\in\NN$ with $m\geq 1$.
Let $Q,Q_1,Q_2\in\QGLIk$ be the quantum incidence graphs provided by
the lemma's statement for
$\chi,\chi_1,\chi_2$ and the parameters $m,d$. 

Our goal is to prove the lemma's statement for parameters $m,d$ and the particular
$\tilde{\chi} \deff
(\LogGuard{\tilde{g}} \land \tilde{\psi})\in\NGCk$ considered in the
following case distinction.
We will write $\tilde{Q}$ to denote the quantum incidence
graph that has to be constructed for $\tilde{\chi}$.
\smallskip

\begin{enumerate}[wide, label={\textbf{Case \arabic*}:}]
\setcounter{enumi}{\value{LogicCounter}}

\item
$\tilde{\psi} = \lnot \psi$ and $\tilde{g} = g$.
We choose $\tilde{Q} \isdef Q[\set{1},\set{0}]$
according to Lemma~\ref{lemma:normalizing_quantum_incidence_graphs}.
Note that $\tilde{Q}\in\QGLIk$ and $\tilde{Q}$ has the parameters
$g_{\tilde{Q}},\domb{\tilde{Q}},\domr{\tilde{Q}}$ required by the
lemma.
Furthermore, for all $k$-labeled incidence graphs $\kLI'$ satisfying
the lemma's assumptions we have: $\myInt{\kLI'}\models\LogGuard{g}$
and $\hom{\tilde{Q}}{\kLI'}\in\set{0,1}$. Furthermore,
$\hom{\tilde{Q}}{\kLI'}=1$ $\iff$ $\hom{Q}{\kLI'}=0$ $\iff$
$\myInt{\kLI'}\not\models \chi$. Since
$\myInt{\kLI'}\models\LogGuard{g}$ by assumption, we have:
$\myInt{\kLI'}\not\models \chi$ $\iff$ $\myInt{\kLI'}\not\models\psi$
$\iff$ $\myInt{\kLI'}\models\nicht\psi$
$\iff$ $\myInt{\kLI'}\models\tilde{\chi}$. 
This completes Case~4.
\smallskip

\item
$\tilde{\psi} = (\psi_1 \land \psi_2)$ and $\tilde{g} = g_1 \union
g_2$ (recall that $g_1$
and $g_2$ are compatible).
We choose $\tilde{Q} \isdef \glue{Q_1}{Q_2}$.
Note that $\tilde{Q}\in\QGLIk$ and $\tilde{Q}$ has the parameters
$g_{\tilde{Q}},\domb{\tilde{Q}},\domr{\tilde{Q}}$ required by the lemma.

Consider an arbitrary $k$-labeled incidence graph $\kLI'$ satisfying
the lemma's assumptions.
By Lemma~\ref{lem:hom_of_glued_is_product} we have
$\hom{\tilde{Q}}{\kLI'} = \hom{Q_1}{\kLI'} \cdot
\hom{Q_2}{\kLI'}$.
By the induction hypothesis we know for each $i\in[2]$ that
$\hom{Q_i}{\kLI'}\in\set{0,1}$; and
$\hom{Q_i}{\kLI'}=1$ $\iff$ $\myInt{\kLI'}\models\chi_i$.
Thus, $\hom{\tilde{Q}}{\kLI'}\in\set{0,1}$. Furthermore,
$\hom{\tilde{Q}}{\kLI'}=1$ $\iff$ $\myInt{\kLI'}\models\chi_1$ and
$\myInt{\kLI'}\models\chi_2$ $\iff$ $\myInt{\kLI'}\models \tilde{\chi}$.
This completes Case~5.
\smallskip
	
\item
$\tilde{\psi} = \existsi[n]
(\varv_{i_1},\ldots,\varv_{i_\ell}).(\LogGuard{g} \land \psi)$ and
$\tilde{g} = g - \set{ i_1, \ldots, i_\ell }$ and $n,\ell\in\NNpos$
and $i_1<\cdots<i_\ell$ and 
$\myR \deff \set{ i_1, \ldots, i_\ell } \subseteq \Dom{g}$.

If $n > (md)^\ell$, the
formula $\tilde{\psi}$
is unfulfillable for all $k$-labeled incidence graphs $\kLI'$ matching
the requirements imposed by the lemma (simply because each such
$\kLI'$ has at most $md$ red nodes). Thus, we are done by choosing
$\tilde{Q}$ according to the ``special~case'' handled above.

If $n \leq (md)^\ell$, we proceed as follows.
Let $Q'\deff \reclaimR{Q}{\myR}$ and
$\tilde{Q}\deff Q'[X,Y]$ for $X\deff\set{0,\ldots,n{-}1}$ and
$Y\deff\set{n,\ldots,(md)^\ell}$.
Note that $Q'\in\QGLIk$ and $\tilde{Q}\in\QGLIk$, and
$\tilde{Q}$ has the parameters
$g_{\tilde{Q}},\domb{\tilde{Q}},\domr{\tilde{Q}}$ required by the
lemma.
Consider an arbitrary $k$-labeled incidence graph $\kLI'$ satisfying
the lemma's assumptions for $\tilde{\chi}$. In particular, $\kLI'$ has
real guards w.r.t.\ $\tilde{g}$ and therefore satisfies
$\LogGuard{\tilde{g}}$. Thus, $\myInt{\kLI'}\models\tilde{\chi}$
$\iff$ $\myInt{\kLI'}\models\tilde{\psi}$.
By Lemma~\ref{lem:hom_of_glued_is_product} we have
\[
  \hom{Q'}{\kLI'} \ = \
  \sum_{\tupel{v} \in \rV{\I_{\kLI'}}^{\ell}}
  \hom{Q}{\reseatR{\kLI'}{\myR}{\tupel{v}}}\,.
\]  
Consider an arbitrary $\tupel{v}$ with
$\hom{Q}{\reseatR{\kLI'}{\myR}{\tupel{v}}} > 0$.
Since $Q$ has real guards, the $k$-labeled incidence graph
$\reseatR{\kLI'}{\myR}{\tupel{v}}$ satisfies the lemma's assumptions
w.r.t.\ the formula $\chi$ (in particular, it has real guards w.r.t.\
$g$). Therefore, by the induction hypothesis we obtain that
$\hom{Q}{\reseatR{\kLI'}{\myR}{\tupel{v}}}=1$ and
$\myInt{\reseatR{\kLI'}{\myR}{\tupel{v}}}\models\chi$.

Therefore,
$\hom{Q'}{\kLI'}\leq |\rV{\I_{\kLI'}}|^\ell \leq (md)^\ell$.
Moreover, $\hom{Q'}{\kLI'}$ is exactly the number of tuples
$\tupel{v}\in \rV{\I_{\kLI'}}^\ell$ such that
$\myInt{\reseatR{\kLI'}{\myR}{\tupel{v}}}\models \chi$.

By our choice of $X,Y$ and $\tilde{Q}= Q'[X,Y]$ we obtain that
$\hom{\tilde{Q}}{\kLI'}\in\set{0,1}$ and, moreover,
$\hom{\tilde{Q}}{\kLI'}=1$ $\iff$ $\myInt{\kLI'}\models\tilde{\chi}$.
This completes Case~6.
\medskip

\item
$\tilde{\psi} = \existsi[n]
(\vare_{i_1},\ldots,\vare_{i_\ell}).(\LogGuard{g} \land \psi)$ and
$\tilde{g}$, where $n,\ell\in\NNpos$ and
$S\deff\set{i_1,\ldots,i_\ell}\subseteq \indfreeB{\chi}$ for
$\chi\deff(\LogGuard{g}\und\psi)$
with
$i_1<\cdots <i_\ell$ and $\Dom{\tilde{g}}=\Dom{g}$ and
all $i\in\Dom{g}$ satisfy
\begin{equation}\label{eq:neuerSyntaxfall:tildeg}
  \tilde{g}(i)=g(i)
  \quad\text{or}\quad
  \tilde{g}(i)\in S
  \quad \text{or} \quad
  \tilde{g}(i)\not\in
  \Img{g}.
\end{equation}

If $n>m^\ell$, the formula $\tilde{\psi}$ is unfulfillable for all
$k$-labeled incidence graphs $\kLI'$ matching the requirements imposed
by the lemma (simply because each such $\kLI'$ has only $m$ blue
nodes).
Thus, we are done by choosing $\tilde{Q}$ according to the ``special
case'' handled above.

If $n\leq m^\ell$, we proceed as follows.
Since $Q$ is obtained by the induction hypothesis, we know in
particular that 
$g_Q=g$, $\dombQ=\indfreeB{\chi}$, and $\domrQ=\Dom{g}=\indfreeR{\chi}$.
In particular, we have: $S\subseteq \dombQ$.

\bigskip

\noindent
\textit{Claim 1:} \
Consider an arbitrary $k$-labeled incidence graph $\kLI'$ satisfying
the lemma's assumptions for $\tilde{\chi}$.
For
\[
  \mysum{\kLI'} \ \deff \
  \hom{\reclaimB{Q}{S}}{\kLI'}
\]
we have:
\,$
  \mysum{\kLI'}  = 
  \sum_{\tupel{e} \in \bV{\I_{\kLI'}}^{\ell}}
  \hom{Q}{\reseatB{\kLI'}{S}{\tupel{e}}}
$\,
and $0\leq\mysum{\kLI'}\leq |\bV{\I_{\kLI'}}|^\ell$
and $\mysum{\kLI'}$ is exactly the number of tuples
$\tupel{e}\in \bV{\I_{\kLI'}}^\ell$ such that 
$\myInt{\reseatB{\kLI'}{\myB}{\tupel{e}}}\models \chi$.
\medskip

\noindent
\textit{Proof of Claim 1:} \
Since $\kLI'$ satisfies the lemma's assumptions for $\tilde{\chi}$,
it has
real guards w.r.t.\ $\tilde{g}$ and therefore satisfies
$\LogGuard{\tilde{g}}$.
Thus, $\myInt{\kLI'}\models\tilde{\chi}$
$\iff$ $\myInt{\kLI'}\models\tilde{\psi}$.

Furthermore, $\dombQ=\indfreeB{\chi}\supseteq S$. 
Lemma~\ref{lem:hom_of_glued_is_product} yields:
\[
  \mysum{\kLI'} \ = \ 
  \sum_{\tupel{e} \in \bV{\I_{\kLI'}}^{\ell}}\!\!\!
  \hom{Q}{\reseatB{\kLI'}{S}{\tupel{e}}}\,.
\]
Consider an arbitrary $\tupel{e}$ with
$\hom{Q}{\reseatB{\kLI'}{S}{\tupel{e}}} > 0$.
Since $Q$ has real guards, the $k$-labeled incidence graph
$\reseatB{\kLI'}{S}{\tupel{e}}$ satisfies the lemma's assumptions
w.r.t.\ the formula $\chi$ (in particular, it has real guards w.r.t.\
$g_Q=g$). Therefore, by the induction hypothesis we obtain that
$\hom{Q}{\reseatB{\kLI'}{S}{\tupel{e}}}=1$ and
$\myInt{\reseatB{\kLI'}{S}{\tupel{e}}}\models\chi$.

Hence,
$0\leq \mysum{\kLI'}\leq |\bV{\I_{\kLI'}}|^\ell \leq m^\ell$.
Moreover, $\mysum{\kLI'}$ is exactly the number of tuples
$\tupel{e}\in \bV{\I_{\kLI'}}^\ell$ such that
$\myInt{\reseatB{\kLI'}{\myB}{\tupel{e}}}\models \chi$.
This completes the proof of Claim~1.
\qed$_\textit{Claim 1}$
\bigskip

\noindent
For the remaining parts of the proof, our aim is to
\begin{equation}\label{eq:Goal:neuerFall}
\begin{array}{l}
\parbox{12cm}{construct a $Q'\in\QGLIk$
that has the parameters
$g_{\tilde{Q}},\domb{\tilde{Q}}, \domr{\tilde{Q}}$ required by the
lemma for formula $\tilde{\chi}$,
such that $\hom{Q'}{\kLI'}=\mysum{\kLI'}$ for all  $k$-labeled incidence graphs $\kLI'$
satisfying the lemma's assumptions for $\tilde{\chi}$.}
\end{array}
\end{equation}

\noindent
Once having achieved this, we can use
Lemma~\ref{lemma:normalizing_quantum_incidence_graphs} to obtain 
a $k$-labeled quantum incidence graph
$\tilde{Q}\deff Q'[X,Y] \in\QGLIk$ for $X\deff\set{0,\ldots,n{-}1}$ and $Y\deff\set{n,\ldots,m^\ell}$ such that
$\hom{\tilde{Q}}{\kLI'}\in\set{0,1}$ and, moreover,
$\hom{\tilde{Q}}{\kLI'}=1$ $\iff$
$\mysum{\kLI'}\geq n$ $\iff$
$\myInt{\kLI'}\models\tilde{\chi}$.

Hence, all that remains to be done to finish the proof is to achieve
\eqref{eq:Goal:neuerFall}.
We let
\begin{equation}\label{eq:Def:Z}
  Z\ \deff\ \setc{\,i\in\Dom{g}\ }{\ \tilde{g}(i)\neq g(i) \text{ \ or \ } g(i)\in S\,}\,.
\end{equation}

\noindent
If $Z=\emptyset$, we have $\tilde{g}=g$ and $\Img{g}\cap
S=\emptyset$. Hence, $S\subseteq \dombQ\setminus \Img{g}$.
Using part \ref{item:GLI-reclaimB}) of
Definition~\ref{def:k_guarded_incidence_graphs}
we obtain that $Q'\deff\reclaimB{Q}{S}$ is in $\QGLIk$.
Note that $Q'$ has the parameters
$g_{\tilde{Q}},\domb{\tilde{Q}}, \domr{\tilde{Q}}$ required by the
lemma for formula $\tilde{\chi}$.
Furthermore, as stated in Claim~1, we have
$\mysum{\kLI'}=\hom{Q'}{\kLI'}$ for all $k$-labeled incidence graphs
$\kLI'$ satisfying the lemma's assumptions for $\tilde{\chi}$.
Thus, we have achieved \eqref{eq:Goal:neuerFall}, and therefore
the proof is finished in case that $Z=\emptyset$.
\medskip

\noindent
If $Z\neq\emptyset$, we cannot simply choose $Q'$ as above, because
there is no guarantee that $\reclaimB{Q}{S}$ belongs to $\QGLIk$.
Instead, we proceed as follows.
\\
Let $f$ be the partial function with $\Dom{f}=Z$ and
\[
f(i) \deff \tilde{g}(i) \quad \text{for all $i\in \Dom{f}$.}
\]
\textit{Claim 2:} \ 
1.) \ $\tilde{g}= f\cup g$.\\
2.) \ $f$ is a transition for $g$.\\
3.) \ All
$i\in\Dom{f}$ satisfy: \ $f(i)\in S$ \ or \
$f(i)\not\in \Img{g}$.
\smallskip

\noindent
\textit{Proof of Claim~2:} \\
1.) \ For all $i\in\Dom{f}$ we have: $(f\cup g)(i)=f(i)=\tilde{g}(i)$.\\
For all $i\in\Dom{g}\setminus\Dom{f}$ we have $(f\cup g)(i)=g(i)$. Since $i\not\in \Dom{f}=Z$, we have $\tilde{g}(i)=g(i)$.
I.e., $(f\cup g)=\tilde{g}$.
\smallskip

\noindent
2.) \ Consider an arbitrary $i\in\Dom{g}$ with $g(i)\in\Img{f}$. We have to show that $i\in\Dom{f}$, i.e., $i\in Z$.
Since $g(i)\in\Img{f}$, there is a $j\in Z$ such that $g(i)=f(j)$.
From \eqref{eq:neuerSyntaxfall:tildeg} we know that $\tilde{g}(j)=g(j)$ or
$\tilde{g}(j)\in S$ or
$\tilde{g}(j)\not\in\Img{g}$.
But
\begin{equation}\label{eq:ProofNewCaseStrangeEq1}
  \tilde{g}(j)\ = \ (f\cup g)(j) \ = \ f(j) \ = \ g(i)\,,
\end{equation}
hence
$\tilde{g}(j)\in\Img{g}$.
This leaves us with two remaining cases: $\tilde{g}(j)=g(j)$ or $\tilde{g}(j)\in S$.

In case that $\tilde{g}(j)=g(j)$, from the fact that $j\in Z$ we obtain (by the definition of $Z$) that $g(j)\in S$.
Hence, $g(j)=\tilde{g}(j)=f(j)=g(i)\in S$.
Thus, by our choice of $Z$ we have: $i\in Z=\Dom{f}$.

In case that $\tilde{g}(j)\in S$, by \eqref{eq:ProofNewCaseStrangeEq1} we have:
$\tilde{g}(j)=f(j)=g(i)\in S$. Thus, by our choice of $Z$ we have: $i\in Z=\Dom{f}$.

Hence, we have shown that $f$ is a transition for $g$.
\smallskip

\noindent
3.) \ Consider an arbitrary $i\in\Dom{f}$. By definition we have $f(i)=\tilde{g}(i)$.

In case that $\tilde{g}(i)= g(i)$, by definition of $Z$ we obtain: $g(i)\in S$.
Hence, $g(i)=\tilde{g}(i)=f(i)\in S$.

In case that $\tilde{g}(i)\neq g(i)$, by \eqref{eq:neuerSyntaxfall:tildeg} we know that
$\tilde{g}(i)\in S$ or
$\tilde{g}(i)\not\in\Img{g}$.
Since $f(i)=\tilde{g}(i)$, we hence have: $f(i)\in S$ or
$f(i)\not\in\Img{g}$.

This completes the proof of Claim~2.
\qed$_{\textit{Claim 2}}$
\bigskip

\noindent
Let $D_1\deff (S\cap\Img{f})\setminus \Img{g}$,
let $D_2\deff S\setminus\Img{\tilde{g}}$, and
let $D_3\deff \Img{g}\cap\Img{f}$.
\bigskip

\noindent
\emph{Claim~3:} $D_1,D_2,D_3$ are pairwise disjoint, and $S=\bigcup_{i\in[3]}D_i$.\smallskip

\noindent
\emph{Proof of Claim~3:} \
We first show that $D_1,D_2,D_3$ are pairwise disjoint:
$D_1\cap D_2=\emptyset$ because
$D_1\subseteq\Img{f}\subseteq\Img{\tilde{g}}$.
$D_1\cap D_3=\emptyset$ because $D_3\subseteq \Img{g}$.
$D_2\cap D_3=\emptyset$ because $D_3\subseteq\Img{f}\subseteq\Img{\tilde{g}}$.

We next show that $\bigcup_{i\in[3]}D_i\subseteq S$:
By definition, $D_i\subseteq S$ for each $i\in[2]$. 
For showing that $D_3\subseteq S$ consider an arbitrary $j\in
D_3=\Img{g}\cap\Img{f}$.
Since $j\in\Img{f}$, there is an $i\in\Dom{f}=Z$ such that $j=f(i)$.
From Claim~2, part 3.) we obtain that $f(i)\in S$ or
$f(i)\not\in\Img{g}$. Since $f(i)=j\in\Img{g}$, we hence have
$j\in S$.

Finally, we show that $S\subseteq\bigcup_{i\in[3]}D_i$:
Consider an arbitrary $j\in S$.
If $j\in D_2$, we are done.
If $j\not\in D_2$, then $j\in \Img{\tilde{g}}$.
Hence, there is an $i\in\Dom{\tilde{g}}$ such that $j=\tilde{g}(i)$.
We obtain: $i\in Z$ (the reasoning is as follows: If $g(i)\neq
\tilde{g}(i)$, then $i\in Z$ by \eqref{eq:Def:Z}. Otherwise,
$g(i)=\tilde{g}(i)=j\in S$, and by \eqref{eq:Def:Z} we obtain that
$i\in Z$).
Hence, $i\in Z=\Dom{f}$. Thus $j=f(i)\in\Img{f}$.
If $j\in\Img{g}$ then $j\in D_3$. Otherwise, $j\not\in\Img{g}$, and
hence $j\in D_1$.
\qed$_\textit{Claim~3}$
\bigskip

We let $Q_1\deff \reclaimB{Q}{D_1}$ and
$Q_2\deff \switch{Q_1}{f}$ and
$Q'\deff \reclaimB{Q_2}{D_2}$.
\bigskip

\noindent
\emph{Claim~4:} \ $Q_1,Q_2,Q'\in\QGLIk$ and 
$g_{Q_1}=g$, $\domb{Q_1}=(\dombQ\setminus D_1)$ and 
$g_{Q'}=\tilde{g}$, 
$\domr{Q'}=\indfreeR{\tilde{\chi}}$, $\domb{Q'}=\indfreeB{\tilde{\chi}}$.
\smallskip

\noindent
\emph{Proof of Claim~4:} \
Since $Q\in\QGLIk$ with $g_Q=g$ and $S\subseteq\dombQ$, we have
$D_1\subseteq \dombQ\setminus\Img{g_Q}$. Hence $Q_1=\reclaimB{Q}{D_1}\in\QGLIk$.
Note that $g_{Q_1}=g$ and $\domb{Q_1}=\dombQ\setminus D_1$.

By Claim~2, $f$ is a transition for $g$; hence we have
$Q_2=\switch{Q_1}{f}\in\QGLIk$.
Note that $g_{Q_2}=f\cup g=\tilde{g}$ and
$\domb{Q_2}=\Img{\tilde{g}}\cup (\dombQ \setminus D_1)$.
Note that $(\dombQ\setminus D_1)\supseteq (S\setminus D_1) = D_2\cup D_3$
(by Claim~3). Hence, $\domb{Q_2}\supseteq D_2$. Since
$D_2=S\setminus\Img{\tilde{g}}$ we have $D_2\subseteq \domb{Q_2}\setminus\Img{\tilde{g}}$. 
Hence, $Q'= \reclaimB{Q_2}{D_2}\in\QGLIk$.
Note that $g_{Q'}=\tilde{g}$ and
$\domr{Q'}=\domr{Q}=\indfreeR{\chi}=\indfreeR{\tilde{\chi}}$.

To complete the proof of Claim~4, we show that
$\domb{Q'}=\indfreeB{\tilde{\chi}}$.
Note that $\indfreeB{\tilde{\chi}}=\Img{\tilde{g}}\cup
(\dombQ\setminus S)$.
And $\domb{Q'}=(\domb{Q_2}\setminus D_2) = ((\Img{\tilde{g}}\cup
(\dombQ\setminus D_1))\setminus D_2) =
(\Img{\tilde{g}}\setminus D_2)\cup (\dombQ\setminus (D_1\cup D_2))$
(for the latter, recall from Claim~3 that $D_1\cap D_2=\emptyset$).
Since $D_2 \intersect \Img{\tilde{g}} = \emptyset$ we can simplify even further and get $\domb{Q'} = 
\Img{\tilde{g}}\cup (\dombQ\setminus (D_1\cup D_2))$.
\smallskip

We first show that $\domb{Q'}\subseteq \indfreeB{\tilde{\chi}}$:
Consider an arbitrary $j\in\domb{Q'}$. Then, $j\not\in
D_2$. Furthermore, $j\in\Img{\tilde{g}}$ or $j\in(\dombQ\setminus
D_1)$.

If $j\in\Img{\tilde{g}}$, then $j\in\indfreeB{\tilde{\chi}}$ and we are done.
Otherwise, $j\in(\dombQ\setminus D_1)$. Hence, $j\not\in D_1\cup D_2$.
From Claim~3 we know that $S=D_1\cup D_2\cup D_3$.

In case that $j\not\in D_3$ we have: $j\not\in S$, and hence
$j\in (\dombQ\setminus S)$. Thus, $j\in\indfreeB{\tilde{\chi}}$.

In case that $j\in D_3$ we have: $j\in
D_3\subseteq\Img{f}\subseteq\Img{\tilde{g}}$, and hence
$j\in\indfreeB{\tilde{\chi}}$.

This completes the proof that $\domb{Q'}\subseteq
\indfreeB{\tilde{\chi}}$.
\smallskip

Next, we show that $\indfreeB{\tilde{\chi}}\subseteq \domb{Q'}$:
Consider an arbitrary $j\in \indfreeB{\tilde{\chi}}$.
Then, $j\in\Img{\tilde{g}}$ or $j\in (\dombQ\setminus S)$.
Recall from Claim~3 that $S$ is the union of the pairwise disjoint
sets $D_1,D_2,D_3$.

\emph{Case 1:} $j\in (\dombQ\setminus S)$. Then, $j\in\dombQ$ and
$j\not\in D_1\cup D_2\cup D_3$. In particular, $j\in (\dombQ\setminus
(D_1\cup D_2))\subseteq \domb{Q'}$.

\emph{Case 2:} $j\not\in (\dombQ\setminus S)$. Then,
$j\in\Img{\tilde{g}}$ 
and we are done since $\Img{\tilde{g}} \subseteq \domb{Q'}$.

This completes the proof of Claim~4.
\qed$_{\textit{Claim 4}}$
\bigskip

According to Claim~4, $Q'\in\QGLIk$ and $Q'$ has the desired
parameters $g_{Q'},\domb{Q'},\domr{Q'}$. 
The next claim ensures that $Q'$ indeed satisfies what we aimed for in
\eqref{eq:Goal:neuerFall}.  
\medskip

\noindent
\textit{Claim 5:}
For every $k$-labeled incidence graph $\kLI'$ that satisfies
the lemma's assumptions for $\tilde{\chi}$ we have:
\[
  \hom{Q'}{\kLI'} \ = \ 
  \mysum{\kLI'}.
\]
\textit{Proof of Claim 5:} \
Let $\kLI'$ be an arbitrary $k$-labeled incidence graph $\kLI'$ that satisfies
the lemma's assumptions for $\tilde{\chi}$ (in particular, $\kLI'$ has
real guards w.r.t.\ $\tilde{g}=f\cup g$).
For $i\in[3]$ let $\ell_i\deff |D_i|$.
From Lemma~\ref{lem:HomomCountQ} we obtain:
\\
$\hom{Q'}{\kLI'}=
  \hom{\reclaimB{Q_2}{D_2}}{\kLI'}=$
\[
  \sum_{\tupel{e}_2\in \bV{\I_{\kLI'}}^{\ell_2}} \hom{Q_2}{\reseatB{\kLI'}{D_2}{\tupel{e}_2}}\,.
\]
Furthermore, for every $\tupel{e}_2\in \bV{\I_{\kLI'}}^{\ell_2}$ and for
 $\kLI'_{\tupel{e}_2}\deff\reseatB{\kLI'}{D_2}{\tupel{e}_2}$
we have: \\
$\hom{Q_2}{\kLI'_{\tupel{e}_2}} =
 \hom{\switch{Q_1}{f}}{\kLI'_{\tupel{e}_2}} =$
\[
  \hom{\kLIf}{\kLI'_{\tupel{e}_2}} \ \cdot \!\!
  \sum_{\tupel{e}_3\in \bV{I_{\kLI'}}^{\ell_3}} \hom{Q_1}{\reseatB{\kLI'_{\tupel{e}_2}}{\myB}{\tupel{e}_3}}
\]
where
$\myB\deff\Img{g_{Q_1}}\cap\Img{f}\cap\domb{Q_1} = D_3$ (the latter
follows from Claims~3 and 4).

Since $\kLI'$ has real guards w.r.t.\ $\tilde{g}=f\cup g$, it also has
real guards w.r.t.\ $f$.
Note that $\kLI'_{\tupel{e}_2}$ differs from $\kLI'$ only w.r.t.\ the
blue labels in $D_2=S\setminus\Img{\tilde{g}}$; none of these labels
belongs to $\Img{f}$ (because $\Img{f}\subseteq\Img{\tilde{g}}$). 
Hence, also $\kLI'_{\tupel{e}_2}$ has real
guards w.r.t.\ $f$. This implies that
$\hom{\kLIf}{\kLI'_{\tupel{e}_1}}=1$.

In summary, we obtain that $\hom{Q'}{\kLI'} =$
\[
  \sum_{\tupel{e}_2\in \bV{\I_{\kLI'}}^{\ell_2}} \
  \sum_{\tupel{e}_3\in \bV{\kLI'}^{\ell_3}}
  \hom{Q_1}{\reseatB{\kLI'_{\tupel{e}_2}}{D_3}{\tupel{e}_3}}\,.
\]
Since $D_2,D_3$ are disjoint, we obtain:
$\hom{Q'}{\kLI'} =$
\[
  \sum_{\tupel{e}\in \bV{\I_{\kLI'}}^{\ell_2+\ell_3}} \
  \hom{Q_1}{\reseatB{\kLI'}{(D_2\cup D_3)}{\tupel{e}}}\,.
\]
Recall that $Q_1=\reclaimB{Q}{D_1}$. Hence,
a further application of Lemma~\ref{lem:HomomCountQ} yields:
$\hom{Q'}{\kLI'} =$
\[
  \sum_{\tupel{e}\in \bV{\I_{\kLI'}}^{\ell_2+\ell_3}} \ \
  \sum_{\tupel{e}_1\in\bV{\I_{\kLI'}}^{\ell_1}} 
  \hom{Q}{
    \reseatB{
      \reseatB{\kLI'}{(D_2\cup D_3)}{\tupel{e}}}
      {D_1}
      {\tupel{e}_1}
  }\,.
\]
Note that $S$ is the union of the disjoint sets $D_1,D_2,D_3$ and
$\ell=|S|=\ell_1+\ell_2+\ell_3$. Hence, $\hom{Q'}{\kLI'}=$
\[
  \sum_{\tupel{e}\in \bV{\I_{\kLI'}}^{\ell}} \hom{Q}{\reseatB{\kLI'}{S}{\tupel{e}}}
  \quad = \quad \mysum{\kLI'}
\]
(the last equality is provided by Claim~1).
Hence, the proof of Claim~5 is complete.
\qed$_{\textit{Claim 5}}$

\bigskip
From Claims~4 and 5 we obtain that we have achieved
\eqref{eq:Goal:neuerFall}.
This completes Case~7, and it completes the proof of Lemma~\ref{lemma:FormulaToHom}.\qedhere
\end{enumerate}
\end{proof}
 
\medskip

\subsection{Detailed proof of Theorem \ref{thm:GCkVsHomomIndist}}
\label{appendix:subsec:MainThm}

\begin{proof}[\textbf{\textup{Proof of Theorem~\ref{thm:GCkVsHomomIndist}:}}] \ \smallskip
  
\noindent
Let $\I$ and $\I'$ be arbitrary incidence graphs, and let
$k\in\NNpos$. Our aim is to prove the following:
\[
  \I\not\equivGCk\I'
  \ \ \iff \ \
  \HOM{\IEHW{k}}{\I} \neq \HOM{\IEHW{k}}{\I'}\,.
\smallskip
\]

\noindent
\textbf{Case 1:} \ $|\bVI|\neq |\bV{\I'}|$.
\\
Let $\J$ be the incidence graph with exactly $1$
blue node, no red node, and no edge. Clearly,
$\J\in\IEHWk$. Furthermore,
$\hom{\J}{\I}=|\bV{\I}|$ and $\hom{\J}{\I'}=|\bV{\I'}|$.
As $|\bVI|\neq |\bV{\I'}|$, we obtain that
$\HOM{\IEHW{k}}{\I} \neq \HOM{\IEHW{k}}{\I'}$.

Let $m\deff|\bV{\I}|$ and consider the sentence $\psi\deff
\existsex[m](\vare_1).\vare_1{=}\vare_1$ in $\GCk$. Then, $\I\models\psi$ and
$\I'\not\models\psi$.
Hence, $\I\not\equivGCk\I'$.
Thus, in Case~1 we are done.
\medskip

\noindent
\textbf{Case 2:} \ $|\bVI|=|\bV{\I'}|$.
\medskip

\noindent
For ``$\Longrightarrow$'' let $\phi$ be a sentence
in $\GCk$ such that $\I\models\phi$ and $\I'\not\models\phi$.
Let $\chi \isdef (\top \land \psi) \in \NGCk$ be given according to Theorem~\ref{thm:ngck_is_equivalent}, i.e. it holds that $(\top \land \phi) \equiv (\top \land \psi) = \chi$.

Let $m\deff |\bVI|=|\bV{\I'}|$, and
choose an arbitrary $d\in\NN$ such
that $d\geq |N_{\I}(e)|$ for all $e\in\bVI$ and $d\geq |N_{\I'}(e)|$ for
all $e\in\bV{\I'}$.

Let $Q\deff \quanth{\chi}{m,d}\in\QGLIk$ be the $k$-labeled quantum
incidence graph provided by Lemma~\ref{lemma:FormulaToHom}.
Note that $Q$ has guard function $g_\emptyset$, and
$\dombQ=\domrQ=\emptyset$.
Let $b_\emptyset$ and $r_\emptyset$ the labeling functions with empty
domain, and let
$\kLI\deff (\I,r_\emptyset,b_\emptyset,g_\emptyset)$ and
$\kLI'\deff (\I',r_\emptyset,b_\emptyset,g_\emptyset)$.
They both have real guards w.r.t.\ $g_\emptyset$ (simply because
$g_\emptyset$ has empty domain).
From Lemma~\ref{lemma:FormulaToHom} we know that
$\hom{Q}{\kLI}=1$ (because $\I\models\chi$) and
$\hom{Q}{\kLI'}=0$ (because $\I'\not\models\chi$).
Thus, $\hom{Q}{\kLI}\neq \hom{Q}{\kLI'}$.
Since $Q\in\QGLIk$, there are a degree $\ell\in\NNpos$ and
coefficients
$\alpha_1,\ldots,\alpha_\ell\in\RR$ and components
$\kLI_1,\ldots,\kLI_\ell\in\GLIk$ such that $Q=\sum_{i=1}^\ell
\alpha_i \kLI_i$. We have:
$\sum_{i=1}^\ell \alpha_i \hom{\kLI_i}{\kLI}
 =
 \hom{Q}{\kLI} \neq \hom{Q}{\kLI'}
 = \sum_{i=1}^\ell \alpha_i \hom{\kLI_i}{\kLI'}$.
Hence there must exist an $i\in[\ell]$ with $\alpha_i\neq 0$ and
$\hom{\kLI_i}{\kLI}\neq\hom{\kLI_i}{\kLI'}$.

We know that $\kLI_i\in\GLIk$,
and $\kLI_i=(\I_i,r_i,b_i,g_i)$ has
parameters $\domrQ=\dombQ=\emptyset$ and guard function
$g_i=g_\emptyset$.
I.e., $\Dom{r_i}=\Dom{b_i}=\emptyset$.
From Theorem~\ref{thm:IEHWkIsGLIk}\ref{item:GLIkInIEHWk} we know that
$\I_i\in\IEHWk$. Thus, $\I_i$ witnesses that 
$ \HOM{\IEHW{k}}{\I} \neq \HOM{\IEHW{k}}{\I'}$.
\medskip

\noindent
For ``$\Longleftarrow$'' let $\J$ be an incidence graph in $\IEHWk$
such that $\hom{\J}{\I}\neq\hom{\J}{\I'}$.
Let $m\deff\hom{\J}{\I}$.
From Theorem\ref{thm:IEHWkIsGLIk}\ref{item:IEHWkInGLIk}
we obtain a label-free $\tilde{\kLI}\in\GLIk$ such that
$\I_{\tilde{\kLI}}\isom\J$.

Clearly, for the label-free $k$-labeled incidence graphs $\kLI$ and
$\kLI'$ with $\I_{\kLI}=\I$ and $\I_{\kLI'}=\I'$ we have
$\hom{\tilde{\kLI}}{\kLI}=\hom{\J}{\I}=m$ and
$\hom{\tilde{\kLI}}{\kLI'}=\hom{\J}{\I'}\neq m$.

From Lemma~\ref{lemma:HomToFormula} we obtain a sentence
$\chi\deff \form{\tilde{\kLI}}{m}$ in $\NGCk \subseteq \GCk$ such that
$\myInt{\kLI}\models \chi$ (because $\hom{\tilde{\kLI}}{\kLI}=m$) and
$\myInt{\kLI'}\not\models \chi$ (because
$\hom{\tilde{\kLI}}{\kLI'}\neq m$).
Thus, $\I\models\chi$ and $\I'\not\models\chi$. I.e., $\chi$ witnesses
that $\I\not\equivGCk\I'$.
\\
This completes the proof of Theorem~\ref{thm:GCkVsHomomIndist}.
\end{proof}
 
\end{document}